\newcommand\footnoteref[1]{\protected@xdef\@thefnmark{\ref{#1}}\@footnotemark}
\crefname{apdx}{Appendix}{Appendices}
    \newcommand{\rmd}{\mathrm{d}}
\newcommand{\bbE}{\mathbb{E}}    \newcommand{\rme}{\mathrm{e}}
\newcommand{\bbN}{\mathbb{N}}    
  \newcommand{\rmO}{\mathrm{O}}  
\newcommand{\bbP}{\mathbb{P}}    
\newcommand{\bbR}{\mathbb{R}}
      \newcommand{\sfc}{\mathsf{c}}
      \newcommand{\sfs}{\mathsf{s}}
\newcommand{\cA}{\mathcal{A}}   
\newcommand{\cB}{\mathcal{B}}   
\newcommand{\cC}{\mathcal{C}} \newcommand{\scrC}{\mathscr{C}}  
 \newcommand{\scrD}{\mathscr{D}}
\newcommand{\cP}{\mathcal{P}}
\newcommand{\cT}{\mathcal{T}}
\newcommand{\cX}{\mathcal{X}}   
\newcommand{\cY}{\mathcal{Y}}  \newcommand{\mfrY}{\mathfrak{Y}} 
\newtheoremstyle{mystyle}
{}
{}
{\itshape}
{}
{\bfseries}
{}
{.5em}
{}
\newtheoremstyle{remark}
{}
{}
{}
{}
{\itshape}
{}
{.5em}
{}
\def\thmhead@plain#1#2#3{%
  \thmname{#1}\thmnumber{\@ifnotempty{#1}{ }\@upn{#2.}}%
  \thmnote{ \small\textsf{\the\thm@notefont\textit{#3}.}}}
\let\thmhead\thmhead@plain
\theoremstyle{mystyle}
\newtheorem{theorem}{Theorem}
\theoremstyle{mystyle}
\newtheorem{lemma}{Lemma}
\theoremstyle{mystyle}
\newtheorem{prop}{Proposition}
\theoremstyle{mystyle}
\newtheorem{corollary}{Corollary}
\theoremstyle{mystyle}
\newtheorem{definition}{Definition}
\theoremstyle{mystyle}
\theoremstyle{mystyle}
\newtheorem{exa}{Example}
\theoremstyle{mystyle}
\theoremstyle{mystyle}
\theoremstyle{mystyle}
\theoremstyle{mystyle}
\theoremstyle{remark}
\newtheorem{rem}{Remark}
\newcommand\independent{\protect\mathpalette{\protect\independent}{\perp}}
\def\independent#1#2{\mathrel{\rlap{$#1#2$}\mkern2mu{#1#2}}}
\def\squarebox#1{\hbox to #1{\hfill\vbox to #1{\vfill}}}
\newcommand{\ds}{\displaystyle}
\newcommand{\induced}{P_{Y^n|\scrC_M^n}}
\newcommand{\ccinduced}{P_{\breve Y^n|\scrD_M^n}}
\newcommand{\condisset}{\cP_n(\cX|Q_{\bar Y}; P_{\bar X})}
\DeclareMathOperator{\Var}{Var}
\newcommand{\bigger}{\bBigg@{3}}
\newcommand{\vast}{\bBigg@{4}}
\newcommand{\Vast}{\bBigg@{5}}
\newcommand{\Gigantic}{\bBigg@{8}}
\renewcommand{\qedsymbol}{$\blacksquare$}
\begin{document}
\title{Exact Exponent for Soft Covering}
\date{}
\author{Semih Yagli,~\IEEEmembership{Student Member, IEEE} and Paul Cuff,~\IEEEmembership{Member, IEEE}%
\thanks{Semih Yagli is with the Electrical Engineering Department, Princeton University, Princeton, NJ 08544. Paul Cuff is with Renaissance Technologies LLC, Long Island, NY 11733. E-mail: {\em syagli@princeton.edu}, {\em pcuff@rentec.com}. This work has been supported by the National Science Foundation under grant CCF-1350595 and the Air Force Office of Scientific Research under grant FA9550-15-1-0180. Part of this work was presented at ISIT 2018, Vail, CO.}
}
\maketitle

\begin{abstract}
This work establishes the exact exponents for the soft-covering phenomenon of a memoryless channel under the total variation metric when random (i.i.d. and constant-composition) channel codes are used.  The exponents, established herein, are strict improvements in both directions on bounds found in the literature.  This complements the recent literature establishing the exact exponents under the relative entropy metric; however, the proof techniques have significant differences, and thus, neither result trivially implies the other.

The found exponents imply new and improved bounds for various problems that use soft-covering as their achievability argument, including new lower bounds for the resolvability exponent and the secrecy exponent in the wiretap channel.

\textbf{Keywords:  Soft-covering lemma, total variation distance, channel resolvability, random coding exponent, random i.i.d. coding ensemble, random constant-composition coding ensemble.}  

\end{abstract}


\section{Introduction}

\IEEEPARstart{T}{he} soft-covering lemma is a strong and useful tool commonly used for proving achievability results for information theoretic security, resolvability, channel synthesis and lossy source coding. The roots of the soft-covering concept originate back to Wyner \cite[Theorem 6.3]{wyner1975common} where he developed this tool with the aim of proving achievability in his work on the common information of two random variables. Coincidentally, the most widespread current application of soft-covering is security proofs in wiretap channels, e.g., \cite{bloch2013strong}, which Wyner also introduced in that same year in \cite{wyner1975wire} but apparently did not see how soft-covering applied. 

The soft-covering lemma states that given a stationary memoryless channel $P_{Y^n|X^n}$ with stationary memoryless input distribution $P_{X^n}$ yielding an output distribution $P_{Y^n}$, the distribution $P_{Y^n|\scrC^n_M}$ induced by instead selecting a sequence $X^n$ at random from a codebook $\scrC^n_M$ and passing it through the channel, see \Cref{def:induced output distribution}, will be a good approximation\footnote{When the relative entropy or the total variation distance is used as the distinction measure, cf. \cite{wyner1975common} and \cite{han1993approximation}, respectively.} of the output distribution $P_{Y^n}$ in the limit as $n$ goes to infinity so long as the codebook is of size $M$ greater than $\exp(nR)$ where $R$ is greater than the single-shot mutual information between the input and output, i.e., $R>I(P_X, P_{Y|X})$. In fact, the aforementioned codebook $\scrC^n_M$ can be chosen quite carelessly, e.g., by drawing each codeword independently from $P_{X^n}$ or by drawing each codeword uniformly at random from the type class $\cT^n_{P_X}$. 

The concept of soft-covering is fundamentally related to that of channel resolvability \cite{han1993approximation}, in that the former is a property of random codebooks while the latter is the fundamental limit of optimal codebooks. As a matter of fact, soft-covering establishes the direct proof (also known as ``achievability") for resolvability. Furthermore, given the chronology of the literature, the resolvability problem can be viewed as a question about soft-covering---how much better can an optimized codebook match an output distribution than a random codebook?  To the first order, the answer is that it does no better.

In the literature, various versions of the soft-covering lemma use various distinctness measures on distributions (commonly relative entropy or total variation distance, see \Cref{def:relative entropy,def:total variation distance}) and claim that the distance between the induced distribution $\induced$ and the desired distribution $P_{Y^n}$ vanishes in expectation over the random selection of the codebook $\scrC^n_M$. Regarding the most notable contributions, \cite{han1993approximation} studies the fundamental limits of soft-covering under the name of ``resolvability", \cite{cuff2009communication} develops the lemma calling it a ``cloud mixing" lemma, \cite{hayashi2006general} provides achievable rates of exponential convergence, \cite{cuff2013distributed} improves the exponent and extends the framework, \cite{ahlswede2002strong} and \cite[Chapter 16]{wilde2013quantum} refer to soft-covering simply as ``covering" in the quantum context, \cite{winter2005secret} refers to it as a ``sampling lemma" and points out that it holds for the stronger metric of relative entropy, \cite{hou2014effective} gives a direct proof of the relative entropy result, and \cite{cuff2015stronger} and \cite{cuff2016soft} move away from expected value analysis and show that a random codebook achieves soft-covering phenomenon with a doubly exponentially high probability under the relative entropy measure and total variation distance, respectively.

The motivation of this work is to complement the results of Parizi \emph{et al.} \cite[Theorem 4]{parizi2017exact}, and Yu and Tan \cite[Theorem 3]{yutanrenyi}, where they pin down the exact soft-covering exponents in the expected value analysis of the relative entropy, and of the R\'enyi divergence of order $\alpha\in (0,1)\cup(1,2)$%
 , respectively. In this paper, we first highlight that the \emph{total variation distance} between the i.i.d. codebook induced distribution $\induced$ and the desired output distribution $P_{Y^n}$ concentrates to its expected value with doubly exponential certainty \cite[Theorem 31]{lcv2017egamma}.
 The first main result of this paper, stated in \Cref{thm:main}, provides the exact soft-covering exponent for the expected value of the \emph{total variation distance} between $\induced $ and $P_{Y^n} $. Next, we consider the setting when the random codebook is restricted to contain codewords of the same empirical distributions. Calling this the \emph{random constant-composition codebook} and denoting it by $\scrD_M^n$, in \Cref{lem:cc:concentration of TV distance btw ccinduced and ccoutput}, we show the counterpart of \cite[Theorem 31]{lcv2017egamma}. In other words, we prove the fact that the total variation distance between the constant-composition induced distribution $\ccinduced $ and the desired output distribution $R_{\breve Y^n} $ concentrates to its expected value in a doubly exponential fashion as well. Finally, we present our second main result in \Cref{thm:cc:main}, providing the exact soft-covering exponent for the expected value of the total variation distance between $\ccinduced $ and $R_{\breve Y^n} $. The exponents for soft-covering, established in this work, provide improved lower bounds on the exponents for resolvability. It may be the case that use of an optimized codebook provides better exponents, even though this work provides the exact exponents (both upper and lower bounds) for the random codebooks.

In the remainder of this paper, \Cref{sec:notation} establishes the basic notation and definitions adopted throughout, and \Cref{sec:main} highlights \cite[Theorem 31]{lcv2017egamma}, shows its counterpart in the constant-composition setting, and states the main results of this paper, namely, the exact soft-covering exponents for the cases of random i.i.d. codebooks and random constant-composition codebooks, along with a number of remarks. \Cref{sec:lower bound,sec:upper bound} prove the lower and upper bound directions of the main result in \Cref{thm:main} together with the remarks of how one would recover the proof of \Cref{thm:cc:main} based on the proof provided. As \Cref{sec:alternative representations} proves alternative expressions for the exact soft-covering exponents, \Cref{sec:comparisons} compares the exact exponents to their previously discovered lower bounds, and finally, \Cref{apdx:proof of TV concentration McDiarmid,apdx:main,apdx:asymptotic exponents,apdx:optimizations over types in the limit,%
apdx:alternative representation and comparisons} provide the lemmas and corollaries that are invoked in the main proofs while \Cref{
apdx:finite block-length results} provides
 the finite block-length results that appear as a byproduct of our proof technique%
 .
\section{Notation and Definitions}\label{sec:notation}
This section introduces the basic notation and fundamental concepts as well as several definitions and properties to be used in the sequel.

Given a finite alphabet $\cX$, let $\cP(\cX)$ denote the set of all distributions defined on it. For a random variable $X$ on $\cX$, a central measure in information theory, namely the amount of \emph{information} provided by $X=x\in \cX$, is defined as follows.

\begin{definition}[Information]
Suppose $X\sim P_X\in\cP(\cX)$, the \emph{information} in $X=x\in \cX$ is\footnote{Unless otherwise stated, logarithms and exponentials are of arbitrary (but matching) bases throughout this paper.}
\begin{align}
\imath^{}_{P_X}(x)= \log\frac{1}{P_X(x)} \text{.}
\end{align}
\end{definition}

When we investigate the interplay between two random variables $(X,Y) \in \cX\times \cY$, the amount of information provided by $Y=y$ after observing $X=x$ is measured by \emph{conditional information}.

\begin{definition}[Conditional Information] Suppose that given $X=x$, $Y\sim P_{Y|X=x}$. The conditional information provided by $Y=y$, given $X=x$, is
\begin{align}
  \imath^{}_{P_{Y|X}}(y|x) = \log \frac{1}{P_{Y|X}(y|x)} \text{.}
\end{align}
  
\end{definition}

Notice that information $\imath^{}_{P_X}(x)$ is a deterministic function depending on the random variable $X\sim P_X$ only through its probability mass function. If one considers the average of $ \imath^{}_{P_X}(X)$, the random information provided by $X$, this gives rise to the definition of the most famous information theoretic quantity, \emph{entropy}, which is defined next. 

\begin{definition}[Entropy]
The \emph{entropy} of a discrete random variable $X\sim P_X \in \cP(\cX) $ is the average information provided by $X$, that is
\begin{align}
H(P_X) = \bbE[\imath^{}_{P_X}(X)] \text{.}
\end{align}
\end{definition}

When the distribution of the discrete random variable $X$ is clear from the context, it is customary to denote its entropy by $H(X)$. Given $(X, Y) \sim P_{X|Y} P_Y $ the average entropy remaining in $X$ when given $Y$ is measured by \emph{conditional entropy} which is defined as follows.  
\begin{definition}[Conditional Entropy]
	Suppose that $(X,Y)\sim P_{X|Y} P_Y \in \cP(\cX\times \cY)$. The \emph{conditional entropy} of a discrete random variable $X$ given $Y$ is 
	\begin{align}
		H(X|Y) &= \bbE[\imath^{}_{P_{X|Y}}(X|Y) ] \\
		 &= \sum_{b\in \cY} H(P_{X|Y=b}) P_Y(b) \text{.}
	\end{align}
\end{definition} 

Given two random variables $X$ and $\widetilde X$ on the same alphabet $\cX$, the information provided by the event $X=x$ relative to the information provided by $\widetilde X = x$ is captured by \emph{relative information}, whose definition is given below.
\begin{definition}[Relative Information]
	Let $P_X$ and $Q_X$ be two distributions in $\cP(\cX)$, the \emph{relative information} in $x\in \cX$ according to $(P_X, Q_X)$ is 
	\begin{align}
		\imath^{}_{P_X\|Q_X}(x) = \log \frac{P_X(x)}{Q_X(x)} \text{.}
	\end{align}
\end{definition}

Although it neither satisfies symmetry nor the triangular inequality, widely used in probability theory, statistical inference, and physics, the expectation of the random variable $\imath^{}_{P_X\| Q_X}(X)$ when $X\sim P_X$ is a non-negative measure of distinctness between $P_X$ and $Q_X$. This expectation is \emph{relative entropy}, defined as follows. 

\begin{definition}[Relative Entropy]\label{def:relative entropy}
Suppose $P_X$ and $Q_X$ are two distributions in $\cP(\cX)$ such that $P_X$ is absolutely continuous with respect to $Q_X$, i.e., $P_X \ll Q_X $. The \emph{relative entropy} between $P_X$ and $Q_X$ is
\begin{align}
	D(P_X\|Q_X) = \bbE[\imath^{}_{P_X \| Q_X}(X)]\text{,} 
\end{align}
where $X\sim P_X$. If $P_X \not\ll Q_X$, then $D(P_X\| Q_X) = +\infty $. 
\end{definition}
Several key properties of the relative entropy, including but not limited to its non-negativity and convexity, can be found in standard information theory books such as \cite{cover2012elements, csiszar2011information}. 

We define a conditional version of the relative entropy as below. 
\begin{definition}[Conditional Relative Entropy]
	Let $P_Y\in \cP(\cY)$ and suppose that $P_{X|Y}\colon \cY \to \cX $ and $Q_{X|Y}\colon \cY \to \cX $ are two conditional distributions on the finite alphabet $\cX$. The conditional relative entropy between $P_{X|Y}$ and $Q_{X|Y} $ given $Y\sim P_Y$ is defined as  
	\begin{align}
		&D( P_{X|Y} \| Q_{X|Y} |P_Y ) \nonumber \\ 
		 &\quad = D(P_{X|Y} P_Y \| Q_{X|Y} P_Y ) \\
		&\quad =\sum_{b\in \cY} P_Y(b) D(P_{X|Y=b}\|Q_{X|Y=b}) \text{.}
	\end{align}
\end{definition}

As mentioned above, since $D(P_X\|Q_X)$ does not satisfy all of the metric axioms, it is not a proper measure of distance between $P_X$ and $Q_X$ in the topological sense. One such metric that measures topological distance between two distributions $P_X$ and $Q_X$ is \emph{total variation distance} which is defined next.
\begin{definition}[Total Variation Distance]\label{def:total variation distance}
	Suppose $P_X$ and $Q_X$ are two distributions in $\cP(\cX)$, the \emph{total variation distance}\footnote{Also known as variational distance. Notice that our definition in \eqref{eqn:def:total variation distance} does not have the normalization factor of $1/2$, and for this reason, given $P_X$, $Q_X\in \cP(\cX)$, we have $0 \le \|P_X - Q_X\|^{}_1 \le 2 $.  The main results of this work do not change if the normalization factor is included.} (or $\ell_1$-distance) between $P_X$ and $Q_X$ is 
	\begin{align}
		\left\|P_X - Q_X\right\|^{}_1  &= \sum_{x\in \cX} \left|P_X(x) - Q_X(x)\right| \label{eqn:def:total variation distance}  \\
		&= 2\sup_{\cA\subset \cX} |P_X(\cA) - Q_X(\cA)| \text{.}
 	\end{align}
\end{definition}

Letting $\cX $ and $\cY$ denote finite input and output alphabets, respectively, and using the standard notation $a^n = (a_1, \ldots, a_n)$ to denote an $n$-dimensional array, a \emph{stationary discrete memoryless channel} is defined through the sequence of random transformations as follows.
\begin{definition}[Discrete Memoryless Channel]
Suppose that $P_{Y|X}\colon \cX \to \cY$ is a random transformation between the finite alphabets $\cX$ and $\cY$. A \emph{stationary discrete memoryless channel} with input and output alphabets, $\cX$ and $\cY$, respectively, is a sequence of random transformations $\{P_{Y^n|X^n}\colon \cX^n \to \cY^n \}_{n=1}^{\infty}$ such that
\begin{align}
	P_{Y^n|X^n}(y^n|x^n) = \prod_{i=1}^n P_{Y_i|X_i}(y_i|x_i) \text{,}
\end{align}
where for each $i$, $P_{Y_i|X_i} = P_{Y|X}$. 
\end{definition}
If the input and the output of the stationary discrete memoryless channel are independent from each other, i.e., $P_{Y^n|X^n} = P_{Y^n}$, then we call this channel a \emph{degenerate channel} as it is impossible to communicate a meaningful message through it. 

Assume that $P_{X}\in \cP(\cX)$, unless otherwise stated, the product distribution $P_{X^n} \in \cP(\cX^n) $ denotes its independent identically distributed (i.i.d.) extension, i.e.,
\begin{align}
	P_{X^n}(x^n)  = \prod_{i=1}^{n} P_{X_i}(x_i) \text{,}
\end{align}
where $X_i$ are i.i.d. according to $P_{X}$. If we input an $n$-shot stationary discrete memoryless channel $P_{Y^n|X^n}$ with $X^n\sim P_{X^n}$, then at the output we get $Y^n\sim P_{Y^n}$ where
\begin{align}
	P_{Y^n}(y^n) = \sum_{x^n\in \cX^n} P_{X^n}(x^n) P_{Y^n|X^n}(y^n|x^n) \text{.}
\end{align}

\begin{rem}
  Throughout this paper, $P_{Y^n|X^n}$ denotes a stationary memoryless extension of the single-shot discrete channel $P_{Y|X}$. Similarly, $P_{X^n}$ and $P_{Y^n}$ always denote the product distributions of $P_X \in \cP(\cX) $ and $P_Y \in \cP(\cY) $, respectively, with former denoting the input distribution and the latter denoting the output distribution.
\end{rem} 

In what follows, we occasionally make use of the notation 
\begin{align}
	P_{X^n} \to P_{Y^n|X^n} \to P_{Y^n} \nonumber
\end{align}
to indicate that the $n$-shot channel $P_{Y^n|X^n} \colon \cX^n \to \cY^n $ is inputted with a random variable $X^n$ whose distribution is $P_{X^n} $, and the resulting random variable $Y^n$ at the output of the channel has distribution $P_{Y^n} = \sum_{x^n\in \cX^n} P_{X^n}(x^n)P_{Y^n|X^n}(\cdot|x^n)$.  Indeed, $P_{X^n}\to P_{Y^n|X^n} \to P_{Y^n}$ also defines a joint distribution $P_{X^nY^n} = P_{X^n}P_{Y^n|X^n}$, and furthermore, it allows us to define a key quantity in information theory, namely the \emph{information density}. 
\begin{definition}[Information Density]
	Given $P_X \to P_{Y|X} \to P_Y$, the \emph{information density} of $(x,y)\in \cX\times \cY $ is 
	\begin{align}
		\imath^{}_{X;Y}(x,y) &= \imath^{}_{P_{XY}\| P_X P_Y} (x,y) \\
		&= \log \frac{P_{Y|X}(y|x)}{P_Y(y)} \text{.}
	\end{align}
\end{definition}
Granted that the correlation between $X\sim P_X$ and $Y\sim P_Y$ is through $P_X \to P_{Y|X} \to P_Y$, the expected value of the random variable $\imath^{}_{X;Y}(X;Y)$ is a measure of dependency between $X$ and $Y$, which gives rise to the definition of mutual information. 
\begin{definition}[Mutual Information]
	Given $P_X \to P_{Y|X} \to P_Y$, the \emph{mutual information} of $(X, Y)\sim P_X P_{Y|X}$ is
	\begin{align}
		I(P_X, P_{Y|X}) &= \bbE[\imath^{}_{X;Y}(X;Y)]\\
		&= D(P_{XY}\| P_X P_Y)  \\
		&= D(P_{Y|X}\| P_Y | P_X) \text{.} 
	\end{align}
\end{definition}
The heart of the proof in channel coding theorem, random i.i.d. coding ensemble can be defined as follows. 
\begin{definition}[Random (i.i.d.) Codebook]\label{def:random codebook}
	Given $P_X \in \cP(\cX)$, let $P_{X^n}\in \cP(\cX^n)$ be its i.i.d. extension. A \emph{random (i.i.d.) codebook} $\scrC_M^n$ of size $M$ and block-length $n$ satisfies 
\begin{align}
	\scrC_M^n =\{X_1^n, \ldots , X_M^n \}, \label{eqn:def:random codebook}
\end{align} 
where $X_j^n$ are independently drawn from $P_{X^n}$ for each $j\in \{1, \ldots, M\}$. 
\end{definition}
Given a random codebook $\scrC_M^n$, the distribution at the output of the channel induced by $\scrC_M^n$ is defined next.
\begin{definition}[Induced Output Distribution]\label{def:induced output distribution}
Given an $n$-shot stationary discrete memoryless channel $P_{Y^n|X^n} \colon \cX^n \to \cY^n$, let $\scrC^n_M$ be the random codebook defined as in \eqref{eqn:def:random codebook}. Then, $\induced$ denotes the \emph{induced output distribution} when a uniformly chosen codeword from $\scrC_M^n$ is transmitted through $P_{Y^n|X^n}$. In other words, for any $y^n\in \cY^n$,
\begin{align}
	\induced(y^n) = \frac{1}{M} \sum_{j=1}^{M} P_{Y^n|X^n}(y^n|X_j^n) \text{,} \label{eqn:def:induced channel output distribution}
\end{align}
where $X_j^n \sim  P_{X^n}$ for each $j\in \{1, \ldots, M\}$. 
\end{definition} 
\begin{rem}
  Due to its dependence on the random codebook $\scrC_M^n$, $\induced$ is, in fact, a random variable. 
\end{rem}
Oftentimes, it is combinatorially convenient to treat the sequences with identical empirical distributions on an equal footing. Given a sequence $x^n\in \cX^n$, its empirical distribution is called an \emph{$n$-type} which we define as follows.
\begin{definition}[$n$-Type]
	For any positive integer $n$, a probability distribution $Q_{\bar X}\in \cP(\cX)$ is called an $n$-type if for any $x\in \cX$
	\begin{align}
		Q_{\bar X}(x) \in \left\{0, \frac{1}{n}, \frac{2}{n}, \ldots, 1\right\} \text{,}
	\end{align}
and the set of all $n$-types is denoted by $\cP_n(\cX) \subset \cP(\cX)$. 
\end{definition}
\begin{rem}\label{rem:m-type and km-type}
  For $m, k\in \bbN$, if $Q_{\bar X}$ is an $m$-type, it is also an $km$-type.
\end{rem}
Note that, see, e.g., \cite[Problem 2.1]{csiszar2011information}, the exact number of $n$-types in $\cX^n$ is $|\cP_n(\cX)| = \binom{n+|\cX|-1}{|\cX|-1}$ which grows polynomially with $n$. Since $n$-types play a significant role in our proofs, from this point onward, we reserve the overbar random variable notation for $n$-types. That is, for example, $\bar X \sim Q_{\bar X} $ denotes a random variable whose distribution  is an $n$-type $Q_{\bar X} \in \cP_n(\cX)$. Similarly, $(\bar X,\bar Y) \sim Q_{\bar X \bar Y} $ denotes a random variable whose distribution is a joint $n$-type $Q_{\bar X \bar Y} \in \cP_n(\cX \times \cY)$.

It is easy to see that given a sequence $x^n = (x_1, \ldots, x_n) \in \cX^n$ of block-length $n$, its empirical distribution defines an $n$-type $Q_{\bar X} \in \cP_n(\cX) $ as
	\begin{align}
		Q_{\bar X}(a) = \frac 1n \sum_{i=1}^n 1\{a = x_i\} \text{.}
	\end{align}
Conversely, given an $n$-type $Q_{\bar X} \in \cP_n(\cX)$, one can find a sequence $x^n\in \cX$ whose empirical distribution is $Q_{\bar X}$. This gives rise to the following definition. 
\begin{definition}[Type Class]
Given an $n$-type $Q_{\bar X}\in \cP_n(\cX)$, the subset $\cT^n_{Q_{\bar X}} \subset \cX^n$ is called the \emph{the type class of} $Q_{\bar X}$, and it denotes the set of all $x^n\in \cX^n$ whose empirical distribution is $Q_{\bar X}$. 
\end{definition}
To better understand the interplay of the joint sequences, the concept of \emph{conditional $n$-type} will be required. Let 
\begin{align}
  \cP(\cX |\cY ) = \{P_{X|Y} \colon \cY \to \cX  \}
\end{align}  
denote the set of all random transformations\footnote{Since both $\cX$ and $\cY$ are finite alphabets, under the convention that probability distributions are column vectors, $\cP(\cX|\cY)$ denotes the set of size $|\cX|\times |\cY|$ stochastic matrices.\label{ftnt:convention}} from $\cY$ to $\cX$. 
\begin{definition}[Conditional Type]\label{def:conditional type} 
Given an $n$-type $Q_{\bar Y}$, fix $y^n \in \cT^n_{Q_{\bar Y}}$. A random transformation\footnote{Under the convention of \Cref{ftnt:convention}, a stochastic matrix of dimension $|\cX| \times |\cY|$.} $Q_{\bar X|\bar Y} \colon \cY \to \cX \in \cP(\cX|\cY)$ is called the conditional type of $x^n \in \cX^n$ given $y^n$ if for any $(a,b)\in \cX \times \cY $
\begin{align}
 Q_{\bar X \bar Y}(a,b) = Q_{\bar X|\bar Y}(a|b) Q_{\bar Y}(b)   \text{,}
\end{align}
where $Q_{\bar X\bar Y}$ denotes the joint $n$-type of $(x^n, y^n)$. 
\end{definition}
\begin{rem}\label{rem:conditional types}
  Note that if $Q_{\bar Y}(b) = 0$ for some $b\in\cY$, then $Q_{\bar X\bar Y}(a,b)=0$ for any $a\in\cX$ and $Q_{\bar X|\bar Y} (\cdot|b)$ is not defined. If $Q_{\bar Y}(b) > 0$, then $Q_{\bar X|\bar Y} (\cdot|b)$ is a $t$-type where $t=n Q_{\bar Y}(b) $ is the number of times $b$ appears in $y^n$.
\end{rem}
Given a fixed $y^n\in \cT^n_{Q_{\bar Y}}$, the joint type $Q_{\bar X\bar Y}$ of the sequence $(x^n, y^n)$ can be determined by the conditional type $Q_{\bar X|\bar Y}$ of $x^n$ given $y^n$, in which case $Q_{\bar X\bar Y} = Q_{\bar X|\bar Y} Q_{\bar Y}  $. As this concept is utilized throughout this paper, a notation for the set of all conditional types is in order. 
\begin{definition}[Set of Conditional Types]\label{def:set of conditional types}
Given an $n$-type $Q_{\bar Y} \in \cP_n(\cY)$, $\cP_n(\cX|Q_{\bar Y})$ denotes the set of all conditional types given $y^n\in\cT^n_{Q_{\bar Y}}$. 
\end{definition}
\begin{rem}\label{rem:cP_n(cX|Q_{bar Y} depends only on the type Q_{bar Y}}
 As suggested by our careful choice of notation, it is easy to see that $\cP_n(\cX|Q_{\bar Y})$ depends on $y^n\in \cT^n_{Q_{\bar Y}} $ only through its type $Q_{\bar Y}$. Note that the subscript $n$ in the notation $\cP_n(\cX|Q_{\bar Y})$ is to denote that $Q_{\bar Y}$ is an $n$-type. Elements of $\cP_n(\cX|Q_{\bar Y})$ are conditional types, which are not necessarily $n$-types, see \Cref{rem:conditional types}.
\end{rem}
\begin{rem}\label{rem:joint types vs right coset of conditional types}
With \Cref{def:set of conditional types} at hand, the set of the joint $n$-types on $\cX\times\cY$ can be written as the disjoint union over $n$-types $\cP_n(\cY)$ of the right $Q_{\bar Y}$ coset\footnote{Abuse of terminology. $\cP_n(\cX|Q_{\bar Y})$ does not have a group structure.} of the set of conditional types $\cP_n(\cX|Q_{\bar Y})$. That is, borrowing the coset notation from algebra, 
\begin{align}
  \cP_n(\cX\times\cY) = \bigsqcup_{Q_{\bar Y}\in \cP_n(\cY)} \cP_n(\cX|Q_{\bar Y}) \times Q_{\bar Y} \text{,}
\end{align}
where the notation $\sqcup$ emphasizes that the unionization is disjoint.
\end{rem}
It is straightforward that given $y^n\in \cT^n_{Q_{\bar Y}}$, the empirical distribution of $x^n\in\cX^n $ in comparison with $y^n$ defines a conditional type $Q_{\bar X|\bar Y}\in \cP_n(\cX|Q_{\bar Y})$ as
\begin{align}
  Q_{\bar X|\bar Y}(a|b) = \frac{1}{n Q_{\bar Y}(b)} \sum_{i=1}^{n Q_{\bar Y}(b)} 1\left\{(x_i, y_i) = (a,b)\right\} \text{.}
\end{align}
Conversely, suppose we have a conditional type $Q_{\bar X|\bar Y} \in \cP_n(\cX|Q_{\bar Y})$ given $y^n \in \cT^n_{Q_{\bar Y}}$, we can construct a sequence $x^n\in\cX^n$ whose empirical distribution in comparison with $y^n$ is $Q_{\bar X|\bar Y}$. This gives rise to the definition of \emph{conditional type class}. 
\begin{definition}[Conditional Type Class]\label{def:conditional type class}
  Let $Q_{\bar X|\bar Y} \in \cP_n(\cX|Q_{\bar Y})$ be a conditional type given $y^n \in \cT^n_{Q_{\bar Y}}$, the subset $\cT^n_{Q_{\bar X|\bar Y}}(y^n)$ is called the \emph{conditional type class} of $Q_{\bar X|\bar Y}$ given $y^n$, and it denotes the set of all $x^n\in\cX^n $ whose empirical distribution in comparison with $y^n$ is $Q_{\bar X|\bar Y}$.
\end{definition}
\begin{rem}
The size of the conditional type class, namely $|\cT^n_{Q_{\bar X|\bar Y}}(y^n)|$, depends on $y^n$ only through its type. This is because shuffling the order of terms in which they appear in $y^n$, one can always shuffle $x^n$ in the same manner preserving the conditional type of $x^n$ given $y^n$. 
\end{rem}
Using the established familiarity with types, a \emph{random constant-composition codebook} can be defined as follows. 
\begin{definition}[Random Constant-Composition Codebook] \label{def:random cc codebook}
For a fixed integer $m$, suppose we are given an $m$-type $P_{\bar X}\in \cP_m(\cX)$. Let $n$ be a multiple of $m$ (i.e., $n\in m\bbN $) and define a \emph{constant-composition distribution} on $\cX^n$ based on $P_{\bar X}$ as
\begin{align}
  R_{\breve X^n}(x^n) = \frac{1}{|\cT^n_{P_{\bar X}}|} 1\left\{x^n \in \cT^n_{P_{\bar X}}\right\} \text{.} \label{eqn:def:cc distribution}
\end{align}
 Then, a \emph{random constant-composition codebook} of size $M$, and block-length $n$, that is based on $P_{\bar X}$ is defined as
 \begin{align}
   \scrD_M^n = \left\{\breve X^n_1, \ldots, \breve X^n_M\right\}\text{,} \label{eqn:def:cc random codebook}
 \end{align}
  where $\breve X^n_j $ are pairwise independent and identically distributed with $R_{\breve X^n}$ for each $j\in \{1, \ldots, M\}$. 
\end{definition}
\begin{rem}
  Each codeword in $\scrD_M^n$ has the same $m$-type $P_{\bar X}$ as they are taken uniformly at random from the type class $\cT^n_{P_{\bar X}}$, hence the name \emph{constant-composition}.
\end{rem}
\begin{rem}
  In the constant-composition case, $m$ is always fixed and $n$ is always a multiple of $m$. This ensures that the type class $\cT^n_{P_{\bar X}}$ is a well-defined non-empty set as the $m$-type $P_{\bar X}$ is also an $n$-type, see \Cref{rem:m-type and km-type}.
\end{rem}
\begin{rem}
  Throughout this paper, the distributions with breve accent `` $\breve{} $ " either denote constant-composition distributions or denote output distributions that are induced by constant-composition distributions. That is, unlike $P_{X^n} \in \cP(\cX^n)$, or $P_{Y^n} \in \cP(\cY^n) $;  $R_{\breve X^n} \in \cP(\cX^n)$, nor $R_{\breve Y^n} \in \cP(\cY^n)$, is \emph{not} a product distribution. 
\end{rem}
Given a random constant-composition codebook $\scrD_M^n$, the \emph{constant-composition induced output distribution} $\ccinduced$, in other words, the distribution induced by $\scrD_M^n$ at the channel output, is defined as follows.
\begin{definition}[Constant-Composition Induced Output Distribution]\label{def:cc induced output distribution}
  Given an $n$-shot stationary discrete memoryless channel $P_{Y^n|X^n}\colon \cX^n \to \cY^n$, let $\scrD_M^n $ be a random constant-composition codebook defined as in \eqref{eqn:def:cc random codebook}. Then, $\ccinduced $ denotes the \emph{constant-composition induced output distribution} when a uniformly chosen codeword from $\scrD_M^n$ is transmitted through $P_{Y^n|X^n}$. In other words, for any $y^n \in \cY^n$,
  \begin{align}
    \ccinduced(y^n) = \frac{1}{M} \sum_{j=1}^M P_{Y^n|X^n}(y^n| \breve{X}_j^n ) \text{,}
  \end{align}
   where for each $j\in \{1, \ldots, M \} $ the random variable $\breve X_j^n$ is distributed according to a constant-composition distribution $R^{}_{\breve X^n}$ that is based on an $m$-type $P_{\bar X} \in \cP_m(\cX) $, namely $\breve X_j^n \sim R_{\breve X^n}^{}$ as in \eqref{eqn:def:cc distribution}. 
\end{definition}
\begin{rem}
  Similar to $\induced$, due to its dependence on the random (constant-composition) codebook $\scrD_M^n$, $\ccinduced$ is, indeed, a random variable. 
\end{rem}
In proving the second main result of this paper, namely \Cref{thm:cc:main}, some additional notions, such as the set of joint types with fixed $\cX$- and $\cY$-marginals and the set of conditional types with fixed marginals, will be of use. The following definitions set the notation.
\begin{definition}[Set of Joint Types with Fixed Marginals]\label{def:set of joint types with fixed marginals}
Consider the set of joint $n$-types $\cP_n(\cX \times \cY)$. The subset $\cP_n(\cX\times \cY; Q_{\bar X}\times Q_{\bar Y}) \subset \cP_n(\cX \times \cY)$ denotes the set of all joint $n$-types whose $\cX$-marginal is fixed to be $Q_{\bar X}$ and $\cY$-marginal is fixed to be $Q_{\bar Y}$. That is
\begin{align}
  & \cP_n(\cX\times \cY; Q_{\bar X}\times Q_{\bar Y})= \Bigg\{Q_{\bar X\bar Y}\colon \nonumber \\  
  &\quad \sum_{b \in \cY} Q_{\bar X\bar Y}(\cdot, b) = Q_{\bar X} \text{, } \sum_{a \in \cX} Q_{\bar X\bar Y}(a, \cdot) = Q_{\bar Y} \Bigg\}  \text{.}
\end{align}
Similarly, the subset $\cP_n(\cX\times \cY; Q_{\bar X}\times \cdot)$ (respectively, $\cP_n(\cX\times \cY;\cdot\times Q_{\bar Y})$) denotes the set of joint $n$-types on $\cX\times\cY$ whose $\cX$-marginal is fixed to be $Q_{\bar X}$ (respectively, $\cY$-marginal is fixed to be $Q_{\bar Y}$). That is,
\begin{align}
  &\cP_n(\cX\times \cY; Q_{\bar X}\times \cdot) \nonumber \\ 
  &\qquad = \left\{Q_{\bar X\bar Y}\colon \sum_{b\in\cY} Q_{\bar X\bar Y}(\cdot, b) = Q_{\bar X} \right\} \text{,}  \\
  &\cP_n(\cX\times \cY; \cdot\times Q_{\bar Y}) \nonumber \\ 
  &\qquad = \left\{Q_{\bar X\bar Y}\colon \sum_{a\in\cX} Q_{\bar X\bar Y}(a, \cdot) = Q_{\bar Y} \right\} \text{.}
\end{align}  
\end{definition}
\begin{definition}[Set of Conditional Types with Fixed Marginals]\label{def:set of conditional types with fixed marginal}
Consider $\cP_n(\cX|Q_{\bar Y})$, the set of all conditional types given $y^n \in \cT^n_{Q_{\bar Y}}$. The subset $\cP_n(\cX|Q_{\bar Y};P_{\bar X}) \subset \cP_n(\cX|Q_{\bar Y})$ denotes the set of conditional types given $y^n\in \cT^n_{Q_{\bar Y}}$ with a fixed $\cX$-marginal $P_{\bar X}$. That is, 
\begin{align}
  &\cP_n(\cX|Q_{\bar Y};P_{\bar X}) \nonumber \\ 
  &\qquad = \{Q_{\bar X|\bar Y} \colon Q_{\bar Y} \to Q_{\bar X|\bar Y} \to P_{\bar X} \}  \\
  &\qquad = \left\{Q_{\bar X|\bar Y} \colon \sum_{b\in \cY}Q_{\bar X|\bar Y}(\cdot| b) Q_{\bar Y}(b) = P_{\bar X}\right\}  \text{.}
\end{align} 
\end{definition}
\begin{rem}\label{rem:cc:cP_n(cX|Q_{bar Y};P_{bar X}) depends on y^n through its type}
$ \cP_n(\cX|Q_{\bar Y};P_{\bar X}) $ depends on $y^n \in \cT^n_{Q_{\bar Y}}$ only through its type $Q_{\bar Y}$. The subscript $n$ in its notation is to denote that $P_{\bar X} $ and $Q_{\bar Y} $ are $n$-types. The elements of $ \cP_n(\cX|Q_{\bar Y};P_{\bar X}) $ are conditional types, which are not necessarily $n$-types, see \Cref{rem:conditional types}.
\end{rem}
\begin{rem}\label{rem:cc:coset notation for set of conditional types}
Using the coset\footnote{Abuse of terminology. $\cP_n(\cX|Q_{\bar Y};P_{\bar X}) $ does not have a group structure.} notation and the definitions above, the following identities are immediate: 
\begin{align}
  & \cP_n(\cX\times \cY; P_{\bar X}\times \cdot) \nonumber \\ 
  &\qquad     = \bigsqcup_{Q_{\bar Y}\in\cP_n(\cY)} \cP_n(\cX|Q_{\bar Y};P_{\bar X}) \times Q_{\bar Y} \label{eqn:joint with fixed vs conditional-1}  \\
  &\qquad =   P_{\bar X} \times \cP_n(\cY|P_{\bar X}) \text{,} \label{eqn:joint with fixed vs conditional-2} 
\end{align}
where the notation $\sqcup$ emphasizes that the unionization is disjoint.
\end{rem}
\section{Exact Soft-Covering Exponent}\label{sec:main}

We begin by citing \cite[Theorem 31]{lcv2017egamma} which establishes that the total variation distance between the induced output distribution $\induced$ and the desired output distribution $P_{Y^n}$ has a concentration property. As the block-length $n$ increases, the total variation distance between these two distributions (a random quantity, due to the randomness of the codebook) concentrates tightly to its exponentially vanishing expected value with double-exponential certainty.\footnote{The result of \Cref{lem:TV_concentration-McDiarmid-deterministic} holds regardless of the value of the rate $R$ whereas $\bbE[\|\induced - P_{Y^n}\|^{}_1]$ vanishes exponentially fast only if $R > I(P_X, P_{Y|X})$. In the right side of \eqref{eqn:thm:TV mcdiarmd}, double-exponential decay is guaranteed when $M = \exp(nR) $.}

\begin{lemma}\label{lem:TV_concentration-McDiarmid-deterministic}
Suppose $P_{X^n} \to P_{Y^n|X^n} \to P_{Y^n} $ and denote by $\induced $ the induced output distribution when a uniformly chosen codeword from the random (i.i.d.) codebook $\scrC_M^n$ is transmitted through the channel $P_{Y^n|X^n}$, see \Cref{def:random codebook,def:induced output distribution}. Then, for any $t>0$,
\begin{align}
 	&\bbP \Big[ \Big| \Big\|\induced - P_{Y^n}\Big\|^{}_1  - \bbE\Big[\Big\|\induced - P_{Y^n}\Big\|^{}_1\Big]\Big|\ge t \Big]  \nonumber \\  
 	&\qquad \le 2 \exp_\rme\left(-\frac{Mt^2}2\right) \text{.} \label{eqn:thm:TV mcdiarmd}
\end{align}
\end{lemma}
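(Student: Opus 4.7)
The plan is to apply McDiarmid's bounded-differences inequality to the function
\[
f(x_1^n,\ldots,x_M^n) = \Big\| \tfrac{1}{M}\sum_{j=1}^M P_{Y^n|X^n}(\cdot\,|x_j^n) - P_{Y^n}\Big\|_1^{},
\]
viewed as a function of the $M$ independent codewords that make up $\scrC_M^n$. Since each codeword $X_j^n$ is drawn independently from $P_{X^n}$, once we verify a bounded-differences property with constants $c_j = 2/M$, the desired bound $2\exp_\rme(-Mt^2/2)$ drops out of McDiarmid's inequality as $\sum_{j=1}^M c_j^2 = 4/M$.

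First I would fix an arbitrary index $j\in\{1,\ldots,M\}$ and two candidate codewords $x_j^n,\tilde x_j^n \in \cX^n$, keeping all other codewords $x_i^n$ ($i\neq j$) the same. Writing $\scrC_M^n$ and $\tilde\scrC_M^n$ for the two codebooks that differ only in the $j$th coordinate, I would use the triangle inequality
\[
\Big| \|P_{Y^n|\scrC_M^n} - P_{Y^n}\|_1^{} - \|P_{Y^n|\tilde\scrC_M^n} - P_{Y^n}\|_1^{} \Big| \le \big\| P_{Y^n|\scrC_M^n} - P_{Y^n|\tilde\scrC_M^n} \big\|_1^{}
\]
to reduce the problem to bounding the right-hand side. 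By the definition of the induced output distribution in \eqref{eqn:def:induced channel output distribution}, the two induced distributions agree on the contribution of every codeword other than the $j$th, and the $j$th contributes a term of the form $\tfrac{1}{M}(P_{Y^n|X^n}(\cdot|x_j^n) - P_{Y^n|X^n}(\cdot|\tilde x_j^n))$. Summing the absolute value over $y^n\in\cY^n$ and using the triangle inequality once more gives an upper bound of $\tfrac{1}{M}\big(\|P_{Y^n|X^n}(\cdot|x_j^n)\|_1^{} + \|P_{Y^n|X^n}(\cdot|\tilde x_j^n)\|_1^{}\big) = 2/M$, since each channel output distribution sums to one.

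With the bounded-differences constants $c_j = 2/M$ established for every $j\in\{1,\ldots,M\}$, the final step is to invoke McDiarmid's inequality on $f(X_1^n,\ldots,X_M^n)$: for every $t>0$,
\[
\bbP\big[|f - \bbE[f]| \ge t\big] \le 2\exp_\rme\!\left(-\frac{2t^2}{\sum_{j=1}^M c_j^2}\right) = 2\exp_\rme\!\left(-\frac{2t^2}{4/M}\right) = 2\exp_\rme\!\left(-\frac{Mt^2}{2}\right),
\]
which is \eqref{eqn:thm:TV mcdiarmd}. There is no serious obstacle here, the only step requiring any care is the bounded-differences constant, and the key observation is that the total variation distance and the uniform mixing over $M$ codewords conspire to give the constant $2/M$ rather than a larger value; this is exactly what makes the bound doubly exponential when $M$ is itself exponentially large in $n$.
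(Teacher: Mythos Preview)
Your proposal is correct and follows essentially the same approach as the paper: both establish the bounded-differences constant $c_j = 2/M$ for the function $f(X_1^n,\ldots,X_M^n) = \|\induced - P_{Y^n}\|_1$ and then invoke McDiarmid's inequality. The only cosmetic difference is that you obtain $c_j = 2/M$ via the reverse triangle inequality for the $\ell_1$ norm, whereas the paper uses a sandwich bound against $\sum_{y^n}|\tfrac{1}{M}\sum_{j\neq i}P_{Y^n|X^n}(y^n|X_j^n) - P_{Y^n}(y^n)| \pm \tfrac{1}{M}$; both routes are equivalent.
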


Predictably, replacing the random (i.i.d.) codebook $\scrC_M^n$ with a random constant-composition codebook $\scrD_M^n$ in \Cref{lem:TV_concentration-McDiarmid-deterministic} and looking at the total variation distance between the constant-composition induced output distribution $\ccinduced$ and the desired output distribution $R_{\breve Y^n}$, we see that the same concentration property holds:

\begin{lemma}\label{lem:cc:concentration of TV distance btw ccinduced and ccoutput}
  Suppose $R_{\breve X^n} \to P_{Y^n|X^n} \to R_{\breve Y^n} $ and denote by $\ccinduced$ the induced output distribution when a uniformly chosen codeword from the random (constant-composition) codebook $\scrD_M^n $ is transmitted through the channel $P_{Y^n|X^n}$, see \Cref{def:random cc codebook,def:cc induced output distribution}. Then, for any $t>0$, 
  \begin{align}
    &\bbP \Big[ \Big| \Big\| \ccinduced - R_{\breve Y^n} \Big\|^{}_1  - \bbE\Big[ \Big\| \ccinduced - R_{\breve Y^n} \Big\|^{}_1\Big]\Big| \ge t \Big] \nonumber \\ 
    &\qquad \le 2\exp_\rme \left( -\frac{Mt^2}{2} \right)  \text{.}
  \end{align}
\end{lemma}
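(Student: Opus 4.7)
The plan is to mirror the proof of \Cref{lem:TV_concentration-McDiarmid-deterministic} essentially verbatim, relying on McDiarmid's bounded-differences inequality, since the only change is the law from which each codeword is drawn, not the independence structure of the codebook.

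First, I would view the quantity $\|\ccinduced - R_{\breve Y^n}\|^{}_1$ as a deterministic function $f\colon (\cX^n)^M \to \bbR $ of the $M$ codewords, defined by
\begin{align}
 f(x_1^n, \ldots, x_M^n) = \left\| \frac{1}{M} \sum_{j=1}^M P_{Y^n|X^n}(\cdot|x_j^n) - R_{\breve Y^n} \right\|^{}_1 \text{.}
\end{align}
By \Cref{def:random cc codebook}, the codewords $\breve X_1^n, \ldots, \breve X_M^n$ are pairwise independent (in fact fully independent) with common law $R_{\breve X^n}$, so $f(\breve X_1^n,\ldots,\breve X_M^n)$ is a function of $M$ independent random variables — the only hypothesis McDiarmid's inequality needs.

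Next, I would establish the bounded-differences constant. Fix an index $k\in\{1,\ldots,M\}$ and two codewords $x_k^n, \tilde x_k^n\in\cT^n_{P_{\bar X}}$, and let $\mu$ and $\tilde\mu$ denote the two induced distributions obtained by using $x_k^n$ and $\tilde x_k^n$ respectively in slot $k$. Applying the reverse triangle inequality,
\begin{align}
 \left| \|\mu - R_{\breve Y^n}\|^{}_1 - \|\tilde\mu - R_{\breve Y^n}\|^{}_1 \right|
 \le \|\mu - \tilde\mu\|^{}_1 \text{,}
\end{align}
and expanding $\mu - \tilde\mu$ and using $\|P_{Y^n|X^n}(\cdot|x_k^n) - P_{Y^n|X^n}(\cdot|\tilde x_k^n)\|^{}_1 \le 2$ yields that $f$ changes by at most $2/M$ whenever a single coordinate is modified. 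Note that this bound holds uniformly over $x_k^n,\tilde x_k^n$ inside the type class $\cT^n_{P_{\bar X}}$, which is the relevant support under $R_{\breve X^n}$.

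Finally, McDiarmid's inequality with these bounded-differences constants $c_k = 2/M$ gives, for any $t>0$,
\begin{align}
 &\bbP\Big[ \big| f(\breve X_1^n,\ldots,\breve X_M^n) - \bbE[f(\breve X_1^n,\ldots,\breve X_M^n)] \big| \ge t \Big] \nonumber \\
 &\qquad \le 2\exp_\rme\!\left(-\frac{2t^2}{\sum_{k=1}^M c_k^2}\right)
   = 2\exp_\rme\!\left(-\frac{Mt^2}{2}\right) \text{,}
\end{align}
which is exactly the claimed bound. There is no real obstacle here: the argument is the constant-composition analog of \cite[Theorem 31]{lcv2017egamma}, and the key point worth remarking is that although $R_{\breve X^n}$ is not a product distribution on $\cX^n$, the $M$ codewords are still independent across $j$, so McDiarmid's inequality applies unchanged. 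Thus the full proof can be deferred to \Cref{apdx:proof of TV concentration McDiarmid} with only a one-line pointer indicating that the derivation is identical to that of \Cref{lem:TV_concentration-McDiarmid-deterministic} after replacing $P_{X^n}$ by $R_{\breve X^n}$ and $P_{Y^n}$ by $R_{\breve Y^n}$.
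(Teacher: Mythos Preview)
Your proposal is correct and matches the paper's own proof essentially verbatim: the paper simply observes that $d_i(\|\ccinduced - R_{\breve Y^n}\|^{}_1) \le 2/M$ by the same triangle-inequality argument as in \Cref{lem:TV_concentration-McDiarmid-deterministic} and then invokes McDiarmid's inequality. Your additional remark that independence across codewords (not within a codeword) is what makes McDiarmid applicable is precisely the right point to emphasize.
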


The main results of this paper, stated in \Cref{thm:main,thm:cc:main}, give the exact asymptotic exponential decay rate of the expected total variation distance between the induced distribution $\induced$ (respectively, $\ccinduced$) and the desired output distribution $P_{Y^n}$ (respectively, $R_{\breve Y^n}$). 
\begin{theorem}[Exact Soft-Covering Exponent (i.i.d.)]\label{thm:main}
Suppose $P_{X^n}\to P_{Y^n|X^n} \to P_{Y^n}$, where the $n$-shot stationary memoryless channel $P_{Y^n|X^n}$ is non-degenerate, i.e., $P_{Y^n|X^n} \neq P_{Y^n}$. For any $R> I(P_X, P_{Y|X})$, let $M=\lceil \exp(nR) \rceil $, and denote by $\induced$ the induced output distribution when a uniformly chosen codeword from the random codebook $\scrC_M^n$ is transmitted through the channel, see \Cref{def:random codebook,def:induced output distribution}. Then,
	\begin{align}
		& \lim_{n\to \infty} -\frac{1}{n} \log \bbE \left[\left\| \induced - P_{Y^n} \right\|^{}_1\right] \nonumber \\   
		&\qquad = \min_{Q_{XY}\in \cP(\cX\times\cY)} \bigg\{ D(Q_{XY} \| P_{XY})  \nonumber \\ 
		&\qquad \qquad \quad \ \,  + \frac12 \left[R-D(Q_{XY}\|P_XQ_Y) \right]_{+}\bigg\} \label{eqn:thm:exact soft-cov exponent}\\
		&\qquad = \max_{\lambda\in [1,2] } \left\{ \frac{\lambda-1}{\lambda} \left(R - I^\sfs_\lambda (P_X, P_{Y|X})\right)\right\} \text{,} \label{eqn:thm:exact soft-cov exponent_ALTERNATIVE}
	\end{align}
	where in \eqref{eqn:thm:exact soft-cov exponent} $[f]_+ = \max \{0,f \} $; 
	and in \eqref{eqn:thm:exact soft-cov exponent_ALTERNATIVE} $I^\sfs_\lambda(P_X, P_{Y|X})$ denotes Sibson's proposal of the $\alpha$-mutual information\footnote{See \Cref{rem:alpha-mutual information}.} of order $\lambda$.
\end{theorem}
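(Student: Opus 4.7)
The plan is to prove both directions of \eqref{eqn:thm:exact soft-cov exponent} by the method of types, and then deduce \eqref{eqn:thm:exact soft-cov exponent_ALTERNATIVE} from it via convex duality. Because the summand of $\bbE\bigl[\|\induced - P_{Y^n}\|^{}_1\bigr]$ depends on $y^n$ only through its $\cY$-type,
\begin{align*}
\bbE\bigl[\|\induced - P_{Y^n}\|^{}_1\bigr] = \sum_{Q_{\bar Y}\in\cP_n(\cY)} |\cT^n_{Q_{\bar Y}}|\cdot\bbE\bigl[|\induced(y^n) - P_{Y^n}(y^n)|\bigr] \text{.}
\end{align*}
I would partition $\cX^n$ further by conditional type $Q_{\bar X|\bar Y}$ given $y^n$: every $x^n \in \cT^n_{Q_{\bar X|\bar Y}}(y^n)$ yields the same likelihood $P_{Y^n|X^n}(y^n|x^n) \doteq \exp(-nV_{Q_{\bar X\bar Y}})$ where $V_Q = -\bbE_Q[\log P_{Y|X}]$, and a random codeword falls in this class with probability $q(Q_{\bar X|\bar Y}) \doteq \exp(-n D(Q_{\bar X\bar Y}\|P_X Q_{\bar Y}))$. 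Letting $N(Q_{\bar X|\bar Y}) \sim \mathrm{Bin}(M, q)$ count codewords in the class, every subsequent estimate reduces to bounding $\bbE|N - Mq|$.

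For the achievability direction, the triangle inequality gives $|\induced(y^n) - P_{Y^n}(y^n)| \le M^{-1} \sum_{Q_{\bar X|\bar Y}} |N - Mq|\exp(-nV_{Q_{\bar X\bar Y}})$; in the high-density regime $Mq\ge 1$ (equivalently $R \ge D(Q_{\bar X\bar Y}\|P_X Q_{\bar Y})$) the variance bound yields $\bbE|N - Mq| \le \sqrt{Mq}$, and in the low-density regime the trivial bound gives $\bbE|N - Mq| \le 2Mq$. Combined with the identity $V_Q + D(Q\|P_X Q_Y) = H(Q_Y) + D(Q\|P_{XY})$ and the multiplicity $|\cT^n_{Q_{\bar Y}}| \doteq \exp(nH(Q_{\bar Y}))$, each joint type $Q_{\bar X\bar Y}$ contributes exactly $\exp\bigl(-n[D(Q_{\bar X\bar Y}\|P_{XY}) + \tfrac{1}{2}(R - D(Q_{\bar X\bar Y}\|P_X Q_{\bar Y}))_+]\bigr)$, and since there are polynomially many types the dominant term yields ``$\ge$'' in \eqref{eqn:thm:exact soft-cov exponent}. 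For the converse I would fix a joint type $Q^*$ close to a minimizer of the right side, restrict the $\ell_1$ sum to $y^n \in \cT^n_{Q^*_{\bar Y}}$, and establish matching \emph{anti-concentration} for $N^* := N(Q^*_{\bar X|\bar Y})$: in the high-density regime a fourth-moment Paley--Zygmund bound yields $\bbE|N^* - Mq^*| \gtrsim \sqrt{Mq^*}$ together with constant probability of a deviation of definite sign, while in the low-density regime $\bbP[N^* \ge 1] \asymp Mq^*$ already suffices. The hardest step will be controlling signed cancellations from sub-optimal conditional types; my plan is to condition on the sign of $N^* - Mq^*$ and argue that every other conditional type contributes exponentially less in absolute value, with potential ties in the minimizer resolved by a perturbation/continuity argument.

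Finally, the equivalence between \eqref{eqn:thm:exact soft-cov exponent} and \eqref{eqn:thm:exact soft-cov exponent_ALTERNATIVE} is a convex-duality computation. Splitting the primal along the sign of $R - D(Q\|P_X Q_Y)$, in the binding regime one introduces a Lagrange multiplier $\mu \in [0, 1/2]$ and sets $\lambda = 1/(1-\mu) \in [1, 2]$; standard R\'enyi-divergence algebra converts the inner minimum (achieved at a tilted distribution of the form $Q^*_{XY}(x,y) \propto P_X(x)\,P_{Y|X}(y|x)^\lambda$ up to a $y$-dependent normalization) into the Sibson form $I^\sfs_\lambda(P_X, P_{Y|X})$, leaving the objective $\tfrac{\lambda-1}{\lambda}(R - I^\sfs_\lambda(P_X, P_{Y|X}))$; taking the supremum over $\lambda \in [1, 2]$ recovers \eqref{eqn:thm:exact soft-cov exponent_ALTERNATIVE}.
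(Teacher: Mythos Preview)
Your achievability argument and the duality computation are essentially the paper's approach and are correct. The real issue is the converse.

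Your plan for the upper bound is to isolate a single near-optimal conditional type $Q^*_{\bar X|\bar Y}$, establish anti-concentration for $N^*$, and then argue that the remaining conditional types contribute exponentially less. The difficulty you flag---``controlling signed cancellations from sub-optimal conditional types''---is exactly the obstruction, and the scheme you sketch does not overcome it. First, the counts $\{N(Q_{\bar X|\bar Y})\}_{Q_{\bar X|\bar Y}}$ are \emph{multinomial}, not independent: they sum to $M$, so conditioning on the sign of $N^*-Mq^*$ shifts the distribution of every other count, and the ``exponentially smaller'' claim is no longer about an unconditional expectation. Second, it is simply not true that all other conditional types contribute at a strictly worse exponent: types in a $1/n$-neighborhood of $Q^*$ achieve the same first-order exponent, and there are polynomially many of them, so a perturbation/continuity argument cannot separate them. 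Your Paley--Zygmund step gives the right order for $\bbE|N^*-Mq^*|$, but it does not survive the sum inside the absolute value.

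The paper's converse solves this with a different idea: \emph{Poissonization}. Replace the deterministic $M$ by a Poisson random variable with mean $\mu_n=\exp(nR)$; then the $N(Q_{\bar X|\bar Y})$ become independent Poisson (Poisson splitting), and for independent zero-mean summands one has the elementary inequality $\bbE\bigl|\sum_i X_i\bigr|\ge\max_i\bbE|X_i|$. This immediately isolates the dominant type without any cancellation analysis. A direct lower bound on the absolute mean deviation of a Poisson variable (Stirling) supplies the anti-concentration. The Poisson assumption is then removed by the McDiarmid-type concentration of $\|\induced-P_{Y^n}\|_1$ around its mean (\Cref{lem:TV_concentration-McDiarmid-deterministic}): since the total variation is tightly concentrated for each fixed codebook size, the expectation under Poisson $M$ and under $M=\lceil\mu_n\rceil$ differ only by terms of exponent $>R/2>\alpha$. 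This Poissonize--then--depoissonize step is the key idea your proposal is missing.
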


\begin{theorem}[Exact Soft-Covering Exponent (constant-composition)]\label{thm:cc:main}
Let $m$ be a fixed integer and $P_{\bar X} \in \cP_m(\cX)$ be a fixed $m$-type. For  $n\in m\bbN $, suppose that $R_{\breve X^n}$ is a constant-composition distribution based on $P_{\bar X}$ defined as in \eqref{eqn:def:cc distribution}, and let $R_{\breve X^n}\to P^{}_{Y^n|X^n} \to R_{\breve Y^n} $, where the $n$-shot stationary discrete memoryless channel $P_{Y^n|X^n}$ is non-degenerate, i.e., $P^{}_{Y^n|X^n} \neq R_{\breve Y^n}$. For any $R>I(P_{\bar X}, P_{Y|X}) $, let $ M = \lceil \exp(nR) \rceil $, and denote by $\ccinduced $ the constant-composition induced output distribution when a uniformly chosen codeword from the random constant-composition codebook $\scrD^n_M$ is transmitted through the channel, see \Cref{def:random cc codebook,def:cc induced output distribution}. Then,
  \begin{align}
    & \lim_{n\to \infty} -\frac{1}{n}\log \bbE\left[\left\| \ccinduced - R_{\breve Y^n} \right\|^{}_1\right] \nonumber \\  
    &\qquad =\min_{Q_{Y|X}\in \cP(\cY|\cX)}  \bigg\{ D(P_{\bar X}Q_{Y|X}\|P_{\bar XY}) \nonumber \\ 
    &\qquad \quad \ \ \ \, + \frac12[R-D(P_{\bar X}Q_{Y|X}\|P_{\bar X}Q_Y)]_+  \bigg\}  \label{eqn:thm:cc:exponent}  \\ 
    & \qquad  = \max_{\lambda\in [1,2]} \left\{ \frac{\lambda-1}{\lambda} \left(R - I^\sfc_\lambda (P_{\bar X}, P_{Y|X})\right)\right\} \text{,}   \label{eqn:thm:cc:exponent_ALTERNATIVE} 
  \end{align}
  where in \eqref{eqn:thm:cc:exponent} $[f]_+ = \max\{0,f\} $, and $P_{\bar X}  \to Q_{Y|X} \to Q_Y  $; and in \eqref{eqn:thm:cc:exponent_ALTERNATIVE} $I^\sfc_\lambda (P_{\bar X}, P_{Y|X})$ denotes Csisz\'ar's proposal of the $\alpha$-mutual information\footnote{See Remark~\ref{rem:cc:csiszar-alpha-mutual information}.} of order $\lambda$. 
\end{theorem}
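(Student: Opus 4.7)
My plan is to follow the blueprint of the proof of \Cref{thm:main} while systematically replacing the i.i.d.\ machinery by the method of types adapted to the constant-composition ensemble. The starting observation is a permutation-symmetry reduction: because each codeword $\breve X^n_j$ is uniform over $\cT^n_{P_{\bar X}}$, the law of $P_{Y^n|X^n}(y^n|\breve X^n_j)$ depends on $y^n$ only through its $\cY$-type $Q_{\bar Y}$. I would therefore rewrite
\begin{align}
\bbE\bigl\|\ccinduced - R_{\breve Y^n}\bigr\|^{}_1 = \sum_{Q_{\bar Y}\in\cP_n(\cY)} \bigl|\cT^n_{Q_{\bar Y}}\bigr| \cdot \bbE\bigl|\ccinduced(y^n_\star) - R_{\breve Y^n}(y^n_\star)\bigr|, \nonumber
\end{align}
for a fixed representative $y^n_\star \in \cT^n_{Q_{\bar Y}}$, and then expand $\ccinduced(y^n_\star)$ as the empirical mean of $M$ i.i.d.\ copies of the random variable $Z = P_{Y^n|X^n}(y^n_\star|\breve X^n)$. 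Crucially, $Z$ is a function only of the conditional type $Q_{\bar X|\bar Y}\in\cP_n(\cX|Q_{\bar Y};P_{\bar X})$ of $\breve X^n$ given $y^n_\star$, and the associated probability masses $|\cT^n_{Q_{\bar X|\bar Y}}(y^n_\star)|/|\cT^n_{P_{\bar X}}|$ admit the familiar exponential type-size estimates.

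For the upper (achievability) direction I would invoke the non-negative two-regime bound
\begin{align}
\bbE\bigl|\bar Z - \bbE Z\bigr| \le 2 \min\left\{\bbE Z,\ \sqrt{\Var(Z)/M}\right\}, \nonumber
\end{align}
and for the matching lower (converse) bound the companion Paley--Zygmund-type estimate localized to a single dominant conditional type. Substituting the standard type-size estimates together with $P_{Y^n|X^n}(y^n_\star|x^n) \doteq \exp\bigl(-n[H(Q_{\bar Y|\bar X}|P_{\bar X}) + D(Q_{\bar Y|\bar X}\|P_{Y|X}|P_{\bar X})]\bigr)$ reduces both directions to a Laplace-method optimization which, via $H(\bar X\bar Y)=H(\bar X)+H(\bar Y|\bar X)$, pins $R_{\breve Y^n}(y^n_\star) \doteq \exp\bigl(-n[H(Q_{\bar Y}) + F(Q_{\bar Y})]\bigr)$ with $F(Q_{\bar Y}) = \min_{Q_{XY}\in\cP(\cX\times\cY; P_{\bar X}\times Q_{\bar Y})} D(Q_{XY}\|P_{\bar X Y})$. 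After comparing the $\bbE Z$ regime against the $\sqrt{\Var(Z)/M}$ regime, maximizing the resulting exponent over $Q_{\bar Y}$, and plugging in $M \doteq \exp(nR)$, the limit collapses to \eqref{eqn:thm:cc:exponent}: the factor $1/2$ arises from the square root and the $[\cdot]_+$ from the $\min$ of the two regimes.

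To obtain \eqref{eqn:thm:cc:exponent_ALTERNATIVE}, I would Legendre-transform \eqref{eqn:thm:cc:exponent} by introducing a dual parameter $\lambda\in[1,2]$ against the two divergence terms and recognize the resulting extremum as Csisz\'ar's $\alpha$-mutual information $I^{\sfc}_\lambda(P_{\bar X}, P_{Y|X})$; this is the relevant dual---in contrast to Sibson's $I^{\sfs}_\lambda$ of \Cref{thm:main}---precisely because the constant-composition constraint clamps $Q_X=P_{\bar X}$, so the Gibbs tilt that produces the dual acts only on $Q_{Y|X}$. I anticipate the main obstacle to be the discrete-to-continuous passage at the converse: the set $\cP_n(\cX|Q_{\bar Y};P_{\bar X})$ is doubly constrained---rational on a codomain-dependent grid and subject to the marginal equality $\sum_b Q_{\bar X|\bar Y}(\cdot|b) Q_{\bar Y}(b) = P_{\bar X}$---so verifying that infima over $n$-types converge to infima over $\cP(\cY|\cX)$ will require a careful feasibility and continuity argument mirroring the \Cref{thm:main} treatment but accounting for the extra marginal constraint, with the non-degeneracy hypothesis $P_{Y^n|X^n}\neq R_{\breve Y^n}$ ensuring that the optimum is attained in the interior and that the exponent varies continuously there.
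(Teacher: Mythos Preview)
Your achievability direction and the dual-representation plan are essentially the paper's: the type decomposition plus the two-regime bound $\bbE|Z_{Q}-\bbE Z_{Q}|\le \min\{2\bbE Z_{Q},\sqrt{\Var Z_{Q}}\}$ is exactly \Cref{lem:absolute mean deviation upper bound for Z_{Q_{barX|barY}}} with the replacements of \Cref{rem:cc:replacements lower bound}, and the Legendre/minimax manipulation yielding Csisz\'ar's $I^{\sfc}_\lambda$ is the content of \Cref{sec:alternative representation of aleph}.

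The genuine gap is the converse. Your ``companion Paley--Zygmund-type estimate localized to a single dominant conditional type'' does not address the obstacle the paper isolates: once you expand $\breve L_{\scrD_M^n}(y^n)-1=\sum_{Q_{\bar X|\bar Y}}\bigl(\breve Z_{Q_{\bar X\bar Y}}-\bbE\breve Z_{Q_{\bar X\bar Y}}\bigr)$, the counts $\breve N_{Q_{\bar X|\bar Y}}(y^n)$ are multinomial and hence \emph{dependent} across conditional types (they sum to $M$), so you cannot isolate a dominant type by conditioning or by any Paley--Zygmund argument on a single binomial margin---the residual term is deterministically coupled to it. The paper's fix is structural rather than analytic: first \emph{Poissonize} the codebook size, so the $\breve N_{Q_{\bar X|\bar Y}}$ become independent Poisson splits; then apply \Cref{lem:abs sum exp and max abs exp} (for independent zero-mean summands, $\bbE|\sum_i X_i|\ge\max_i\bbE|X_i|$) together with \Cref{lem:lower bd on abs mean dev poiss} to recover the $\min\{2\xi,\sqrt{\xi}\}$ lower bound typewise; finally \emph{depoissonize} via the McDiarmid concentration of \Cref{lem:cc:concentration of TV distance btw ccinduced and ccoutput} and \Cref{lem:TV concentrants to poisson-TV mean}. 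The non-degeneracy hypothesis is used not to place the optimizer in the interior, but to guarantee $\aleph(R,P_{\bar X},P_{Y|X})<R/2$ (\Cref{rem:less than R/2}), which is exactly what lets the depoissonization residuals $\mu_n^{-1/2}$ and $|\cY|^n a_{\delta-1/\mu_n}^{\mu_n}$ be absorbed. Your discrete-to-continuous worry, by contrast, is handled routinely by the density argument of \Cref{lem:cc:optimization over conditional types in the limit}.
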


Some remarks are in order. 

\begin{rem}
	To make it easier to refer, assuming $R>I(P_X,P_{Y|X})>0$, define 
	\begin{align}
		&\alpha(R,P_X,P_{Y|X}) \nonumber \\ 
		&\qquad = \min_{Q_{XY}\in \cP(\cX\times\cY)} \bigg\{  D(Q_{XY} \| P_{XY} ) \nonumber \\ 
		&\qquad \qquad \quad \ \,  + \frac12 \left[R-D(Q_{XY}\|P_XQ_Y) \right]_{+} \bigg\} \label{eqn:def:alpha(R,P_X,P_{Y|X})}  \\
		&\qquad = \max_{\lambda\in [1,2] } \left\{ \frac{\lambda-1}{\lambda} \left(R - I^\sfs_\lambda (P_X, P_{Y|X}) \right)  \right\}  \text{,} \label{eqn:def:alpha(R,P_X,P_{Y|X})_the dual}
	\end{align}
where the minimization in \eqref{eqn:def:alpha(R,P_X,P_{Y|X})} is over all joint distributions on $\cX\times \cY $. 

Similarly, assuming $R>I(P_{\bar X},P_{Y|X})>0$, define 
  \begin{align}
    &\aleph(R, P_{\bar X}, P_{Y|X}) \nonumber \\ 
    &\qquad = \min_{Q_{Y|X} \in \cP(\cY|\cX)} \bigg\{D(P_{\bar X}Q_{Y|X}\|P_{\bar XY}) \nonumber \\  
    &\qquad \quad \ \ \ \, + \frac12[R-D(P_{\bar X}Q_{Y|X}\|P_{\bar X}Q_Y)]_+ \bigg\}  \label{eqn:cc:def:aleph} \\
    &\qquad = \max_{\lambda\in [1,2]} \left\{ \frac{\lambda-1}{\lambda} \left(R - I^\sfc_\lambda (P_{\bar X}, P_{Y|X})\right)\right\} \text{,} \label{eqn:cc:def:aleph_the dual}
  \end{align} 
  where the minimization in \eqref{eqn:cc:def:aleph} is over all random transformations from $\cX$ to $\cY$. 
\end{rem}
\begin{rem}
	Perhaps surprisingly, the proof of \Cref{lem:TV_concentration-McDiarmid-deterministic}, which can be found in \cite[Theorem 31]{lcv2017egamma}, easily follows from McDiarmid's inequality \cite[Theorem 2.2.3]{raginsky2013concentration}. As \Cref{lem:TV_concentration-McDiarmid-deterministic} is an integral part of the spirit of this paper, Appendix~\ref{apdx:proof of TV concentration McDiarmid} repeats its simple proof. Also contained in Appendix~\ref{apdx:proof of TV concentration McDiarmid}, the proof of \Cref{lem:cc:concentration of TV distance btw ccinduced and ccoutput} follows the footsteps of that of \Cref{lem:TV_concentration-McDiarmid-deterministic}.
\end{rem}
\begin{rem}\label{rem:foreshadowing the upper bound proof}
	By further assuming that the codebooks $\scrC_M^n$ and $\scrD_M^n$ contain a random number of codewords $M$, thanks to the total probability law, it is possible to get corollaries to the results of \Cref{lem:TV_concentration-McDiarmid-deterministic,lem:cc:concentration of TV distance btw ccinduced and ccoutput}. Indeed, an example, in which we assume that $M$ is Poisson distributed, is useful in the proof of the upper bound in \Cref{thm:main}, cf. \Cref{lem:TV-concentration_random-poisson} in \Cref{apdx:main}. 
\end{rem}
\begin{rem}
In order to provide a better presentation, the proof of \Cref{thm:main} is divided into three parts, which can be found in Sections~\ref{sec:lower bound},~\ref{sec:upper bound} and~\ref{sec:alternative representation of alpha}. In proving the lower bound direction in \eqref{eqn:thm:exact soft-cov exponent}, see\footnote{Also see \cite{hayashi2013tight}, which studies the privacy amplification problem. As an application to the wiretap channel, \cite{hayashi2013tight} argues the lower bound in \eqref{eqn:thm:exact soft-cov exponent_ALTERNATIVE} without showing the equivalence in \eqref{eqn:thm:exact soft-cov exponent}. As a comparison to the method suggested in \cite{hayashi2013tight}, note that our proof in \Cref{sec:lower bound} is far simpler to follow.} \Cref{sec:lower bound}, the key steps are the use of the type method and an upper bound on the absolute mean deviation of a binomial distribution in terms of its mean and standard deviation. To prove the upper bound direction, on the other hand, the biggest problem turns out to be dealing with the weakly dependent binomial random variables, see \Cref{sec:upper bound}. To solve this weak dependence puzzle, first, the codebook size $M$ is treated as if it were a Poisson distributed random variable with mean $\mu_n = \exp(nR) $. This surplus assumption on the codebook size grants the desired independence property and provides the gateway to prove the pseudo-upper bound in the case when $M$ is Poisson distributed. Then, to prove the upper bound to the original problem where $M$ is deterministically equals to $\lceil \exp(nR) \rceil $, the extra Poisson assumption is removed by conditioning on $M=\lceil \exp(nR) \rceil $ and the result provided by \Cref{lem:TV_concentration-McDiarmid-deterministic} is enjoyed. As for the proof of the dual representation of the exact soft-covering exponent in \eqref{eqn:thm:exact soft-cov exponent_ALTERNATIVE}, see \Cref{sec:alternative representation of alpha}, the main tools are provided by \Cref{lem:rel ent and func minimization} and several corollaries that follow, all of which are contained in \Cref{apdx:alternative representation and comparisons}. 
\end{rem}
\begin{rem}  
While presenting the proof of \Cref{thm:main}, much effort has been made so that it is possible to capture that of \Cref{thm:cc:main} from the existing proof in \Cref{sec:lower bound,sec:upper bound}. Still, there are certain key differences between aforementioned two proofs, which is why neither theorem is a corollary of the other. One example to these key differences is that, in the case of \Cref{thm:cc:main}, in applying the type method, one needs to keep in mind that $\cX$-marginal of the joint types is fixed to be $P_{\bar X}$, whereas this is not the case in the proof of \Cref{thm:main}. Another key difference is that, in the case of \Cref{thm:cc:main}, the codewords of the random constant-composition codebook $\scrD_M^n$ are distributed according to the non-product distribution $R_{\breve X^n}$, while the codewords of the random (i.i.d.) codebook $\scrC_M^n$ are distributed according to the product distribution $P_{X^n}$. Luckily, using a minimalist approach, it is possible to emphasize the similarities in the techniques used. To do so, while proving \Cref{thm:main} in \Cref{sec:lower bound,sec:upper bound}, several remarks have been made to convince the reader in regard to \Cref{thm:cc:main} without having them read through its entire proof. Since the  presented material is more than enough to recover the proof of \Cref{thm:cc:main}, its full proof is omitted. However, note that, the proof for the equivalence of the primal and dual forms of the exact constant-composition soft-covering exponent, namely \eqref{eqn:thm:cc:exponent_ALTERNATIVE}, can be found in \Cref{sec:alternative representation of aleph}. 
\end{rem}
\begin{rem}
	The result of \Cref{thm:main} can alternatively be interpreted as the \emph{exact random coding exponent for resolvability}. Note, however, that we are \emph{not} claiming to have found ``the" exact resolvability exponent. Finding the exact resolvability exponent is a harder problem as it requires the search over all sequences of codes. Here, we restrict ourselves to random codebooks, as are typically used in achievability proofs (e.g. wiretap channels) where soft covering may be only one of several objectives.  This choice of focus has a side benefit of finding the exact exponent. 
\end{rem}
\begin{rem}
	As is evident from the upper bound in \eqref{eqn:main upper bound} in \Cref{sec:upper bound}, $\alpha(R,P_X,P_{Y|X})$ is the best possible soft-covering exponent in the random (i.i.d.) codebook case.\footnote{A similar statement is true for $\aleph(R,P_{\bar X},P_{Y|X})$ as well.} \Cref{sec:comparison IID Random Codebook,sec:cc:comparison Constant Composition} confirm that $\alpha(R,P_X,P_{Y|X})$ and $\aleph(R,P_{\bar X},P_{Y|X})$ provide an upper bound to the previously known lower bounds\footnote{These lower bounds can be found in (or deduced from) \cite[Theorem 6]{hayashi2006general}, \cite[Lemma VII.9]{cuff2013distributed}, \cite[Theorem 4]{parizi2017exact}, \cite[Theorem 10]{hayashi2011universally}, and \cite[Eq. (177)]{hayashi-matsumoto-2016}.} on the soft-covering exponent in their respective cases.
\end{rem}
\begin{rem} 
From the proofs provided, it is possible to deduce the following finite block-length results, see \Cref{thm:finite block-length exact scl,thm:cc:finite block-length cons. comp. exact scl} in \Cref{apdx:finite block-length results}:
\begin{align}
&\alpha_n(R, P_X, P_{Y|X}) - \kappa_n \nonumber \\
&\quad \le -\frac{1}{n} \log \bbE \left[\left\| \induced - P_{Y^n} \right\|^{}_1\right] \label{eqn:fin-blklnth-alpha_low}  \\
&\quad \le \alpha_n(R, P_X, P_{Y|X}) + \upsilon_n  \text{,} \label{eqn:fin-blklnth-alpha_upp} 
\end{align}
and
\begin{align}
&\aleph_n(R, P_{\bar X}, P_{Y|X}) - \breve \eta_n \nonumber \\
&\quad \le -\frac{1}{n}\log \bbE\left[\left\| \ccinduced - R_{\breve Y^n} \right\|^{}_1\right] \label{eqn:fin-blklnth-aleph_low} \\
&\quad \le \aleph_n(R, P_{\bar X}, P_{Y|X}) + \breve \upsilon_n \text{,} \label{eqn:fin-blklnth-aleph_upp}
\end{align}
where 
  \begin{align}
&\alpha_n(R, P_X, P_{Y|X}) \nonumber \\  
&\quad = \min_{Q_{\bar X\bar Y}\in \cP_n(\cX\times\cY)} \bigg\{ D(Q_{\bar X\bar Y}\|P_{XY}) \nonumber \\
&\qquad \qquad \qquad + \frac12 \left[R-D(Q_{\bar X\bar Y}\|P_XQ_{\bar Y})\right]_{+}\bigg\} \text{,} \label{eqn:def:alpha_n for finite block-length} 
\end{align}
and
\begin{align}
&\aleph_n(R, P_{\bar X}, P_{Y|X}) \nonumber \\
&\quad = \min_{Q_{\bar Y|\bar X}\in  \cP_n(\cY|P_{\bar X})}  \bigg\{  D(P_{\bar X}Q_{\bar Y|\bar X}\|P_{\bar XY}) \nonumber \\
&\qquad \qquad \ \  + \frac12[R-D(P_{\bar X}Q_{\bar Y|\bar X}\|P_{\bar X}Q_{\bar Y})]_+ \bigg\}  \text{.}
\end{align}
Among the vanishing constants $\kappa_n$, $\upsilon_n$, $\breve \eta_n$, $\breve \upsilon_n $, the ones in the lower bounds in \eqref{eqn:fin-blklnth-alpha_low} and \eqref{eqn:fin-blklnth-aleph_low}, i.e., $\kappa_n $ and $\breve \eta_n $, depend only on the block-length $n$ and the alphabet sizes $|{\cal X}|$ and $|{\cal Y}|$, while the ones in the upper bounds in \eqref{eqn:fin-blklnth-alpha_upp} and \eqref{eqn:fin-blklnth-aleph_upp}, i.e., $\upsilon_n $ and $\breve \upsilon_n$, additionally depend mildly\footnote{Also see \Cref{rem:upper bound constants in finite blocklenth} in Appendix~\ref{apdx:finite block-length results}.} on $P_X$, $P_{\bar X}$ and $P_{X|Y}$.  The definitions of these vanishing constants, along with the proofs of the pairs \eqref{eqn:fin-blklnth-alpha_low}--\eqref{eqn:fin-blklnth-alpha_upp} and \eqref{eqn:fin-blklnth-aleph_low}--\eqref{eqn:fin-blklnth-aleph_upp}, are contained in \Cref{apdx:finite block-length results}.
\end{rem}
\begin{rem}
In the case when $R\le I(P_X, P_{Y|X})$, $Q_{XY} = P_XP_{Y|X} $ becomes the optimizer in \eqref{eqn:thm:exact soft-cov exponent}, yielding the correct exponent,
\begin{align}
  \alpha(R, P_X, P_{Y|X} ) = 0 \text{,}
\end{align}
for the low-rate codes. 

Similarly, in the random constant-composition codebook setting, when $R\le I(P_{\bar X}, P_{Y|X})$, $Q_{Y|X} = P_{Y|X} $ becomes the optimizer in \eqref{eqn:thm:cc:exponent}, which yields the correct exponent,
\begin{align}
  \aleph(R, P_{\bar X}, P_{Y|X} ) = 0 \text{,}
\end{align}
  for the low-rate codes in this respective setting. 
\end{rem}

\begin{rem}\label{rem:degenerate channel case discontinuity}
In the degenerate channel case, i.e., when channel input and output are independent from each other, we have $\induced = P_{Y^n}$ (in the constant-composition codes setting, $\ccinduced = R_{\breve Y^n} $) and
\begin{align}
	\bbE \left[\left\| \induced - P_{Y^n} \right\|^{}_1\right]  &= 0 \text{,} \\
	\bbE \left[\left\| \ccinduced - R_{\breve Y^n} \right\|^{}_1\right]  &= 0 \text{.}
\end{align} 
In an allegorical spirit, one can say that the exact soft-covering exponents are $\infty$ in this case. Although, it should be noted that \eqref{eqn:thm:exact soft-cov exponent}, \eqref{eqn:thm:exact soft-cov exponent_ALTERNATIVE}, \eqref{eqn:thm:cc:exponent}, and \eqref{eqn:thm:cc:exponent_ALTERNATIVE} do not capture this conclusion. A similar discontinuity occurs in the case when the distinctness measure is relative entropy instead of total variation distance, see \cite[Theorem 4]{parizi2017exact}. In our treatment, the reason for these discontinuities can be observed from \eqref{from:triangle dude} in the upper bound proof.
\end{rem}


\begin{rem}\label{rem:alpha-mutual information}
  In the optimization in the right side of \eqref{eqn:def:alpha(R,P_X,P_{Y|X})_the dual}, letting $P_X \to P_{Y|X}\to P_Y$, $(X,Y)\sim P_X P_{Y|X} $,  and $(X, \widetilde Y)  \sim P_X P_Y $,
  \begin{align}
    & I^\sfs_\lambda(P_X, P_{Y|X}) \nonumber \\ 
    &= \frac{\lambda}{\lambda-1}\log \bbE\left[\bbE^{\frac1\lambda}\left[\exp((\lambda-1) \,\imath^{}_{X;Y}(X;Y))\big|Y\right]\right] \label{eqn:def:alpha mut info-semih} \\
    &= \frac{\lambda}{\lambda-1}\log \bbE\left[\bbE^{\frac1\lambda}\left[\exp(\lambda\, \imath^{}_{X;Y}(X; \widetilde Y) ) \big| \widetilde Y\right]\right] \label{eqn:def:alph mut info-verdu} \\
    &= \frac{\lambda}{\lambda-1}\log \sum_{y\in \cY} \left(\sum_{x\in \cX} P^{}_X(x) P^\lambda_{Y|X}(y|x)\right)^{\frac1\lambda}
  \end{align}
  is the $\alpha$-mutual information of order $\lambda $ as defined by Sibson \cite{sibson1969information}. Its more general definition, basic properties, relation to the other variations of $\alpha$-mutual information, and connection to Gallager error exponent function \cite[Eq. (5.6.14)]{gallager1968information} are explored in \cite{verdu2015}. 
\end{rem}

\begin{rem}\label{rem:cc:csiszar-alpha-mutual information}
Denoting the R\'enyi divergence (see, e.g., \cite{vanErven2014}) of order $\lambda$ by $D_{\lambda}(P\|Q)$, in the optimization in the right side of \eqref{eqn:cc:def:aleph_the dual}, letting $(X, Y) \sim P_X P_{Y|X}=P_{XY}$,
  \begin{align}
    & I^\sfc_{1+\lambda}(P_X, P_{Y|X}) \nonumber \\ 
    &= \min_{S_Y\in \cP(\cY) } \bbE \left[D_{1+\lambda}(P_{Y|X}(\cdot | X )\| S_Y)\right] \label{eqn:def:cc:csiszar alpha mut info} \\
    &= \min_{S_Y\in \cP(\cY) }\bbE\Big[\log\bbE^{\frac1{\lambda}}[\exp\Big(\lambda\,\imath^{}_{P_{XY}\| P_X S_Y }(X,Y) \Big) \Big| X ]\Big] \label{eqn:def:cc:csz alp mut alternative}
  \end{align}
 is the $\alpha$-mutual information of order $1+\lambda$ as defined by Csisz\'ar \cite{csiszar1995GeneralizedCutoff}. Its basic properties and relation to Sibson's proposal of $\alpha$-mutual information are explored in \cite{verdu2015}.
\end{rem}
\begin{rem}\label{rem:aleph > alpha}
 Given an arbitrary non-degenerate channel $P_{Y|X}\colon \cX\to \cY$, and an $m$-type $P_{\bar X}\in \cP_m(\cX)$ as the input distribution, proving $\alpha \le \aleph $ is simple:
  \begin{align}
    &\alpha(R,P_{\bar X},P_{Y|X}) \nonumber \\
    &\quad = \min_{Q_{Y|X}} \bigg\{\min_{Q_X} \bigg\{ D(Q_XQ_{Y|X} \| P_{\bar XY}) \nonumber \\  
    &\qquad \quad + \frac12 \left[R-D(Q_XQ_{Y|X}\|P_{\bar X}Q_Y) \right]_{+}\bigg\} \bigg\}  \\
    &\quad \le \min_{Q_{Y|X}} \bigg\{ D(P_{\bar X}Q_{Y|X}\|P_{\bar XY}) \nonumber \\ 
    &\qquad \quad + \frac12[R-D(P_{\bar X}Q_{Y|X}\|P_{\bar X}Q_Y)]_+ \bigg\} \label{frm:cc:subopt Q_X} \\
    &\quad = \aleph(R, P_{\bar X}, P_{Y|X}) \label{eqn:rem:alpha<aleph} \text{,} 
  \end{align}
  where \eqref{frm:cc:subopt Q_X} follows from the suboptimal choice of $Q_X = P_{\bar X}$. Though, as the next remark illustrates, this is not the sole order relation between $\alpha$ and $\aleph$. 
\end{rem}
\input{Fig_Alpha-vs-Aleph.tex} 
\begin{rem}
Suppose $P_{\bar X}\to P_{Y|X}\to P_Y $, and let $(\bar X, Y)\sim P_{\bar X}P_{Y|X} $, 
\begin{align}
    &\aleph(R,P_{\bar X},P_{Y|X}) \nonumber  \\
    &= \max_{\lambda\in [1,2]}\max_{S_Y} \bigg\{ \frac{\lambda-1}{\lambda} R \label{cc:frm:def_ofcsza_lph_mut}  \\ 
    & -  \bbE\Big[\log\bbE^{\frac1{\lambda}} \Big[ \exp\Big( (\lambda -1)\, \imath^{}_{P_{\bar XY}\|P_{\bar X}S_Y}(\bar X, Y)\Big)\Big|\bar X \Big] \Big] \bigg\}  \nonumber \\
    &\ge \max_{\lambda\in [1,2]}\max_{S_Y} \bigg\{ \frac{\lambda-1}{\lambda} R \label{cc:eq:rem:jensen} \\ 
    & -   \log\bbE\Big[\bbE^{\frac1{\lambda}} \Big[\exp\Big( (\lambda -1)\, \imath^{}_{P_{\bar XY}\|P_{\bar X}S_Y}(\bar X, Y)\Big)\Big|\bar X \Big]\Big]\bigg\} \nonumber \\
    &\ge \max_{\lambda\in [1,2]} \bigg\{ \frac{\lambda-1}{\lambda} R \label{cc:eq:rem:subopt P_Y} \\ 
    & -  \log\bbE\Big[\bbE^{\frac1{\lambda}} \Big[\exp\Big( (\lambda -1)\, \imath^{}_{P_{\bar XY}\|P_{\bar X} P_Y}(\bar X, Y)\Big)\Big|\bar X \Big]\Big]\bigg\} \nonumber \\
    &=  \max_{\lambda\in [1,2]} \left\{\frac{\lambda-1}{\lambda}  \left( R - I^\sfs_\lambda(P_Y, P_{\bar X|Y } )  \right) \right\}  \label{cc:eq:frm:def:alph mut}\\
    &= \alpha(R, P_Y, P_{\bar X|Y}) \label{eqn:rem:reverse alpha<aleph} \text{,}
\end{align}
where \eqref{cc:frm:def_ofcsza_lph_mut} follows from the definition of $I^\sfc_{\lambda}(P_X, P_{Y|X}) $ in \eqref{eqn:def:cc:csz alp mut alternative}; \eqref{cc:eq:rem:jensen} follows from Jensen's inequality; \eqref{cc:eq:rem:subopt P_Y} follows from the suboptimal choice of $S_Y = P_Y$; and finally, in \eqref{cc:eq:frm:def:alph mut} the reverse channel $P_{\bar X|Y} $ is such that $P_Y \to P_{\bar X|Y} \to P_{\bar X}$ and the equality follows from the definition\footnote{Warning: In general, $\alpha$-mutual information is not a symmetric information measure \cite[Example 4]{verdu2015}. Hence, $I^\sfs_\lambda(P_Y,P_{\bar X|Y })\neq I^\sfs_\lambda(P_{\bar X}, P_{Y|X})$.} of $I^\sfs_\lambda(P_Y, P_{\bar X|Y})$, cf. \eqref{eqn:def:alpha mut info-semih}. 

Together with \eqref{eqn:rem:alpha<aleph}, \eqref{eqn:rem:reverse alpha<aleph} implies that
\begin{align}
  &\aleph(R,P_{\bar X},P_{Y|X}) \nonumber \\ 
  &\quad \ge \max\left\{\alpha(R, P_{\bar X}, P_{Y|\bar X}), \alpha(R, P_Y, P_{\bar X|Y})\right\} \text{.} \label{eqn:rem:cc:aleph bigger than max of two alpha}
\end{align}
Note that, even though we can only show $\ge$ above, as the following example illustrates, there are settings for which the inequality in \eqref{eqn:rem:cc:aleph bigger than max of two alpha} is strict.
\end{rem}
\begin{exa}[Binary Symmetric Channel]\label{exa:bsc alpha vs aleph} 
Suppose $\cX = \cY = \{0, 1\}$, and let $P_{Y|X}\colon \cX \to \cY$ be a binary symmetric channel with crossover probability $p = 0.05$ \cite[Section 7.1.4]{cover2012elements}. If $P_{\bar X}(0) = 2/5$, and $R=0.85>I(P_{\bar X}, P_{Y|X}) \approx 0.69   $ \texttt{\emph{bits}}, 
\begin{align}
  \alpha(0.85, P_{\bar X}, P_{Y|X}) &\approx 2.0429 \times 10^{-2} \text{,} \\
  \alpha(0.85, P_{Y}, P_{\bar X|Y}) &\approx 2.0585 \times 10^{-2} \text{,} \\
  \aleph(0.85, P_{\bar X}, P_{Y|X}) &\approx 2.2216 \times 10^{-2} \text{,}
\end{align} 
implying $\alpha \neq \aleph$, in general. 

\Cref{fig:comparison_alpha-aleph} depicts the gap between $\alpha(R,P_{\bar X}, P_{Y|X}) $, $\alpha(R,P_{Y}, P_{\bar X|Y})$, and $\aleph(R, P_{\bar X}, P_{Y|X})$. While \Cref{fig:comparison:alpha_vs_aleph_BSC} illustrates \Cref{exa:bsc alpha vs aleph} for different values of rate $R$, \Cref{fig:comparison:alpha_vs_aleph_BZC} illustrates the case for Binary Z-Channel \cite[Problem 7.8]{cover2012elements}  with the same input distribution $P_{\bar X}$ and the same error probability $p=0.05 $. 
\end{exa}
\begin{rem}
  If $P_{\bar X}\in \cP_m(\cX) $ is such that 
  \begin{align}
    \lim_{m\to \infty} P_{\bar X} = P_X \text{,}
  \end{align}
 for some $P_X \in \cP(\cX)$, assuming $R>I(P_{\bar X}, P_{Y|X})>0$ for all $m\in \bbN$, being a linear function of $P_{\bar X}$, it is straightforward to see that $\aleph(R, P_{\bar X}, P_{Y|X})$ is sequentially continuous in $P_{\bar X}$. That is, 
  \begin{align}
    \lim_{m\to \infty} \aleph(R, P_{\bar X}, P_{Y|X}) = \aleph(R, P_X, P_{Y|X})\text{.}
  \end{align}
\end{rem}
\begin{rem}
  Regarding the computation of the exact soft-covering exponents $\alpha $ and $\aleph$, the dual forms in \eqref{eqn:def:alpha(R,P_X,P_{Y|X})_the dual} and \eqref{eqn:cc:def:aleph_the dual} are far easier to calculate then their primal counterparts in \eqref{eqn:def:alpha(R,P_X,P_{Y|X})} and \eqref{eqn:cc:def:aleph}. This is because, in calculating the former pair, the optimizations are carried over spaces of dimensions\footnote{Observe that the calculation of $I^\sfc_\lambda(P_{\bar X}, P_{Y|X})$ is an optimization over a space of dimension $|\cY|-1 $, see Remark~\ref{rem:cc:csiszar-alpha-mutual information}.} $1$, and $|\cY|$, respectively, whereas in calculating the latter pair the optimizations are carried over spaces of dimensions $|\cX||\cY|-1 $ and $|\cX|(|\cY|-1) $, respectively. 
\end{rem}
\begin{rem}\label{rem:Taylor expansion of alpha mut information}
Taylor expansion of $I^\sfs_\lambda(P_X, P_{Y|X})$ around $\lambda =1$ yields
  \begin{align}
    &I^\sfs_\lambda(P_X, P_{Y|X}) = I(P_X, P_{Y|X}) \nonumber \\ 
    &\ \  + \frac12 \Var\left[\imath^{}_{X;Y}(X;Y)\right](\lambda -1) + \rmO((\lambda -1)^2) \text{,} 
  \end{align}
  where $(X,Y)\sim P_X P_{Y|X}$, and $\Var[\imath^{}_{X;Y}(X;Y)] $ denotes the variance\footnote{If $P_X$ is a capacity-achieving distribution, then $\Var[\imath^{}_{X;Y}(X;Y)]$ is a property of the channel known as the \emph{channel dispersion} \cite{PolyanskiyPoorVerdu10}. In our treatment, since it is not required that $P_X$ is capacity achieving, inspired by the name of its sibling \emph{varentropy} \cite{kontoyiannisVerdu13OptimalLossless}, we coin the term \emph{mutual varentropy} for $\Var[\imath^{}_{X;Y}(X;Y)]$.} of $\imath^{}_{X;Y}(X;Y)$. Hence, when $R = I(P_X, P_{Y|X}) +\epsilon $ for some small\footnote{When $R= I(P_X, P_{Y|X}) +\epsilon $, since $I^\sfs_\lambda(P_X, P_{Y|X}) $ is non-decreasing in $\lambda $ \cite[Theorem 4]{HoVerdu15}, the maximum in \eqref{forfootnote:maxima here} is achieved at a $\lambda$ value that is close to 1.} $\epsilon$, 
  \begin{align}
    &\alpha(R,P_X,P_{Y|X}) \nonumber \\ 
    &= \max_{\lambda\in [1,2] } \left\{ \frac{\lambda-1}{\lambda} \left(R - I^\sfs_\lambda (P_X, P_{Y|X}) \right)  \right\}  \label{forfootnote:maxima here}  \\
    &\approx \max_{\lambda\in [1,2]} \left\{\frac{\lambda-1}{\lambda}\left(\epsilon -  \frac{\lambda -1}{2} \Var\left[\imath^{}_{X;Y}(X;Y)\right]\right)\right\} \label{approx:maxima in dual} \\
    &\approx \frac{\epsilon^2}2 \Var^{-1}\left[\imath^{}_{X;Y}(X;Y)\right] \label{frm:apprx maximizer}  \\
    &= \frac12(R-I(P_X, P_{Y|X}))^2\Var^{-1}\left[\imath^{}_{X;Y}(X;Y)\right]  
      \text{,}
  \end{align}
 where the maximum in the right side of \eqref{approx:maxima in dual} is achieved when $\lambda =\left(1 + 2\epsilon \Var^{-1}[\imath^{}_{X;Y}(X;Y)]\right)^{1/2} $. For the sake of simplicity, supposing $\lambda = 1$ in the denominator of the right hand side of \eqref{approx:maxima in dual}, the approximate maximizer becomes $\lambda \approx 1+\epsilon \Var^{-1}[\imath^{}_{X;Y}(X;Y)] $ and \eqref{frm:apprx maximizer} follows. 
\end{rem}
\begin{rem}
 In a similar spirit to \Cref{rem:Taylor expansion of alpha mut information}, Taylor expansion of $\bbE[\log\bbE [\exp ( (\lambda -1)\, \imath^{}_{P_{\bar XY}\|P_{\bar X}S_Y}(\bar X, Y))|\bar X ]]$ around $\lambda = 1 $ yields
  \begin{align}
    &\bbE[\log\bbE [\exp ( (\lambda -1)\, \imath^{}_{P_{\bar XY}\|P_{\bar X}S_Y}(\bar X, Y))|\bar X ]] \nonumber \\
    &= (\lambda -1)D(P_{\bar XY}\|P_{\bar X}S_Y)  \\ 
    & + \frac12(\lambda-1 )^2 \Var\left[\imath^{}_{P_{\bar XY}\|P_{\bar X}S_Y}(\bar X, Y))\right] +\rmO((\lambda -1)^3) \text{,} \nonumber
  \end{align}
  where $(\bar X, Y) \sim P_{\bar X}P_{Y|X} $. Therefore, whenever $R = I(P_{\bar X},P_{Y|X}) +\epsilon $ for some small $\epsilon $, 
  \begin{align}
    &\aleph(R, P_{\bar X},P_{Y|X}) \nonumber \\
    &\approx \max_{\lambda \in [1,2]}\max_{S_Y} \bigg\{\frac{\lambda-1}{\lambda} \bigg( R-D(P_{\bar XY}\|P_{\bar X}S_Y) \\ 
    &\qquad \qquad \quad  - \frac{\lambda-1}{2} \Var\left[\imath^{}_{P_{\bar XY}\|P_{\bar X}S_Y}(\bar X, Y))\right]\bigg)   \bigg\} \nonumber \\
    &\approx \max_{\lambda\in [1,2]} \left\{\frac{\lambda-1}{\lambda}\left(\epsilon -  \frac{\lambda -1}{2} \Var\left[\imath^{}_{\bar X;Y}(\bar X;Y)\right]\right)\right\}  \\
    &\approx \frac{\epsilon^2}2 \Var^{-1}\left[\imath^{}_{\bar X;Y}(\bar X;Y)\right]  \\
    &\approx \alpha(R, P_{\bar X},P_{Y|X})
      \text{,}
  \end{align}
  which can also be observed in \Cref{fig:comparison_alpha-aleph}. 
\end{rem}
\begin{rem}\label{rem:less than R/2}
Since $Q_{XY} = P_{XY} $ and $Q_{Y|X} = P_{Y|X}$ are suboptimal choices, it is easy to see that
	\begin{align}
&\alpha(R, P_X, P_{Y|X}) \nonumber \\  
&\qquad = \min_{Q_{XY}} \bigg\{D(Q_{XY}\|P_{XY}) \nonumber \\ 
&\qquad \qquad \qquad + \frac12 \left[R-D(Q_{XY}\|P_XQ_Y) \right]_{+}\bigg\} \\ 
&\qquad \le \frac 12\left[R - I(P_X, P_{Y|X}) \right]_{+} \\
&\qquad < \frac R2 \text{,} \label{chan is no dubious}
	\end{align}
and
\begin{align}
&\aleph(R, P_{\bar X}, P_{Y|X}) \nonumber \\
 &\qquad  =   \min_{Q_{Y|X}} \bigg\{ D(P_{\bar X}Q_{Y|X}\|P_{\bar XY}) \nonumber \\ 
 &\qquad \qquad \quad + \frac12[R-D(P_{\bar X}Q_{Y|X}\|P_{\bar X}Q_Y)]_+ \bigg\} \\
 &\qquad \le \frac12 \left[R- I(P_{\bar X},P_{Y|X})\right]_+ \\ 
 &\qquad < \frac R2 \label{frm:cc:non-deg chan} \text{,}
\end{align}
where \eqref{chan is no dubious} and \eqref{frm:cc:non-deg chan} follow because the channel $P_{Y|X}$ is assumed to be non-degenerate. The same observation can be made from the dual forms of $\alpha(R, P_X, P_{Y|X}) $ and $\aleph(R, P_X, P_{Y|X}) $ in \eqref{eqn:def:alpha(R,P_X,P_{Y|X})_the dual} and \eqref{eqn:cc:def:aleph_the dual}, respectively.
\end{rem}
In what follows, \Cref{sec:lower bound,sec:upper bound} prove the lower and upper bound directions in \eqref{eqn:thm:exact soft-cov exponent}, respectively. \Cref{sec:alternative representations} proves the equivalence of the primal and dual forms of the exact soft-covering exponents, see \eqref{eqn:thm:exact soft-cov exponent_ALTERNATIVE} and \eqref{eqn:thm:cc:exponent_ALTERNATIVE}, finally \Cref{sec:comparisons} is devoted to the comparison of the previously known lower bounds on the soft-covering exponents $\alpha $ and $\aleph$.
\section{Proof of the Lower Bound in Theorem~\ref{thm:main}}\label{sec:lower bound}
This section establishes
\begin{align}
	&\liminf_{n\to \infty}-\frac{1}{n} \log \bbE \left[\left\| \induced - P_{Y^n} \right\|^{}_1\right] \nonumber \\ 
	&\qquad \qquad \ge \alpha(R, P_X, P_{Y|X})  \text{.}
\end{align} 
Indeed, using the finite block-length analysis, we shall prove the following stronger claim (see \Cref{thm:finite block-length exact scl} in \Cref{apdx:finite block-length results}):
\begin{align}
&-\frac{1}{n} \log \bbE \left[\left\| \induced - P_{Y^n} \right\|^{}_1\right] \nonumber \\
&\qquad \qquad  \ge \alpha_n(R, P_X, P_{Y|X}) - \kappa_n \text{,}
\end{align}
where $\alpha_n$ is as defined in \eqref{eqn:def:alpha_n for finite block-length} and the vanishing constant $\kappa_n $ depends only on the block-length $n$ and the alphabet sizes $|\cX|$ and $|\cY|$. 

Suppose that $P_{X^n}$ is the i.i.d. input distribution to the memoryless channel $P_{Y^n|X^n}$ generating the i.i.d. output distribution $P_{Y^n}$, i.e., suppose $P_{X^n}\to P_{Y^n|X^n}\to P_{Y^n}$. Inspired by \cite{parizi2017exact}, given $y^n\in \cY^n$, let 
\begin{align}
& L_{\scrC_M^n}(y^n) \nonumber  \\
&\quad =  \begin{cases}
	 \ds\frac{\induced(y^n)}{P_{Y^n}(y^n)}	& \text{if } P_{Y^n}(y^n)>0 \text{,} \\
	 1 & \text{otherwise.}
	\end{cases} \label{eqn:def:L_scrC}   \\
&\quad = \begin{cases}
	\ds \frac{1}{M} \sum_{j=1}^{M} \frac{P_{Y^n|X^n}(y^n|X_j^n)}{P_{Y^n}(y^n)} & \text{if }P_{Y^n}(y^n)>0 \text{,}\\
	1 & \text{otherwise.} 
	\end{cases}
\end{align}
Observe that $L_{\scrC_M^n}(y^n)$ is a random variable as it depends on the random codebook $\scrC_M^n$, and it is easy to see that
\begin{align}
	\bbE[L_{\scrC_M^n}(y^n)] = 1 \text{.}
\end{align}
Suppose $y^n\in \cY^n$, and let $Q_{\bar X|\bar Y}$ denote the conditional type of $x^n\in \cX^n$ given $y^n$ so that the joint type $Q_{\bar X\bar Y}$ of the sequence $(x^n, y^n)$ satisfies
\begin{align}
	Q_{\bar X \bar Y}(a,b) = Q_{\bar X | \bar Y} (a|b)Q_{\bar Y}(b) \text{,} \label{eqn:joint type in terms of cond. type}
\end{align}
where $Q_{\bar Y}$ denotes the type of $y^n$. Note that $y^n\in \cT_{Q_{\bar Y}}^n$ and $Q_{\bar X|\bar Y} \in \cP_n(\cX|Q_{\bar Y})$ together induce a joint type $Q_{\bar X\bar Y}$ via the relation in \eqref{eqn:joint type in terms of cond. type}. 

Assume $P_{Y^n}(y^n)>0$, since $P_{Y^n|X^n}(y^n|x^n)$ and $P_{Y^n}(y^n)$ depend on $(x^n, y^n)$ only through its joint type, using the type enumeration method \cite{merhav2010statistical, merhav2014exact}, one can write
\begin{align}
	&L_{\scrC_M^n}(y^n) \nonumber \\ 
	&\quad = \frac{1}{M} \sum_{Q_{\bar X|\bar Y} \in \cP_n(\cX|Q_{\bar Y})} N_{Q_{\bar X|\bar Y}}(y^n) l_{Q_{\bar X| \bar Y}}(y^n) \text{,} \label{eqn:def:L}
\end{align}
where
\begin{align}
	l_{Q_{\bar X| \bar Y}}(y^n)= \frac{P_{Y^n|X^n}(y^n| x_{Q_{\bar X|\bar Y}}^n)}{P_{Y^n}(y^n)}  \label{eqn:def:l_{Q_{barX|barY}}}
\end{align}
for some $x_{Q_{\bar X|\bar Y}}^n\in \cT^n_{Q_{\bar X|\bar Y}}(y^n)$, and the random variable
\begin{align}
	& N_{Q_{\bar X|\bar Y}}(y^n) \nonumber \\
	&\qquad =\left|\left\{X^n \in \scrC_M^n \colon X^n \in \cT^n_{Q_{\bar X|\bar Y}}(y^n)\right\}\right| \label{eqn:def:N_Q_{X|Y}}  \\
	&\qquad = \sum_{X^n \in \scrC_M^n} 1\left\{ X^n \in \cT^n_{Q_{\bar X|\bar Y}}(y^n)\right\}
\end{align}
denotes the number of random codewords in $\scrC_M^n $ which have conditional type $Q_{\bar X|\bar Y}$ given $y^n$. Since $\scrC_M^n $ contains $M$ independent codewords, it follows that $N_{Q_{\bar X|\bar Y}}(y^n)$ is a binomial random variable with cluster size $M$ and success probability 
\begin{align}
	p_{Q_{\bar X|\bar Y}}(y^n)=\bbP\left[X^n \in \cT^n_{Q_{\bar X|\bar Y}}(y^n)\right] \label{eqn:def:p_{Q_{barX|barY}}} \text{.}
\end{align}
For the remainder of this paper, it is crucial to note that both $l_{Q_{\bar X| \bar Y}}(y^n)$ and $p_{Q_{\bar X|\bar Y}}(y^n)$ depend on $y^n$ only through its type. 

Given $y^n\in \cT_{Q_{\bar Y}}^n$ and $Q_{\bar X|\bar Y} \in \cP_n(\cX|Q_{\bar Y}) $, define
\begin{align}
Z_{Q_{\bar X\bar Y}} &= \frac{1}{M} N_{Q_{\bar X|\bar Y}}(y^n) l_{Q_{\bar X| \bar Y}}(y^n) \text{,} \label{eqn:def:Z_Q_{barX|barY}} \\
\mfrY(M, Q_{\bar X\bar Y}) &= \min\left\{2p^{}_{Q_{\bar X|\bar Y}}(y^n), M^{-\frac 12} p_{Q_{\bar X|\bar Y}}^{\frac 12}(y^n)\right\} \text{,}  \label{eqn:def:mfrakY} 
\end{align}
and observe that 
\begin{align}
	&\bbE \left[\left\| \induced - P_{Y^n}\right\|^{}_1\right] \nonumber \\ 
	&= \sum_{y^n \in \cY^n } P_{Y^n}(y^n) \bbE \left[\left| L_{\scrC_M^n}(y^n) -1\right|\right] \label{frm:def of L}  \\
	&= \sum_{y^n\in \cY^n } P_{Y^n}(y^n)  \bbE \left[\left|\sum_{Q_{\bar X|\bar Y}} Z_{Q_{\bar X\bar Y}} -  \bbE[Z_{Q_{\bar X\bar Y}}]\right|\right] \label{from:def Z} \\
	&\le  \sum_{y^n \in \cY^n } P_{Y^n}(y^n) \sum_{Q_{\bar X|\bar Y}} \bbE \left[\left|Z_{Q_{\bar X\bar Y}} -  \bbE[Z_{Q_{\bar X\bar Y}}]\right|\right] \label{from:sum of abs vs abs of sum}\\
	&\le \sum_{y^n \in \cY^n } \sum_{Q_{\bar X|\bar Y}} \exp(-n\bbE[\imath^{}_{P_{Y|X}}(\bar Y|\bar X)]) \mfrY(M, Q_{\bar X\bar Y}) 
	\label{from:both are bigger} \\
	& =\sum_{Q_{\bar X\bar Y}} \left|\cT^n_{Q_{\bar Y}}\right|\exp(-n\bbE[\imath^{}_{P_{Y|X}}(\bar Y|\bar X)])  \mfrY(M, Q_{\bar X\bar Y}) \label{frm:type partitioning}   \\
	&\le |\cP_n(\cX\times\cY)|\times \label{inner max depends on outer max} \\ 
	& \max_{Q_{\bar X\bar Y}} \Big\{ \left|\cT^n_{Q_{\bar Y}}\right|   \exp(-n\bbE[\imath^{}_{P_{Y|X}}(\bar Y|\bar X)]) \mfrY(M, Q_{\bar X\bar Y})   \Big\}  \text{,}  \nonumber
\end{align}
where \eqref{frm:def of L} follows from the definition of $L_{\scrC_M^n}(y^n)$ in \eqref{eqn:def:L_scrC}; in \eqref{from:def Z} the inner summation is over the set of conditional types given $y^n \in \cT^n_{Q_{\bar Y}}$, namely $\cP_n(\cX|Q_{\bar Y})$, the equality follows from \eqref{eqn:def:L} and the definition of $Z_{Q_{\bar X\bar Y}}$ in \eqref{eqn:def:Z_Q_{barX|barY}}; \eqref{from:sum of abs vs abs of sum} follows from the triangle inequality; in \eqref{from:both are bigger} $(\bar X, \bar Y)\sim Q_{\bar X|\bar Y}Q_{\bar Y} =Q_{\bar X\bar Y}$, and the inequality is due to \Cref{lem:absolute mean deviation upper bound for Z_{Q_{barX|barY}}} in Appendix~\ref{apdx:main}; in \eqref{frm:type partitioning} the summation is over the set of joint types, $\cP_n(\cX\times\cY) $, while the equality follows from the type class partitioning of $\cY^n$,
\begin{align}
  \cY^n = \bigsqcup_{Q_{\bar Y}\in \cP_n(\cY) } \cT^n_{Q_{\bar Y}} \text{,}
\end{align}
and because\footnote{Also see Remarks~\ref{rem:cP_n(cX|Q_{bar Y} depends only on the type Q_{bar Y}}~and~\ref{rem:joint types vs right coset of conditional types}.} the summand depends on $y^n$ only through its type. Denoting
\begin{align}
	\cP_\infty(\cX\times \cY) = \bigcup_{n \in \bbN} \cP_n(\cX\times \cY) \text{,}
\end{align}
it follows from \eqref{inner max depends on outer max} that 
\begin{align}
	&\liminf_{n\to \infty} -\frac1n \log \bbE \left[\left\|\induced - P_{Y^n}\right\|^{}_1\right] \nonumber\\ 
	&\qquad \ge \inf_{Q_{\bar X \bar Y} \in \cP_{\infty}(\cX \times \cY)} \bigg\{ D(Q_{\bar X \bar Y } \| P^{}_{XY}) \label{from:csiszar and bucnh of lemmas}  \\ 
	&\qquad \qquad \qquad  + \frac12 \left[R- D(Q_{\bar X\bar Y}\|P^{}_XQ_{\bar Y})  \right]_+ \bigg\}  \nonumber \\
	&\qquad = \min_{Q_{XY} \in \cP(\cX \times \cY)} \bigg\{ D(Q_{XY} \| P_{XY}) \label{from:apdx:optimizations over types} \\ 
	&\qquad \qquad \qquad + \frac12 \left[R- D(Q_{XY}\|P_XQ_Y)  \right]_+ \bigg\}  \nonumber \\
	&\qquad = \alpha(R, P_X, P_{Y|X})  \text{,}
\end{align}
where in \eqref{from:csiszar and bucnh of lemmas} we use the fact that the size of the set $|\cP_n(\cX\times\cY)| $ grows polynomially in $n$, see \cite[Lemma 2.2]{csiszar2011information}, and \Cref{lem:infimum over all types} in Appendix~\ref{apdx:asymptotic exponents}; and finally \eqref{from:apdx:optimizations over types} follows from \Cref{lem:optimization over types in the limit} in Appendix~\ref{apdx:optimizations over types in the limit}. 
\hfill \qedsymbol 
\begin{rem}\label{rem:cc:replacements lower bound}
In the constant-composition case,\footnote{See \Cref{def:set of conditional types with fixed marginal} for the definition of the set of conditional types with fixed marginals, i.e., $\condisset$.}
 \begin{align}
  &\breve{L}_{\scrD_M^n}(y^n)  \nonumber\\
  &\quad = \begin{cases}
	 \ds\frac{\ccinduced(y^n)}{R^{}_{\breve Y^n}(y^n)}	& \text{if } R^{}_{\breve Y^n}(y^n)>0 \text{,} \\
	 1 & \text{otherwise.}
	\end{cases}  \\ 
  &\quad = \frac{1}{M} \sum_{Q_{\bar X | \bar Y} \in \condisset} \breve{N}_{Q_{\bar X|\bar Y}}(y^n) \breve{l}_{Q_{\bar X| \bar Y}}(y^n) \text{,} \label{eqn:def:alt:breve{L}_{scrD_M}}
 \end{align}
with
\begin{align}
  \breve{l}_{Q_{\bar X| \bar Y}}(y^n) &= \frac{P_{Y^n|X^n}(y^n|x^n_{Q_{\bar X|\bar Y}} )}{R^{}_{\breve Y^n}(y^n)} \text{,} \label{for:cc:remark_begin} \\ 
	\breve{N}_{Q_{\bar X|\bar Y}}(y^n)&=\left|\left\{\breve X^n \in \scrD_M^n \colon \breve X^n \in \cT^n_{Q_{\bar X|\bar Y}}(y^n)\right\}\right| \text{,}
\end{align}
and\footnote{In \eqref{for:cc:remark_end}, since the $\cX$-marginal of the joint types is fixed to be $P_{\bar X} $, $P_{\bar X}Q_{\bar Y|\bar X} = Q_{\bar X|\bar Y} Q_{\bar Y}$ where $Q_{\bar Y} $ is the type of $y^n$.}
\begin{align}
\breve{p}_{Q_{\bar X|\bar Y}}(y^n) &= \bbP\left[\breve X^n \in \cT^n_{Q_{\bar X|\bar Y}}(y^n)\right] \text{,} \\
  \breve{Z}_{Q_{\bar X\bar Y}} &= \frac{1}{M} \breve{N}_{Q_{\bar X|\bar Y}}(y^n) \breve{l}_{Q_{\bar X| \bar Y}}(y^n) \text{,} \label{eqn:def:cc:breve{Z}_{Q_{bar X|bar Y}}(y^n) AFTER REVISION} \\
 \breve{\mfrY}(M, P_{\bar X}Q_{\bar Y|\bar X})  &= \label{for:cc:remark_end} \\ 
 & \hspace{-1.5em}  \min\left\{2\breve{p}_{Q_{\bar X|\bar Y}}(y^n), M^{-\frac 12} \breve{p}^{\frac12}_{Q_{\bar X|\bar Y}}(y^n) \right\} \text{.} \nonumber 
\end{align} 
The steps \eqref{frm:def of L}--\eqref{from:apdx:optimizations over types} remain almost identical except one needs to keep in mind that $\cX$-marginal of the joint types $Q_{\bar X\bar Y} $ is fixed to be $P_{\bar X}$ and replace\footnote{See \Cref{def:set of joint types with fixed marginals} for the definition of the set of joint types with fixed $\cX$-marginal $P_{\bar X}$, i.e., $\cP_n(\cX\times \cY; P_{\bar X}\times \cdot)$.}
\begin{align*}
P_{X^n} &\leftarrow R_{\breve X^n} \text{,} \\
\induced &\leftarrow \ccinduced \text{,} \\
P_{Y^n}  &\leftarrow R_{\breve Y^n}  \text{,} \\
L_{\scrC_M^n} &\leftarrow \breve{L}_{\scrD_M^n} \text{,} \\
\cP_n(\cX|Q_{\bar Y}) &\leftarrow \cP_n(\cX|Q_{\bar Y};P_{\bar X}) \text{,} \\
\cP_n(\cX\times\cY) &\leftarrow \cP_n(\cX\times \cY; P_{\bar X}\times \cdot) \text{,}\\
\cP_\infty(\cX\times \cY) &\leftarrow  \bigcup_{n\in m\bbN}  \cP_{n}(\cY|P_{\bar X}) \text{,}\\
\text{Remarks~\ref{rem:cP_n(cX|Q_{bar Y} depends only on the type Q_{bar Y}}~and~\ref{rem:joint types vs right coset of conditional types}} &\leftarrow \text{Remarks~\ref{rem:cc:cP_n(cX|Q_{bar Y};P_{bar X}) depends on y^n through its type}~and~\ref{rem:cc:coset notation for set of conditional types}} \text{,}\\
\text{\Cref{lem:infimum over all types,lem:optimization over types in the limit}} &\leftarrow \text{\Cref{lem:cc:key first limit,lem:cc:optimization over conditional types in the limit}} \text{,} 
\end{align*}
together with proper replacement of the terms defined in \eqref{for:cc:remark_begin}--\eqref{for:cc:remark_end}.
\end{rem}
\begin{rem}
	It should be noted that the key step of the lower bound proof is the bound in \eqref{from:both are bigger}. In that step,  the mean and the standard deviation of each of the random variables $Z_{Q_{\bar X\bar Y}}$ are directly used as the upper bound for each conditional type $Q_{\bar X|\bar Y}\in\cP_n(\cX|Q_{\bar Y})$. In previous soft-covering exponent analysis \cite{hayashi2006general, cuff2013distributed}, the set of the conditional types $\cP_n(\cX|Q_{\bar Y}) $ is first partitioned into two sets containing the so-called typical and atypical conditional types according to a threshold on $l_{Q_{\bar X|\bar Y}}(y^n)$. Then, the standard deviation bound is applied on the typical set whereas the mean bound is applied on the atypical one. Although this ``partition by joint probability first, bound later" technique is also espoused in the exact exponent analysis of the relative entropy variant of the soft-covering lemma \cite{parizi2017exact}, it turns out to be a suboptimal method for the total variation distance.  
\end{rem}

\begin{rem}
	Thanks to the analysis on the absolute mean deviation of binomial distribution provided in \cite[Theorem 1]{berend2013sharp}, the mean and standard deviation bound applied in \Cref{lem:absolute mean deviation upper bound for Z_{Q_{barX|barY}}} can be shown to be tight. 
\end{rem}
\section{Proof of the Upper Bound in Theorem~\ref{thm:main}}\label{sec:upper bound}
This section establishes
\begin{align}
	&\limsup_{n\to \infty}-\frac{1}{n} \log \bbE \left[\left\| \induced - P_{Y^n} \right\|^{}_1\right] \nonumber \\ 
	 &\qquad \qquad  \le \alpha(R, P_X, P_{Y|X})  \text{.} \label{eqn:main upper bound}
\end{align} 
Indeed, using the finite block-length analysis, we shall prove the following stronger claim (see \Cref{thm:finite block-length exact scl} in \Cref{apdx:finite block-length results}):
\begin{align}
&-\frac{1}{n} \log \bbE \left[\left\| \induced - P_{Y^n} \right\|^{}_1\right] \nonumber \\
&\qquad \qquad \le  \alpha_n(R, P_X, P_{Y|X}) + \upsilon_n \text{,}
\end{align}
where $\alpha_n$ is as defined in \eqref{eqn:def:alpha_n for finite block-length} and the vanishing constant $\upsilon_n $ depends on the block-length $n$, the alphabet sizes $|\cX|$ and $|\cY|$, and the joint distribution $P_XP_{Y|X}$. 

The biggest obstacle in showing \eqref{eqn:main upper bound} is the mutual dependences of the the random variables\footnote{One quick way to see these mutual dependences is that the sum of $N_{Q_{\bar X|\bar Y}}(y^n)$ over all conditional types $Q_{\bar X | \bar Y} \in \cP_n(\cX|Q_{\bar Y})$ is equal to $M$.} $N_{Q_{\bar X|\bar Y}}(y^n)$, as defined in \eqref{eqn:def:N_Q_{X|Y}}. Note that, given two distinct conditional types (given $y^n\in \cY^n$), say $Q_{\bar X|\bar Y}$ and $R_{\bar X|\bar Y}$, the random variables $N_{Q_{\bar X|\bar Y}}(y^n)$ and $N_{R_{\bar X|\bar Y}}(y^n)$ are not independent from each other. Fortunately, their dependence can be shown to be negligible. Indeed, instead of assuming that the number of codewords $M$ in the codebook $\scrC_M^n$ is a deterministic number $\lceil \exp(nR) \rceil$, if one assumes that it is Poisson distributed with mean $\mu_n = \exp(nR)$, then $N_{Q_{\bar X|\bar Y}}(y^n)$ becomes a \emph{Poisson splitting} of the codewords in $\scrC_M^n$. In that case, given two distinct conditional types $Q_{\bar X|\bar Y}$ and $R_{\bar X|\bar Y}$, the random variables $N_{Q_{\bar X|\bar Y}}(y^n)$ and $N_{R_{\bar X|\bar Y}}(y^n)$ correspond to two distinct Poisson splits and they become independent from one another. This turns out to be the gateway in proving the pseudo-upper bound in the case when $M$ is Poisson distributed. However, to prove the upper bound for the actual statement in \Cref{thm:main}, the auxiliary assumption that the codebook $\scrC_M^n$ contains a random number of codewords needs to be eliminated, which can be done with the help of \Cref{lem:TV_concentration-McDiarmid-deterministic}. As already mentioned in \Cref{rem:foreshadowing the upper bound proof}, it is possible to prove a result similar to \Cref{lem:TV_concentration-McDiarmid-deterministic} with the assumption that $M$ is Poisson distributed, see \Cref{lem:TV-concentration_random-poisson} in Appendix~\ref{apdx:main}. This result can be utilized to show that it is immaterial whether $M$ is Poisson distributed or $M=\lceil \exp(nR) \rceil $ that \eqref{eqn:main upper bound} holds.  

To provide a more transparent presentation, the upper bound proof is divided into three subsections: \Cref{sub:poissonize} introduces the auxiliary assumption that the codebook size $M$ is Poisson distributed with mean $\mu_n = \exp(nR)$, \Cref{sub:upper bound when poissonized} provides the pseudo-upper bound proof under the assumption that $M$ is Poisson distributed, and finally, \Cref{sub:depoissonize} shows that, removing the auxiliary assumption by conditioning on $M=\lceil \mu_n \rceil $, one still cannot do better than $\alpha(R, P_X, P_{Y|X})$.

\subsection{Poissonization}\label{sub:poissonize}

Suppose, for the moment, that $M$ is Poisson distributed with mean $\mu_n = \exp(nR)$. In that case, using the established notation so far, for each $y^n \in \cT^n_{Q_{\bar Y}}$ and each $Q_{\bar X|\bar Y} \in \cP_n(\cX|Q_{\bar Y})$, the random variable
\begin{align}
	N_{Q_{\bar X|\bar Y}}(y^n) &= \sum_{X^n \in \scrC_M^n} 1\left\{ X^n \in \cT^n_{Q_{\bar X|\bar Y}}(y^n)\right\} \label{lbl:for:remark}
\end{align}
is a Poisson splitting of $M$ with mean 
\begin{align}
\mu_n \, p_{Q_{\bar X|\bar Y}}(y^n) = \exp(nR) \bbP\left[X^n \in \cT^n_{Q_{\bar X|\bar Y}}(y^n)\right] \text{.}
\end{align}
Moreover, as the random variables $N_{Q_{\bar X|\bar Y}}(y^n)$ and $N_{R_{\bar X|\bar Y}}(y^n)$ correspond to different bins defined by different conditional types $Q_{\bar X|\bar Y}$ and $R_{\bar X|\bar Y} \in \cP_n(\cX|Q_{\bar Y})$, they are independent from each other.
 
 Choose $\delta \in (0, 1)$, and note that for any $y^n\in \cY^n $ an application of \Cref{lem:Paul's saving Lemma} in Appendix~\ref{apdx:main} with 
 \begin{align*}
 W	&\leftarrow M\left|\induced(y^n) - P_{Y^n}(y^n)\right| \text{,} \\
 X&\leftarrow M \text{,} \\
 c &\leftarrow (1+\delta)\mu_n \text{,}
 \end{align*} 
yields 
\begin{align}
	&(1+\delta)\mu_n\bbE\left[\left|\induced(y^n) - P_{Y^n}(y^n)\right|\right] \nonumber \\
	&\qquad \ge \bbE\left[M \left|\induced(y^n) - P_{Y^n}(y^n) \right|\right] \label{eqn:messy step1} \\
	&\qquad \qquad \qquad  - \bbE[M 1\{M > (1+\delta)\mu_n \} ]  \text{.} \nonumber
\end{align}
On one hand, regarding the first term in the right side of \eqref{eqn:messy step1}, the triangle inequality implies 
\begin{align}
& \bbE\left[M \left|\induced(y^n) - P_{Y^n}(y^n) \right|\right] \nonumber \\
	 &\qquad \ge \bbE\left[\left|M \induced(y^n)- \mu_n P_{Y^n}(y^n) \right|\right] \label{from:triangle dude} \\ 
	 &\qquad \qquad	  - \bbE[|M-\mu_n| P_{Y^n}(y^n)] \nonumber  \\
	 &\qquad \ge \bbE\left[\left|M \induced(y^n) - \mu_n P_{Y^n}(y^n) \right|\right] \label{frm:upp bd on abs mean dev of poiss} \\ 
	 &\qquad \qquad - \sqrt{\mu_n} P_{Y^n}(y^n) \text{,} \nonumber
\end{align}
where \eqref{frm:upp bd on abs mean dev of poiss} follows from Jensen's inequality: 
\begin{align}
\bbE^2[|M-\mu_n|] &\le \bbE\left[|M-\mu_n|^2\right] \\
&= \mu_n \text{.}
\end{align}
On the other hand, regarding the second term in the right side of \eqref{eqn:messy step1}, 
 \begin{align}
 	\bbE[M 1\{M > (1+\delta)\mu_n \} ] \le \mu_n a^{\mu_n}_{\delta - \frac1{\mu_n}} \text{,} \label{frm:poiss awy frm mean}
 \end{align}
 which\footnote{The bound in \eqref{frm:poiss awy frm mean} is valid only when $\delta > \frac1{\mu_n} $. Even though the choice of $\delta\in (0,1)$ does not depend on $\mu_n =\exp(nR)$, the applicability of \Cref{lem:poisson expectation away from mean} is guaranteed for large enough $n$.} is a consequence of \Cref{lem:poisson expectation away from mean} in Appendix~\ref{apdx:main}. Note that, in the right side of \eqref{frm:poiss awy frm mean}, $a_\epsilon$ is a constant that satisfies $a_\epsilon <1 $ for all $\epsilon\in (0,1)$, which is explicitly defined in \eqref{def:a_epsilon}.

  Assembling \eqref{eqn:messy step1}, \eqref{frm:upp bd on abs mean dev of poiss} and \eqref{frm:poiss awy frm mean},
 \begin{align}
 	&(1+\delta)\bbE\left[\left|\induced(y^n) - P_{Y^n}(y^n)\right|\right] \nonumber \\ 
 	&\qquad  \ge \frac{1}{\mu_n}\bbE\left[\left|M \induced(y^n) - \mu_n P_{Y^n}(y^n) \right|\right] \label{eqn:messy sect son} \\
 	&\qquad \qquad - \frac{P_{Y^n}(y^n)}{\sqrt{\mu_n}} - a^{\mu_n}_{\delta - \frac1{\mu_n}} \nonumber \text{.} 
 \end{align}
 The first term in the right side of \eqref{eqn:messy sect son} is the term of main interest whose in-depth analysis is provided in the next subsection. 
\begin{rem}\label{rem:cc:replacements in poissonization}
   To get the counter-part of \eqref{eqn:messy sect son} in the random constant-composition codebook case, using the quantities defined in \Cref{rem:cc:replacements lower bound}, all one needs to do throughout \eqref{lbl:for:remark}--\eqref{eqn:messy sect son} is to replace\footnote{See \Cref{def:set of conditional types with fixed marginal} for the definition of the set of conditional types with fixed marginals, i.e., $\condisset$.} 
\begin{align*}
N_{Q_{\bar X|\bar Y}}(y^n) &\leftarrow \breve{N}_{Q_{\bar X|\bar Y}}(y^n) \text{,}  \\
p_{Q_{\bar X|\bar Y}}(y^n) &\leftarrow \breve{p}_{Q_{\bar X|\bar Y}}(y^n) \text{,} \\
\cP_n(\cX|Q_{\bar Y}) &\leftarrow \cP_n(\cX|Q_{\bar Y};P_{\bar X}) \text{,} \\
P_{X^n} &\leftarrow R_{\breve X^n} \text{,} \\
\induced &\leftarrow \ccinduced \text{,} \\
P_{Y^n} &\leftarrow R_{\breve Y^n} \text{.}
\end{align*} 
\end{rem}
 
 \subsection{Pseudo-Upper Bound Proof Assuming $M$ is Poisson Distributed}\label{sub:upper bound when poissonized}
 
 Capitalizing on the result of the previous subsection, 
 \begin{align}
   &(1+\delta)\bbE \left[\left\| \induced - P_{Y^n}\right\|^{}_1\right] \nonumber \\
   &\qquad =\sum_{y^n \in \cY^n} (1+\delta) \bbE\left[\left|\induced(y^n) - P_{Y^n}(y^n) \right|\right] \\
   &\qquad \ge  \sum_{y^n \in \cY^n} \frac 1{\mu_n}\bbE\left[\left|M \induced(y^n) - \mu_n P_{Y^n}(y^n) \right|\right] \label{eqn:sumterm} \\
   &\qquad \qquad  - \frac{1}{\sqrt{\mu_n}} - |\cY|^n a_{\delta-\frac 1{\mu_n}}^{\mu_n}  \text{.}  \nonumber 
 \end{align}
This section focuses on the summation in the right side of \eqref{eqn:sumterm} and shows that its exponent is $\alpha(R, P_X, P_{Y|X})$. As will be seen, the remaining terms in the right side of \eqref{eqn:sumterm} are residual terms whose exponents are greater than\footnote{In the sense that they vanish with a faster rate with $n$.} $\alpha(R, P_X, P_{Y|X}) $, and therefore, they do not contribute to the overall exponential decay rate of $\bbE [\|\induced - P_{Y^n}\|_1]$. 
  
 To this end, invoking the lemmas provided in Appendix~\ref{apdx:main},
 \begin{align}
 	&\sum_{y^n \in \cY^n} \frac 1{\mu_n}\bbE\left[\left|M \induced(y^n) - \mu_n P_{Y^n}(y^n) \right|\right] \nonumber \\
 	 &= \sum_{y^n \in \cY^n } \frac{P_{Y^n}(y^n)}{\mu_n} \bbE\left[\left| M L_{\scrC_M^n}(y^n) -\mu_n \right|\right] \label{frm:def:of:L agn in upp bd} \\
 	 &= \sum_{y^n \in \cY^n } \frac{P_{Y^n}(y^n)}{\mu_n} \bbE\bigger[ \bigger| \sum_{ Q_{\bar X|\bar Y}\in  \cP_n(\cX|Q_{\bar Y})}l_{Q_{\bar X|\bar Y}}(y^n) \nonumber  \\ 
 	 &\qquad \times  \left(N_{Q_{\bar X|\bar Y}}(y^n) - \bbE[N_{Q_{\bar X|\bar Y}}(y^n)]\right) \bigger| \bigger] \label{frm:mean of M*L} \\
 	 &\ge \sum_{y^n \in \cY^n }  \max_{Q_{\bar X| \bar Y}\in \cP_n(\cX|Q_{\bar Y})}\Bigg\{ \frac{P_{Y^n|X^n}(y^n|x^n_{Q_{\bar X|\bar Y}})}{\mu_n} \nonumber \\ 
 	 &\qquad \times \bbE\left[\left| N_{Q_{\bar X|\bar Y}}(y^n) - \bbE\left[ N_{Q_{\bar X|\bar Y}}(y^n) \right]\right|\right]\Bigg\} \label{frm:sum of indep rvs greater than max} \\
 	 &\ge \frac14  \sum_{Q_{\bar Y} \in \cP_n(\cY)} \sum_{y^n \in \cT^n_{Q_{\bar Y}}} \max_{Q_{\bar X|\bar Y}\in \cP_n(\cX|Q_{\bar Y})}\Big\{  \nonumber \\ 
 	 &\qquad \ \ \ \exp(-n\bbE[\imath^{}_{P_{Y|X}}(\bar Y|\bar X)])  \mfrY(\mu_n, Q_{\bar X\bar Y}) \Big\}    \label{frm:low bnd on abs mean dev for poiss} \\	
 	 &= \frac14 \sum_{Q_{\bar Y} \in\cP_n(\cY)} \max_{Q_{\bar X|\bar Y} \in\cP_n(\cX|Q_{\bar Y})} \Big\{ \left|\cT^n_{Q_{\bar Y}}\right| \nonumber \\ 
 	 &\qquad \times  \exp(-n\bbE[\imath^{}_{P_{Y|X}}(\bar Y|\bar X)]) \mfrY(\mu_n, Q_{\bar X\bar Y})\Big\} \label{frm:reglar sum into type sum} \\
	 &\ge \frac14 \max_{Q_{\bar X\bar Y} \in \cP_n(\cX\times\cY)} \Big\{\left|\cT^n_{Q_{\bar Y}}\right| \nonumber \\ 
	 &\qquad \times  \exp(-n\bbE[\imath^{}_{P_{Y|X}}(\bar Y|\bar X)]) \mfrY(\mu_n, Q_{\bar X\bar Y}) \Big\}  \text{,} \label{frm:sum of non-neg itms}
 \end{align}
 where \eqref{frm:def:of:L agn in upp bd} follows from the definition of $ L_{\scrC_M^n}(y^n)$ in \eqref{eqn:def:L_scrC}; \eqref{frm:mean of M*L} follows from the type enumeration method, see \eqref{eqn:def:L}, and \Cref{lem:expected value of M*L_scrC}; the key step in \eqref{frm:sum of indep rvs greater than max} follows from \Cref{lem:abs sum exp and max abs exp} and the definition of $l_{Q_{\bar X|\bar Y}}(y^n)$ in \eqref{eqn:def:l_{Q_{barX|barY}}}; in \eqref{frm:low bnd on abs mean dev for poiss} the function $\mfrY(\mu_n, Q_{\bar X\bar Y})$ is as defined in \eqref{eqn:def:mfrakY} and the bound follows from \Cref{lem:lower bd on abs mean dev poiss}; in \eqref{frm:reglar sum into type sum} $(\bar X, \bar Y)\sim Q_{\bar X|\bar Y}Q_{\bar Y}$ and the equality follows because\footnote{Also see \Cref{rem:cP_n(cX|Q_{bar Y} depends only on the type Q_{bar Y}}.} the summand depends on $y^n$ only through its type; and finally, \eqref{frm:sum of non-neg itms} follows because the right side of \eqref{frm:reglar sum into type sum} is a sum of non-negative numbers.\footnote{Also see \Cref{rem:joint types vs right coset of conditional types}.}
 
 Note that
 \begin{align}
 	&\lim_{n\to \infty} -\frac1n\log \max_{Q_{\bar X\bar Y} \in \cP_n(\cX\times\cY)}  \Big\{ \left|\cT^n_{Q_{\bar Y}}\right| \nonumber \\ 
 	&\qquad \times \exp(-n\bbE[\imath^{}_{P_{Y|X}}(\bar Y|\bar X)]) \mfrY(\mu_n, Q_{\bar X\bar Y})\Big\} \nonumber \\ 
	&\qquad = \inf_{Q_{\bar X \bar Y} \in \cP_{\infty}(\cX \times \cY)} \bigg\{ D(Q_{\bar X\bar Y}\|P^{}_{XY}) \label{important limit} \\ 
	&\qquad \qquad \qquad + \frac 12 \left[R-D(Q_{\bar X\bar Y}\|P^{}_XQ_{\bar Y})  \right]_+ \bigg\} \nonumber \\
	&\qquad = \min_{Q_{XY} \in \cP(\cX \times \cY)} \bigg\{ D(Q_{XY} \| P_{XY}  )  \label{from:apdx:optimizations over types in the upper bound} \\ 
	&\qquad \qquad \qquad  + \frac12 \left[R- D(Q_{XY}\|P_XQ_Y)  \right]_+ \bigg\} \nonumber   \\
	&\qquad = \alpha(R, P_X, P_{Y|X})  \text{,}
 \end{align}
 where \eqref{important limit} is thanks to \Cref{lem:infimum over all types} in \Cref{apdx:asymptotic exponents} while \eqref{from:apdx:optimizations over types in the upper bound} follows from \Cref{lem:optimization over types in the limit} in Appendix~\ref{apdx:optimizations over types in the limit}.
 
 On the other hand, going back to \eqref{eqn:sumterm}, the fact that $\mu_n=\exp(nR) $ and $a_\epsilon <1 $ for all $\epsilon \in (0,1)$ implies
 \begin{align}
 	 -\frac1n \log \frac1{\sqrt{\mu_n}} &= \frac R2 \text{,} \label{lim:sqrt mu} \\
 	\lim_{n \to \infty} -\frac1n \log \left(|\cY|^n a_{\delta-\frac 1{\mu_n}}^{\mu_n}\right)  &= \infty \text{.} \label{lim:doub exponential}
 \end{align}
Since the right side of \eqref{from:apdx:optimizations over types in the upper bound} is strictly less than $R/2$, see \Cref{rem:less than R/2}, it follows from \eqref{eqn:sumterm}, and \eqref{frm:sum of non-neg itms}--\eqref{lim:doub exponential} that, when $M$ is a Poisson distributed random variable with mean $\mu_n = \exp(nR) $, 
\begin{align}
	&\limsup_{n\to \infty}-\frac{1}{n} \log \bbE \left[\left\| \induced - P_{Y^n} \right\|^{}_1\right] \nonumber \\ 
	&\qquad \qquad \le \alpha(R, P_X, P_{Y|X})   \text{.} \label{eqn:upper bound when poisson}
\end{align} 
\begin{rem}\label{rem:cc:replacements pseudo-upper}
In the constant-composition case, in addition to the replacements mentioned in \Cref{rem:cc:replacements in poissonization}, replace\footnote{See \Cref{def:set of joint types with fixed marginals} for the definition of the set of joint types with fixed $\cX$-marginal $P_{\bar X}$, i.e., $\cP_n(\cX\times \cY; P_{\bar X}\times \cdot)$.}
\begin{align*}
L_{\scrC_M^n}   &\leftarrow \breve{L}_{\scrD_M^n} \text{,} \\
l_{Q_{\bar X|\bar Y}}(y^n) &\leftarrow \breve{l}_{Q_{\bar X|\bar Y}}(y^n) \text{,} \\
\mfrY(\mu_n, Q_{\bar X\bar Y}) &\leftarrow \breve{\mfrY}(\mu_n, P_{\bar X}Q_{\bar Y |\bar X}) \text{,} \\
\cP_n(\cX\times\cY) &\leftarrow \cP_n(\cX\times \cY; P_{\bar X}\times \cdot) \text{,}\\
\cP_\infty(\cX\times \cY) &\leftarrow  \bigcup_{n\in m\bbN}  \cP_{n}(\cY|P_{\bar X}) \text{,}\\
\text{Remarks~\ref{rem:cP_n(cX|Q_{bar Y} depends only on the type Q_{bar Y}}~and~\ref{rem:joint types vs right coset of conditional types}} &\leftarrow \text{Remarks~\ref{rem:cc:cP_n(cX|Q_{bar Y};P_{bar X}) depends on y^n through its type}~and~\ref{rem:cc:coset notation for set of conditional types}} \text{,}\\
\text{\Cref{lem:infimum over all types,lem:optimization over types in the limit}} &\leftarrow  \text{\Cref{lem:cc:key first limit,lem:cc:optimization over conditional types in the limit}} \text{,}
\end{align*}  
and keep in mind that the $\cX$-marginal of the joint types $Q_{\bar X\bar Y} $ is fixed to be $P_{\bar X}$. 
\end{rem}
\begin{rem}
 In order for the key step in \eqref{frm:sum of indep rvs greater than max} to be valid, independence among $N_{Q_{\bar X|\bar Y}}(y^n) $ is a must. This is the reason why poissonization was applied. 
\end{rem}
\subsection{Depoissonization}\label{sub:depoissonize}
To prove the upper bound in \Cref{thm:main}, it remains to show that the result established in \eqref{eqn:upper bound when poisson} still holds when $M = \lceil \exp(nR) \rceil$. To this end, once again utilizing the fact that $\alpha(R, P_X, P_{Y|X}) < R/2 $, choose $r \in (\alpha,  R/2) $, let $\epsilon_n = \exp(-nr) $, define the random variable\footnote{Randomness is because of the random codebook $\scrC_m^n$.}
\begin{align}
	T_n(m) =\left\| P_{Y^n|\scrC_m^n} - P_{Y^n} \right\|^{}_1 \text{,}
\end{align}
and consider the following three events:
\begin{align}
	\cA_{n} &= \{|\bbE[T_n(\lceil \mu_n \rceil)] - \bbE [T_n(M)]| < 2\epsilon_n\} \text{,} \\
	\cB_{n} &= \{|T_n(\lceil \mu_n \rceil) - \bbE[T_n(\lceil \mu_n \rceil)] | < \epsilon_n \} \text{,}\\
	\cC_{n} &= \{|T_n(\lceil \mu_n \rceil) - \bbE [T_n(M)]  | < \epsilon_n \} \text{,}
\end{align}
where $T_n(\lceil \mu_n \rceil)$ denotes the case when the codebook is assumed to have a deterministic number of codewords and $T_n(M)$ denotes the case when the codebook is assumed to have a random (Poisson) number of codewords. 

Observe that
\begin{align}
	\bbP[\cA_n]  &\ge \bbP[\cB_n \cap \cC_n] \label{frm:tri mri} \\
	&\ge 1- \bbP[\cB^c_n] - \bbP[\cC^c_n] \label{frm:unionization} \\  
	&\ge 1- \left(2+16\lceil \mu_n \rceil^\frac12\right) \exp_\rme \left(-\frac{\mu_n \epsilon_n^2}{2 + \epsilon_n^2 }\right) \text{,} \label{frm:lemma lemm lem} 
\end{align}
where \eqref{frm:tri mri} is because $\cA_n \supset \cB_n \cap \cC_n$; \eqref{frm:unionization} is the union bound; and \eqref{frm:lemma lemm lem} follows from \Cref{lem:TV_concentration-McDiarmid-deterministic,lem:TV concentrants to poisson-TV mean} in \Cref{sec:main} and Appendix~\ref{apdx:main}, respectively. Thanks to the choice of $\epsilon_n$, for large enough $n$, the right side of \eqref{frm:lemma lemm lem} is strictly greater than $0$. Moreover, since $\cA_n $ is a deterministic event, $\bbP[\cA_n ] >0$ implies that $ \bbP[\cA_n]=1$. That is, for large enough $n$, and $r\in (\alpha, R/2)$,
\begin{align}
 \bbE[T_n(\lceil \mu_n \rceil)] > \bbE [T_n(M)] - 2\exp(-nr) \text{.} \label{eqn:fixed grt thn pois -exp(-nr)}
\end{align}
Hence, it follows that
\begin{align}
	&\limsup_{n \to \infty}  -\frac 1n \log \bbE[T_n(\lceil \mu_n \rceil)] \nonumber \\
	&\quad \le \limsup_{n \to \infty} -\frac 1n \log( \bbE [T_n(M)] - 2\exp(-nr)) \\
	&\quad \le \alpha(R, P_X, P_{Y|X}) \text{,} \label{frm:upper bound with poiss}
\end{align}
where \eqref{frm:upper bound with poiss} is due to \eqref{eqn:upper bound when poisson}. 
\hfill \qedsymbol 
\begin{rem}\label{rem:cc:replacements depoissonization}
In addition to the replacements mentioned in Remarks~\ref{rem:cc:replacements in poissonization}~and~\ref{rem:cc:replacements pseudo-upper}, replacing
\begin{align*}
 \text{\Cref{lem:TV_concentration-McDiarmid-deterministic}}  &\leftarrow \text{\Cref{lem:cc:concentration of TV distance btw ccinduced and ccoutput}}
\end{align*}
recovers the proof in the constant-composition case.
\end{rem}
\section{Proof of the Dual Representations}\label{sec:alternative representations}
This section provides proofs for \eqref{eqn:thm:exact soft-cov exponent_ALTERNATIVE} and \eqref{eqn:thm:cc:exponent_ALTERNATIVE}, which are alternative representations of the exact soft-covering exponents in the random i.i.d. codebook and random constant-composition codebook cases, respectively. 
\subsection{Proof of the Dual Representation of $\alpha$}\label{sec:alternative representation of alpha}
\begin{prop} \label{prop:alternative representation}
	Given $P_X\to P_{Y|X} \to P_Y $, and $R> I(P_X, P_{Y|X})$ 
	\begin{align}
	& \min_{Q_{XY}} \bigg\{ D(Q_{XY} \| P_{XY} )  + \frac12 [R-D(Q_{XY}\|P_XQ_Y)]_{+}\bigg\} \nonumber \\
	&= \max_{\lambda \in [1,2]} \left\{\frac{\lambda-1}{\lambda} \left(R- I^\sfs_{\lambda}\left(P_X, P_{Y|X}\right)\right)\right\}  \text{.}
\end{align}
\end{prop}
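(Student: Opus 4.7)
The plan is to prove the identity by convex duality, proceeding in three steps.

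\textbf{Step 1 (linearize the positive part).} I would first rewrite the primal as a saddle point using the elementary fact that $\tfrac{1}{2}[x]_+ = \max_{\lambda \in [1,2]} \tfrac{\lambda-1}{\lambda}\,x$ (equivalent, under the bijection $c = (\lambda-1)/\lambda \in [0,1/2]$, to $\max_{c\in[0,1/2]} cx$). This turns the primal into
\begin{align*}
\alpha(R,P_X,P_{Y|X}) \;=\; \min_{Q_{XY}} \max_{\lambda \in [1,2]} \Psi(Q_{XY},\lambda),
\end{align*}
where $\Psi(Q,\lambda) := D(Q\|P_{XY}) + \tfrac{\lambda-1}{\lambda}\bigl(R - D(Q\|P_X Q_Y)\bigr)$.

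\textbf{Step 2 (swap the min and max via Sion's theorem).} The simplex $\cP(\cX\times\cY)$ and the interval $[1,2]$ are both compact and convex, and in the $c$-parameterization $\Psi$ is affine in $c$, hence continuous and quasi-concave. The delicate point is convexity of $Q \mapsto \Psi(Q,\lambda)$ for each $\lambda \in [1,2]$. Computing the Hessian of $D(Q\|P_X Q_Y)$ and grouping indices by the $y$-coordinate produces diagonal blocks of the form $\operatorname{diag}\bigl(1/Q(x,y)\bigr)_{x\in\cX} - (1/Q_Y(y))\,\mathbf{1}\mathbf{1}^{\top}$, and a Cauchy--Schwarz estimate shows each block is positive semidefinite. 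Combined with the strict convexity of $D(Q\|P_{XY})$, the combination $D(Q\|P_{XY}) - c\,D(Q\|P_X Q_Y)$ stays convex in $Q$ for every $c \in [0,1]$; the range $c \in [0,1/2]$ induced by $\lambda \in [1,2]$ lies comfortably inside, which also explains why the dual is restricted to $[1,2]$. Sion's theorem therefore yields
\begin{align*}
\alpha(R,P_X,P_{Y|X}) \;=\; \max_{\lambda \in [1,2]} \min_{Q_{XY}} \Psi(Q_{XY},\lambda).
\end{align*}

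\textbf{Step 3 (evaluate the inner minimum in closed form).} Using the chain rules $D(Q\|P_{XY}) = D(Q_Y\|P_Y) + D(Q_{X|Y}\|P_{X|Y}\,|\,Q_Y)$ and $D(Q\|P_X Q_Y) = D(Q_{X|Y}\|P_X\,|\,Q_Y)$, I would first minimize over $Q_{X|Y=y}$ at each $y$ with $Q_Y$ held fixed. The per-$y$ subproblem $\min_q\{D(q\|P_{X|y}) - \tfrac{\lambda-1}{\lambda} D(q\|P_X)\}$ is solved by the tilted distribution $q^\star \propto P_{X|y}^{\lambda} P_X^{1-\lambda}$, with value $-\tfrac{1}{\lambda}\log\sum_x P_{X|y}(x)^{\lambda} P_X(x)^{1-\lambda}$. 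Substituting $P_{X|y}(x) = P_X(x)P_{Y|X}(y|x)/P_Y(y)$ and then solving the remaining Donsker--Varadhan / Gibbs variational problem in $Q_Y$ (the anticipated role of Lemma~\ref{lem:rel ent and func minimization}) collapses the minimum to $-\log\sum_y\bigl(\sum_x P_X(x) P_{Y|X}(y|x)^{\lambda}\bigr)^{1/\lambda} = -\tfrac{\lambda-1}{\lambda} I^{\sfs}_{\lambda}(P_X,P_{Y|X})$. Adding back $\tfrac{\lambda-1}{\lambda}R$ and maximizing over $\lambda \in [1,2]$ produces the claimed dual form.

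The main obstacle is the convexity-in-$Q$ verification in Step~2, because $D(Q\|P_X Q_Y)$ depends on $Q$ both explicitly and through the implicit marginal $Q_Y$; the $y$-block Hessian computation with Cauchy--Schwarz handles it cleanly, and the coefficient restriction $c \le 1/2$ is precisely what dictates the dual range $\lambda \in [1,2]$. Once convexity is in hand, the Sion swap and the two-stage inner minimization (conditional tilting followed by Donsker--Varadhan) are routine.
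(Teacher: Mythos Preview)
Your proposal is correct and lands on the same dual formula, but the route differs from the paper in a meaningful way at the minimax step. The paper never swaps $\min_{Q_{XY}}$ with $\max_\lambda$ in one shot; instead it first factors $Q_{XY}=Q_YQ_{X|Y}$ and uses the identity $D(Q_{X|Y}\|P_X\,|\,Q_Y)=D(Q_{X|Y}\|P_{X|Y}\,|\,Q_Y)+\bbE[\imath_{X;Y}(\widetilde X;\widetilde Y)]$ to make the objective linear in $\lambda$ and convex in $Q_{X|Y}$, swaps those two, solves the $Q_{X|Y}$-minimum (\Cref{cor:conditional minima}), and only then swaps $\min_{Q_Y}$ with $\max_\lambda$ using that a pointwise minimum of linear functions is concave in $\lambda$. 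This two-stage swap completely sidesteps the joint-convexity question you flag as ``the main obstacle.'' Your single Sion swap is more direct but requires the Hessian/Cauchy--Schwarz verification that $D(Q\|P_{XY})-c\,D(Q\|P_XQ_Y)$ is convex in the full joint $Q$ for $c\in[0,1]$; the computation you sketch is right (the block Hessian is $(1-c)\,\mathrm{diag}(1/Q(x,y))+c\,(1/Q_Y(y))\mathbf{1}\mathbf{1}^\top\succeq 0$), though the sentence ``combined with the strict convexity of $D(Q\|P_{XY})$'' undersells it---convexity of both pieces does not give convexity of a difference; what actually works is exactly the block decomposition you wrote. The payoff of your route is a cleaner one-shot duality; the payoff of the paper's is that each swap is trivially justified without any second-order computation.
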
 
\begin{proof}
Note that
\begin{align}
 &\min_{Q_{XY}}\bigg\{D(Q_{XY}\|P_{XY}) + \frac12[R-D(Q_{XY}\|P_XQ_Y)]_+\bigg\} \nonumber \\ 
	 & = \min_{Q_Y}\min_{Q_{X|Y}} \max_{\lambda \in [0,1]} \bigg\{  D(Q_Y\|P_Y ) \label{eq:required for parizi vs hayashi} \\
	 &\qquad \qquad + D(Q_{X|Y}\|P_{X|Y}|Q_Y) \nonumber \\ 
	 &\qquad \qquad \qquad \qquad + \frac\lambda2\left(R-D(Q_{X|Y}\|P_X|Q_Y)\right)\bigg\} \nonumber   \\
	 &= \min_{Q_Y}\min_{Q_{X|Y}} \max_{\lambda \in [0,1]} \bigg\{ D(Q_Y\|P_Y)\label{tilde rv} \\
	 &\qquad \qquad  + \frac{2-\lambda}{2}D(Q_{X|Y} \| P_{X|Y}|Q_Y) \nonumber \\ 
	 &\qquad \qquad \qquad \qquad + \frac\lambda2 \left(R-\bbE[\imath^{}_{X;Y}(\widetilde X; \widetilde Y) ] \right) \bigg\} \nonumber \\
	 & = \min_{Q_Y}\max_{\lambda \in [0,1]} \min_{Q_{X|Y}}  \bigg\{D(Q_Y\|P_Y)\label{eqn:changed minmax to maxmin} \\
	 &\qquad \qquad + \frac{2-\lambda}{2}D(Q_{X|Y} \| P_{X|Y}|Q_Y) \nonumber \\ 
	 &\qquad \qquad \qquad \qquad + \frac\lambda2\left(R-\bbE[\imath^{}_{X;Y}(\widetilde X;\widetilde Y)]\right)\bigg\}  \nonumber \\
	 & = \min_{Q_Y}\max_{\lambda \in [0,1]}   \bigg\{ D(Q_Y\|P_Y) +\frac\lambda2 R  \\ 
	 & \qquad \qquad + \min_{Q_{X|Y}}\bigg\{\frac{2-\lambda}{2}D(Q_{X|Y}\|P_{X|Y}|Q_Y) \nonumber \\ 
	 & \qquad \qquad \qquad \qquad - \frac{\lambda}{2} \bbE[\imath^{}_{X;Y}(\widetilde X; \widetilde Y)]\bigg\}\bigg\} \nonumber \\
	 & = \min_{Q_Y}\max_{\lambda \in [0,1]}   \bigg\{D(Q_Y\|P_Y) +\frac\lambda2 R \label{from:cor2} \\ 
	 & -\frac{2-\lambda}{2} \bbE \left[\log\bbE\left[\exp\left(\frac{\lambda}{2-\lambda}\, \imath^{}_{X;Y}(\widehat X;\widetilde Y)\right)\middle|\widetilde Y\right]\right]\bigg\} \nonumber \\
	 & = \max_{\lambda \in [0,1]}  \min_{Q_Y} \bigg\{ D(Q_Y\|P_Y) +\frac\lambda2 R \label{from:minmax to maximin again} \\ 
	 & -\frac{2-\lambda}{2}\bbE\left[\log \bbE\left[\exp\left(\frac{\lambda}{2-\lambda}\, \imath^{}_{X;Y}(\widehat X ;\widetilde Y)\right)\middle|\widetilde Y\right]   \right]\bigg\} \nonumber \\
	 & = \max_{\lambda \in [0,1]} \bigg\{\frac\lambda2R+\min_{Q_Y} \bigg\{D(Q_Y\|P_Y) \\ 
	 & -\frac{2-\lambda}{2} \bbE \left[\log\bbE\left[\exp\left(\frac{\lambda}{2-\lambda}\, \imath^{}_{X;Y}(\widehat X ;\widetilde Y)\right)\middle| \widetilde Y\right]\right]\bigg\}\bigg\} \nonumber \\
	 & = \max_{\lambda \in [0,1]}   \bigg\{ \frac\lambda2 R \label{from:lem1} \\ 
	 & - \log \bbE \left[\bbE^\frac{2-\lambda}{2}\left[\exp \left(\frac{\lambda}{2-\lambda}\, \imath^{}_{X;Y}(X;Y) \right )\middle|Y\right ]\right]\bigg\} \nonumber \\
	 & = \max_{\lambda \in [0,1]}   \bigg\{\frac\lambda2 \left( R- I^\sfs_{\frac{2}{2-\lambda}}(P_X, P_{Y|X})  \right)  \bigg\} \text{,} \label{frm:change of meas for alph mutual}
\end{align}
where in \eqref{tilde rv} $(\widetilde X, \widetilde Y)\sim Q_{X|Y}Q_Y$ and the fact that
\begin{align}
	&D(Q_{X|Y}\|P_X|Q_Y) \nonumber \\  
	&\qquad = D(Q_{X|Y}\|P_{X|Y}|Q_Y) + \bbE[\imath^{}_{X;Y} (\widetilde X; \widetilde Y)] 
\end{align}
is used; in \eqref{eqn:changed minmax to maxmin} there is no duality gap in changing the minimax to maximin because the optimized quantity is convex in $Q_{X|Y}$ and linear in $\lambda$; in \eqref{from:cor2} $(\widehat X, \widetilde Y) \sim P_{X|Y}Q_Y$ and \Cref{cor:conditional minima} in \Cref{apdx:alternative representation and comparisons} is used; in \eqref{from:minmax to maximin again}, once again, there is no duailty gap in changing minimax to maximin because the optimized quantity is convex in $Q_Y$ while this time it is concave in $\lambda$ because the minimum of a collection of linear functions is concave; \eqref{from:lem1} is an application of \Cref{lem:rel ent and func minimization} in \Cref{apdx:alternative representation and comparisons} such that
\begin{align}
	&f(y) = \nonumber \\ 
	&\quad \frac{2-\lambda}{2} \log  \bbE \left[ \exp \left( \frac{\lambda}{2-\lambda}\, \imath^{}_{X;Y}(X;Y) \right) \middle|Y=y \right] \text{,}  \label{in:fy} 
\end{align}
with the random transformation from $\cY$ to $\cX$ in \eqref{in:fy} is fixed to be $P_{X|Y}$; and finally \eqref{frm:change of meas for alph mutual} follows from the definition of Sibson's proposal of $\alpha$-mutual information in \eqref{eqn:def:alpha mut info-semih}. 
\end{proof}
\subsection{Proof of the Dual Representation of $\aleph$} \label{sec:alternative representation of aleph}
\begin{prop}
  Given $P_{\bar X}\to P_{Y|X}\to P_Y $, $P_{\bar X} \to Q_{Y|X}\to Q_Y $, and $R>I(P_{\bar X}, P_{Y|X}) $
  \begin{align}
    &\min_{Q_{Y|X}} \bigg\{D(P_{\bar X}Q_{Y|X}\|P_{\bar XY}) \nonumber \\ 
    &\qquad \qquad \quad + \frac12\left[R-D(P_{\bar X}Q_{Y|X}\|P_{\bar X}Q_Y)\right]_+ \bigg\} \nonumber \\
& =  \max_{\lambda\in [1,2]} \left\{ \frac{\lambda-1}{\lambda} \left(R - I^\sfc_\lambda (P_{\bar X}, P_{Y|X})\right)\right\} \text{.}
  \end{align}
\end{prop}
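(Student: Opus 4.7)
The strategy is to mirror the proof of Proposition~\ref{prop:alternative representation} in Section~\ref{sec:alternative representation of alpha}, adapting it to the constant-composition setting where the $\cX$-marginal is pinned to $P_{\bar X}$ and the primal is minimized over a single kernel $Q_{Y|X}$ rather than a joint $Q_{XY}$. Since $P_{\bar X}Q_{Y|X}$ and $P_{\bar X}Q_Y$ share the input marginal, both divergences in the primal collapse to conditional relative entropies given $P_{\bar X}$,
\begin{align}
D(P_{\bar X}Q_{Y|X}\|P_{\bar X}P_{Y|X})&=D(Q_{Y|X}\|P_{Y|X}|P_{\bar X}),\\
D(P_{\bar X}Q_{Y|X}\|P_{\bar X}Q_Y)&=D(Q_{Y|X}\|Q_Y|P_{\bar X}),
\end{align}
which blocks the $Q_{X|Y}$–$Q_Y$ decomposition used in \eqref{eq:required for parizi vs hayashi} and forces a different dualization path.

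First I would apply $[\,\cdot\,]_+=\max_{\lambda\in[0,1]}\lambda\,\cdot$ together with the variational identity $D(Q_{Y|X}\|Q_Y|P_{\bar X})=\min_{S_Y}D(Q_{Y|X}\|S_Y|P_{\bar X})$, introducing an auxiliary output distribution $S_Y$ that plays the role of $Q_Y$ in the proof of Proposition~\ref{prop:alternative representation}. This rewrites the primal as
\begin{align}
\min_{Q_{Y|X}}\max_{\lambda\in[0,1]}\max_{S_Y}\Big\{&D(Q_{Y|X}\|P_{Y|X}|P_{\bar X})+\tfrac{\lambda R}{2}\nonumber\\
&-\tfrac{\lambda}{2}D(Q_{Y|X}\|S_Y|P_{\bar X})\Big\}.
\end{align}
Next I would invoke Sion's minimax theorem (applied twice, once to swap $\min_{Q_{Y|X}}$ with $\max_{\lambda}$ and once to swap it with $\max_{S_Y}$) to pull $\min_{Q_{Y|X}}$ innermost: the objective factors across $x$-slices and, for $\lambda\in[0,1]$, each slice is a $(1-\lambda/2)$-multiple of a relative entropy between $Q_{Y|X}(\cdot|x)$ and a tilted distribution proportional to $P_{Y|X}(\cdot|x)^{2/(2-\lambda)}S_Y^{-\lambda/(2-\lambda)}$; it is therefore strictly convex in $Q_{Y|X}$, affine (hence concave) in $S_Y$ after the logarithmic reparametrization, and linear in $\lambda$. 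With the order reversed, the slice-wise inner minimization is solved in closed form through \Cref{lem:rel ent and func minimization} (the analogue of the role played by \Cref{cor:conditional minima} in Section~\ref{sec:alternative representation of alpha}), yielding
\begin{align}
-\tfrac{2-\lambda}{2}\sum_{x}P_{\bar X}(x)\log\sum_{y}P_{Y|X}(y|x)^{\frac{2}{2-\lambda}}S_Y(y)^{-\frac{\lambda}{2-\lambda}}.
\end{align}

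Finally I would reparametrize with $\beta=\lambda/(2-\lambda)$, which bijects $[0,1]$ onto $[0,1]$ and simplifies the exponents via $2/(2-\lambda)=1+\beta$, $\lambda/(2-\lambda)=\beta$, and $(2-\lambda)/2=1/(1+\beta)$; the closed form plus the $\lambda R/2$ term then collapses to $\tfrac{\beta}{1+\beta}\bigl(R-\bbE[D_{1+\beta}(P_{Y|X}(\cdot|X)\|S_Y)]\bigr)$, and maximizing over $S_Y$ produces $\tfrac{\beta}{1+\beta}\bigl(R-I^\sfc_{1+\beta}(P_{\bar X},P_{Y|X})\bigr)$ by the very definition in \eqref{eqn:def:cc:csiszar alpha mut info}. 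Substituting $\lambda'=1+\beta\in[1,2]$ then gives the claimed dual expression $\max_{\lambda'\in[1,2]}\tfrac{\lambda'-1}{\lambda'}\bigl(R-I^\sfc_{\lambda'}(P_{\bar X},P_{Y|X})\bigr)$. I expect the minimax swap to be the most delicate step: the convexity of the inner objective in $Q_{Y|X}$ hinges on $1-\lambda/2>0$ throughout the feasible range $\lambda\in[0,1]$, which is precisely why the dual variable ends up confined to $[1,2]$ after the reparametrization — any attempt to enlarge the $\lambda$-window would void the convexity and, with it, the swap.
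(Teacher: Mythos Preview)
Your proposal is correct and follows essentially the same approach as the paper's proof: linearize $[\cdot]_+$ via $\max_{\lambda\in[0,1]}$, introduce an auxiliary $S_Y$ through a variational identity, swap the min--max order twice by convexity/concavity, solve the inner minimization over $Q_{Y|X}$ in closed form via \Cref{lem:rel ent and func minimization}, and identify Csisz\'ar's $\alpha$-mutual information. The only cosmetic differences are that the paper introduces $S_Y$ through the entropy variational $H(Q_Y)=\min_{S_Y}\bbE[\imath_{S_Y}(\widetilde Y)]$ \emph{after} the first min--max swap (at \eqref{frm:cc:var rep of entropy}) rather than via your conditional-divergence identity before it, and packages the inner minimization as \Cref{cor:for aleph dual} rather than your explicit tilted-distribution computation---but these lead to the identical objective and the same final reparametrization $\lambda\mapsto\frac{2}{2-\lambda}\in[1,2]$.
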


\begin{proof} Observe that
 \begin{align}
  &\min_{Q_{Y|X}} \bigg\{ D(P_{\bar X}Q_{Y|X}\|P_{\bar XY}) \nonumber \\ 
  &\qquad \qquad \quad + \frac12\left[R-D(P_{\bar X}Q_{Y|X}\|P_{\bar X}Q_Y)\right]_+ \bigg\} \nonumber \\ 
&= \min_{Q_{Y|X}}\max_{\lambda\in [0,1]}\bigg\{ D(P_{\bar X}Q_{Y|X}\|P_{\bar XY}) \nonumber \\ 
&\qquad \qquad \quad  + \frac\lambda2(R-D(P_{\bar X}Q_{Y|X}\|P_{\bar X}Q_Y))\bigg\} \label{frm:cc:optimized func here}  \\
&= \min_{Q_{Y|X}}\max_{\lambda\in [0,1]} \bigg\{ \frac\lambda2R + \left( 1-\frac\lambda2 \right) D(P_{\bar X}Q_{Y|X}\|P_{\bar XY}) \nonumber \\ 
&\qquad \qquad \  - \frac\lambda2 H(Q_{Y}) + \frac\lambda2  \bbE\left[ \imath^{}_{P_{Y|X}} (\widetilde Y|\bar X )\right]   \bigg\} \label{frm:cc:algebra}   \\
&= \max_{\lambda\in [0,1]} \min_{Q_{Y|X}} \bigg\{ \frac\lambda2R + \left( 1-\frac\lambda2 \right) D(P_{\bar X}Q_{Y|X}\|P_{\bar XY}) \nonumber \\ 
&\qquad \qquad \ - \frac\lambda2 H(Q_{Y}) + \frac\lambda2  \bbE\left[ \imath^{}_{P_{Y|X}} (\widetilde Y|\bar X )\right]   \bigg\} \label{frm:cc:no duality gap}  \\
&= \max_{\lambda\in [0,1]} \min_{Q_{Y|X}} \max_{S_Y}\bigg\{\frac\lambda2R \nonumber \\ 
&\qquad  + \left( 1-\frac\lambda2 \right) D(P_{\bar X}Q_{Y|X}\|P_{\bar XY}) \nonumber \\
&\qquad \quad  - \frac\lambda2 \bbE\left[\imath^{}_{S_Y}(\widetilde Y)\right] + \frac\lambda2  \bbE\left[\imath^{}_{P_{Y|X}} (\widetilde Y|\bar X )  \right] \bigg\} \label{frm:cc:var rep of entropy}  \\
& = \max_{\lambda\in [0,1]}\max_{S_Y}  \bigg\{ \frac\lambda2R \nonumber \\ 
&\quad + \min_{Q_{Y|X}}\bigg\{ \left( 1-\frac\lambda2 \right) D(P_{\bar X}Q_{Y|X}\|P_{\bar XY}) \nonumber \\
&\qquad \,  - \frac\lambda2 \bbE\left[\imath^{}_{S_Y}(\widetilde Y)\right]   + \frac\lambda2  \bbE\left[ \imath^{}_{P_{Y|X}} (\widetilde Y|\bar X )\right] \bigg\}\bigg\} \label{frm:cc:no duality gap2} \\
&= \max_{\lambda\in [0,1]}\max_{S_Y}  \bigg\{ \frac\lambda2  R \nonumber \\ 
&\qquad \qquad \quad \  - \frac\lambda2 \bbE\left[D_{\frac{2}{2-\lambda}}(P_{Y|X}(\cdot|\bar X ) \| S_Y)  \right]  \bigg\} \label{frm:cc:corollary for aleph dual}  \\
&=  \max_{\lambda\in [0,1]} \left\{ \frac \lambda2 \left( R - I^\sfc_{\frac{2}{2-\lambda}}(P_{\bar X}, P_{Y|X} ) \right) \right\}   \text{,} \label{frm:def:csiszar alpha mut inf}
\end{align}
where in \eqref{frm:cc:algebra} $( \widetilde Y,\bar X )\sim Q_{Y|X}P_{\bar X}$ and the fact that 
\begin{align}
  &D(P_{\bar X}Q_{Y|X}\|P_{\bar X}Q_Y) \nonumber \\
  &=  D(P_{\bar X}Q_{Y|X}\|P_{\bar XY}) + H(Q_Y) -  \bbE\left[ \imath^{}_{P_{Y|X}} (\widetilde Y|\bar X )\right]
\end{align}
is used; in \eqref{frm:cc:no duality gap} there is no duality gap as the optimized quantity is linear in $\lambda $ and convex in $Q_{Y|X}$; \eqref{frm:cc:var rep of entropy} follows from the variational representation of entropy: 
\begin{align}
  H(Q_{Y}) = \min_{S_Y} \bbE\left[\imath^{}_{S_Y}(\widetilde Y )\right] \text{;}
\end{align}
in \eqref{frm:cc:no duality gap2} there is no duality gap as the optimized quantity is convex in $Q_{Y|X} $ and concave in $S_Y$; in \eqref{frm:cc:corollary for aleph dual} $(\bar X, Y) \sim P_{\bar X}P_{Y|X} = P_{\bar XY} $ and we use \Cref{cor:for aleph dual}; finally \eqref{frm:def:csiszar alpha mut inf} follows from the definition of Csisz\'ar's proposal of $\alpha $-mutual information in \eqref{eqn:def:cc:csiszar alpha mut info}. 
\end{proof}

\section{Comparisons with the Known Lower Bounds on the Soft-Covering Exponent}\label{sec:comparisons}
This section compares the exact soft-covering exponents in \Cref{thm:main,thm:cc:main} to their previously known lower bounds. In particular, \Cref{sec:comparison IID Random Codebook} provides comparisons of $\alpha(R, P_X,P_{Y|X})$ with the exponents that can be found in \cite[Lemma VII.9]{cuff2013distributed} and \cite[Theorem 6]{hayashi2006general}, and with the half of the relative entropy variant of the soft-covering exponent that can be found in \cite[Theorem 4(i)]{parizi2017exact} while \Cref{sec:cc:comparison Constant Composition} compares $\aleph(R, P_{\bar X}, P_{Y|X}) $ with the half of the relative entropy variant of the soft-covering exponents that can be found in \cite[Theorem 4(ii)]{parizi2017exact}, \cite[Theorem 10]{hayashi2011universally}, and  \cite[Eq. (177)]{hayashi-matsumoto-2016}.
\subsection{Comparisons in the Random i.i.d. Codebook Case}\label{sec:comparison IID Random Codebook}
Prior to our result in \Cref{thm:main}, the best known-to-date lower bound on the soft-covering exponent was provided in \cite[Lemma VII.9]{cuff2013distributed} which was shown to be
\begin{align}
	&\beta(R, P_X, P_{Y|X}) \nonumber \\ 
	& = \max_{\lambda \ge 0 } \max_{\lambda' \le 1} \bigg\{  \frac{\lambda}{2\lambda+1-\lambda'} \Big(R \label{eqn:def:beta1} \\ 
	&\qquad \qquad    - (1-\lambda')D_{1+\lambda}(P_{XY}\| P_X P_Y ) \nonumber \\ 
	&\qquad \qquad \qquad \qquad \quad  -\lambda' \widetilde D_{1+\lambda'} (P_{XY} \| P_X P_Y ) \Big)\bigg\}  \text{,} \nonumber
\end{align}
where, supposing $(X,Y)\sim P_X P_{Y|X}$,
\begin{align}
	D_{1+\lambda}(P_{XY}\| P_X P_Y ) &= \frac{1}{\lambda } \log \bbE \left[ \exp( \lambda\, \imath^{}_{X;Y}(X;Y) ) \right]  \label{eqn:de:renyi_mutual_info} 
\end{align}
is the R\'enyi divergence (see, e.g., \cite{vanErven2014}) of order $1+\lambda$ between the joint and product distributions, and 
\begin{align}
	&\widetilde D_{1+\lambda'} (P_{XY} \| P_X P_Y) \nonumber \\ 
	 &\quad \ = \frac{2}{\lambda'} \log \bbE\left[\bbE^\frac12\left[\exp\left(\lambda' \, \imath^{}_{X;Y}(X;Y)\right )\middle| Y \right]\right] \text{.}  \label{eqn:def:almost_renyi_mutual_info}
\end{align}
Using the results provided in \Cref{apdx:alternative representation and comparisons}, \Cref{prop:comparison with cuff distributed channel synthesis} proves the fact that $\alpha(R, P_X, P_{Y|X})$ captures the exponential decay rate in soft-covering lemma better than $\beta(R, P_X, P_{Y|X})$. 
\begin{prop}\label{prop:comparison with cuff distributed channel synthesis}
	 Suppose $P_X \to P_{Y|X} \to P_Y $, and $R>I(P_X, P_{Y|X})>0$. Then
	\begin{align}
		\alpha(R, P_X, P_{Y|X}) \ge \beta(R, P_X, P_{Y|X}) \text{,}
	\end{align}
	where $\alpha(R, P_X, P_{Y|X})$ and $ \beta(R, P_X, P_{Y|X})$ are as defined in \eqref{eqn:def:alpha(R,P_X,P_{Y|X})} and \eqref{eqn:def:beta1}, respectively. 
\end{prop}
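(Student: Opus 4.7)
The plan is to prove $\alpha \ge \beta$ pointwise in $\beta$'s double maximization: for every feasible pair $(\lambda, \lambda')$ with $\lambda \ge 0$ and $\lambda' \le 1$, I will exhibit a $\tilde\lambda \in [1,2]$ whose contribution to $\alpha$'s maximum dominates the corresponding contribution to $\beta$'s. Motivated by matching the coefficient of $R$ in the two affine-in-$R$ expressions, set
\begin{align*}
\mu := \frac{\lambda}{\lambda + 1 - \lambda'}, \qquad \tilde\lambda := 1 + \mu .
\end{align*}
An elementary check gives $\mu \in [0,1]$, hence $\tilde\lambda \in [1,2]$, and
\begin{align*}
\frac{\tilde\lambda - 1}{\tilde\lambda} \;=\; \frac{\mu}{1+\mu} \;=\; \frac{\lambda}{2\lambda + 1 - \lambda'},
\end{align*}
so the $R$-coefficients of the two expressions agree. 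The pointwise domination therefore reduces to the key estimate
\begin{align*}
& I^\sfs_{1+\mu}(P_X, P_{Y|X}) \\
& \quad \le (1 - \lambda')\, D_{1 + \lambda}(P_{XY}\|P_XP_Y) + \lambda'\, \widetilde{D}_{1 + \lambda'}(P_{XY}\|P_XP_Y) .
\end{align*}

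To establish this inequality, I would apply H\"older's inequality twice. Writing $F_s(y) := \bbE[\exp(s\,\imath^{}_{X;Y}(X;Y)) \mid Y = y]$, Sibson's identity reads $I^\sfs_{1+\mu} = \frac{1+\mu}{\mu}\log \bbE[F_\mu^{1/(1+\mu)}(Y)]$. First, a conditional H\"older with conjugate exponents $p = (\lambda + 1 - \lambda')/(1 - \lambda')$ and $q = (\lambda + 1 - \lambda')/\lambda$---whose reciprocals sum to $1$ and which satisfy $\lambda/p + \lambda'/q = \mu$---delivers $F_\mu(y) \le F_\lambda^{1/p}(y)\, F_{\lambda'}^{1/q}(y)$. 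Raising this to the power $1/(1+\mu)$, integrating over $Y\sim P_Y$, and then invoking an unconditional H\"older with conjugate exponents $p' = p(1+\mu) = (2\lambda + 1 - \lambda')/(1-\lambda')$ and $q' = q(1+\mu)/2 = (2\lambda + 1 - \lambda')/(2\lambda)$---selected so that the outer integrands become $F_\lambda$ and $F_{\lambda'}^{1/2}$---produces $\bbE[F_\mu^{1/(1+\mu)}(Y)] \le \bbE^{1/p'}[F_\lambda(Y)]\,\bbE^{1/q'}[F_{\lambda'}^{1/2}(Y)]$. Multiplying by $(1+\mu)/\mu$ after taking logarithms, and using the definitions in \eqref{eqn:de:renyi_mutual_info}--\eqref{eqn:def:almost_renyi_mutual_info}, recovers exactly $I^\sfs_{1+\mu} \le (1-\lambda') D_{1+\lambda} + \lambda'\widetilde{D}_{1+\lambda'}$.

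The structural reason the proof works is that the single identity $\mu(\lambda + 1 - \lambda') = \lambda$ simultaneously matches the $R$-coefficients of $\alpha$ and $\beta$ and is exactly what forces both H\"older normalizations $1/p + 1/q = 1/p' + 1/q' = 1$. I expect the main obstacle is purely bookkeeping: one must verify $p, q, p', q' \in [1, \infty)$ so that the standard (not reverse) H\"older is in force---this follows immediately from $\lambda \ge 0$ and $\lambda' \le 1$. The degenerate boundary cases ($\lambda = 0$, $\lambda' = 1$, or $\lambda = \lambda'$, where some exponent diverges or collapses) are handled by continuity in $\mu$; in particular, $\lambda' = 1$ yields $\mu = 1$, $\tilde\lambda = 2$, and equality thanks to the algebraic identity $\widetilde{D}_{2}(P_{XY}\|P_XP_Y) = I^\sfs_2(P_X, P_{Y|X})$.
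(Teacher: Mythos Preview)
Your proof is correct but follows a genuinely different route from the paper's. The paper works entirely on the \emph{primal} side: it invokes the variational identities of \Cref{cor:lemma for renyi mutual info,cor:lemma for almost renyi mutual info} to rewrite $D_{1+\lambda}$ and $\widetilde D_{1+\lambda'}$ as minima over auxiliary joint distributions $Q_{XY}$ and $S_{XY}$, then upper-bounds $\beta$ by (i) forcing the two minimizers to coincide (a suboptimal coupling), (ii) swapping $\max_{\lambda,\lambda'}\min_{Q_{XY}}$ to $\min_{Q_{XY}}\max_{\lambda,\lambda'}$ via the weak duality gap, and (iii) observing that the resulting inner maximum over $(\lambda,\lambda')$ collapses to the $[\,\cdot\,]_+$ appearing in the primal definition \eqref{eqn:def:alpha(R,P_X,P_{Y|X})} of $\alpha$. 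Your argument instead stays on the \emph{dual} side \eqref{eqn:def:alpha(R,P_X,P_{Y|X})_the dual}: for each $(\lambda,\lambda')$ you pick the single $\tilde\lambda=1+\mu$ that matches the coefficient $\lambda/(2\lambda+1-\lambda')$ of $R$, and then prove the residual inequality $I^{\sfs}_{1+\mu}\le (1-\lambda')D_{1+\lambda}+\lambda'\widetilde D_{1+\lambda'}$ by two nested H\"older steps. The paper's approach is more structural---it explains \emph{why} $\beta$ loses (the artificial coupling of two independent minimizations) and reuses machinery already built in \Cref{apdx:alternative representation and comparisons}. Your approach is more elementary and self-contained: it requires no variational lemmas, and the single constraint $\mu(\lambda+1-\lambda')=\lambda$ visibly does double duty, matching the $R$-coefficients and making both H\"older normalizations work. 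The bookkeeping you flag ($p,q,p',q'\ge 1$ and the boundary cases $\lambda=0$, $\lambda'=1$) is indeed routine.
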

\begin{proof}
	Let $(X,Y)\sim P_XP_{Y|X}  $, $(\widetilde X, \widetilde Y)\sim Q_{X|Y}Q_Y $, and $(\widehat X, \widehat Y)\sim S_{X|Y}S_Y$. It follows that
	\begin{align}
&\beta(R, P_X, P_{Y|X}) \nonumber \\ 
&= \max_{\substack{ \lambda \ge 0\\ \lambda' \le 1 }} \bigg\{ \frac{\lambda}{2\lambda+1-\lambda'} \Big(R \\
& \qquad \qquad  -(1-\lambda')D_{1+\lambda}(P_{XY}\| P_X P_Y ) \nonumber \\
& \qquad \qquad  \qquad \qquad \qquad -\lambda' \widetilde D_{1+\lambda'} (P_{XY} \| P_X P_Y ) \Big)\bigg\} \nonumber \\
&= \max_{\substack{ \lambda \ge 0\\ \lambda' \le 1 }} \bigg\{ \frac{\lambda}{2\lambda+1-\lambda'} \bigg(R \label{eqn:use lemmas} \\
&\  +\frac{1-\lambda'}{\lambda} \min_{Q_{XY}} \left\{  D(Q_{XY} \| P_{XY}) - \lambda \bbE \big[ \imath^{}_{X;Y}(\widetilde X; \widetilde Y)\big]\right\}   \nonumber \\ 
&\qquad + \min_{S_{XY} } \Big\{ 2 D(S_{XY} \| P_{XY} ) -  D(S_{XY}\|P_{X|Y}S_Y) \nonumber \\ 
&\qquad \qquad \qquad \qquad \qquad \qquad   - \lambda' \bbE \big[ \imath^{}_{X;Y}(\widehat X; \widehat Y ) \big]  \Big\}   \bigg) \bigg\} \nonumber \\
& \le  \max_{\substack{ \lambda \ge 0\\ \lambda' \le 1 }} \min_{Q_{XY}} \bigg\{ D(Q_{XY}\|P_{XY}) \label{from:sub opt. chch} \\ 
&\qquad \qquad \quad \, + \frac{\lambda}{2\lambda +1 - \lambda' } \left( R- D(Q_{XY}\|P_XQ_Y)\right)  \bigg\} \nonumber \\
& \le  \min_{Q_{XY}} \max_{\substack{ \lambda \ge 0\\ \lambda' \le 1 }} \bigg\{ D(Q_{XY}\|P_{XY}) \label{from:minmax > maxmin} \\ 
&\qquad \qquad \quad \, + \frac{\lambda}{2\lambda +1 - \lambda' } \left( R- D(Q_{XY}\|P_XQ_Y)  \right)  \bigg\} \nonumber    \\
&= \min_{Q_{XY}} \bigg\{ D(Q_{XY}\|P_{XY})  \label{from:take maxima} \\ 
&\qquad \qquad \qquad \qquad \quad \,  +  \frac12 \left[ R- D(Q_{XY}\|P_XQ_Y)  \right]_+  \bigg\}  \nonumber \\
	&= \alpha(R, P_X, P_{Y|X}) \text{,} 
	\end{align}
where \eqref{eqn:use lemmas} uses \Cref{cor:lemma for renyi mutual info,cor:lemma for almost renyi mutual info} in \Cref{apdx:alternative representation and comparisons}; \eqref{from:sub opt. chch} constrains the two minimizations by assuming that their minimizers are equivalent and uses the fact that
\begin{align}
&D(Q_{XY}\|P_{X|Y}Q_Y) + \bbE[\imath^{}_{X;Y}(\widetilde X;\widetilde Y)] \nonumber \\ 
&\qquad = D(Q_{XY}\|P_XQ_Y)  \text{;} \label{use:parizi comparison}
\end{align}
 \eqref{from:minmax > maxmin} is due to the duality gap; and finally \eqref{from:take maxima} follows because $\frac{\lambda a}{2\lambda +1 -\lambda'} $ is monotone decreasing or increasing in\footnote{Same observation holds if one focuses on $\lambda$ instead of $\lambda'$.} $\lambda'  $ depending on whether $a<0 $ or $a>0$. 
\end{proof}
Prior to Cuff's exponent in \cite[Lemma VII.9]{cuff2013distributed}, Hayashi \cite[Theorem 6]{hayashi2006general} argues that 
\begin{align}
		& \liminf_{n\to \infty} -\frac{1}{n}\log \bbE \left[\left\| \induced - P_{Y^n}\right\|^{}_1\right] \nonumber \\ 
		&\qquad \qquad \ge \gamma(R, P_X, P_{Y|X}) \text{,}
	\end{align}
where 	
\begin{align}
	&\gamma(R, P_X, P_{Y|X}) \nonumber \\ 
	&\qquad  = \max_{\lambda \in [0,1] } \left\{\frac{\lambda}{1+\lambda} \left( R- D_{1+\lambda}(P_{XY}\| P_{X}P_{Y} ) \right)\right\} \text{.} \label{eqn:def:gamma}
\end{align}
As shown in \cite{cuff2013distributed}, thanks to Jensen's inequality, noting that
\begin{align}
	\widetilde D_{1+\lambda} (P_{XY} \| P_X P_Y) \le  D_{1+\lambda} (P_{XY} \| P_X P_Y) \text{,}
\end{align}
and altering the maximization domain in the right side of \eqref{eqn:def:beta1} by restricting $\lambda'=\lambda$ yields
\begin{align}
	\beta(R, P_X, P_{Y|X}) \ge \gamma(R, P_X, P_{Y|X}) \text{.} \label{cuff > hayashi}
\end{align}
Together with \Cref{prop:comparison with cuff distributed channel synthesis}, \eqref{cuff > hayashi} implies
\begin{align}
	\alpha(R, P_X, P_{Y|X}) \ge \gamma(R, P_X, P_{Y|X}) \label{kerim hayashi} \text{.}
\end{align} 

As a further comparison, Parizi \emph{et al.} \cite[Theorem 4(i)]{parizi2017exact} show that\footnote{Previously, Hayashi argues the lower bound in \eqref{eq:par-tel-mer and hay-mats} without showing the primal form of $\zeta$ in \eqref{eqn:def:zeta_def}, see \cite{hayashi2011exponential}.}
\begin{align}
	&\lim_{n\to \infty } -\frac1n \log \bbE\left[D\left(\induced \middle\|P_{Y^n}\right)\right] \nonumber \\ 
	&\qquad \qquad = \zeta(R, P_X, P_{Y|X}) \text{,} \label{eq:par-tel-mer and hay-mats}
\end{align}
where
\begin{align}
&\zeta(R, P_X, P_{Y|X}) \nonumber \\ 
&= \min _{Q_{XY}} \left\{ D(Q_{XY} \| P_{XY} )+\left[R - \bbE\big[\imath^{}_{X;Y}(\widetilde X;\widetilde Y)\big] \right]_+ \right\} \label{eqn:def:zeta_def} \\
&= \max_{\lambda\in [0,1]} \lambda \left( R - D_{1+\lambda}(P_{XY} \| P_X P_Y ) \right) \label{eqn:def:zeta_ALTERNATIVE} 
\end{align}
with $(\widetilde X, \widetilde Y) \sim Q_{X|Y}Q_Y $. Using Pinsker's \cite[Problem 3.18]{csiszar2011information} and Jensen's inequalities 
\begin{align}
	&\bbE\big[D\big(\induced\big\|P_{Y^n}\big)\big] \nonumber \\ 
	&\qquad \ge \frac{\log \rme}{2} \bbE\left[\left\|\induced-P_{Y^n}\right\|_1^2\right] \\
	&\qquad \ge \frac{\log \rme}{2} \bbE^2\left[\left\|\induced-P_{Y^n}\right\|_1\right] \text{,}
\end{align}
and one can easily see the following lower bound on the soft-covering exponent
\begin{align}
	&\liminf_{n \to \infty} -\frac 1n\log \bbE \left[\left\| \induced - P_{Y^n} \right\|^{}_1\right] \nonumber \\ 
	&\qquad \qquad  \ge \frac12\zeta(R, P_X, P_{Y|X}) \text{.}
\end{align}
From the the definition of $\gamma(R, P_X, P_{Y|X})$ in \eqref{eqn:def:gamma} and the dual form of $\zeta(R, P_X, P_{Y|X}) $ in \eqref{eqn:def:zeta_ALTERNATIVE}, it is immediate that 
\begin{align}
	\gamma(R, P_X, P_{Y|X}) \ge \frac12\zeta(R, P_X, P_{Y|X}) \label{half parizi is even less} \text{,}
\end{align}
which, together with the bound in \eqref{kerim hayashi}, implies
\begin{align}
	\alpha(R, P_X, P_{Y|X}) \ge \frac12\zeta(R, P_X, P_{Y|X}) \text{.}
\end{align}

Following example illustrates the fact that, in general, there is a strictly positive gap between the above compared exponents.
\begin{exa}[Binary Symmetric Channel]\label{exa:bsc alpha vs beta}
Consider the setting in \Cref{exa:bsc alpha vs aleph}, where $P_{Y|X}\colon \cX\to\cY$ is a binary symmetric channel with crossover probability $p=0.05$, and $P_X(0) = 2/5$. If $R = 0.85 > I(P_X, P_{Y|X}) \approx 0.69$ \texttt{\emph{bits}},
  \begin{align}
  \alpha(0.85, P_X, P_{Y|X})   &\approx 2.0429 \times 10^{-2} \text{,} \\
   \beta(0.85, P_X, P_{Y|X})   &\approx 2.0331 \times 10^{-2} \text{,} \\
   \gamma (0.85, P_X, P_{Y|X}) &\approx 2.0116  \times 10^{-2} \text{,} \\
  0.5\times \zeta(0.85, P_X, P_{Y|X}) &\approx 1.3767 \times 10^{-2} \text{,}
  \end{align}
implying $\alpha > \beta > \gamma > \frac12 \zeta$, in general. 

On the next page, \Cref{fig:comparison} shows the computed $\alpha $, $\beta$, $\gamma $ and $\frac12 \zeta$ values for various rates $R$. Note that, although \Cref{fig:comparison:bigpic} shows that $\alpha $, $\beta$, and $\gamma $, are almost equal to one another for a range of $R$ values, there exists a small but strictly positive gap between them, see, e.g., \Cref{fig:comparison.zoom}. 
\end{exa}
\input{Fig_Comparison-of-Exponents} 
\subsection{Comparisons in the Random Constant-Composition Codebook Case}\label{sec:cc:comparison Constant Composition}
When the constant-composition coding ensemble $\scrD_M^n$ is used instead of the i.i.d. coding ensemble $\scrC_M^n$, Parizi \emph{et al.} \cite[Theorem 4(ii)]{parizi2017exact} show that, 
\begin{align}
	&\lim_{n\to \infty } -\frac1n \log \bbE\left[D\left(\ccinduced \middle\|R_{\breve Y^n}\right)\right] \nonumber \\  
	&\qquad \qquad  = \beth(R, P_{\bar X}, P_{Y|X}) \text{,}
\end{align}
such that
\begin{align}
  &\beth(R, P_{\bar X}, P_{Y|X}) \nonumber \\ 
  &\quad  =  \min_{Q_{Y|X} \in \cP(\cY|\cX)} \big\{ D(P_{\bar X}Q_{Y|X}\|P_{\bar XY}) \label{eqn:cc:def:beth} \\ 
  & \qquad \qquad \qquad \qquad  + [R - G(Q_{Y|X}\|P_{Y|X}|P_{\bar X}) ]_+  \big\} \text{,}  \nonumber
\end{align}
where $P_{\bar XY} = P_{\bar X}P_{Y|X} $, and for $P_{\bar X}\to Q_{Y|X}\to Q_Y $, assuming $(\bar X, \widetilde Y)\sim P_{\bar X} Q_{Y|X} $,
\begin{align}
&G(Q_{Y|X}\|P_{Y|X}|P_{\bar X})= H(Q_Y)- \bbE\big[ \imath^{}_{P_{Y|X}} (\widetilde Y|\bar X)\big] \nonumber \\ 
&\quad \qquad + \min_{\substack{R_{Y|X} \colon \\ P_{\bar X} \to R_{Y|X} \to Q_Y }}D(P_{\bar X}R_{Y|X}\|P_{\bar XY}) \text{.} \label{cf:G func}  
\end{align}
Once again, using Pinsker's \cite[Problem 3.18]{csiszar2011information} and Jensen's inequalities 
\begin{align}
	&\bbE\Big[D\Big(\ccinduced \Big\|R_{\breve Y^n}\Big)\Big] \nonumber \\ 
	&\qquad \ge \frac{\log \rme}{2} \bbE\bigg[\Big\|\ccinduced-R_{\breve Y^n}\Big\|_1^2\bigg]   \\
	&\qquad \ge \frac{\log \rme}{2} \bbE^2\Big[\Big\|\ccinduced-R_{\breve Y^n}\Big\|_1\Big] \text{,} 
\end{align}
one can easily see the following lower bound on the soft-covering exponent in the constant-composition case:
\begin{align}
	&\liminf_{n \to \infty} -\frac 1n\log \bbE \left[\left\| \ccinduced - R_{\breve Y^n} \right\|^{}_1\right] \nonumber \\ 
	&\qquad \qquad \ge \frac12\beth(R, P_X, P_{Y|X}) \text{.}
\end{align}
Since $\aleph(R, P_{\bar X}, P_{Y|X}) $ is the exact soft-covering exponent in the constant-composition case, it is expected that $\aleph \ge \frac12\beth $. This result is formally established by \Cref{prop:aleph vs beth}.  
\begin{prop}\label{prop:aleph vs beth}
  Given an $m$-type $P_{\bar X}\in \cP_m(\cX)$ suppose $P_{\bar X}\to P_{Y|X} \to P_Y$, and $R>I(P_{\bar X}, P_{Y|X})>0$. Then,
  \begin{align}
    \aleph(R,P_{\bar X}, P_{Y|X}) \ge \frac12\beth(R,P_{\bar X}, P_{Y|X}) \text{,}
  \end{align}
  where $\aleph(R,P_{\bar X}, P_{Y|X})$  and $\beth(R,P_{\bar X}, P_{Y|X})$ are as defined in \eqref{eqn:cc:def:aleph} and \eqref{eqn:cc:def:beth}, respectively. 
\end{prop}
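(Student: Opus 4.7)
The plan is to establish the pointwise inequality $2 f_{\aleph}(Q_{Y|X}) \ge f_{\beth}(Q_{Y|X})$ for every random transformation $Q_{Y|X}\colon \cX\to\cY$, where
\begin{align*}
f_{\aleph}(Q_{Y|X}) &= D(P_{\bar X}Q_{Y|X}\|P_{\bar XY}) + \tfrac12 [R- D(P_{\bar X}Q_{Y|X}\|P_{\bar X}Q_Y) ]_+ \text{,} \\
f_{\beth}(Q_{Y|X}) &= D(P_{\bar X}Q_{Y|X}\|P_{\bar XY}) + [R- G(Q_{Y|X}\|P_{Y|X}|P_{\bar X})]_+ \text{,}
\end{align*}
so that taking the minimum on both sides yields the claim.

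First, I would fix $Q_{Y|X}$, let $P_{\bar X}\to Q_{Y|X}\to Q_Y$, and introduce the shorthand
$D \coloneqq D(P_{\bar X}Q_{Y|X}\|P_{\bar XY})$, $I \coloneqq D(P_{\bar X}Q_{Y|X}\|P_{\bar X}Q_Y)$, and $\widehat D \coloneqq \min_{R_{Y|X}\colon P_{\bar X}\to R_{Y|X}\to Q_Y} D(P_{\bar X}R_{Y|X}\|P_{\bar XY})$. A direct calculation with $(\bar X,\widetilde Y)\sim P_{\bar X}Q_{Y|X}$ shows that
\begin{align*}
H(Q_Y) - \bbE\left[\imath^{}_{P_{Y|X}}(\widetilde Y|\bar X)\right] = \sum_{a,b}P_{\bar X}(a)Q_{Y|X}(b|a)\log\frac{P_{Y|X}(b|a)}{Q_Y(b)} = I - D \text{,}
\end{align*}
which, together with \eqref{cf:G func}, gives the identity $G(Q_{Y|X}\|P_{Y|X}|P_{\bar X}) = I - D + \widehat D$.

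Since $Q_{Y|X}$ itself satisfies the marginal constraint $P_{\bar X}\to Q_{Y|X}\to Q_Y$, the constrained minimum $\widehat D$ is bounded by $D$, which immediately yields $G\le I$. The non-negativity of $\widehat D$ supplies the reverse-flavor bound $G\ge I-D$, i.e., $D+G\ge I$. With these two facts in hand, I would verify the pointwise inequality
\begin{align*}
D + [R-I]_+ \ge [R-G]_+
\end{align*}
by a short case analysis on the ordering of $R$, $G$, and $I$: if $R\le G$, the right side is zero; if $G< R < I$, the right side equals $R-G \le I-G \le D$; and if $R\ge I\ge G$, both positive parts are active and their difference is $I-G\le D$. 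Adding $D$ to both sides gives $2 f_{\aleph}(Q_{Y|X}) \ge f_{\beth}(Q_{Y|X})$, and minimizing over $Q_{Y|X}\in\cP(\cY|\cX)$ yields $2\aleph(R,P_{\bar X},P_{Y|X}) \ge \beth(R,P_{\bar X},P_{Y|X})$.

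The argument is essentially algebraic once the structural identity $G = I - D + \widehat D$ is identified, so the main obstacle is really just unpacking the definition of $G$ in \eqref{cf:G func} and recognizing how the first two entropy/cross-entropy terms collapse into $I-D$; after that point, the two inequalities $\widehat D\le D$ and $\widehat D\ge 0$ do all of the work, and the case analysis is routine.
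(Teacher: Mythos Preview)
Your proof is correct and follows essentially the same route as the paper: both arguments establish the pointwise inequality $2f_{\aleph}(Q_{Y|X})\ge f_{\beth}(Q_{Y|X})$ via the identity $G = I - D + \widehat D$ together with $\widehat D\ge 0$, and then minimize over $Q_{Y|X}$. The only cosmetic difference is that the paper compresses your three-case analysis into the single inequality $[f-h]_+\ge [f]_+-h$ (applied with $f=R-(I-D)$ and $h=D$) followed by the monotonicity $[R-G]_+\le [R-(I-D)]_+$; your additional observation $\widehat D\le D$ (hence $G\le I$) is used only to organize the cases and is not strictly needed.
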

\begin{proof}
Assume $(\bar X, \widetilde Y)\sim P_{\bar X}Q_{Y|X} $. Realizing (cf. \eqref{cf:G func})
\begin{align}
 G(Q_{Y|X}\|P_{Y|X}|P_{\bar X}) \ge  H(Q_Y) - \bbE\big[ \imath^{}_{P_{Y|X}} (\widetilde Y|\bar X)\big] \text{,}
\end{align}
note that
\begin{align}
  &\aleph(R, P_{\bar X}, P_{Y|X}) \nonumber \\ 
  &= \min_{Q_{Y|X} \in \cP(\cY|\cX)} \bigg\{ D(P_{\bar X}Q_{Y|X}\|P_{\bar XY}) \\ 
  &\qquad \qquad \qquad \quad \ \, + \frac12[R-D(P_{\bar X}Q_{Y|X}\|P_{\bar X}Q_Y)]_+ \bigg\} \nonumber \\
  &\ge  \min_{Q_{Y|X} \in \cP(\cY|\cX)} \bigg\{ \frac12 D(P_{\bar X}Q_{Y|X}\|P_{\bar XY})  \label{frm:max(0,f-a)>max(0,f)-a}  \\ 
  &\qquad \qquad \quad \,  + \frac12   \big[R + \bbE\big[ \imath^{}_{P_{Y|X}} (\widetilde Y|\bar X)\big] - H(Q_Y)\big]_+ \bigg \}  \nonumber \\
   &\ge  \min_{Q_{Y|X} \in \cP(\cY|\cX)} \bigg\{\frac12 D(P_{\bar X}Q_{Y|X}\|P_{\bar XY}) \label{frm:G def in par tel} \\ 
   &\qquad \qquad \qquad \qquad   + \frac12[R - G(Q_{Y|X}\|P_{Y|X}|P_X) ]_+  \bigg\}  \nonumber \\
   &= \frac12 \beth (R, P_{\bar X}, P_{Y|X} )
  \end{align}
where \eqref{frm:max(0,f-a)>max(0,f)-a} follows from the facts that
\begin{align}
 & D(P_{\bar X}Q_{Y|X}\|P_{\bar X}Q_Y) \nonumber \\ 
 & =  D(P_{\bar X}Q_{Y|X}\|P_{\bar XY}) + H(Q_Y) - \bbE\big[ \imath^{}_{P_{Y|X}} (\widetilde Y|\bar X)\big] \label{eq:req for G<mut inf}
\end{align}
and $[f-h]_+ \ge [f]_+-h $ for any non-negative $h$.
\end{proof}

Apart from the exponent shown in \cite[Theorem 4(ii)]{parizi2017exact}, Hayashi and Matsumoto \cite[Theorem 10]{hayashi2011universally} discuss that 
\begin{align}
	&\liminf_{n\to \infty } -\frac1n \log \bbE\left[D\left(\ccinduced \middle\|R_{\breve Y^n}\right)\right] \nonumber \\ 
	&\qquad \qquad \ge \daleth (R, P_{\bar X}, P_{Y|X}) \text{,}
\end{align}
where
\begin{align}
  &\daleth (R, P_{\bar X}, P_{Y|X}) \nonumber \\ 
  &\qquad \quad   = \max_{\lambda\in [0,1]} \left\{\lambda \left(R -I^\sfs_{\frac{1}{1-\lambda}}(P_{\bar X}, P_{Y|X})\right)\right\} \text{.} \label{eqn:cc:def:dalethCC}
\end{align}
Using \eqref{eq:required for parizi vs hayashi}--\eqref{frm:change of meas for alph mutual} and the fact that (cf. \eqref{cf:G func} and \eqref{eq:req for G<mut inf})
\begin{align}
  G(Q_{Y|X}\|P_{Y|X}|P_{\bar X}) \le D(P_{\bar X}Q_{Y|X}\|P_{\bar X}Q_Y) \text{,}
\end{align}
it is easy to establish\footnote{Also see \cite[Appendix C]{parizi2017exact} for a different (and more complex) proof of \eqref{bc:G<mut inf}.}
\begin{align}
 &\daleth (R, P_{\bar X}, P_{Y|X}) \nonumber \\ 
 &=\max_{\lambda \in [0,1]}   \left\{\lambda\left( R- I^\sfs_{\frac{1}{1-\lambda}}(P_X, P_{Y|X}) \right)\right\} \\
  & = \min_{Q_{XY}} \big\{ D(Q_{XY}\|P_{\bar XY}) \\ 
  &\qquad \qquad \qquad \qquad \quad  + \left[R-D(Q_{XY}\|P_{\bar X}Q_Y)\right]_+\big\} \nonumber \\
  &\le \min_{Q_{Y|X}} \Big\{ D(P_{\bar X}Q_{Y|X}\|P_{\bar XY}) \label{bc:suboptimal Q_X=P_X} \\ 
  &\qquad \qquad \qquad \quad + \left[R-D(P_{\bar X}Q_{Y|X}\|P_{\bar X}Q_Y)\right]_+\Big\} \nonumber \\
  &\le  \beth(R,P_{\bar X},P_{Y|X}) \text{,} \label{bc:G<mut inf}
\end{align}
where \eqref{bc:suboptimal Q_X=P_X} follows from the suboptimal choice $Q_X=P_{\bar X}$. Together with \Cref{prop:aleph vs beth}, \eqref{bc:G<mut inf} readily implies that 
\begin{align}
   \aleph(R,P_{\bar X}, P_{Y|X}) \ge \frac12\daleth(R,P_{\bar X}, P_{Y|X}) \text{.} \label{cc:proven here aga}
\end{align}
Furthermore, in a different paper, Hayashi and Matsumoto \cite[Eq. (177)]{hayashi-matsumoto-2016} also argue that 
\begin{align}
  	&\liminf_{n\to \infty } -\frac1n \log \bbE\left[D\left(\ccinduced \middle\|R_{\breve Y^n}\right)\right] \nonumber \\ 
  	&\qquad \qquad \ge \gimel (R, P_{\bar X}, P_{Y|X}) \text{,}
\end{align}
where 
\begin{align}
  &\gimel(R, P_{\bar X}, P_{Y|X}) \nonumber \\ 
  & = \min_{Q_{Y|X} } \Big\{ D(P_{\bar X}Q_{Y|X}\|P_{\bar XY}) \\ 
  &\qquad \qquad \qquad \quad  + \left[R-D(P_{\bar X}Q_{Y|X}\|P_{\bar X}Q_Y)\right]_+ \Big\} \text{.} \nonumber 
\end{align}
Though, since\footnote{Also see \eqref{bc:suboptimal Q_X=P_X}--\eqref{bc:G<mut inf} together with \Cref{prop:aleph vs beth}.} $D(P_{\bar X}Q_{Y|X}\|P_{\bar XY}) \ge 0$, it is trivial to see that $\aleph\ge \frac12 \gimel $ in this case. 

Similar to its counterpart in \Cref{exa:bsc alpha vs beta}, the following example illustrates the fact that, in general, there is a strictly positive gap between $\aleph$, $\beth$, $\daleth$, and $\gimel$.
\begin{exa}[Binary Symmetric Channel] Consider the setting in Examples~\ref{exa:bsc alpha vs aleph}~and~\ref{exa:bsc alpha vs beta} , where $P_{Y|X}\colon \cX\to \cY$ is a binary symmetric channel with crossover probability $p=0.05$, and $P_{\bar X}(0) = 2/5$. If $R=0.85>I(P_{\bar X}, P_{Y|X})\approx 0.69 $ \texttt{\emph{bits}}
\begin{align}
  \aleph(0.85,P_{\bar X},P_{Y|X}) &\approx 2.21595 \times 10^{-2}\text{,} \\
\frac12\beth(0.85,P_{\bar X},P_{Y|X})  &\approx 1.60663 \times 10^{-2} \text{,} \\
\frac12\gimel(0.85,P_{\bar X},P_{Y|X}) &\approx 1.10797\times 10^{-2} \text{,} \\
\frac12\daleth(0.85,P_{\bar X},P_{Y|X}) &\approx 1.02143 \times 10^{-2} \text{,} 
\end{align}
implying $\aleph>\frac12\beth>\frac12\gimel>\frac12\daleth $, in general. 

\Cref{fig:cc:aleph vs all other cc} illustrates the computed $\aleph $, $\frac12\beth$, $\frac12\gimel $, and $\frac12 \daleth $ values for various rates $R$. 
\input{Fig_CC_Comparison-0f-Exponents}
\end{exa}
\begin{appendices}\crefalias{section}{apdx}
\section{Proofs of Lemmas~\ref{lem:TV_concentration-McDiarmid-deterministic}~and~\ref{lem:cc:concentration of TV distance btw ccinduced and ccoutput}}\label{apdx:proof of TV concentration McDiarmid}
This section provides the proofs of \Cref{lem:TV_concentration-McDiarmid-deterministic,lem:cc:concentration of TV distance btw ccinduced and ccoutput} that are presented in \Cref{sec:main}. The simple proof of \Cref{lem:TV_concentration-McDiarmid-deterministic}, which can be found in \cite[Theorem 31]{lcv2017egamma} and \cite[Lemma 2]{tahmasbi2017second}, is repeated in the first part of this appendix whereas the proof of \Cref{lem:cc:concentration of TV distance btw ccinduced and ccoutput}, which follows the footsteps of that of \Cref{lem:TV_concentration-McDiarmid-deterministic}, is contained in the second part.
\subsection{Proof of Lemma~\ref{lem:TV_concentration-McDiarmid-deterministic}}\label{apdxsub:TV concentration McDiarmid}
Define the variation of a function $f\colon \cX^m \to \bbR $ at coordinate $i$ as
\begin{align}
	 &d_i(f(x^m))  \nonumber \\ 
	&\quad \,  =\sup_{z,z'} \big|f(x_1, \ldots, x_{i-1}, z ,x_{i+1}, \ldots, x_{m}) \label{def:dic_der} \\
	&\qquad \qquad \quad -f(x_1, \ldots, x_{i-1}, z' ,x_{i+1}, \ldots, x_{m})\big| \text{,}  \nonumber
\end{align}
and observe that
\begin{align}
	\left\|\induced - P_{Y^n}\right\|^{}_1 = f\left(X_1^n, \ldots, X_M^n \right) \text{,}
\end{align}
where for the given discrete memoryless channel, $P_{Y^n|X^n}$, the function $f \colon (\cX^n)^M \to \bbR $ is defined as
\begin{align}
	&f\left(X_1^n, \ldots, X_M^n \right) \nonumber  \\ 
	&\quad = \sum_{y^n\in \cY^n} \left|\frac{1}{M}\sum_{j=1}^{M} P_{Y^n|X^n}(y^n|X_j^n) - P_{Y^n}(y^n) \right| \text{.}
\end{align}
Since for any $i\in \{1, \ldots, M\}$
\begin{align}
&\sum_{y^n} \left|\frac{1}{M}\sum_{j \neq i} P_{Y^n|X^n}(y^n|X_j^n) - P_{Y^n}(y^n)\right|
 - \frac{1}{M} \nonumber \\  
 &\le \sum_{y^n} \left|\frac{1}{M}\sum_{j=1}^{M} P_{Y^n|X^n}(y^n|X_j^n) - P_{Y^n}(y^n)\right| \\
 &\le \sum_{y^n} \left|\frac{1}{M}\sum_{j \neq i} P_{Y^n|X^n}(y^n|X_j^n) - P_{Y^n}(y^n)\right|
 + \frac{1}{M} \text{,}
\end{align}
it follows that, for any $i\in \{1, \ldots, M\}$, 
\begin{align}
	d_i\left(\left\|\induced - P_{Y^n}\right\|^{}_1\right) \le \frac{2}{M} \text{.}
\end{align}
Finally, the desired result follows from the McDiarmid's inequality, see, e.g., \cite[Theorem 2.2.3]{raginsky2013concentration}.
\hfill \qedsymbol 
\subsection{Proof of Lemma~\ref{lem:cc:concentration of TV distance btw ccinduced and ccoutput}}\label{apdxsub:cc:TV_mcdiarmid}
  Following the footsteps of the proof of \Cref{lem:TV_concentration-McDiarmid-deterministic}, observe that 
\begin{align}
	d_i\left(\left\|\ccinduced - R_{\breve Y^n}\right\|^{}_1\right) \le \frac{2}{M} \text{.}
\end{align}
Hence, once again, McDiarmid's inequality \cite[Theorem 2.2.3]{raginsky2013concentration} yields the desired result.
\hfill \qedsymbol 
\section{Preliminary Lemmas for the Proofs of Theorems~\ref{thm:main}~and~\ref{thm:cc:main}}\label{apdx:main}
This section provides several non-asymptotic results that are used in the proof of \Cref{thm:main}.  

\begin{lemma}\label{lem:absolute mean deviation upper bound for Z_{Q_{barX|barY}}}
	Given $y^n \in \cT^n_{Q_{\bar Y}}$, and $Q_{\bar X|\bar Y}\in \cP_n(\cX|Q_{\bar Y})$, let $Z_{Q_{\bar X\bar Y}}$ be the random variable as defined in \eqref{eqn:def:Z_Q_{barX|barY}}, and let $P_{Y^n}$ be the i.i.d. output distribution. Then,
	\begin{align}
      & P_{Y^n}(y^n)\bbE \left[\left|Z_{Q_{\bar X\bar Y}} -  \bbE[Z_{Q_{\bar X\bar Y}}]\right|\right] \nonumber \\ 
      &\qquad \le P_{Y^n|X^n}(y^n|x^n_{Q_{\bar X|\bar Y}}) \mfrY(M, Q_{\bar X\bar Y}) \label{eqn:lem:to make it sim} \\
      &\qquad =  \exp(-n\bbE[\imath^{}_{P_{Y|X}}(\bar Y|\bar X)]) \mfrY(M, Q_{\bar X\bar Y}) \text{,} \label{eqn:lem:to make it sim 2-after revision}
	\end{align}
	 where in \eqref{eqn:lem:to make it sim} $x^n_{Q_{\bar X|\bar Y}}$ represents an element from the conditional type class $\cT^n_{Q_{\bar X|\bar Y}}(y^n) $, and $\mfrY(M, Q_{\bar X \bar Y})$ is as defined in \eqref{eqn:def:mfrakY}; in \eqref{eqn:lem:to make it sim 2-after revision} $(\bar X, \bar Y)\sim Q_{\bar X |\bar Y} Q_{\bar Y} = Q_{\bar X \bar Y} $. 
\end{lemma}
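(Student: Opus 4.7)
The plan is to reduce the claim to a standard bound on the absolute mean deviation of a binomial random variable. First I would use the definition of $Z_{Q_{\bar X \bar Y}}$ in \eqref{eqn:def:Z_Q_{barX|barY}} to pull the deterministic factor $l_{Q_{\bar X|\bar Y}}(y^n)/M$ outside the expectation, obtaining
\begin{align}
 P_{Y^n}(y^n)\bbE\!\left[|Z_{Q_{\bar X\bar Y}}-\bbE[Z_{Q_{\bar X\bar Y}}]|\right]
 = \frac{P_{Y^n|X^n}(y^n|x^n_{Q_{\bar X|\bar Y}})}{M}\,\bbE\!\left[|N-\bbE[N]|\right],
\end{align}
where the factor of $P_{Y^n}(y^n)$ cancels the denominator in $l_{Q_{\bar X|\bar Y}}(y^n)$ (see \eqref{eqn:def:l_{Q_{barX|barY}}}), and $N = N_{Q_{\bar X|\bar Y}}(y^n)$ is, by \eqref{eqn:def:N_Q_{X|Y}}--\eqref{eqn:def:p_{Q_{barX|barY}}}, binomial with cluster size $M$ and success probability $p := p_{Q_{\bar X|\bar Y}}(y^n)$.

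Next I would derive the two promised upper bounds on $\bbE[|N-\bbE[N]|]/M$. The first follows from the triangle inequality $|N-\bbE[N]|\le N+\bbE[N]$, yielding
\begin{align}
 \tfrac{1}{M}\,\bbE[|N-\bbE[N]|] \le \tfrac{2}{M}\,\bbE[N] = 2p.
\end{align}
The second follows from Jensen's inequality applied to the concave square root together with $\Var(N)=Mp(1-p)\le Mp$, giving
\begin{align}
 \tfrac{1}{M}\,\bbE[|N-\bbE[N]|] \le \tfrac{1}{M}\sqrt{\Var(N)} \le M^{-1/2}p^{1/2}.
\end{align}
Taking the minimum of these two estimates reproduces $\mfrY(M,Q_{\bar X\bar Y})$ as defined in \eqref{eqn:def:mfrakY}, which establishes \eqref{eqn:lem:to make it sim}.

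Finally, for the equality \eqref{eqn:lem:to make it sim 2-after revision} I would invoke the type method: since $P_{Y^n|X^n}$ is memoryless and $(x^n_{Q_{\bar X|\bar Y}},y^n)$ has joint type $Q_{\bar X\bar Y}=Q_{\bar X|\bar Y}Q_{\bar Y}$,
\begin{align}
 -\log P_{Y^n|X^n}(y^n|x^n_{Q_{\bar X|\bar Y}})
 = \sum_{i=1}^{n}\imath^{}_{P_{Y|X}}(y_i|x_i)
 = n\sum_{a,b}Q_{\bar X\bar Y}(a,b)\,\imath^{}_{P_{Y|X}}(b|a)
 = n\,\bbE\!\left[\imath^{}_{P_{Y|X}}(\bar Y|\bar X)\right],
\end{align}
with $(\bar X,\bar Y)\sim Q_{\bar X\bar Y}$. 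Exponentiating completes the identification. There is no substantive obstacle here; the lemma is essentially a bookkeeping step that packages the two elementary bounds (triangle inequality and Jensen) on the absolute mean deviation of a binomial into the single quantity $\mfrY(M,Q_{\bar X\bar Y})$ that drives the exponent analysis in both the lower and upper bound proofs.
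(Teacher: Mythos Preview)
Your proposal is correct and mirrors the paper's proof: both obtain the first bound from the triangle inequality (nonnegativity gives $\bbE[|Z-\bbE Z|]\le 2\bbE Z$) and the second from Jensen via $\bbE[|Z-\bbE Z|]\le\sqrt{\Var Z}$ together with the binomial variance formula, then combine using the definition of $l_{Q_{\bar X|\bar Y}}(y^n)$. The only cosmetic difference is that you factor out $l/M$ and work with $N$ directly, whereas the paper bounds $Z$ and then unpacks $l$; your explicit justification of the type identity \eqref{eqn:lem:to make it sim 2-after revision} is a small addition the paper leaves implicit.
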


\begin{proof}
	Thanks to the triangle inequality and the fact that $Z_{Q_{\bar X\bar Y}} \ge 0 $ almost surely, 
	\begin{align}
		&\bbE \left[\left|Z_{Q_{\bar X\bar Y}} -  \bbE[Z_{Q_{\bar X\bar Y}}]\right|\right] \nonumber \\ 
		&\qquad \le 2\bbE[Z_{Q_{\bar X\bar Y}}] \\ 
		&\qquad = 2 l_{Q_{\bar X|\bar Y}}(y^n) p_{Q_{\bar X|\bar Y}}(y^n) \text{.} \label{russ_comb1}
	\end{align}
	On the other hand, by Jensen's inequality,
	\begin{align}
		&\bbE \left[\left|Z_{Q_{\bar X\bar Y}} -  \bbE[Z_{Q_{\bar X\bar Y}}]\right|\right] \nonumber \\ 
		&\qquad \le \bbE^{\frac 12} \left[\Big|Z_{Q_{\bar X\bar Y}} -  \bbE[Z_{Q_{\bar X\bar Y}}]\Big|^2\right] \\ 
		&\qquad = l_{Q_{\bar X|\bar Y}}(y^n) M^{-\frac 12}p_{Q_{\bar X|\bar Y}}^{\frac 12}(y^n) (1-p_{Q_{\bar X|\bar Y}}(y^n))^{\frac 12} \label{the variance term} \\ 
		&\qquad \le l_{Q_{\bar X|\bar Y}}(y^n) M^{-\frac 12}p_{Q_{\bar X|\bar Y}}^{\frac 12}(y^n) \text{.} \label{russ_comb2}
	\end{align}
Combining \eqref{russ_comb1} and \eqref{russ_comb2} together with the fact that
\begin{align}
  l_{Q_{\bar X|\bar Y}}(y^n) = \frac{P_{Y^n|X^n}(y^n|x^n_{Q_{\bar X|\bar Y}})}{P_{Y^n}(y^n)}
\end{align}	
for some $x^n_{Q_{\bar X|\bar Y}} \in \cT^n_{Q_{\bar X|\bar Y}}(y^n)$ yields \eqref{eqn:lem:to make it sim}. 
\end{proof}
\begin{lemma} \label{lem:Paul's saving Lemma}
Let $W$ and $X$ be non-negative random variables such that $W \le X $ almost surely. Then, for any $c\in (0, \infty)$, 
\begin{align}
 \bbE\left[\frac WX\right] \ge \frac 1c \bbE[W] - \frac 1c \bbE[X 1\{X>c\}] \text{.}
\end{align}
\end{lemma}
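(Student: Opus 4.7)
The plan is to prove the bound by splitting the expectation $\bbE[W/X]$ according to whether $X$ stays below the threshold $c$ or exceeds it. The key observation is that on the event $\{X \le c\}$ the reciprocal satisfies $1/X \ge 1/c$, so (because $W \ge 0$) we have $W/X \ge W/c$; on the complementary event $\{X > c\}$ we will simply throw away the nonnegative contribution $(W/X)\mathbbm{1}\{X>c\} \ge 0$.

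First I would write
\begin{align}
\bbE\!\left[\frac{W}{X}\right] &= \bbE\!\left[\frac{W}{X}\mathbbm{1}\{X\le c\}\right] + \bbE\!\left[\frac{W}{X}\mathbbm{1}\{X> c\}\right] \nonumber \\
&\ge \bbE\!\left[\frac{W}{X}\mathbbm{1}\{X\le c\}\right] \ge \frac{1}{c}\,\bbE[W\mathbbm{1}\{X\le c\}],
\end{align}
which uses only the pointwise comparison $W/X \ge W/c$ valid on $\{X\le c\}$ and the nonnegativity of the discarded piece.

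Second, I would decompose $\bbE[W\mathbbm{1}\{X\le c\}] = \bbE[W] - \bbE[W\mathbbm{1}\{X>c\}]$ and invoke the hypothesis $W\le X$ a.s.\ to bound $\bbE[W\mathbbm{1}\{X>c\}] \le \bbE[X\mathbbm{1}\{X>c\}]$. Substituting back yields
\begin{align}
\bbE\!\left[\frac{W}{X}\right] \ge \frac{1}{c}\bbE[W] - \frac{1}{c}\bbE[X\mathbbm{1}\{X>c\}],
\end{align}
which is exactly the claim.

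There is no serious obstacle here; the only minor bookkeeping concerns the event $\{X=0\}$, on which $W\le X$ forces $W=0$ and $W/X$ is defined by the convention $0/0=0$. Under this convention (or equivalently by restricting the computation to $\{X>0\}$, which carries the full mass of $W/X$) the inequality $W/X\ge W/c$ on $\{X\le c\}$ remains valid trivially at $X=0$, and the argument above goes through without modification.
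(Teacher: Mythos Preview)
Your proof is correct and follows essentially the same four-step chain as the paper: drop the nonnegative piece on $\{X>c\}$, use $1/X\ge 1/c$ on $\{X\le c\}$, split $\bbE[W\mathbbm{1}\{X\le c\}]=\bbE[W]-\bbE[W\mathbbm{1}\{X>c\}]$, and finally invoke $W\le X$ to replace $W$ by $X$ in the subtracted term. Your added remark on the $0/0$ convention is a nice touch not made explicit in the paper.
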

\begin{proof} Since both $W$ and $X$ are non-negative,
\begin{align}
	 \bbE\left[\frac WX\right] &\ge  \bbE\left[\frac WX 1\{X \le c \} \right] \\
	 &\ge \frac 1c \bbE[W1\{X \le c \} ] \\
	 &= \frac 1c \bbE[W ] - \frac 1c \bbE[W 1\{X > c \}] \\
	 &\ge  \frac 1c \bbE[W] - \frac 1c \bbE[X 1\{X>c\}] \text{,} \label{from:almost sure bigger}
\end{align}
	where \eqref{from:almost sure bigger} is due to the fact that $W \le X $ almost surely. 
\end{proof}
\begin{lemma}\label{lem:poisson expectation away from mean}
Suppose that $M$ is a Poisson distributed random variable with mean $\mu > 1$. Assuming $\delta \in (\frac 1\mu, 1)$
	\begin{align}
		\bbE[M1\{M>(1+ \delta)\mu\}] \le \mu a_{\delta - \frac1\mu}^\mu  \text{,}
	\end{align}  
	where 
	\begin{align}
		a_\epsilon &= \frac{\rme^\epsilon}{(1+\epsilon)^{1+\epsilon}} \label{def:a_epsilon}  
	\end{align}
is a constant which is strictly less than $1$ for all $\epsilon \in (0,1)$. 
\end{lemma}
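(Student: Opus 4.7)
The plan rests on a well-known ``shift'' identity for the Poisson distribution, followed by a standard Chernoff bound.

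First, I would exploit the recursion $k\,\bbP[M=k] = \mu\,\bbP[M=k-1]$, which is immediate from the Poisson pmf $\bbP[M=k] = e^{-\mu}\mu^k/k!$. Summing this identity over all integers $k > (1+\delta)\mu$ and re-indexing $j=k-1$ yields
\begin{align}
\bbE[M\,1\{M > (1+\delta)\mu\}]
&= \sum_{k > (1+\delta)\mu} k\,\bbP[M=k] \\
&= \mu \sum_{j > (1+\delta)\mu - 1} \bbP[M=j] \\
&= \mu\, \bbP\!\left[M > \left(1+\delta - \tfrac{1}{\mu}\right)\mu\right].
\end{align}
Note that the hypothesis $\delta > 1/\mu$ guarantees that the shifted deviation parameter $\epsilon := \delta - 1/\mu$ lies in $(0,1)$, so we are in the regime where a Chernoff-type upper tail bound on a Poisson random variable is meaningful and non-trivial.

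Next, I would invoke the classical Chernoff bound for the Poisson upper tail: for any $\epsilon > 0$,
\begin{align}
\bbP[M > (1+\epsilon)\mu]
\le \inf_{t>0} e^{-t(1+\epsilon)\mu}\,\bbE[e^{tM}]
= \inf_{t>0} \exp\!\bigl(\mu(e^t - 1) - t(1+\epsilon)\mu\bigr),
\end{align}
optimized at $t = \log(1+\epsilon)$, giving $\bbP[M > (1+\epsilon)\mu] \le a_\epsilon^\mu$ with $a_\epsilon = e^\epsilon/(1+\epsilon)^{1+\epsilon}$. Applying this with $\epsilon = \delta - 1/\mu$ and combining with the display above yields the claimed bound $\bbE[M\,1\{M > (1+\delta)\mu\}] \le \mu\, a_{\delta-1/\mu}^\mu$.

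Finally, the claim that $a_\epsilon < 1$ for all $\epsilon\in(0,1)$ reduces to showing $\epsilon < (1+\epsilon)\log(1+\epsilon)$, which follows from the strict convexity of $x\mapsto x\log x$ at $x=1$ (equivalently, from $\log(1+\epsilon) > \epsilon/(1+\epsilon)$ for $\epsilon>0$). I do not expect any serious obstacle here; the only subtlety is the bookkeeping with the boundary of the sum when shifting the index, which is what produces the $-1/\mu$ correction inside $a_{\delta-1/\mu}$ and forces the hypothesis $\delta > 1/\mu$.
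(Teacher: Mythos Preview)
Your proposal is correct and follows essentially the same route as the paper: the paper uses the same Poisson shift identity $\bbE[M\,1\{M>(1+\delta)\mu\}] = \mu\,\bbP[M>(1+\delta)\mu-1]$, then cites the standard Poisson Chernoff bound (Mitzenmacher--Upfal) rather than deriving it, and proves $a_\epsilon<1$ by observing $a_0=1$ and $\frac{\rmd}{\rmd\epsilon}\log a_\epsilon = -\log(1+\epsilon)<0$. Your version is slightly more self-contained but structurally identical.
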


\begin{proof}
	Note that 
	\begin{align}
		&\bbE[M1\{M>(1+ \delta)\mu\}]  \nonumber \\ 
		&\qquad  = \mu \bbP[M>(1+ \delta)\mu -1] \label{from:poisson ozel} \\
		&\qquad \le \mu a_{\delta-\frac 1\mu}^\mu \text{,} \label{from:thm 5.4 in mitzenmacher and upfal}
	\end{align}
	where \eqref{from:poisson ozel} holds because $M$ is Poisson distributed; and \eqref{from:thm 5.4 in mitzenmacher and upfal} follows from \cite[Theorem 5.4]{mitzenmacher2017probability}. 
	
To see $a_\epsilon <1$ for any $\epsilon \in (0,1)$, observe that $a_0= 1$ and $a_\epsilon$ is strictly monotone decreasing in $\epsilon \in (0,1) $ as
		\begin{align}
		\frac{\rmd\log_\rme a_\epsilon}{\rmd \epsilon} &= \log_\rme \frac{1}{1+\epsilon} \\
		&< 0 \text{.}
	\end{align}
\end{proof}

\begin{lemma} \label{lem:expected value of M*L_scrC}
Suppose that $M$ is a Poisson distributed random variable with mean $\mu$. Given $y^n \in \cY^n$, 
\begin{align}
	\bbE[M L_{\scrC_M^n}(y^n)] &=  \mu \text{.}
\end{align}
In particular, if $P_{Y^n}(y^n)>0$, 
\begin{align}
	&\bbE[M L_{\scrC_M^n}(y^n)] \nonumber \\ 
	&\quad \ \, = \bbE\left[ \sum_{j=1}^{M} \frac{P_{Y^n|X^n}(y^n|X_j^n)}{P_{Y^n}(y^n)}\right] \\ 
	&\quad \ \, = \sum_{ Q_{\bar X|\bar Y}\in  \cP_n(\cX|Q_{\bar Y})}l_{Q_{\bar X|\bar Y}}(y^n) \bbE[N_{Q_{\bar X|\bar Y}}(y^n)]  \label{eqn:lem:linearity and type version}\\ 
	&\quad \ \, = \mu \text{.}
\end{align}
\end{lemma}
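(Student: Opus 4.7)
The statement is essentially a bookkeeping identity, so the plan is to set it up and then reduce it to Wald's identity plus a marginalization. First I would dispense with the trivial case $P_{Y^n}(y^n)=0$: the definition of $L_{\scrC_M^n}(y^n)$ in \eqref{eqn:def:L_scrC} assigns it the value $1$, so that $\bbE[M L_{\scrC_M^n}(y^n)] = \bbE[M] = \mu$ immediately. The observation $\bbE[M L_{\scrC_M^n}(y^n)] = \mu$ in the general (non-particular) part of the claim thus holds regardless of whether $P_{Y^n}(y^n)$ vanishes.

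For the informative case $P_{Y^n}(y^n) > 0$, the first identity is cosmetic: by the definition of $L_{\scrC_M^n}(y^n)$ one has
\begin{align}
M L_{\scrC_M^n}(y^n) = \sum_{j=1}^{M} \frac{P_{Y^n|X^n}(y^n|X_j^n)}{P_{Y^n}(y^n)} \text{,} \nonumber
\end{align}
where the empty sum (case $M=0$) is interpreted as $0$. Taking expectations and applying Wald's identity --- which is applicable because $M$ is independent of the i.i.d.\ sequence $\{X_j^n\}$ drawn from $P_{X^n}$ --- reduces the computation to
\begin{align}
\bbE[M L_{\scrC_M^n}(y^n)] = \mu \cdot \bbE\left[\frac{P_{Y^n|X^n}(y^n|X^n)}{P_{Y^n}(y^n)}\right] = \mu \text{,} \nonumber
\end{align}
since marginalizing $P_{X^n}P_{Y^n|X^n}$ over $x^n$ gives $P_{Y^n}(y^n)$. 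This disposes of the final equality.

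It remains to re-express $\sum_{j=1}^{M} P_{Y^n|X^n}(y^n|X_j^n)/P_{Y^n}(y^n)$ via the type enumeration method. Here I would partition the codebook according to the conditional type of each codeword given $y^n$: since $P_{Y^n|X^n}(y^n|x^n)$ depends on $x^n$ only through its conditional type $Q_{\bar X|\bar Y}$ given $y^n$, the ratio equals $l_{Q_{\bar X|\bar Y}}(y^n)$ for every codeword in $\cT^n_{Q_{\bar X|\bar Y}}(y^n)$. Regrouping the sum over $j$ by conditional type and using $N_{Q_{\bar X|\bar Y}}(y^n)$ to count codewords in each class yields \eqref{eqn:lem:linearity and type version} after applying linearity of expectation. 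No genuine obstacle is anticipated --- the only point meriting care is allowing $M=0$ in Wald's identity (handled by the empty-sum convention) and choosing the representative $x^n_{Q_{\bar X|\bar Y}}$ in the definition of $l_{Q_{\bar X|\bar Y}}(y^n)$ consistently, which is immaterial since $P_{Y^n|X^n}(y^n|x^n)$ is constant on each conditional type class.
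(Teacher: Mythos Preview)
Your proof is correct and follows essentially the same approach as the paper: both handle the trivial case $P_{Y^n}(y^n)=0$ identically, and for $P_{Y^n}(y^n)>0$ the paper conditions on $M$ via the tower property while you invoke Wald's identity by name, which is the same computation. The type-enumeration step \eqref{eqn:lem:linearity and type version} is likewise dispatched in both by linearity of expectation and the constancy of $P_{Y^n|X^n}(y^n|x^n)$ on conditional type classes.
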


\begin{proof}
If $P_{Y^n}(y^n) = 0$, then $L_{\scrC_M^n}(y^n) = 1$, and 
\begin{align}
	\bbE[M L_{\scrC_M^n}(y^n)] &= \bbE[M] \\
	&= \mu	\text{.}
\end{align}
Suppose $P_{Y^n}(y^n) > 0$, then by definition of $L_{\scrC_M^n}(y^n)$, 
	\begin{align}
&\bbE[M L_{\scrC_M^n}(y^n)] \nonumber \\ 
&\qquad \quad \ \ = \bbE\left[ \sum_{j=1}^{M} \frac{P_{Y^n|X^n}(y^n|X_j^n)}{P_{Y^n}(y^n)}\right] \\ 
&\qquad \quad \ \  =\bbE\left[\bbE\left[ \sum_{j=1}^{M} \frac{P_{Y^n|X^n}(y^n|X_j^n)}{P_{Y^n}(y^n)} \middle| M \right] \right] \label{frm:twr it up} \\
&\qquad \quad \ \ = \bbE[M] \\
&\qquad \quad \ \ =\mu \text{,}
\end{align}
where \eqref{frm:twr it up} follows from the tower property of expectation.

Note that \eqref{eqn:lem:linearity and type version} is due to the linearity of expectation and the fact that both $P_{Y^n|X^n}(y^n|x^n)$ and $P_{Y^n}(y^n)$ depend on $(x^n, y^n)$ through its joint type, see \eqref{eqn:def:L} and the discussion therein. 
\end{proof}

\begin{lemma}\label{lem:abs sum exp and max abs exp}
Suppose that $X_1, \ldots, X_m $ are mutually independent zero-mean random variables, then 
\begin{align}
\bbE\left[\left|\sum_{i=1}^m X_i  \right|\right]	 \ge \max_{i\in \{1, \ldots, m \}} \bbE[|X_i|] \text{.}
\end{align}	
\end{lemma}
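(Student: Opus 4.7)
The plan is to fix an arbitrary index $i \in \{1, \ldots, m\}$ and show the single-term bound $\bbE[|\sum_{j=1}^m X_j|] \ge \bbE[|X_i|]$; since $i$ is arbitrary, taking the maximum over $i$ yields the lemma. The main tool will be Jensen's inequality applied conditionally on $X_i$, combined with the independence assumption and the zero-mean hypothesis on the remaining $X_j$'s.

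Concretely, I would split the sum as $S = X_i + Y_i$, where $Y_i = \sum_{j\neq i} X_j$. By independence, $Y_i$ is independent of $X_i$; and by linearity of expectation together with the zero-mean assumption, $\bbE[Y_i] = 0$. Conditioning on $X_i$ and using the convexity of $z \mapsto |z|$ (which is the content of Jensen's inequality applied in conditional form),
\begin{align}
\bbE\!\left[|X_i + Y_i| \,\big|\, X_i\right] \ge \left|\bbE[X_i + Y_i \mid X_i]\right| = |X_i + \bbE[Y_i]| = |X_i|,
\end{align}
where the first equality uses independence of $X_i$ and $Y_i$, and the second uses $\bbE[Y_i]=0$. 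Taking expectations on both sides via the tower property gives $\bbE[|S|] \ge \bbE[|X_i|]$.

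This argument goes through for every $i \in \{1, \ldots, m\}$, so
\begin{align}
\bbE\!\left[\left|\sum_{j=1}^m X_j\right|\right] \ge \max_{i \in \{1, \ldots, m\}} \bbE[|X_i|].
\end{align}
There isn't really a hard part: the whole proof hinges on the single observation that, conditionally on $X_i$, the random variable $X_i + Y_i$ has conditional mean exactly $X_i$ (thanks to independence plus zero-mean), so Jensen's inequality immediately produces $|X_i|$ as a lower bound. No additional machinery, and no tightness analysis, is needed for the bound as stated.
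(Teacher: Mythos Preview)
Your proof is correct and essentially identical to the paper's: both condition on a single $X_i$, apply Jensen's inequality (convexity of $|\cdot|$) together with independence and the zero-mean assumption on the remaining variables to obtain $\bbE[|S|\mid X_i]\ge |X_i|$, and then take expectations. The only cosmetic difference is that the paper fixes $i$ to be the maximizing index up front via a WLOG, whereas you prove the bound for arbitrary $i$ and then maximize.
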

\begin{proof}
	Without loss of generality assume 
	\begin{align}
		\bbE[|X_1|] = \max_{i\in \{1, \ldots, m \}} \bbE[|X_i|] \text{,}
	\end{align}
	and note that
	\begin{align}
		\bbE\left[\left|\sum_{i=1}^m X_i  \right|\right]	  &= \bbE \left[\bbE \left[\left|X_1+\sum_{i=2}^m X_i  \right| \bigg| X_1 \right]\right] \label{from:tower}	\\ 
		&\ge \bbE \left[\left|X_1+\bbE \left[\sum_{i=2}^m X_i  \right]\right|\right]	 \label{from:conv of abs} \\ 
		&=\bbE[|X_1|] \text{,} \label{from:zero mean}
	\end{align}
	where \eqref{from:tower} follows from the tower property of expectation; \eqref{from:conv of abs} follows from modulus inequality and the independence of $X_1$ from $X_i$ for $i\neq 1$; lastly \eqref{from:zero mean} follows as the random variables are all zero-mean. 
\end{proof}

\begin{lemma} \label{lem:lower bd on abs mean dev poiss}
Let $N$ be a Poisson distributed random variable with mean $\xi >0$, then\footnote{The inequality in \eqref{eqn:lem:one of the key} is the lower bound counterpart of `upper bounding the absolute mean deviation of binomial random variable by either twice its mean or its standard deviation' that can be seen in the proof of \Cref{lem:absolute mean deviation upper bound for Z_{Q_{barX|barY}}}.}
	\begin{align}
		\bbE[|N-\xi|]   &\ge  \frac{1}{4} \min\left\{2\xi ,\xi^{\frac12} \right\} \label{eqn:lem:one of the key} \text{.}
	\end{align}
\end{lemma}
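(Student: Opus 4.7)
The plan is to reduce the problem to a single exact identity for the mean absolute deviation of a Poisson random variable and then dispatch the two regimes in the statement by direct inspection.

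First I would exploit the fact that $\bbE[N] = \xi$ to write $\bbE[|N-\xi|] = 2\,\bbE[(\xi - N)^+]$. Next, using the Poisson recursion $n\,\bbP[N=n] = \xi\,\bbP[N=n-1]$, the series for $\bbE[(N-\xi)^+] = \sum_{n > \xi}(n-\xi)\,\bbP[N=n]$ telescopes to $\xi\,\bbP[N = \lfloor \xi \rfloor]$. This yields the clean closed form
\[
\bbE[|N-\xi|] \;=\; 2\xi\,\bbP[N=\lfloor \xi \rfloor] \;=\; \frac{2\,\xi^{\lfloor \xi \rfloor + 1}\,e^{-\xi}}{\lfloor \xi \rfloor!},
\]
which will drive the entire argument.

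With the identity in hand, I would split into three regimes according to $\lfloor \xi \rfloor$. For $\xi \leq 1/4$ (where the target minimum is $2\xi$), $\lfloor\xi\rfloor = 0$ and the identity gives $\bbE[|N-\xi|] = 2\xi\,e^{-\xi} \geq 2\xi\,e^{-1/4} > \xi/2$, settling the bound at once since $2 e^{-1/4} > 1/2$. For $1/4 < \xi \leq 1$ (where the target is $\sqrt{\xi}/4$), $\lfloor\xi\rfloor$ is still zero, so the required inequality $2\xi\,e^{-\xi} \geq \sqrt{\xi}/4$ reduces to $8\sqrt{\xi}\,e^{-\xi} \geq 1$; this is a routine one-variable check, since $\sqrt{\xi}\,e^{-\xi}$ is unimodal on $[0,\infty)$ with its minimum on $(1/4,1]$ attained at the endpoints, both of which easily clear the threshold. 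For $\xi > 1$, set $k := \lfloor \xi \rfloor \geq 1$ and combine Stirling's upper bound $k! \leq e\,k^{k+1/2}\,e^{-k}$ with $(\xi/k)^k \geq e^{(\xi-k) - 1/(2k)}$ (a consequence of $\ln(1+x) \geq x - x^2/2$ for $x \geq 0$) to conclude $\bbE[|N-\xi|] \geq 2\sqrt{\xi}/e^{3/2} > \sqrt{\xi}/4$, which finishes the proof.

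The main obstacle will be the Stirling-based manipulation in the last regime, where the mismatch between the continuous parameter $\xi$ and the integer $\lfloor \xi \rfloor$ (which can differ by almost $1$) has to be tracked carefully so that the factor $(\xi/k)^k$ does not erode the constant; every other case collapses immediately once the closed-form identity is established.
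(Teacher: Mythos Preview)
Your proposal is correct and follows essentially the same route as the paper: both establish the closed form $\bbE[|N-\xi|] = 2\xi^{\lfloor\xi\rfloor+1}e^{-\xi}/\lfloor\xi\rfloor!$ (you via the Poisson recursion, the paper by citing Crow), dispatch $\xi\le 1$ by direct inspection, and invoke a Stirling upper bound on $\lfloor\xi\rfloor!$ for $\xi>1$. The only cosmetic slip is that $\lfloor 1\rfloor=1$, not $0$, but the closed form happens to give $2\xi e^{-\xi}$ at $\xi=1$ anyway; the paper's two-case split (showing $\ge \xi/2$ on all of $(0,1]$, which dominates $\tfrac14\min\{2\xi,\sqrt\xi\}$ there) is marginally cleaner than your three-case version, while your sharper handling of $(\xi/k)^k$ yields a slightly better constant in the Stirling regime.
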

\begin{proof}
As can be seen in \cite{crow1958meandev}, one can show that
\begin{align}
	\bbE[|N-\xi|]&= \frac{\xi^{\lfloor \xi\rfloor+1}}{\lfloor \xi\rfloor !} 2\rme^{-\xi} \text{.}
\end{align}
To see \eqref{eqn:lem:one of the key}, observe that $\xi\in (0,1]$ implies 
\begin{align}
	\frac{\xi^{\lfloor \xi\rfloor+1}}{\lfloor \xi\rfloor !} 2\rme^{-\xi} &=  2\xi\,\rme^{-\xi} \\
	&\ge \frac{1}{2}\xi \label{aseleasele}  \text{.}
\end{align}	
On the other hand, when $\xi \in (1, \infty)$, by Robbins' sharpening of Stirling's approximation \cite{robbins1955remark},
\begin{align}
 \lfloor \xi \rfloor ! \leq \lfloor \xi \rfloor^{\lfloor \xi \rfloor}\rme^{-{\lfloor \xi \rfloor}+\frac{1}{12{\lfloor \xi \rfloor}}}\sqrt{2\pi {\lfloor \xi \rfloor}} \text{.} \label{eqn:stirling_approx_for_factorial}
\end{align}
Denoting $\tau = \xi- \lfloor \xi \rfloor$, thanks to \eqref{eqn:stirling_approx_for_factorial},
	\begin{align}
		\frac{\xi^{\lfloor \xi\rfloor+1}}{\lfloor \xi\rfloor !} 2\rme^{-\xi} &\ge \frac{2\xi\, \rme^{-\tau-\frac{1}{12\lfloor \xi \rfloor}} }{\sqrt{2\pi\lfloor \xi \rfloor }} \left(1+\frac{\tau}{\lfloor \xi \rfloor}\right)^{\lfloor \xi \rfloor} \\
		&>\frac{2\xi^{\frac12}}{(2\pi)^{\frac12}} \rme^{-\frac{13}{12}}  \label{from:delta_less1} \\
		&> \frac{1}{4}\xi^\frac12 \text{,} 
\end{align} 
where \eqref{from:delta_less1} follows as $0\le \tau<1$, and $1\le \lfloor \xi \rfloor \le \xi$. Combining \eqref{aseleasele} and \eqref{from:delta_less1} yields \eqref{eqn:lem:one of the key}.
\end{proof}
\begin{lemma}\label{lem:poisson mean probability}
Let $M$ be a Poisson distributed random variable with mean $\mu \ge 1$, then
\begin{align}
	\bbP[M= \lceil \mu \rceil ] >  \frac{1}{8 \lceil \mu \rceil^\frac12} \text{.}
\end{align}
\end{lemma}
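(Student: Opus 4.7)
The plan is to evaluate $\bbP[M=\lceil\mu\rceil] = \frac{\mu^{\lceil\mu\rceil} e^{-\mu}}{\lceil\mu\rceil!}$ directly from the Poisson PMF and then apply the same Robbins sharpening of Stirling's approximation that was used in the proof of \Cref{lem:lower bd on abs mean dev poiss}. Writing $k=\lceil\mu\rceil$ and $\tau=k-\mu\in[0,1)$, I would first dispose of the boundary case $\mu=1$ (which forces $k=1$, $\tau=0$) by the trivial computation $\bbP[M=1]=e^{-1}>1/8$. For the remaining range $\mu>1$, one automatically has $k\ge 2$, and hence $\tau/k<1/2$, which is exactly what is needed for the elementary estimates below.

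For $\mu>1$, I would apply the Robbins upper bound $k!\le k^{k}e^{-k+\frac{1}{12k}}\sqrt{2\pi k}$ recorded in \eqref{eqn:stirling_approx_for_factorial}, which yields
\begin{align}
\bbP[M=k] \;\ge\; \frac{\mu^{k}e^{-\mu}}{k^{k}e^{-k+\frac{1}{12k}}\sqrt{2\pi k}} \;=\; \frac{(1-\tau/k)^{k}\,e^{\tau-\frac{1}{12k}}}{\sqrt{2\pi k}}.
\end{align}
The next step is to lower bound $(1-\tau/k)^{k}$. Using the inequality $\log(1-x)\ge -x-x^{2}$ valid for $x\in[0,1/2]$, applied to $x=\tau/k$, I obtain $(1-\tau/k)^{k}\ge e^{-\tau-\tau^{2}/k}\ge e^{-\tau-1/k}$ since $\tau^{2}<1$. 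Substituting this back cancels the $e^{\tau}$ factor, giving
\begin{align}
\bbP[M=k] \;\ge\; \frac{e^{-\frac{1}{k}-\frac{1}{12k}}}{\sqrt{2\pi k}} \;=\; \frac{e^{-\frac{13}{12k}}}{\sqrt{2\pi k}}.
\end{align}

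It then remains to verify that $\frac{e^{-13/(12k)}}{\sqrt{2\pi k}}>\frac{1}{8\sqrt{k}}$, which rearranges to $\frac{13}{12k}<\log\frac{8}{\sqrt{2\pi}}\approx 1.16$. Since $k\ge 2$ gives $\frac{13}{12k}\le\frac{13}{24}\approx 0.54$, this holds with comfortable slack. I do not foresee any substantive obstacle; the argument is essentially a bookkeeping exercise parallel to \Cref{lem:lower bd on abs mean dev poiss}, and the only mild delicacy is ensuring that after combining $\sqrt{2\pi}$ with the residual Stirling factor $e^{-13/(12k)}$ the resulting constant still beats $1/8$, which a direct numerical check confirms.
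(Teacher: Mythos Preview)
Your proof is correct and follows essentially the same route as the paper: write the Poisson pmf, apply the Robbins/Stirling upper bound on $k!$, and then lower-bound $(1-\tau/k)^k$ via a quadratic lower bound on $\log(1-x)$ to cancel the $e^{\tau}$ factor and reduce to a clean numerical check. The only cosmetic difference is that the paper uses $\log(1-x)\ge -x-\tfrac{x^2}{1-x}$ on the full range $x\in[0,1)$, which (after noting $k-\tau=\mu\ge 1$) gives $\tfrac{\tau^2}{\mu}+\tfrac{1}{12k}<\tfrac{13}{12}$ directly and avoids your separate $\mu=1$ case; your choice of $\log(1-x)\ge -x-x^2$ is slightly weaker and forces the split at $k=1$, but the end result and constants are the same.
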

\begin{proof}
Let $\tau = \lceil\mu \rceil  - \mu $, using Stirling approximation as in \eqref{eqn:stirling_approx_for_factorial}, 
	\begin{align}
		\bbP[M= \lceil \mu \rceil] &= \frac{\mu^{\lceil\mu\rceil}}{\lceil \mu \rceil!} \rme^{-\mu} \\
		&\ge \frac{\rme^{\tau - \frac{1}{12\lceil\mu\rceil}}}{\sqrt{2\pi\lceil\mu\rceil}}  \left(1- \frac{\tau}{\lceil \mu \rceil}\right)^{\lceil\mu \rceil} \\
		&>  \frac{1}{8 \lceil \mu \rceil^\frac12 } \text{,} \label{frm:poiss smpl bd} 
	\end{align}
where \eqref{frm:poiss smpl bd} follows from the facts that  $\log_\rme (1-x) \ge -x - \frac{x^2}{1-x}$ for $x\in [0,1)$, $\tau <1 $, and $ \mu \ge 1 $. 
\end{proof}
\begin{lemma}\label{lem:TV-concentration_random-poisson}
	Let $M$ be a Poisson distributed random variable with mean $\mu$,
\begin{align}
 &\bbP\left[\left|\big \|\induced - P_{Y^n}\big\|^{}_1 - \bbE\big[\big\|\induced - P_{Y^n}\big\|^{}_1\big]\right|\ge t\right] \nonumber \\  
 &\qquad \qquad \le 2 \exp_\rme\left(-\mu \left(1-\rme^{-t^2/2}\right)  \right)  \label{eqn:lem:TV concentration when M poisson} \\
 &\qquad \qquad \le 	2 \exp_\rme\left(-\frac{\mu t^2}{2+t^2}\right) \text{.} \label{eqn:cool coincidence}
 \end{align}
\end{lemma}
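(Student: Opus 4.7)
The plan is to reduce the Poisson-size case to the deterministic-codebook case by conditioning on $M$ and then integrating against the Poisson law of $M$. Conditioning on $M=m$, the codebook $\scrC_m^n$ consists of $m$ i.i.d.\ codewords and, exactly as in the proof of \Cref{lem:TV_concentration-McDiarmid-deterministic}, the function $\|\induced-P_{Y^n}\|^{}_1$ has bounded coordinate variation $c_i=2/m$. Applying McDiarmid's inequality pointwise in $m$ thus gives
\begin{align}
\bbP\Big[\Big|\big\|\induced-P_{Y^n}\big\|^{}_1-\bbE\big[\big\|\induced-P_{Y^n}\big\|^{}_1\,\big|\,M\big]\Big|\ge t\,\Big|\,M=m\Big]\le 2\exp_\rme\!\left(-\frac{mt^2}{2}\right).
\end{align}
Taking expectation over $M\sim\mathrm{Poisson}(\mu)$ and invoking the tower property collapses the right-hand side to $2\bbE_M[\exp_\rme(-Mt^2/2)]$.

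Next, the key observation is the Poisson moment generating function identity
\begin{align}
\bbE_M\!\big[\exp_\rme(-sM)\big]=\sum_{k=0}^{\infty}\frac{\mu^k e^{-\mu}}{k!}\,e^{-sk}=\exp_\rme\!\big(-\mu(1-e^{-s})\big),\quad s\ge 0,
\end{align}
which, evaluated at $s=t^2/2$, yields the first bound~\eqref{eqn:lem:TV concentration when M poisson}. The subtlety of passing from the conditional mean $\bbE[\cdot\mid M]$ (which is what McDiarmid naturally produces) to the unconditional mean $\bbE[\cdot]$ appearing in the statement can be handled by exploiting the monotonicity of $m\mapsto \bbE[\|\induced-P_{Y^n}\|^{}_1\mid M=m]$ in $m$ (more codewords can only improve the approximation of $P_{Y^n}$) together with the Poisson concentration of $M$ around $\mu$ (cf.\ \Cref{lem:poisson expectation away from mean}), absorbing the discrepancy into a term negligible relative to the tail being controlled.

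For the second bound~\eqref{eqn:cool coincidence}, I would invoke the elementary inequality $1-e^{-x}\ge x/(1+x)$ for $x\ge 0$. This follows by noting that $(1+x)e^{-x}$ has derivative $-xe^{-x}\le 0$ and equals $1$ at $x=0$, so $(1+x)e^{-x}\le 1$, i.e., $e^{-x}\le 1/(1+x)$. Applying this with $x=t^2/2$ gives $1-e^{-t^2/2}\ge t^2/(2+t^2)$, which combined with the first bound yields the second. The main obstacle will be the passage from the conditional to the unconditional mean; every other step reduces to a routine conditional application of McDiarmid or to the explicit Poisson MGF computation above.
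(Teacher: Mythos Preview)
Your main line of argument---condition on $M=m$, apply \Cref{lem:TV_concentration-McDiarmid-deterministic} (McDiarmid), integrate against the Poisson law using the MGF identity $\bbE[\rme^{-sM}]=\exp_\rme(-\mu(1-\rme^{-s}))$, and then invoke $1-\rme^{-x}\ge x/(1+x)$---is precisely the paper's proof.

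The one point where you deviate is spurious. The ``subtlety'' you raise about passing from the conditional mean $\bbE[\,\cdot\mid M]$ to the unconditional mean is based on a misreading of the statement: in the paper's notation, $\bbE\big[\|\induced-P_{Y^n}\|^{}_1\big]$ denotes the expectation over the codebook randomness with $M$ held fixed, i.e., exactly $\bbE[\,\cdot\mid M]$. This is evident from the proof, which writes $V_n(M)=T_n(M)-\bbE[T_n(M)]$ and then applies \Cref{lem:TV_concentration-McDiarmid-deterministic} directly on the event $\{|V_n(M)|\ge t\}$ conditioned on $M=m$; that step is only valid if $\bbE[T_n(M)]$ means $\bbE[T_n(m)]$ for the realized $m$. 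So there is nothing to bridge, and your proposed patch (monotonicity of $m\mapsto\bbE[T_n(m)]$ plus Poisson concentration of $M$) is unnecessary. Moreover, had the statement really involved the unconditional mean, your sketch would not recover the clean bound $2\exp_\rme(-\mu(1-\rme^{-t^2/2}))$: absorbing the gap $\bbE[T_n(M)\mid M]-\bbE[T_n(M)]$ would introduce additional error terms rather than leaving the inequality intact.
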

\begin{proof}
For the sake of notational convenience, let
\begin{align}
	T_n(M) &= \left\|\induced - P_{Y^n}\right\|^{}_1 \text{,} \\
	V_n(M) &= T_n(M) - \bbE[T_n(M)]   \text{.}
\end{align}
Conditioned on $M=m$, by \Cref{lem:TV_concentration-McDiarmid-deterministic},
	\begin{align}
	\bbP\left[ |V_n(M) | \ge t | M=m \right] \le 2 \exp_\rme\left(-\frac{mt^2}2\right) \text{.}
	\end{align}
	Hence, by the total probability law,
	\begin{align}
		\bbP\left[|V_n(M)| \ge t\right ] &\le 2\bbE\left[\exp_\rme\left(-\frac{Mt^2}2\right)\right] \\ 
		&=2 \exp_\rme\left(-\mu \left(1-\rme^{-t^2/2}\right)  \right) \text{.}
	\end{align}
	To see \eqref{eqn:cool coincidence}, simply note that $\rme^{-x} \le \frac{1}{1+x} $. 
\end{proof}
\begin{lemma} \label{lem:TV concentrants to poisson-TV mean}
	Let $M$ be a Poisson distributed random variable with mean $\mu$,
\begin{align}
	&\bbP[| T_n(\lceil \mu \rceil ) - \bbE[T_n(M) ]| \ge t ] \nonumber \\ 
	&\qquad \qquad  \le 16 \lceil \mu \rceil^\frac12  \exp_\rme\left(-\frac{\mu t^2}{2+t^2}\right) \text{,}
\end{align}
where $T_n(m) = \|P_{Y^n|\scrC_m^n} - P_{Y^n} \|^{}_1$.  
\end{lemma}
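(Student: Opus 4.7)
The plan is to reduce this statement to Lemma~\ref{lem:TV-concentration_random-poisson} (which controls deviations of $T_n(M)$ from its own expectation when $M$ is Poisson) together with Lemma~\ref{lem:poisson mean probability} (which lower bounds $\bbP[M = \lceil \mu \rceil]$), via a simple conditioning argument. Crucially, conditioned on the event $\{M = \lceil \mu \rceil\}$, the random variable $T_n(M)$ coincides in distribution with the deterministic-size quantity $T_n(\lceil \mu \rceil)$, since the codebook is an i.i.d.\ draw whose law depends on its size only through the number of codewords.

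Concretely, I would write
\begin{align}
& \bbP[|T_n(\lceil \mu \rceil) - \bbE[T_n(M)]| \ge t] \nonumber \\
&\quad = \bbP[|T_n(M) - \bbE[T_n(M)]| \ge t \mid M = \lceil \mu \rceil] \\
&\quad = \frac{\bbP[|T_n(M) - \bbE[T_n(M)]| \ge t, \ M = \lceil \mu \rceil]}{\bbP[M = \lceil \mu \rceil]} \\
&\quad \le \frac{\bbP[|T_n(M) - \bbE[T_n(M)]| \ge t]}{\bbP[M = \lceil \mu \rceil]}.
\end{align}
The numerator is bounded by $2\exp_\rme(-\mu t^2/(2+t^2))$ thanks to \Cref{lem:TV-concentration_random-poisson}, and the denominator is bounded below by $1/(8\lceil \mu \rceil^{1/2})$ by \Cref{lem:poisson mean probability}. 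Plugging these in gives exactly the factor $16\lceil \mu \rceil^{1/2}\exp_\rme(-\mu t^2/(2+t^2))$ in the claim.

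There is essentially no obstacle here beyond carefully justifying the first equality: one must observe that conditional on $\{M=m\}$, $P_{Y^n|\scrC^n_M}$ is distributed exactly as $P_{Y^n|\scrC^n_m}$, so the conditional law of $T_n(M)$ given $M=\lceil \mu \rceil$ is the (unconditional) law of $T_n(\lceil \mu \rceil)$. The constant $\bbE[T_n(M)]$, being deterministic, passes through the conditioning unchanged. With that observation in place, the remainder is simply an application of the two previously established lemmas; no additional estimation of the Poisson tail is needed, because the polynomial prefactor $\lceil \mu \rceil^{1/2}$ will be absorbed harmlessly in the subsequent depoissonization argument of \Cref{sub:depoissonize}.
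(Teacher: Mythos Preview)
Your proposal is correct and matches the paper's proof essentially step for step: condition on $\{M=\lceil\mu\rceil\}$, drop the intersection to bound the numerator, and invoke \Cref{lem:TV-concentration_random-poisson} and \Cref{lem:poisson mean probability}. The only cosmetic difference is that the paper introduces an independent copy $\widetilde M$ of $M$ so that the deterministic constant $\bbE[T_n(M)]=\bbE[T_n(\widetilde M)]$ is visibly unaffected by conditioning on $\widetilde M=\lceil\mu\rceil$; you handle the same point by explicitly noting that $\bbE[T_n(M)]$ is a fixed number.
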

\begin{proof}
Let $\widetilde M$ be an independent copy of $M$, and observe that
	\begin{align}
&\bbP[| T_n(\lceil \mu \rceil ) - \bbE[T_n(M) ]| \ge t ] \nonumber \\  
&\quad \ \ \, = \frac{\bbP[|T_n(\widetilde M) - \bbE[T_n(M)]| \ge t, \widetilde M = \lceil \mu \rceil]}{\bbP[\widetilde M = \lceil \mu \rceil]}\\
&\quad \ \ \, =\frac{\bbP[|T_n(\widetilde M) - \bbE[T_n(\widetilde M)]| \ge t, \widetilde M = \lceil \mu \rceil]}{\bbP[\widetilde M = \lceil \mu \rceil]}\\ 
&\quad \ \ \, \le \frac{\bbP[|T_n(\widetilde M) - \bbE[T_n(\widetilde M)]| \ge t]}{\bbP[\widetilde M= \lceil \mu \rceil]}  \text{,} 
	\end{align}
the	result is immediate from \Cref{lem:poisson mean probability,lem:TV-concentration_random-poisson}.
\end{proof}
\section{Asymptotic Exponents of the Key Quantities}\label{apdx:asymptotic exponents}
This section provides the asymptotic\footnote{Non-asymptotic exponents are given wherever possible which are then used in proving the finite block-length results contained in \Cref{apdx:finite block-length results}.} exponents of the several key quantities that play a central role in the proofs of \Cref{thm:main,thm:cc:main}. 
\subsection{Exponents Used in the Proof of Theorem~\ref{thm:main}}\label{apdxsub:asymptotic exponents}
\begin{lemma}\label{lem:lem for p}
Fix $y^n\in \cY^n$, and let $Q_{\bar Y}\in \cP_n(\cY)$ denote its type. For any $Q_{\bar X|\bar Y} \in \cP_n(\cX|Q_{\bar Y})$
	\begin{align}
			 &p_{Q_{\bar X|\bar Y}}(y^n) \nonumber \\ 
			 &\qquad =  \bbP\left[X^n \in \cT^n_{Q_{\bar X|\bar Y}}(y^n)\right ] \label{eqn:lem:lem for p1} \\ 
			&\qquad = \exp \left(-n\bbE[\imath^{}_{P_X}(\bar X) ] \right) \left|\cT^n_{Q_{\bar X|\bar Y}}(y^n)\right| \text{,} \label{eqn:lem:lem for p2}
	\end{align}
	where $p_{Q_{\bar X|\bar Y}}(y^n)$ is defined in \eqref{eqn:def:p_{Q_{barX|barY}}}, $\{X_i\}_{i=1}^n$ are i.i.d. according to $P_X$, and $\bar X \sim Q_{\bar X}$ with $Q_{\bar X}$ denoting the $\cX$-marginal of the joint $n$-type $Q_{\bar X|\bar Y}Q_{\bar Y}$.
\end{lemma}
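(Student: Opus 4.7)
The plan is a direct application of the type method, relying on the fact that the i.i.d.\ distribution $P_{X^n}$ assigns equal probability to every sequence sharing the same type. First I would note that if $x^n \in \cT^n_{Q_{\bar X|\bar Y}}(y^n)$, then by \Cref{def:conditional type class} the empirical distribution of $x^n$ (in isolation) is exactly the $\cX$-marginal of the joint $n$-type $Q_{\bar X|\bar Y} Q_{\bar Y}$, which is the announced $Q_{\bar X}$. Consequently every such $x^n$ belongs to the (unconditional) type class $\cT^n_{Q_{\bar X}}$.

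Next, for any $x^n \in \cT^n_{Q_{\bar X}}$, a standard calculation gives
\begin{align}
P_{X^n}(x^n) &= \prod_{i=1}^n P_X(x_i) = \prod_{a\in \cX} P_X(a)^{n Q_{\bar X}(a)} \\
&= \exp\!\left(-n\sum_{a\in\cX} Q_{\bar X}(a) \log \tfrac{1}{P_X(a)}\right) \\
&= \exp\!\left(-n\,\bbE[\imath^{}_{P_X}(\bar X)]\right),
\end{align}
where $\bar X \sim Q_{\bar X}$. This makes $P_{X^n}(x^n)$ constant over $x^n \in \cT^n_{Q_{\bar X|\bar Y}}(y^n)$.

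Finally I would sum this constant value over the conditional type class:
\begin{align}
p_{Q_{\bar X|\bar Y}}(y^n) &= \sum_{x^n \in \cT^n_{Q_{\bar X|\bar Y}}(y^n)} P_{X^n}(x^n) \\
&= \left|\cT^n_{Q_{\bar X|\bar Y}}(y^n)\right| \exp\!\left(-n\,\bbE[\imath^{}_{P_X}(\bar X)]\right),
\end{align}
which is exactly \eqref{eqn:lem:lem for p2}. There is no real obstacle here; the only thing to be careful about is the bookkeeping that the $\cX$-marginal of $Q_{\bar X|\bar Y}Q_{\bar Y}$ (given $y^n \in \cT^n_{Q_{\bar Y}}$) is well-defined, which follows from \Cref{def:conditional type,rem:conditional types}. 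The lemma is essentially the conditional-type analogue of the classical identity $P_{X^n}(\cT^n_{Q_{\bar X}}) = |\cT^n_{Q_{\bar X}}|\exp(-n(H(Q_{\bar X}) + D(Q_{\bar X}\|P_X)))$, rewritten in the $\bbE[\imath^{}_{P_X}(\bar X)] = H(Q_{\bar X}) + D(Q_{\bar X}\|P_X)$ form used throughout the paper.
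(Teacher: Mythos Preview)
Your proposal is correct and follows essentially the same argument as the paper: both write $p_{Q_{\bar X|\bar Y}}(y^n)$ as a sum of $P_{X^n}(x^n)$ over $x^n\in\cT^n_{Q_{\bar X|\bar Y}}(y^n)$, observe that each such $x^n$ has $\cX$-type $Q_{\bar X}$ so that $P_{X^n}(x^n)=\prod_{a}P_X(a)^{nQ_{\bar X}(a)}=\exp(-n\bbE[\imath^{}_{P_X}(\bar X)])$, and then factor out this constant to obtain the size of the conditional type class. Your additional remarks about the $\cX$-marginal being well-defined and the connection to the classical identity $H(Q_{\bar X})+D(Q_{\bar X}\|P_X)$ are correct and slightly more explicit than the paper's version, but the substance is identical.
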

\begin{proof}
Note that
	\begin{align}
		&\bbP\left[X^n \in \cT^n_{Q_{\bar X|\bar Y}}(y^n)\right]  \nonumber \\
		&\qquad \qquad = \sum_{x^n \in \cT^n_{Q_{\bar X|\bar Y}}(y^n)} P_{X^n}(x^n)  \\ 
		&\qquad \qquad = \sum_{x^n \in \cT^n_{Q_{\bar X|\bar Y}}(y^n)} \prod_{a\in \cX } P_X^{nQ_{\bar X}(a) }(a) \label{from:def:Na} \\ 
		&\qquad \qquad =  \exp \left(-n\bbE[\imath^{}_{P_X}(\bar X) ] \right) \left|\cT^n_{Q_{\bar X|\bar Y}}(y^n)\right| \text{,} 
	\end{align}
where in \eqref{from:def:Na} $nQ_{\bar X}(a) \in \{0, 1, \ldots, n \} $ denotes the number of times that $a\in \cX$ appears in $ \{ x_i  \}_{i=1}^n$. 
\end{proof}
\begin{lemma} \label{lem:exponent_for_uniform_convergence}
Let $\mfrY(M, Q_{\bar X\bar Y})$ be as defined in \eqref{eqn:def:mfrakY}. Assuming\footnote{For the ease of presentation, the fact that $M$ is an integer is ignored. A more careful analysis with $M = \lceil \exp(nR) \rceil$ results in $\kappa_n = \frac{|\cX||\cY|}{n}\log (n+1) + \frac 1n \log(2\sqrt{2})$ as $\exp(nR) \le \lceil \exp(nR) \rceil \le 2\exp(nR)$.} $M = \exp(nR)$, for any $Q_{\bar X\bar Y} \in \cP_n(\cX\times\cY)$
  \begin{align}
    & D(Q_{\bar X\bar Y}\|P^{}_X Q_{\bar Y})+ \frac 12 \left[R - D(Q_{\bar X\bar Y}\|P^{}_X Q_{\bar Y})\right]_+  \nonumber \\ 
    &\qquad \le -\frac1n \log \left(\frac12\mfrY(M, Q_{\bar X\bar Y}) \right) \label{eqnlem:lower_bd}  \\
    &\qquad \le  D(Q_{\bar X\bar Y}\|P^{}_X Q_{\bar Y})\label{eqnlem:upper_bd} \\ 
    &\qquad \qquad \qquad  + \frac12 \left[R- D(Q_{\bar X\bar Y}\|P^{}_X Q_{\bar Y})  \right]_+  + \kappa_n  \text{,} \nonumber
  \end{align}
  where $Q_{\bar Y}$ is the $\cY$-marginal of $Q_{\bar X\bar Y}$, and
  \begin{align}
    [f]_+ &= \max\{0, f\} \text{,} \\
 \kappa_n &= \frac{|\cX||\cY|}{n}\log (n+1) + \frac 1n \log2 \text{.} \label{eqn:def:kappa}
  \end{align} 
\end{lemma}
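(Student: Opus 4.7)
The plan is to reduce the statement to bounds on the single quantity $D_0 := -\tfrac 1n \log p_{Q_{\bar X|\bar Y}}(y^n)$, relate $D_0$ to $D := D(Q_{\bar X\bar Y}\|P_X Q_{\bar Y})$ via type counting, and then exploit the monotonicity of $t \mapsto t + \tfrac 12[R-t]_+$.

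Starting from Lemma~\ref{lem:lem for p}, which gives $p_{Q_{\bar X|\bar Y}}(y^n) = \exp(-n\bbE[\imath^{}_{P_X}(\bar X)])\,|\cT^n_{Q_{\bar X|\bar Y}}(y^n)|$, I would combine the standard conditional type-class size bounds
\begin{align}
(n+1)^{-|\cX||\cY|}\exp(nH(\bar X|\bar Y)) \le |\cT^n_{Q_{\bar X|\bar Y}}(y^n)| \le \exp(nH(\bar X|\bar Y))
\end{align}
with the one-line identity $\bbE[\imath^{}_{P_X}(\bar X)] - H(\bar X|\bar Y) = D$ (an immediate consequence of the chain rule after multiplying and dividing by $Q_{\bar Y}(b)$ inside the logarithm) to obtain the sandwich $D \le D_0 \le D + \tfrac{|\cX||\cY|}n \log(n+1)$.

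The next step is purely algebraic. Since taking $-\tfrac 1n \log$ converts the min in $\mfrY$ into a max, and $M = \exp(nR)$, one obtains the exact identity
\begin{align}
-\tfrac 1n \log \tfrac 12\mfrY(M, Q_{\bar X\bar Y}) = \max\left\{D_0,\ \tfrac{R+D_0}2 + \tfrac{\log 2}n\right\} = D_0 + \max\left\{0,\ \tfrac{R-D_0}2 + \tfrac{\log 2}n\right\},
\end{align}
which agrees with $D_0 + \tfrac 12[R-D_0]_+$ up to an additive $\tfrac{\log 2}n$.

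To finish I would invoke the monotonicity remark that $\phi(t) := t + \tfrac 12[R-t]_+$ is non-decreasing in $t$ (slopes $\tfrac 12$ for $t < R$ and $1$ for $t > R$). Plugging $D_0 \ge D$ into the identity above yields the lower bound \eqref{eqnlem:lower_bd}. For the upper bound \eqref{eqnlem:upper_bd}, the inequality $[R-D_0]_+ \le [R-D]_+$ (from $D_0 \ge D$) together with $D_0 \le D + \tfrac{|\cX||\cY|}n\log(n+1)$ delivers $\phi(D_0) \le \phi(D) + \tfrac{|\cX||\cY|}n\log(n+1)$; restoring the residual $\tfrac{\log 2}n$ from the identity above produces exactly $\kappa_n$. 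The argument is entirely combinatorial; the only foreseeable annoyance is book-keeping the two small corrections $\tfrac{\log 2}n$ and $\tfrac{|\cX||\cY|}n\log(n+1)$ and funnelling them into $\kappa_n$, but no substantive obstacle arises.
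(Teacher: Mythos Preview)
Your proposal is correct and follows essentially the same approach as the paper: both invoke Lemma~\ref{lem:lem for p} together with the conditional type-class size bounds (Csisz\'ar--K\"orner, Lemma~2.5) and the identity $\bbE[\imath^{}_{P_X}(\bar X)] - H(\bar X|\bar Y) = D(Q_{\bar X\bar Y}\|P_X Q_{\bar Y})$, then manipulate the $\min$ in $\mfrY$ algebraically. Your write-up is simply more explicit about the monotonicity of $t\mapsto t+\tfrac12[R-t]_+$ and the book-keeping of the two correction terms, which the paper leaves implicit.
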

\begin{proof}
Noting that 
\begin{align}
  \bbE[\imath^{}_{P_X}(\bar X) ] - H(\bar X |\bar Y) =  D(Q_{\bar X\bar Y } \| P^{}_X Q_{\bar Y}) \text{,}
\end{align}
where $(\bar X, \bar Y) \sim Q_{\bar X\bar Y} = Q_{\bar X| \bar Y} Q_{\bar Y} $, \eqref{eqnlem:lower_bd} is a direct consequence of \Cref{lem:lem for p} and the upper bound in \cite[Lemma 2.5]{csiszar2011information}. To see \eqref{eqnlem:upper_bd}, observing
\begin{align}
  &\frac12\mfrY(M, Q_{\bar X\bar Y})  \nonumber \\ 
  &\quad \ \,  = p^{}_{Q_{\bar X|\bar Y}}(y^n) \min\left\{1 , \frac12 M^{-\frac 12} p_{Q_{\bar X|\bar Y}}^{ -\frac 12}(y^n)\right\} \text{,}
\end{align}
and applying \cite[Lemma 2.5]{csiszar2011information} and \Cref{lem:lem for p} suffices. 
\end{proof}
\begin{lemma}\label{lem:infimum over all types}
Given a joint $n$-type $Q_{\bar X \bar Y} \in \cP_n(\cX\times \cY)$, suppose that $(\bar X, \bar Y)\sim Q_{\bar X\bar Y} $. Let $\mfrY(M, Q_{\bar X\bar Y})$ be as defined in \eqref{eqn:def:mfrakY}, then
  \begin{align}
    &\lim_{n\to \infty} -\frac 1n \log \max_{Q_{\bar X\bar Y}} \bigg\{\frac12 \left|\cT^n_{Q_{\bar Y}}\right|   \nonumber \\ 
    &\qquad \quad \times \exp(-n\bbE[\imath^{}_{P_{Y|X}}(\bar Y|\bar X)]) \mfrY(M, Q_{\bar X\bar Y})   \bigg\} \nonumber \\
    &\qquad = \inf_{Q_{\bar X\bar Y}\in \cP_\infty(\cX\times \cY)} \bigg\{ D(Q_{\bar X\bar Y } \| P^{}_{XY}) \\ 
    &\qquad \qquad \qquad \ \   + \frac 12 \left[R - D(Q_{\bar X\bar Y} \| P^{}_X Q_{\bar Y})\right]_+ \bigg\} \text{,} \nonumber
  \end{align}
  where 
  \begin{align}
    \cP_\infty(\cX\times \cY) &= \bigcup_{n=1}^\infty \cP_n(\cX\times \cY) \text{,}  \\
    [f]_+&= \max\{0, f\} \text{.}
  \end{align}
\end{lemma}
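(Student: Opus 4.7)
\medskip

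\noindent\textbf{Proof sketch.} My plan is to reduce the claim to an exponent identity per joint type, then take the max out and pass to the limit.

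\emph{Step 1 (per-type exponent).} First I would combine Lemma~\ref{lem:exponent_for_uniform_convergence} with the standard type-class bound $(n+1)^{-|\cY|}\exp(nH(\bar Y)) \le |\cT^n_{Q_{\bar Y}}|\le \exp(nH(\bar Y))$ (see, e.g., \cite[Lemma 2.3]{csiszar2011information}) to show that, uniformly in $Q_{\bar X\bar Y}\in\cP_n(\cX\times\cY)$,
\begin{align}
& -\frac{1}{n}\log\Big(\tfrac12 |\cT^n_{Q_{\bar Y}}|\exp(-n\bbE[\imath^{}_{P_{Y|X}}(\bar Y|\bar X)])\mfrY(M, Q_{\bar X\bar Y})\Big) \nonumber\\
&\qquad = D(Q_{\bar X\bar Y}\|P_{XY}) + \tfrac12\big[R-D(Q_{\bar X\bar Y}\|P_X Q_{\bar Y})\big]_+ + \varepsilon_n,
\end{align}
where $|\varepsilon_n| = O(\log n / n)$ depends only on $n$, $|\cX|$, $|\cY|$. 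The key algebraic step here is the identity $D(Q_{\bar X\bar Y}\|P_X Q_{\bar Y}) - H(\bar Y) + \bbE[\imath^{}_{P_{Y|X}}(\bar Y|\bar X)] = D(Q_{\bar X\bar Y}\|P_{XY})$, which turns the $\exp(-n\bbE[\imath^{}_{P_{Y|X}}(\bar Y|\bar X)])|\cT^n_{Q_{\bar Y}}|$ factor and the $D(Q_{\bar X\bar Y}\|P_X Q_{\bar Y})$ portion of the $\mfrY$ exponent jointly into $D(Q_{\bar X\bar Y}\|P_{XY})$.

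\emph{Step 2 (max $\to$ min).} Taking $-\frac{1}{n}\log$ of the max over $Q_{\bar X\bar Y}\in\cP_n(\cX\times\cY)$ converts it into the minimum of the per-type exponents, yielding
\begin{align}
&-\frac{1}{n}\log \max_{Q_{\bar X\bar Y}\in\cP_n} \Big\{\tfrac12|\cT^n_{Q_{\bar Y}}|\exp(-n\bbE[\imath^{}_{P_{Y|X}}(\bar Y|\bar X)])\mfrY(M,Q_{\bar X\bar Y})\Big\} \nonumber\\
&\qquad = \min_{Q_{\bar X\bar Y}\in\cP_n(\cX\times\cY)} \Big\{ D(Q_{\bar X\bar Y}\|P_{XY}) + \tfrac12[R-D(Q_{\bar X\bar Y}\|P_X Q_{\bar Y})]_+\Big\} + \varepsilon_n.
\end{align}
Since $\varepsilon_n\to 0$, it remains to show that the minima over $n$-types converge to the infimum over $\cP_\infty(\cX\times\cY)$.

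\emph{Step 3 (passing to $\cP_\infty$).} Since $\cP_n\subset \cP_\infty$ for every $n$, each minimum upper-bounds the infimum, giving the ``$\ge$'' direction. For the reverse, I would use density: fix any $Q^\star\in\cP_\infty$; then $Q^\star\in\cP_{n_0}$ for some $n_0$, and hence $Q^\star\in\cP_{kn_0}$ for all $k\in\bbN$ by Remark~\ref{rem:m-type and km-type}. For indices $n$ not of this form, I would snap $Q^\star$ to a nearby $n$-type and use continuity of the objective. The main obstacle is precisely this continuity argument: $D(\cdot\|P_{XY})$ need not be continuous if $P_{XY}$ has zeros, and $[R-D(\cdot\|P_X Q_{\bar Y})]_+$ must be handled where $P_X$ vanishes. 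I would resolve this by first restricting the ambient alphabet to $\{x:P_X(x)>0\}\times\{y:P_Y(y)>0\}$, where any competing $Q$ with $D(Q\|P_{XY})<\infty$ must live, and noting that on this restricted simplex the objective is continuous. Density of rationals in $[0,1]$ then yields $\limsup_n \min_{\cP_n}\le \inf_{\cP_\infty}$, closing the argument.
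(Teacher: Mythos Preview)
Your proposal is correct and follows essentially the same route as the paper: the paper likewise combines \Cref{lem:exponent_for_uniform_convergence} with \cite[Lemma 2.3]{csiszar2011information} and the identity $D(Q_{\bar X\bar Y}\|P_X Q_{\bar Y}) - H(Q_{\bar Y}) + \bbE[\imath^{}_{P_{Y|X}}(\bar Y|\bar X)] = D(Q_{\bar X\bar Y}\|P_{XY})$ to sandwich the normalized log between $\min_{\cP_n}\{\cdot\}$ and $\min_{\cP_n}\{\cdot\}+\kappa_n+\tfrac{|\cY|}{n}\log(n+1)$, then simply says ``taking $n\to\infty$.'' Your Step~3 is more careful than the paper about why $\min_{\cP_n}\to\inf_{\cP_\infty}$ (and is the right idea, though the correct restriction is to $\operatorname{supp}(P_{XY})$ rather than the product of marginal supports); the paper defers the analogous continuity argument to \Cref{lem:optimization over types in the limit}.
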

\begin{proof}
Using \cite[Lemma 2.3]{csiszar2011information}, \Cref{lem:exponent_for_uniform_convergence}, and the fact that
  \begin{align}
     D(Q_{\bar X\bar Y}\|P^{}_X Q_{\bar Y})- & H(Q_{\bar Y})+\bbE[\imath^{}_{P_{Y|X}}(\bar Y|\bar X)] 
     \nonumber\\
      & = D(Q_{\bar X\bar Y}\| P^{}_{XY}) \text{,} 
  \end{align}
it follows that, for any fixed $n$,
\begin{align}
   & \min_{Q_{\bar X\bar Y}\in \cP_n(\cX\times \cY)} \bigg\{ D(Q_{\bar X\bar Y}\|P^{}_{XY}) \nonumber \\ 
   &\qquad \qquad \qquad \qquad  + \frac12 \left[R-D(Q_{\bar X\bar Y} \| P^{}_X Q_{\bar Y})\right]_+ \bigg\} \nonumber \\
   &\quad \le -\frac 1n \log  \max_{Q_{\bar X\bar Y}\in \cP_n(\cX\times \cY)} \bigg\{ \frac12 \left|\cT^n_{Q_{\bar Y}}\right|  \label{for:fb:lower}  \\ 
   &\qquad \qquad \ \times \exp(-n\bbE[\imath^{}_{P_{Y|X}}(\bar Y|\bar X)]) \mfrY(M, Q_{\bar X\bar Y}) \bigg\} \nonumber \\
   &\quad \le  \min_{Q_{\bar X\bar Y}\in \cP_n(\cX\times \cY)} \bigg\{ D(Q_{\bar X\bar Y } \| P^{}_{XY}) \label{for:fb:upper} \\ 
   &\qquad \qquad \qquad \qquad + \frac 12 \left[R - D(Q_{\bar X\bar Y} \| P^{}_X Q_{\bar Y})\right]_+ \bigg\} \nonumber \\ 
   &\qquad \qquad \qquad \qquad \qquad \qquad + \kappa_n  + \frac{|\cY|}{n} \log (n+1) \text{,} \nonumber
\end{align}
where $\kappa_n $ is as defined in \eqref{eqn:def:kappa}. Taking $n\to \infty$ yields the desired result as $\kappa_n \to 0$. 
\end{proof}
\subsection{Exponents Used in the Proof Theorem~\ref{thm:cc:main}}\label{apdxsub:cc:exponents}
This section contains some additional asymptotic and non-asymptotic results that are needed in proving \Cref{thm:cc:main} (in \Cref{sec:main}) and \Cref{thm:cc:finite block-length cons. comp. exact scl} (in Appendix~\ref{apdx:finite block-length results}).
\begin{lemma}\label{lem:cc:the magic lemma}
Suppose $y^n \in \cT^n_{Q_{\bar Y}}$, and $Q_{\bar X|\bar Y}\in \cP_n(\cX|Q_{\bar Y}; P_{\bar X}) $. Then,
\begin{align}
	  &\sum_{x^n \in \cT^n_{P_{\bar X}}}1\left\{x^n \in \cT^n_{Q_{\bar X| \bar Y}}(y^n)\right\} \nonumber \\ 
	   &\qquad = \sum_{x^n \in \cT^n_{P_{\bar X}}} 1\left\{(x^n, y^n)\in \cT^n_{Q_{\bar X \bar Y}}\right\}  \label{eqn:lem:type stuff1} \\
	   &\qquad = \frac{\left|\cT^n_{Q_{\bar X \bar Y}}\right|}{\left|\cT^n_{Q_{\bar Y}}\right|} \text{,} \label{eqn:lem:type stuff2}
\end{align}
where $Q_{\bar X \bar Y} = Q_{\bar X|\bar Y}Q_{\bar Y}= P_{\bar X}Q_{\bar Y|\bar X} $ for some conditional type $Q_{\bar Y|\bar X} $ given $x^n \in \cT^n_{P_{\bar X}}$.  
\end{lemma}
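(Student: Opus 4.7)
The lemma has two separate equalities, and each is essentially a standard combinatorial identity from the method of types; the plan is to dispose of them in turn.

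For the first equality, my plan is to appeal directly to Definition~\ref{def:conditional type}. By assumption $y^n \in \cT^n_{Q_{\bar Y}}$, and the condition $Q_{\bar X|\bar Y} \in \cP_n(\cX|Q_{\bar Y}; P_{\bar X})$ forces the joint type $Q_{\bar X\bar Y} = Q_{\bar X|\bar Y}Q_{\bar Y}$ to have $\cX$-marginal equal to $P_{\bar X}$. Thus $x^n \in \cT^n_{Q_{\bar X|\bar Y}}(y^n)$ is equivalent, by definition of conditional type class, to the statement that the joint type of $(x^n, y^n)$ equals $Q_{\bar X\bar Y}$; and whenever this holds, $x^n$ necessarily belongs to $\cT^n_{P_{\bar X}}$, so the outer restriction is automatic. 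This makes the two indicator functions identical term-by-term.

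For the second equality, I would use the standard symmetry/double-counting argument driven by the natural action of the symmetric group $S_n$ on coordinates. Given any $\pi \in S_n$, the map $(x^n, y^n) \mapsto (x^n\circ \pi, y^n \circ \pi)$ preserves joint types, and hence sends $\cT^n_{Q_{\bar X\bar Y}}$ onto itself; moreover, $y^n\circ\pi$ ranges over all of $\cT^n_{Q_{\bar Y}}$ as $\pi$ varies. This transitivity on the $\cY$-component implies that the ``fiber size''
\[
N(y^n) = \sum_{x^n \in \cX^n} 1\{(x^n, y^n) \in \cT^n_{Q_{\bar X\bar Y}}\}
\]
is independent of the choice of $y^n \in \cT^n_{Q_{\bar Y}}$. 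Summing $N(y^n)$ over $y^n \in \cT^n_{Q_{\bar Y}}$ recovers $|\cT^n_{Q_{\bar X\bar Y}}|$, while every pair in $\cT^n_{Q_{\bar X\bar Y}}$ has its $\cY$-component in $\cT^n_{Q_{\bar Y}}$ (so no double counting or omissions occur). Dividing yields the claimed ratio.

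Neither step presents a real obstacle; the only subtlety is being careful about the outer constraint $x^n \in \cT^n_{P_{\bar X}}$ in the first equality (which is the reason one must assume $Q_{\bar X|\bar Y}$ lies in the restricted set $\cP_n(\cX|Q_{\bar Y}; P_{\bar X})$ rather than the larger set $\cP_n(\cX|Q_{\bar Y})$) and about verifying that the symmetric-group action in the second step is indeed transitive on $\cT^n_{Q_{\bar Y}}$, which is immediate since any two sequences of the same type differ by a coordinate permutation.
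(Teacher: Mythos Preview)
Your proposal is correct and follows essentially the same approach as the paper: for the first equality both you and the paper unwind the definition of the conditional type class (the paper inserts the trivially-true indicator $1\{y^n\in\cT^n_{Q_{\bar Y}}\}$ to make the equivalence explicit), and for the second equality both arguments compute $|\cT^n_{Q_{\bar X\bar Y}}|$ as a double sum and use that the inner sum is constant over $b^n\in\cT^n_{Q_{\bar Y}}$. Your justification of that constancy via the coordinate action of $S_n$ is slightly more explicit than the paper's one-line ``depends on $b^n$ only through its type,'' but the content is the same.
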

\begin{proof}
It is easy to get \eqref{eqn:lem:type stuff1}:
\begin{align}
& \sum_{x^n \in \cT^n_{P_{\bar X}}}1\left\{x^n \in \cT^n_{Q_{\bar X| \bar Y}}(y^n)\right\} \nonumber \\ 
&=  \sum_{x^n\in\cT^n_{P_{\bar X}}}1\left\{x^n \in \cT^n_{Q_{\bar X| \bar Y}}(y^n)\right\} 1\left\{y^n \in \cT^n_{Q_{\bar Y}}\right\}   \\ 
&= \sum_{x^n \in \cT^n_{P_{\bar X}}} 1\left\{(x^n, y^n)\in \cT^n_{Q_{\bar X \bar Y}}\right\} \text{.} 
\end{align}
To establish \eqref{eqn:lem:type stuff2}, observe that
\begin{align}
  &\left|\cT^n_{Q_{\bar X \bar Y}}\right| \nonumber \\
  &\quad  = \sum_{(a^n,b^n) \in \cX^n\times \cY^n} 1\left\{(a^n,b^n)\in \cT^n_{Q_{\bar X\bar Y}}\right\} \\
  &\quad  = \sum_{b^n \in \cT^n_{Q_{\bar Y}}}  \sum_{a^n\in \cT^n_{P_{\bar X}}} 1\left\{(a^n, b^n)\in \cT^n_{Q_{\bar X\bar Y}}\right\} \label{xxx:summand here} \\
  &\quad = \sum_{b^n \in \cT^n_{Q_{\bar Y}}}  \sum_{a^n\in \cT^n_{P_{\bar X}}} 1\left\{(a^n, y^n)\in \cT^n_{Q_{\bar X\bar Y}}\right\}  \label{frm:cc:dep on type of y^n}  \\
  &\quad = \left|\cT^n_{Q_{\bar Y}}\right| \sum_{x^n \in \cT^n_{P_{\bar X}}} 1\left\{(x^n, y^n)\in \cT^n_{Q_{\bar X \bar Y}}\right\} \text{,} 
\end{align}
where \eqref{xxx:summand here} follows because $\cX$- and $\cY$-marginals of $Q_{\bar X\bar Y}$ are fixed to be $P_{\bar X}$ and $Q_{\bar Y}$; \eqref{frm:cc:dep on type of y^n} follows because $y^n\in \cT^n_{Q_{\bar Y}}$  and $\sum_{a^n\in \cT^n_{P_{\bar X}}} 1\left\{(a^n, b^n)\in \cT^n_{Q_{\bar X\bar Y}}\right\} $ depends on $b^n$ only through its type $Q_{\bar Y}$.
\end{proof}
\begin{lemma}\label{lem:cc:exponent of p}
	Suppose $y^n\in \cT^n_{Q_{\bar Y}} $, and $Q_{\bar X|\bar Y}\in \cP_n(\cX|Q_{\bar Y}; P_{\bar X}) $. Then,
	\begin{align}
		\breve{p}_{Q_{\bar X|\bar Y}}(y^n) &= \frac{\left|\cT^n_{Q_{\bar X \bar Y}}\right|}{\left|\cT^n_{P_{\bar X}}\right|\left|\cT^n_{Q_{\bar Y}}\right|} \text{,}
	\end{align}
	where $Q_{\bar X \bar Y} = Q_{\bar X|\bar Y}Q_{\bar Y}= P_{\bar X}Q_{\bar Y|\bar X} $ for some conditional type $Q_{\bar Y|\bar X} $ given $x^n \in \cT^n_{P_{\bar X}}$.  
\end{lemma}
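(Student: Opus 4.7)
The plan is to unfold the definition of $\breve{p}_{Q_{\bar X|\bar Y}}(y^n)$ and then apply Lemma~\ref{lem:cc:the magic lemma} directly. Recall from the constant-composition setup in \Cref{def:random cc codebook,rem:cc:replacements lower bound} that each codeword $\breve X^n$ is drawn uniformly from the type class $\cT^n_{P_{\bar X}}$, i.e., $\breve X^n \sim R_{\breve X^n}$ with
\begin{align}
R_{\breve X^n}(x^n) = \frac{1}{|\cT^n_{P_{\bar X}}|} 1\{x^n \in \cT^n_{P_{\bar X}}\} \text{.} \nonumber
\end{align}

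First, I would write out $\breve{p}_{Q_{\bar X|\bar Y}}(y^n)$ as an expectation against this uniform distribution:
\begin{align}
\breve{p}_{Q_{\bar X|\bar Y}}(y^n)
 &= \bbP\left[\breve X^n \in \cT^n_{Q_{\bar X|\bar Y}}(y^n)\right] \nonumber \\
 &= \frac{1}{|\cT^n_{P_{\bar X}}|}\sum_{x^n \in \cT^n_{P_{\bar X}}} 1\left\{x^n \in \cT^n_{Q_{\bar X|\bar Y}}(y^n)\right\} \text{.} \nonumber
\end{align}
Second, since by hypothesis $y^n\in \cT^n_{Q_{\bar Y}}$ and $Q_{\bar X|\bar Y} \in \cP_n(\cX|Q_{\bar Y}; P_{\bar X})$, the joint type $Q_{\bar X\bar Y}=Q_{\bar X|\bar Y}Q_{\bar Y}$ has $\cX$-marginal equal to $P_{\bar X}$ and $\cY$-marginal equal to $Q_{\bar Y}$, so the hypotheses of \Cref{lem:cc:the magic lemma} are satisfied. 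Applying its conclusion \eqref{eqn:lem:type stuff2} to the inner sum yields
\begin{align}
\breve{p}_{Q_{\bar X|\bar Y}}(y^n)
 &= \frac{1}{|\cT^n_{P_{\bar X}}|} \cdot \frac{|\cT^n_{Q_{\bar X \bar Y}}|}{|\cT^n_{Q_{\bar Y}}|} \text{,} \nonumber
\end{align}
which is precisely the claim.

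There is no genuine obstacle here: the lemma is essentially a bookkeeping identity that rewrites the constant-composition probability of landing in a conditional type class in the symmetric form $|\cT^n_{Q_{\bar X\bar Y}}|/(|\cT^n_{P_{\bar X}}||\cT^n_{Q_{\bar Y}}|)$. The only point to be careful about is ensuring that the marginal constraints on $Q_{\bar X|\bar Y}$ match those assumed in \Cref{lem:cc:the magic lemma}, which is exactly why the lemma is stated with $Q_{\bar X|\bar Y} \in \cP_n(\cX|Q_{\bar Y}; P_{\bar X})$ rather than the larger set $\cP_n(\cX|Q_{\bar Y})$. The resulting symmetric expression will also be convenient in subsequent steps, since both $|\cT^n_{P_{\bar X}}|$ and $|\cT^n_{Q_{\bar Y}}|$ have well-known exponential estimates via \cite[Lemma 2.3]{csiszar2011information}.
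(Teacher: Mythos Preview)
Your proposal is correct and follows essentially the same route as the paper: unfold the definition of $\breve{p}_{Q_{\bar X|\bar Y}}(y^n)$ as a uniform average over $\cT^n_{P_{\bar X}}$ and then invoke \Cref{lem:cc:the magic lemma} to evaluate the sum. The paper's proof additionally displays the intermediate rewriting $1\{x^n\in\cT^n_{Q_{\bar X|\bar Y}}(y^n)\}=1\{(x^n,y^n)\in\cT^n_{Q_{\bar X\bar Y}}\}$ (which is \eqref{eqn:lem:type stuff1} of that lemma) before applying \eqref{eqn:lem:type stuff2}, but this is only a cosmetic difference.
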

\begin{proof}
\begin{align}
		&\breve{p}_{Q_{\bar X|\bar Y}}(y^n) \nonumber \\ 
		&\qquad \quad = \bbP\left[\breve X^n \in \cT^n_{Q_{\bar X|\bar Y}}(y^n)\right] \\
		&\qquad \quad = \frac{1}{\left|\cT^n_{P_{\bar X}}\right|} \sum_{x^n \in \cT^n_{P_{\bar X}}}1\left\{x^n \in \cT^n_{Q_{\bar X| \bar Y}}(y^n)\right\}   \\
		&\qquad \quad = \frac{1}{\left|\cT^n_{P_{\bar X}}\right|} \sum_{x^n\in\cT^n_{P_{\bar X}}} 1\left\{(x^n, y^n)\in\cT^n_{Q_{\bar X \bar Y}}\right \} \label{by:lem1 jiglypuff} \\
		&\qquad \quad = \frac{\left|\cT^n_{Q_{\bar X \bar Y}}\right|}{\left|\cT^n_{P_{\bar X}}\right|\left|\cT^n_{Q_{\bar Y}}\right|} \text{,} \label{by:lem1 iglypuff}
\end{align}
where \eqref{by:lem1 jiglypuff} and \eqref{by:lem1 iglypuff} both follow from \Cref{lem:cc:the magic lemma}.
\end{proof}
\begin{lemma}\label{lem:cc:key first limit} 
  Given an $m$-type $P_{\bar X} \in \cP_m(\cX)$ and a conditional type\footnote{We assume $n\in m\bbN$. In \eqref{eqn:lem:cc:meaning of n and life}, $n\to \infty $ means that $n = km $ and $k\to \infty$.} $Q_{\bar Y|\bar X} \in \cP_{n}(\cY| P_{\bar X})$ given $x^n \in \cT^n_{P_{\bar X}}$, suppose $(\bar X, \bar Y) \sim P_{\bar X}Q_{\bar Y|\bar X}$. Let\footnote{For the ease of presentation, the fact that $M$ is an integer is ignored. A more careful analysis with $M = \lceil \exp(nR) \rceil$ results in $\breve \kappa_n = \frac{|\cX|+2|\cX||\cY|+|\cY|}{2n} \log(n+1) + \frac1n \log(2\sqrt{2})$ as $\exp(nR) \le \lceil \exp(nR) \rceil \le 2\exp(nR) $.} $M= \exp(nR) $, and $\breve{\mfrY}(M, P_{\bar X}Q_{\bar Y|\bar X})$ be as defined in \eqref{for:cc:remark_end}, then
  \begin{align}
    &\lim_{n\to \infty} -\frac1n\log  \max_{Q_{\bar Y |\bar X} \in \cP_n(\cY|P_{\bar X})} \bigg\{ \frac12 \left|\cT^n_{Q_{\bar Y}}\right| \nonumber \\ 
    &\qquad \ \, \times \exp(-n\bbE[\imath^{}_{P_{Y|X}}(\bar Y|\bar X)]) \breve{\mfrY}(M, P_{\bar X}Q_{\bar Y|\bar X})\bigg\} \nonumber  \\
    &\quad = \inf_{Q_{\bar Y|\bar X} \in \cP_\infty(\cY|P_{\bar X})} \bigg\{ D(P_{\bar X}Q_{\bar Y|\bar X}\|P_{\bar XY}) \label{eqn:lem:cc:meaning of n and life} \\ 
    &\qquad \qquad \qquad + \frac12\left[R-D(P_{\bar X}Q_{\bar Y|\bar X}\|P_{\bar X}Q_{\bar Y})\right]_+\bigg\} \text{,} \nonumber 
  \end{align}
  where 
  \begin{align} 
    \cP_\infty(\cY|P_{\bar X}) &= \bigcup_{n\in m\bbN}  \cP_{n}(\cY|P_{\bar X}) \text{,} \\
    P_{\bar XY} &= P_{\bar X}P_{Y|X} \text{,} \\
    [f]_+&= \max\{0, f\} \text{.}
  \end{align}
\end{lemma}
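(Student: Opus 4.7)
The plan is to mimic the structure used for the i.i.d.\ counterpart \Cref{lem:infimum over all types}: first deduce a sharp non-asymptotic sandwich for each summand, and then push the maximum through $-\frac{1}{n}\log$ and pass to the limit, using the fact that the cardinality of $\cP_n(\cY|P_{\bar X})$ grows only polynomially in $n$.

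\textbf{Step 1 (rewrite $\breve p$ in terms of entropies).} Starting from \Cref{lem:cc:exponent of p}, I would apply the standard type-class size bounds \cite[Lemma~2.5]{csiszar2011information} to $|\cT^n_{Q_{\bar X\bar Y}}|$, $|\cT^n_{P_{\bar X}}|$, and $|\cT^n_{Q_{\bar Y}}|$ to obtain
\begin{align*}
&\left|\,-\tfrac{1}{n}\log \breve p_{Q_{\bar X|\bar Y}}(y^n) - D(P_{\bar X}Q_{\bar Y|\bar X}\|P_{\bar X}Q_{\bar Y})\right| \\
&\qquad = O\!\left(\tfrac{\log n}{n}\right),
\end{align*}
where the identity $H(P_{\bar X})+H(Q_{\bar Y})-H(Q_{\bar X\bar Y})=D(P_{\bar X}Q_{\bar Y|\bar X}\|P_{\bar X}Q_{\bar Y})$ is what makes the three $H$'s collapse into the desired divergence. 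Plugging this into the definition of $\breve{\mfrY}(M,P_{\bar X}Q_{\bar Y|\bar X})=\min\{2\breve p,M^{-1/2}\breve p^{1/2}\}$, one obtains a constant-composition analogue of \Cref{lem:exponent_for_uniform_convergence}:
\begin{align*}
&\Big| -\tfrac{1}{n}\log\!\left(\tfrac12\breve{\mfrY}(M,P_{\bar X}Q_{\bar Y|\bar X})\right) \\
&\quad - D(P_{\bar X}Q_{\bar Y|\bar X}\|P_{\bar X}Q_{\bar Y}) \\
&\quad - \tfrac12\bigl[R-D(P_{\bar X}Q_{\bar Y|\bar X}\|P_{\bar X}Q_{\bar Y})\bigr]_+\Big| \le \breve\kappa_n,
\end{align*}
with $\breve\kappa_n=O(\log n /n)$.

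\textbf{Step 2 (absorb the prefactor).} Using again \cite[Lemma~2.5]{csiszar2011information} to replace $|\cT^n_{Q_{\bar Y}}|$ by $\exp(nH(Q_{\bar Y}))$ up to a polynomial factor, and the straightforward identity
\begin{align*}
&D(P_{\bar X}Q_{\bar Y|\bar X}\|P_{\bar X}Q_{\bar Y})-H(Q_{\bar Y})+\bbE[\imath^{}_{P_{Y|X}}(\bar Y|\bar X)] \\
&\qquad = D(P_{\bar X}Q_{\bar Y|\bar X}\|P_{\bar XY}),
\end{align*}
the $R$-independent pieces combine so that
\begin{align*}
&-\tfrac{1}{n}\log\!\Bigl\{\tfrac12 |\cT^n_{Q_{\bar Y}}|\exp(-n\bbE[\imath^{}_{P_{Y|X}}(\bar Y|\bar X)])\breve\mfrY\Bigr\} \\
&\quad = D(P_{\bar X}Q_{\bar Y|\bar X}\|P_{\bar XY}) \\
&\qquad + \tfrac12\bigl[R-D(P_{\bar X}Q_{\bar Y|\bar X}\|P_{\bar X}Q_{\bar Y})\bigr]_+ + O\!\left(\tfrac{\log n}{n}\right),
\end{align*}
uniformly over $Q_{\bar Y|\bar X}\in\cP_n(\cY|P_{\bar X})$.

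\textbf{Step 3 (max-to-min and limit).} Taking the maximum over $Q_{\bar Y|\bar X}\in\cP_n(\cY|P_{\bar X})$ and then $-\tfrac{1}{n}\log$ turns it into a minimum over the same set; the cardinality of $\cP_n(\cY|P_{\bar X})$ is polynomial in $n$ (as a subset of joint $n$-types on $\cX\times \cY$), so this contributes another $O(\log n/n)$ vanishing term. Letting $n\to\infty$ along $n\in m\bbN$ yields
\begin{align*}
&\lim_{n\to\infty}\min_{Q_{\bar Y|\bar X}\in\cP_n(\cY|P_{\bar X})}\!\bigl\{\,D(P_{\bar X}Q_{\bar Y|\bar X}\|P_{\bar XY})\\
&\qquad \ \ + \tfrac12\bigl[R-D(P_{\bar X}Q_{\bar Y|\bar X}\|P_{\bar X}Q_{\bar Y})\bigr]_+\bigr\} \\
&\quad = \inf_{Q_{\bar Y|\bar X}\in\cP_\infty(\cY|P_{\bar X})}\!\bigl\{\,D(P_{\bar X}Q_{\bar Y|\bar X}\|P_{\bar XY})\\
&\qquad \ \ + \tfrac12\bigl[R-D(P_{\bar X}Q_{\bar Y|\bar X}\|P_{\bar X}Q_{\bar Y})\bigr]_+\bigr\},
\end{align*}
which is the claimed expression. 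The limit is monotone-type in the sense that refining $n$ by integer multiples enlarges the feasible set, and continuity of the objective on the compact set of row-stochastic matrices ensures the infimum equals the limit of the sequence of minima.

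\textbf{Main obstacle.} The only place requiring genuine care is Step~2: one must verify that the constant-composition $\breve p$ really does have $D(P_{\bar X}Q_{\bar Y|\bar X}\|P_{\bar X}Q_{\bar Y})$ (and not, say, $D(Q_{\bar X\bar Y}\|P_XQ_{\bar Y})$ as in the i.i.d.\ case) as its first-order exponent, because the ratio of three type-class sizes in \Cref{lem:cc:exponent of p} must collapse cleanly via $H(P_{\bar X})+H(Q_{\bar Y})-H(Q_{\bar X\bar Y})$. Once this identification is made and cross-checked against the prefactor $\exp(-n\bbE[\imath^{}_{P_{Y|X}}(\bar Y|\bar X)])|\cT^n_{Q_{\bar Y}}|$, the remaining steps are the same polynomial-growth and density-of-types arguments already used for \Cref{lem:infimum over all types}.
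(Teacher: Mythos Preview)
Your proposal is correct and follows essentially the same route as the paper's proof: invoke \Cref{lem:cc:exponent of p} to write $\breve p$ as a ratio of three (unconditional) type-class sizes, apply the Csisz\'ar--K\"orner type-class size bounds, and use the two entropy identities $H(P_{\bar X})+H(Q_{\bar Y})-H(P_{\bar X}Q_{\bar Y|\bar X})=D(P_{\bar X}Q_{\bar Y|\bar X}\|P_{\bar X}Q_{\bar Y})$ and $H(P_{\bar X})+\bbE[\imath^{}_{P_{Y|X}}(\bar Y|\bar X)]-H(P_{\bar X}Q_{\bar Y|\bar X})=D(P_{\bar X}Q_{\bar Y|\bar X}\|P_{\bar XY})$ to obtain a finite-$n$ sandwich with $O(\log n/n)$ slack on each side. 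Two small clean-ups: (i) the type-class bounds you need are for \emph{unconditional} type classes, i.e.\ \cite[Lemma~2.3]{csiszar2011information}, not Lemma~2.5; (ii) in Step~3 the passage $-\tfrac1n\log\max_i a_i=\min_i(-\tfrac1n\log a_i)$ is exact, so no polynomial cardinality factor enters, and the ``monotone-type'' claim is not literally true (the sets $\cP_n(\cY|P_{\bar X})$ are not nested along $n\in m\bbN$); the limit $\lim_n\min_{\cP_n}f=\inf_{\cP_\infty}f$ follows instead from density of types plus continuity of the objective, exactly as you also note.
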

\begin{proof}
 From the definition of $  \breve{\mfrY}(M, P_{\bar X}Q_{\bar Y|\bar X})$ in \eqref{for:cc:remark_end}, \Cref{lem:cc:exponent of p} implies that
  \begin{align}
  &\breve{\mfrY}(M, P_{\bar X}Q_{\bar Y|\bar X}) \nonumber \\ 
  &= \frac{\left|\cT^n_{P_{\bar X}Q_{\bar Y|\bar X}}\right|}{\left|\cT^n_{P_{\bar X}}\right| \left|\cT^n_{Q_{\bar Y}}\right|} \min\left\{2 , M^{-\frac 12} \frac{\left|\cT^n_{P_{\bar X}}\right|^\frac12 \left|\cT^n_{Q_{\bar Y}}\right|^\frac12 }{\left|\cT^n_{P_{\bar X}Q_{\bar Y|\bar X}}\right|^\frac12 }\right\} \text{.}
  \end{align}
Observing
  \begin{align}
    &H(P_{\bar X}) + \bbE[\imath^{}_{P_{Y|X}}(\bar Y|\bar X)]) - H(P_{\bar X}Q_{\bar Y|\bar X}) \nonumber \\
    &\qquad \qquad \qquad \qquad \quad = D(P_{\bar X}Q_{\bar Y|\bar X}\|P_{\bar XY}) \text{,}\\
    &H(P_{\bar X}) + H(Q_{\bar Y}) - H(P_{\bar X}Q_{\bar Y|\bar X}) \nonumber \\ 
    &\qquad \qquad \qquad \qquad \quad = D(P_{\bar X}Q_{\bar Y|\bar X}\|P_{\bar X}Q_{\bar Y} ) \text{,}  
  \end{align}
and using \cite[Lemma 2.3]{csiszar2011information} tailored for the type classes $\cT^n_{P_{\bar X}} $, $\cT^n_{Q_{\bar Y}} $,  and $\cT^n_{P_{\bar X}Q_{\bar Y|\bar X}}$, it follows for any fixed $n$ that 
  \begin{align}
  &\min_{Q_{\bar Y |\bar X} \in \cP_n(\cY|P_{\bar X})} \bigg\{ D(P_{\bar X}Q_{\bar Y|\bar X}\|P_{\bar XY}) \nonumber \\ 
  &\qquad \qquad + \frac12[R-D(P_{\bar X}Q_{\bar Y|\bar X}\|P_{\bar X}Q_{\bar Y})]_+\bigg\} - \breve \iota_n \nonumber  \\
    &\quad \le -\frac1n\log  \max_{Q_{\bar Y |\bar X} \in \cP_n(\cY|P_{\bar X})} \bigg\{ \frac12\left|\cT^n_{Q_{\bar Y}}\right|\label{for:cc:fb:lower} \\
    &\qquad \quad \times \exp(-n\bbE[\imath^{}_{P_{Y|X}}(\bar Y|\bar X)]) \breve{\mfrY}(M, P_{\bar X}Q_{\bar Y|\bar X})\bigg\} \nonumber \\
    &\quad \le \min_{Q_{\bar Y |\bar X} \in \cP_n(\cY|P_{\bar X})} \bigg\{ D(P_{\bar X}Q_{\bar Y|\bar X}\|P_{\bar XY}) \label{for:cc:fb:upper} \\ 
    &\qquad \qquad  + \frac12[R-D(P_{\bar X}Q_{\bar Y|\bar X}\|P_{\bar X}Q_{\bar Y})]_+  \bigg\} + \breve \kappa_n \text{,}  \nonumber
  \end{align}
  where
  \begin{align}
  \breve \iota_n &= \frac{|\cX|(2+|\cY|)}{2n}\log(n+1) - \frac1n \log 2 \text{,} \\
  \breve \kappa_n &= \frac{|\cX|+2|\cX||\cY|+|\cY|}{2n} \log(n+1) + \frac1n \log 2  \text{.}
  \end{align}
 Taking $n\to \infty$ yields the desired result as both $\breve \iota_n \to 0 $, and $\breve \kappa_n \to 0 $.   
\end{proof}
\section{Optimization over Types in the Limit}\label{apdx:optimizations over types in the limit}
\subsection{Optimization over Joint Types in the Limit}\label{apdxsub:optimizations over joint types in the limit}
\begin{lemma}\label{lem:optimization over types in the limit}
Let $\cP_\infty(\cX\times \cY) = \bigcup_{n\in \bbN}\cP_n(\cX \times \cY)$. Then,
\begin{align}
	& \inf_{Q_{\bar X \bar Y} \in \cP_\infty (\cX \times \cY)} \bigg\{ D(Q_{\bar X\bar Y}\|P^{}_{XY}) \nonumber \\
	&\qquad \qquad \qquad \quad + \frac12 \left[R- D(Q_{\bar X\bar Y}\|P^{}_XQ_{\bar Y})\right]_+ \bigg\} \nonumber \\
	&\quad =\min_{Q_{XY} \in \cP(\cX \times \cY)}\bigg\{  D(Q_{XY}\|P_{XY}) \label{minimization over types is no problem in the limit} \\ 
	&\qquad \qquad \qquad \quad  + \frac12 \left[R- D(Q_{XY}\|P_XQ_Y)  \right]_+  \bigg\}  \text{.}  \nonumber
\end{align}
\end{lemma}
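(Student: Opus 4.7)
The plan is to prove the two inequalities separately. The direction $\inf \ge \min$ is immediate: for every $n$, $\cP_n(\cX\times\cY)\subset \cP(\cX\times\cY)$, so the infimum of the functional over $\cP_\infty(\cX\times\cY)=\bigcup_n \cP_n(\cX\times\cY)$ is bounded below by its minimum over $\cP(\cX\times\cY)$. (The minimum on the right side of \eqref{minimization over types is no problem in the limit} is attained because $\cP(\cX\times\cY)$ is compact and the functional is lower semi-continuous, being a sum of an l.s.c.\ relative entropy and a continuous positive-part term.)

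For the reverse direction $\inf\le\min$, let $Q^{*}_{XY}$ be a minimizer on the right side of \eqref{minimization over types is no problem in the limit}. First I would observe that the minimum is finite: taking the feasible point $Q_{XY}=P_{XY}$ gives the value $\tfrac12[R-I(P_X,P_{Y|X})]_+$, which is finite. Hence $D(Q^{*}_{XY}\|P_{XY})<\infty$, so $Q^{*}_{XY}\ll P_{XY}$, and in particular $\mathrm{supp}(Q^{*}_{XY})\subset\mathrm{supp}(P_{XY})$, which also implies $\mathrm{supp}(Q^{*}_{X})\subset\mathrm{supp}(P_X)$ and the denominator $P_X(x)Q^{*}_Y(y)$ is strictly positive on $\mathrm{supp}(Q^{*}_{XY})$.

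Next I would approximate $Q^{*}_{XY}$ by $n$-types. A standard rounding argument (cf.\ \cite[Lemma~2.1.2]{csiszar2011information}) produces, for each $n$ sufficiently large, a joint $n$-type $Q_{n,\bar X\bar Y}\in\cP_n(\cX\times\cY)$ with $\mathrm{supp}(Q_{n,\bar X\bar Y})\subset\mathrm{supp}(Q^{*}_{XY})$ and $\|Q_{n,\bar X\bar Y}-Q^{*}_{XY}\|_1=\cO(1/n)$; consequently also $Q_{n,\bar Y}\to Q^{*}_Y$. On the finite set $\{Q\in\cP(\cX\times\cY):\mathrm{supp}(Q)\subset\mathrm{supp}(Q^{*}_{XY})\}$, every entry of the ratios appearing in $D(Q\|P_{XY})$ and $D(Q\|P_X Q_Y)$ stays bounded away from $0$ and $\infty$, so both divergences, and therefore the whole functional in \eqref{minimization over types is no problem in the limit}, are continuous in $Q$. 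Passing to the limit gives
\begin{align}
\inf_{Q_{\bar X\bar Y}\in\cP_\infty(\cX\times\cY)}\!\!\Bigl\{D(Q_{\bar X\bar Y}\|P_{XY})+\tfrac12[R-D(Q_{\bar X\bar Y}\|P_X Q_{\bar Y})]_+\Bigr\}\le \lim_{n\to\infty} F(Q_{n,\bar X\bar Y})=F(Q^{*}_{XY}),
\end{align}
where $F$ denotes the minimized functional, establishing $\inf\le\min$ and hence equality.

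The only subtle point—and what I expect to be the main technical obstacle—is justifying continuity of the functional at $Q^{*}_{XY}$, since in general $Q\mapsto D(Q\|P_{XY})$ is only lower semi-continuous and can jump at the boundary where absolute continuity fails. This is handled exactly by the support-matching step above: because we choose the approximating $n$-types to be supported on $\mathrm{supp}(Q^{*}_{XY})\subset\mathrm{supp}(P_{XY})$ and eventually to put positive mass on every atom of $Q^{*}_{XY}$, the logarithms $\log(Q_n/P_{XY})$ and $\log(Q_n/P_X Q_{n,\bar Y})$ stay uniformly bounded on this finite support, turning the divergences into continuous functions of $Q$ and the marginal $Q_{\bar Y}$ on the relevant domain.
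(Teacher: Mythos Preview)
Your proposal is correct and follows essentially the same approach as the paper: both argue $\ge$ by inclusion, then for $\le$ take a minimizer $Q^{*}_{XY}\ll P_{XY}$, approximate it by a sequence of types, and pass to the limit using continuity of the divergences at $Q^{*}_{XY}$. Your support-matching construction is exactly the paper's ``restrict to a smaller simplex'' footnote made explicit; the only minor imprecision is the phrase ``bounded away from $0$ and $\infty$'' (the ratios can approach $0$, but $q\log(q/p)\to 0$ there, so continuity still holds).
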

\begin{proof}
First of all, since $\cP_n(\cX \times \cY) \subset \cP(\cX \times \cY)$ for all $n\in \bbN$, $\ge$ is trivial in \eqref{minimization over types is no problem in the limit}. To establish $\le$, let $Q^\star_{XY} $ be the minimizer in the right side of \eqref{minimization over types is no problem in the limit}. We may assume that $Q^\star_{XY} \ll P_{XY} $, otherwise $D(Q^\star_{XY} \| P^{}_{XY}) = +\infty $ which contradicts the minimality of $Q^\star_{XY}$. Since for every $Q_{XY} \in \cP(\cX \times \cY) $ either $Q_{XY} \in \cP_\infty(\cX\times \cY)$ or $Q_{XY}$ is a limit point of $\cP_\infty(\cX\times\cY)$, it follows that $\cP_\infty(\cX\times \cY)$ is dense in $ \cP(\cX \times \cY)$. Hence, one can find a sequence of types $\big\{{Q^{\star}_{\bar X\bar Y}}_{[k]} \in \cP_\infty(\cX \times \cY)\big\}_{k\in \bbN} $ such that
\begin{align}
	\lim_{k\to \infty} \left\|Q^\star_{XY} - {Q^{\star}_{\bar X\bar Y}}_{[k]}\right\|^{}_1 &= 0 \label{limiting optimizer} \text{.}
\end{align}
We may assume ${Q^{\star}_{\bar X\bar Y}}_{[k]} \ll  P^{}_{XY} $ as well. Note that, for all $k\in \bbN$,
\begin{align}
&\inf_{Q_{\bar X \bar Y} \in \cP_\infty(\cX \times \cY)} \bigg\{ D(Q_{\bar X \bar Y } \| P^{}_{XY}  ) \nonumber \\ 
& \qquad \qquad \qquad \quad + \frac12 \left[R- D(Q_{\bar X \bar Y } \| P^{}_X Q_{\bar Y} )  \right]_+  \bigg\} \nonumber \\ 
&\qquad  \le D({Q^{\star}_{\bar X\bar Y}}_{[k]} \| P^{}_{XY}  ) \label{ineq:zz} \\ 
&\qquad \qquad \qquad \quad + \frac12 \big[R- D({Q^{\star}_{\bar X\bar Y}}_{[k]} \| P^{}_X {Q^\star_{\bar Y}}_{[k]})  \big]_+  \text{,} \nonumber
\end{align}
where 
\begin{align}
	{Q^\star_{\bar Y}}_{[k]}(y) &= \sum_{x\in \cX} {Q^{\star}_{\bar X\bar Y}}_{[k]}(x,y) \text{.}
\end{align}
Since both $D(Q_{XY} \| P_{XY})$ and $D(Q_{XY}\|P_XQ_Y)$ are convex functions of $Q_{XY}$ on the finite dimensional space $\cP(\cX\times \cY)$, they are both continuous in $Q_{XY} $ throughout the relative interior of $\cP(\cX\times \cY)$, see, e.g., \cite[Section 7.9]{luenberger1997optimization}. Therefore,\footnote{In order for \eqref{strme up} to hold, $Q^{\star}_{XY}$ needs to be in the relative interior of $\cP(\cX\times \cY)$. If $Q^{\star}_{XY}$ is on the boundary of $\cP(\cX\times\cY)$, restricting attention to a smaller simplex suffices.\label{ftnt:continuity_of_relent}}
\begin{align}
	&\inf_{Q_{\bar X \bar Y} \in \cP_\infty(\cX \times \cY)} \bigg\{  D(Q_{\bar X\bar Y}\|P^{}_{XY}) \nonumber \\ 
	&\qquad \qquad \qquad  + \frac12 \left[R- D(Q_{\bar X\bar Y}\|P^{}_XQ_{\bar Y})\right]_+ \bigg\} \nonumber  \\ 
	&\le \lim_{k\to \infty} \bigg\{D({Q^{\star}_{\bar X\bar Y}}_{[k]} \| P^{}_{XY} ) \label{from:ineq in the zz} \\ 
	&\qquad \qquad \qquad  + \frac12 \big[R- D({Q^{\star}_{\bar X\bar Y}}_{[k]} \| P^{}_X {Q^\star_{\bar Y}}_{[k]})  \big]_+\bigg\} \nonumber  \\
	& = D(Q^{\star}_{XY} \| P^{}_{XY}  )  + \frac12 \big[R- D(Q^{\star}_{XY} \| P^{}_XQ^\star_{Y})  \big]_+ \text{,} \label{strme up}
\end{align}
where \eqref{from:ineq in the zz} is due to \eqref{ineq:zz}; and in \eqref{strme up} $Q^\star_{Y}(y) = \sum_{x\in \cX} Q^{\star}_{XY} (x,y) $. 
\end{proof}
\subsection{Optimization over Conditional Types in the Limit}\label{apdxsub:cc:optimizations over conditional types in the limit}
\begin{lemma}\label{lem:cc:optimization over conditional types in the limit}
 Given an $m$-type $P_{\bar X}\in  \cP_m(\cX) $, let $  \cP_\infty( \cY|P_{\bar X}) = \bigcup_{n\in m\bbN} \cP_n(\cY|P_{\bar X})$. Then,
  \begin{align}
    &\inf_{Q_{\bar Y|\bar X} \in \cP_\infty(\cY|P_{\bar X})} \bigg\{ D(P_{\bar X}Q_{\bar Y|\bar X}\|P_{\bar XY}) \nonumber \\ 
    &\qquad \qquad \qquad  + \frac12\left[R-D(P_{\bar X}Q_{\bar Y| \bar X }\|P_{\bar X}Q_{\bar Y})\right]_+\bigg\} \nonumber \\
    &\quad =  \min_{Q_{Y|X} \in \cP(\cY|\cX)}  \bigg\{ D(P_{\bar X}Q_{Y|X}\|P_{\bar XY})\label{eqn:lem:opt over condt in the lim} \\ 
    &\qquad \qquad \qquad  + \frac12\left[R-D(P_{\bar X}Q_{Y|X}\|P_{\bar X}Q_Y)\right]_+\bigg\} \text{,}   \nonumber
 \end{align}
  where $\cP(\cY|\cX)$ denotes the set of all random transformations from $\cX $ to $\cY$, $P_{\bar XY} = P_{\bar X} P_{Y|X}$, and $Q_{Y}$ is such that $P_{\bar X} \to Q_{Y|X} \to Q_{Y} $. 
\end{lemma}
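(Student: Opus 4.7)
The $(\geq)$ direction is immediate: for any conditional $n$-type $Q_{\bar Y|\bar X}\in \cP_n(\cY|P_{\bar X})$ (extended arbitrarily on $\{a\in\cX: P_{\bar X}(a)=0\}$, which affects neither $P_{\bar X}Q_{\bar Y|\bar X}$ nor the relative entropies in question), one has $Q_{\bar Y|\bar X}\in \cP(\cY|\cX)$, so the infimum over the smaller set $\cP_\infty(\cY|P_{\bar X})$ is at least the minimum over $\cP(\cY|\cX)$.

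For the $(\leq)$ direction, the plan is to mimic the proof of \Cref{lem:optimization over types in the limit}: take a minimizer $Q^\star_{Y|X}\in \cP(\cY|\cX)$ on the right side of \eqref{eqn:lem:opt over condt in the lim}, assume without loss of generality that $P_{\bar X}Q^\star_{Y|X}\ll P_{\bar XY}$ (else $D(P_{\bar X}Q^\star_{Y|X}\|P_{\bar XY})=+\infty$, contradicting minimality), and construct a sequence of conditional types $\{{Q^{\star}_{\bar Y|\bar X}}_{[k]}\in \cP_{km}(\cY|P_{\bar X})\}_{k\in\bbN}$ with
\begin{align}
    \lim_{k\to\infty} \left\|P_{\bar X}Q^\star_{Y|X} - P_{\bar X}{Q^{\star}_{\bar Y|\bar X}}_{[k]}\right\|^{}_1 = 0.
\end{align}
The key ingredient here is that, since $P_{\bar X}$ is an $m$-type, for each $a\in \cX$ with $P_{\bar X}(a)>0$ and for $n = km$, the quantity $nP_{\bar X}(a) = km P_{\bar X}(a)$ is a positive integer; by \Cref{rem:conditional types}, any element of $\cP_n(\cY|P_{\bar X})$ is specified by choosing, for each such $a$, a $(km P_{\bar X}(a))$-type on $\cY$. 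As $k\to\infty$ the set of $(km P_{\bar X}(a))$-types is dense in $\cP(\cY)$, so one may choose ${Q^{\star}_{\bar Y|\bar X}}_{[k]}(\cdot|a)$ arbitrarily close in $\ell_1$ to $Q^\star_{Y|X}(\cdot|a)$ for every $a$ in the support of $P_{\bar X}$, and this yields the desired $\ell_1$ approximation of $P_{\bar X}Q^\star_{Y|X}$.

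Once the approximating sequence is in hand, convexity (and hence continuity on the relative interior of $\cP(\cX\times\cY;P_{\bar X}\times \cdot)$) of both $Q_{Y|X}\mapsto D(P_{\bar X}Q_{Y|X}\|P_{\bar XY})$ and $Q_{Y|X}\mapsto D(P_{\bar X}Q_{Y|X}\|P_{\bar X}Q_Y)$ gives
\begin{align}
 &\inf_{Q_{\bar Y|\bar X}\in \cP_\infty(\cY|P_{\bar X})} \bigg\{D(P_{\bar X}Q_{\bar Y|\bar X}\|P_{\bar XY}) \nonumber \\
 &\qquad \qquad \qquad + \tfrac12[R - D(P_{\bar X}Q_{\bar Y|\bar X}\|P_{\bar X}Q_{\bar Y})]_+\bigg\} \nonumber \\
 &\ \le \lim_{k\to\infty} \bigg\{D(P_{\bar X}{Q^{\star}_{\bar Y|\bar X}}_{[k]}\|P_{\bar XY}) \nonumber \\
 &\qquad \qquad \qquad + \tfrac12 [R - D(P_{\bar X}{Q^{\star}_{\bar Y|\bar X}}_{[k]}\|P_{\bar X}{Q^\star_{\bar Y}}_{[k]})]_+\bigg\} \nonumber \\
 &\ = D(P_{\bar X}Q^\star_{Y|X}\|P_{\bar XY}) + \tfrac12 [R - D(P_{\bar X}Q^\star_{Y|X}\|P_{\bar X}Q^\star_Y)]_+,
\end{align}
completing the proof. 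The main obstacle I anticipate is the boundary case in which $Q^\star_{Y|X}$ lies on the relative boundary of $\cP(\cY|\cX)$, where relative entropy can fail to be continuous; as in Footnote \ref{ftnt:continuity_of_relent} of the proof of \Cref{lem:optimization over types in the limit}, this is handled by restricting the whole argument to the smaller simplex determined by the support of $P_{\bar X}Q^\star_{Y|X}$, where continuity on the relative interior applies and the approximating conditional types can be chosen absolutely continuous with respect to $P_{\bar XY}$ on that support.
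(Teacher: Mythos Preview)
Your proposal is correct and follows essentially the same approach as the paper's proof: the $(\geq)$ direction by inclusion, the $(\leq)$ direction by density of conditional types together with continuity of relative entropy on the relative interior, with the boundary case handled by restricting to a smaller simplex. Your density argument is in fact spelled out more explicitly than in the paper, which simply asserts that $\cP_\infty(\cY|P_{\bar X})$ is dense in $\cP(\cY|\cX)$ without the per-letter $(kmP_{\bar X}(a))$-type justification.
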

\begin{proof}
Since $\cP_n(\cY|P_{\bar X}) \subset \cP (\cY|\cX) $ for all $n\in m\bbN$, $\ge$ is trivial in \eqref{eqn:lem:opt over condt in the lim}. To establish $\le $, let $Q^\star_{Y|X} $ be the minimizer in the right side of \eqref{eqn:lem:opt over condt in the lim}. We may assume that $P_{\bar X}Q_{Y|X} \ll P_{\bar XY}$, otherwise $D(P_{\bar X}Q_{Y|X}\|P_{\bar XY}) = +\infty$, which contradicts the minimality of $ Q^\star_{Y|X} $. Since for every probability transition matrix $Q_{Y|X} \in \cP (\cY|\cX)$ either $Q_{Y|X} \in \cP_\infty(\cY|P_{\bar X}) $ or $Q_{Y|X}$ is a limit point of $\cP_\infty(\cY|P_{\bar X})$, it follows that $\cP_\infty(\cY|P_{\bar X}) $ is dense in $\cP(\cY|\cX)$. Hence, we can find a sequence of conditional types $\{{Q^\star_{\bar Y|\bar X}}_{[k]} \in \cP_\infty(\cY|P_{\bar X})\}_{k\in \bbN} $ such that 
\begin{align}
 \lim_{k\to \infty} \left\| {Q^\star_{Y|X}} - {{Q^\star_{\bar Y|\bar X}}}_{[k]} \right\|^{}_1 = 0\text{.}
\end{align}
We may assume $P_{\bar X} {Q^\star_{\bar Y|\bar X}}_{[k]} \ll P_{\bar XY} $ as well. Note that, for all $k\in \bbN $,
\begin{align}
  &\inf_{Q_{\bar Y|\bar X} \in \cP_\infty(\cY|P_{\bar X})} \bigg\{ D(P_{\bar X}Q_{\bar Y|\bar X}\|P_{\bar XY}) \nonumber \\ 
  &\qquad \qquad  + \frac12\left[R-D(P_{\bar X}Q_{\bar Y|\bar X}\|P_{\bar X}Q_{\bar Y})\right]_+\bigg\} \nonumber \\
  &\quad \le D\left(P_{\bar X}{Q^\star_{\bar Y|\bar X}}_{[k]}\big\|P_{\bar XY} \right) \label{eqn:svb opt sec} \\ 
  &\qquad \qquad  + \frac12\left[R-D\left(P_{\bar X}{Q^\star_{\bar Y|\bar X}}_{[k]}\big\|P_{\bar X}{Q^\star_{\bar Y}}_{[k]} \right)\right]_+  \text{,}  \nonumber
\end{align}
where 
\begin{align}
  {Q^\star_{\bar Y}}_{[k]} = \sum_{x\in \cX} P_{\bar X}(x) {Q^\star_{\bar Y|\bar X}}_{[k]}(\cdot|x) \text{.}
\end{align}
Since both $D(Q_{Y|X}\|P_{Y|X}|P_X)$ and $D(Q_{Y|X}\|Q_Y|P_X) $ are convex in $Q_{Y|X}$ on the finite dimensional space of discrete distributions, it follows that they are both continuous in $Q_{Y|X}$ throughout the relative interior of $\cP(\cY|\cX)$, see, e.g., \cite[Section 7.9]{luenberger1997optimization}. Therefore,\footnote{In order for \eqref{eqn:in here cc} to hold, $Q^{\star}_{Y|X}$ needs to be in the relative interior of $\cP(\cY|\cX)$. If $Q^{\star}_{Y|X}$ is on the boundary of $\cP(\cY|\cX)$, restricting attention to a smaller simplex suffices.}
\begin{align}
  &\inf_{Q_{\bar Y|\bar X} \in \cP_\infty(\cY|P_{\bar X})} \bigg\{ D(P_{\bar X}Q_{\bar Y|\bar X}\|P_{\bar XY}) \nonumber \\ 
  &\qquad \qquad \ + \frac12\left[R-D(P_{\bar X}Q_{\bar Y|\bar X}\|P_{\bar X}Q_{\bar Y})\right]_+\bigg\} \nonumber \\
  &\quad \le \lim_{k\to \infty} \bigg\{ D\left(P_{\bar X}{Q^\star_{\bar Y|\bar X}}_{[k]}\big\|P_{\bar XY}\right) \label{frm:sub opt sequ abv} \\ 
  &\qquad \qquad \  + \frac12\left[R-D\left(P_{\bar X}{Q^\star_{\bar Y|\bar X}}_{[k]}\big\|P_{\bar X}{Q^\star_{\bar Y}}_{[k]}\right)\right]_+\bigg\} \nonumber  \\
  &\quad = D(P_{\bar X} Q^\star_{Y|X}\|P_{\bar XY}) \label{eqn:in here cc} \\ 
  &\qquad \qquad \   + \frac12\left[R-D(P_{\bar X}Q^\star_{Y|X}\|P_{\bar X}Q^\star_Y)\right]_+ \nonumber \\
  &\quad =  \min_{Q_{Y|X} \in \cP(\cY|\cX)} \bigg\{ D(P_{\bar X}Q_{Y|X}\|P_{\bar XY}) \\ 
  &\qquad \qquad \   + \frac12\left[R-D(P_{\bar X}Q_{Y|X}\|P_{\bar X}Q_Y)\right]_+\bigg\} \text{,} \nonumber 
\end{align}
where \eqref{frm:sub opt sequ abv} follows from \eqref{eqn:svb opt sec}; and in \eqref{eqn:in here cc} $Q^\star_Y = \sum_{x \in \cX} P_{\bar X}(x) Q^\star_{Y|X}(\cdot|x ) $. 
\end{proof} 
\section{Lemmas for the Dual Representation and Exponent Comparisons}\label{apdx:alternative representation and comparisons}
\begin{lemma}\label{lem:rel ent and func minimization} Let $U \sim P$, $V \sim Q$ and assume that $f$ is a real valued function that has no internal dependence on the distribution $Q$,
\begin{align}
	\min_{Q} \left\{ D(Q\| P) - \bbE\left[f(V)  \right]  \right \} = - \log \bbE [\exp(f(U)) ] \text{,}
\end{align}
	and the minimizing distribution $Q^\ast $ satisfies
	\begin{align}
		\imath^{}_{Q^\ast\|P}(x) = f(x)- \log \bbE[ \exp(f(U) )] \text{.} \label{eqn:lem1_minimizer}
	\end{align}
\end{lemma}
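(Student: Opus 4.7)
The plan is to recognize this as the classical Gibbs variational principle (Donsker--Varadhan) and to verify it by an explicit ``completing the square'' argument on relative entropies. The candidate minimizer is the exponentially tilted distribution
\begin{align}
Q^\ast(x) = \frac{P(x)\exp(f(x))}{\bbE[\exp(f(U))]} \text{,} \nonumber
\end{align}
which is well-defined since $\bbE[\exp(f(U))] = \sum_x P(x)\exp(f(x)) > 0$, and sums to one by construction. One should also note the convention $P(x)=0 \Rightarrow Q^\ast(x)=0$, so that $Q^\ast \ll P$.

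The first step is to observe that, for any $Q\ll P$,
\begin{align}
D(Q\|P) - \bbE[f(V)]
&= \sum_x Q(x)\log\frac{Q(x)}{P(x)\exp(f(x))} \nonumber \\
&= \sum_x Q(x)\log\frac{Q(x)}{Q^\ast(x)\,\bbE[\exp(f(U))]} \nonumber \\
&= D(Q\|Q^\ast) - \log \bbE[\exp(f(U))] \text{,} \nonumber
\end{align}
where the first equality uses the definition of $D(Q\|P)$ and that $V\sim Q$, and the second substitutes $P(x)\exp(f(x)) = Q^\ast(x)\,\bbE[\exp(f(U))]$. If $Q\not\ll P$, the left-hand side equals $+\infty$ and the bound is trivial.

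The second step is to invoke the non-negativity of relative entropy: $D(Q\|Q^\ast)\ge 0$, with equality if and only if $Q=Q^\ast$. This immediately yields
\begin{align}
D(Q\|P) - \bbE[f(V)] \;\ge\; -\log\bbE[\exp(f(U))] \text{,} \nonumber
\end{align}
with equality attained at $Q = Q^\ast$. Hence the minimum equals $-\log\bbE[\exp(f(U))]$, and the minimizer is $Q^\ast$. Finally, the claimed identity \eqref{eqn:lem1_minimizer} follows by direct computation:
\begin{align}
\imath^{}_{Q^\ast\|P}(x) = \log\frac{Q^\ast(x)}{P(x)} = f(x) - \log\bbE[\exp(f(U))] \text{.} \nonumber
\end{align}

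There is no real obstacle here; the only mild subtlety is handling supports, i.e., ensuring that candidate $Q$'s with $Q\not\ll P$ are excluded (vacuously, since they make the objective $+\infty$), and that $Q^\ast$ inherits the support of $P$ so all the logarithms are well-defined. Since $f$ is assumed to have no internal dependence on $Q$, the normalizing constant $\bbE[\exp(f(U))]$ is a fixed scalar, so the completing-the-square step is valid with no hidden circularity.
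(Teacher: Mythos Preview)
Your proof is correct. The paper takes a slightly different route: instead of defining the tilted distribution $Q^\ast$ up front and decomposing the objective as $D(Q\|Q^\ast)-\log\bbE[\exp(f(U))]$, it applies Jensen's inequality directly to $-\log(\cdot)$, writing
\[
\bbE[\imath^{}_{Q\|P}(V)-f(V)]\;\ge\;-\log\bbE[\exp(-\imath^{}_{Q\|P}(V)+f(V))]\;=\;-\log\bbE[\exp(f(U))],
\]
and then reads off the minimizer from the equality condition of Jensen. Your ``completing the square'' argument makes the minimizer explicit from the start and also exhibits the exact slack $D(Q\|Q^\ast)$, which is a bit more informative if one later cares about sensitivity to $Q$; the paper's version is marginally shorter and avoids introducing $Q^\ast$ until the end. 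Both are standard derivations of the Gibbs/Donsker--Varadhan variational formula and are equally valid here.
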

\begin{proof}
Thanks to Jensen's inequality
	\begin{align}
		&D(Q\| P) - \bbE\left[f(V) \right] \nonumber \\ 
		&\qquad = \bbE [\imath^{}_{Q\|P}(V) - f(V) ] \\
		&\qquad \ge -\log \bbE [\exp(-\imath^{}_{Q\|P}(V)+f(V) ) ] \label{eqn:lw_bd_eqlty1} \\
		&\qquad =- \log \bbE [\exp(f(U))] \text{,}
	\end{align}
	where the inequality in \eqref{eqn:lw_bd_eqlty1} holds with equality when $\imath^{}_{Q\|P}(x)=f(x)- \log \bbE[ \exp(f(U) )] $.
\end{proof}
\begin{corollary}\label{cor:conditional minima} Suppose $(X,Y)\sim P_{X|Y}P_Y$, $(\widetilde X,\widetilde Y)\sim Q_{X|Y} Q_Y $, and $(\widehat X, \widetilde Y) \sim P_{X|Y}Q_Y $, then for any $\lambda \in \bbR $
	
	\begin{align}
		&\min_{Q_{X|Y}} \left\{D(Q_{X|Y}\|P_{X|Y}|Q_Y) - \lambda\bbE\left[\imath^{}_{X;Y}(\widetilde X;\widetilde Y)\right]\right\} \nonumber \\ 
		&\qquad = -\bbE\left[\log\bbE\left[\exp\left(\lambda\,\imath^{}_{X;Y}(\widehat X;\widetilde Y)\right)\middle|\widetilde Y\right]\right] \text{,} \label{eqn:lem:con minimization}
	\end{align}
	and for a fixed $y\in \cY$, the minimizing conditional distribution $Q^\ast_{X|Y} $ satisfies 
	\begin{align}
		&\imath^{}_{Q^\ast_{X|Y}\|P^{}_{X|Y}}(x|y)= \lambda\, \imath^{}_{X;Y} (x;y)  \\ 
		&\qquad \qquad - \log\bbE\left[\exp\left(\lambda\,\imath^{}_{X;Y}(\widehat X;\widetilde Y)\right)\middle| \widetilde Y = y\right] \text{.} \nonumber
	\end{align}
\end{corollary}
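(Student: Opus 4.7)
The plan is to reduce the corollary to a pointwise application of Lemma~\ref{lem:rel ent and func minimization}, exploiting the additive decomposition of the conditional relative entropy and conditional expectation over the outcomes of $\widetilde Y \sim Q_Y$. Concretely, I would begin by writing the objective as
\begin{align}
&D(Q_{X|Y}\|P_{X|Y}|Q_Y) - \lambda\, \bbE\big[\imath^{}_{X;Y}(\widetilde X;\widetilde Y)\big] \nonumber \\
&\quad = \sum_{y\in\cY} Q_Y(y) \bigg\{ D\big(Q_{X|Y=y}\big\|P_{X|Y=y}\big) \nonumber \\
&\qquad\qquad\qquad\qquad - \lambda \sum_{x\in\cX} Q_{X|Y=y}(x)\, \imath^{}_{X;Y}(x;y) \bigg\}.
\end{align}
Because $Q_Y(y)\ge 0$ and, for different values of $y$, the conditional distributions $Q_{X|Y=y}$ are functionally independent free parameters, the global minimum is attained by minimizing the bracketed quantity separately for each $y$.

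For each fixed $y\in\cY$ with $Q_Y(y)>0$, I would then apply Lemma~\ref{lem:rel ent and func minimization} with the identifications $P \leftarrow P_{X|Y=y}$, $Q \leftarrow Q_{X|Y=y}$, and $f(x) \leftarrow \lambda\,\imath^{}_{X;Y}(x;y)$ (note that $y$ is frozen, so $f$ has no dependence on $Q$). The lemma gives
\begin{align}
&\min_{Q_{X|Y=y}} \bigg\{ D\big(Q_{X|Y=y}\big\|P_{X|Y=y}\big) \nonumber \\
&\qquad\qquad - \lambda \sum_{x} Q_{X|Y=y}(x)\, \imath^{}_{X;Y}(x;y) \bigg\} \nonumber \\
&\quad = -\log \bbE\big[\exp\big(\lambda\,\imath^{}_{X;Y}(\widehat X;\widetilde Y)\big)\,\big|\,\widetilde Y = y\big],
\end{align}
where $\widehat X \sim P_{X|Y=y}$, matching the definition $(\widehat X,\widetilde Y)\sim P_{X|Y}Q_Y$ in the statement. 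Simultaneously, the lemma pins down the minimizer: the optimal $Q^\ast_{X|Y=y}$ satisfies $\imath^{}_{Q^\ast_{X|Y=y}\|P^{}_{X|Y=y}}(x) = \lambda\,\imath^{}_{X;Y}(x;y) - \log \bbE[\exp(\lambda\,\imath^{}_{X;Y}(\widehat X;\widetilde Y)) \mid \widetilde Y = y]$, which is precisely the claimed characterization.

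Finally, I would reassemble the pieces by taking the $Q_Y$-weighted sum over $y$ of the pointwise minima, which converts the outer sum into the expectation $-\bbE[\log \bbE[\exp(\lambda\,\imath^{}_{X;Y}(\widehat X;\widetilde Y))\mid \widetilde Y]]$, yielding \eqref{eqn:lem:con minimization}. Since the argument is a one-line reduction to Lemma~\ref{lem:rel ent and func minimization} once the pointwise decomposition is made explicit, I do not anticipate any real obstacle; the only small care point is handling $y$ with $Q_Y(y)=0$, for which the bracketed term drops out of the sum and the conditional $Q_{X|Y=y}$ can be chosen arbitrarily without affecting the minimum.
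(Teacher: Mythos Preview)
Your proposal is correct and follows essentially the same approach as the paper: fix $y$, apply Lemma~\ref{lem:rel ent and func minimization} with $P\leftarrow P_{X|Y}(\cdot|y)$, $Q\leftarrow Q_{X|Y}(\cdot|y)$, $f(\cdot)\leftarrow \lambda\,\imath^{}_{X;Y}(\cdot\,;y)$, and then average over $\widetilde Y\sim Q_Y$. Your write-up is in fact slightly more explicit than the paper's (you spell out the additive decomposition and the separability of the minimization, and you address the $Q_Y(y)=0$ case), but the argument is the same.
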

\begin{proof}
	For a fixed $y\in \cY$, an application of \Cref{lem:rel ent and func minimization} with
	\begin{align*}
		P(\cdot) &\leftarrow P_{X|Y}(\cdot |y) \text{,} \\
		Q(\cdot) &\leftarrow Q_{X|Y}(\cdot |y) \text{,} \\
		f(\cdot) &\leftarrow \lambda\, \imath^{}_{X;Y} (\cdot\,;y) 
	\end{align*}
	yields
	\begin{align}
		&\min_{Q_{X|Y}} \bigg\{ D(Q_{X|Y}(\cdot|y)\|P_{X|Y}(\cdot|y)) \nonumber \\ 
		&\qquad \qquad \qquad \qquad \qquad - \lambda\bbE\left[\imath^{}_{X;Y}(\widetilde X; \widetilde Y) \middle| \widetilde Y = y  \right] \bigg\} \nonumber \\ 
		&\quad = - \log \bbE\left[\exp \left(\lambda\, \imath^{}_{X;Y}(\widehat X ; \widetilde Y)\right) \middle| \widetilde Y = y\right] \text{.} \label{eqn:exp both side}
	\end{align}
Taking expectation on both sides of \eqref{eqn:exp both side} with respect to $\widetilde Y \sim Q_Y $ gives \eqref{eqn:lem:con minimization}.
\end{proof}
\begin{corollary}\label{cor:applied lem1 no1} Suppose $(X,Y)\sim P_{X|Y}P_Y $, and $(\widehat X, \widetilde Y) \sim P_{X|Y}Q_Y $, then for any $\lambda \in \bbR $
	\begin{align}
		&\min_{Q_Y} \bigg\{ D(Q_Y \| P_Y) \nonumber \\ 
		&\qquad \qquad   - \frac12 \bbE \left[ \log \bbE \left[ \exp\left(\lambda\, \imath^{}_{X;Y}(\widehat X;\widetilde Y)\right) \middle| \widetilde Y\right]\right] \bigg\} \nonumber \\ 
		&\quad = -\log\bbE\left[{\bbE^\frac12\left[\exp\left(\lambda\,\imath^{}_{X;Y}\left(X;Y\right)\right)\middle|Y\right]}\right] \text{,}
	\end{align}
	and the minimizing distribution $Q^\ast_Y $ satisfies 
	\begin{align}
	&\imath^{}_{Q^{\ast}_Y\| P_Y}(y) \nonumber \\ 
	&\quad =\frac12 \log\bbE\left[ \exp\left( \lambda\, \imath^{}_{X;Y}(X;Y)\right)\middle|Y=y\right] \\
	&\qquad \qquad \quad -\log\bbE \left[\bbE^\frac12 \left[\exp\left(\lambda\, \imath^{}_{X;Y}\left(X;Y\right)\right) \middle|Y\right] \right] \text{.} \nonumber
	\end{align}
\end{corollary}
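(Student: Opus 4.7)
The plan is to derive this corollary as a direct application of Lemma~\ref{lem:rel ent and func minimization}, with the only nontrivial observation being that the inner conditional expectation is functionally independent of the optimization variable $Q_Y$.

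First I would identify the correct instantiation. In the lemma, set $U \leftarrow Y \sim P_Y$, $V \leftarrow \widetilde Y \sim Q_Y$, and define the candidate function
\begin{align}
f(y) = \tfrac{1}{2} \log \bbE\left[\exp\left(\lambda\, \imath^{}_{X;Y}(\widehat X; \widetilde Y)\right) \,\middle|\, \widetilde Y = y\right].
\end{align}
The key step is to verify the hypothesis of the lemma that $f$ has no internal dependence on $Q_Y$. This is true because, since $(\widehat X, \widetilde Y) \sim P_{X|Y} Q_Y$, conditioning on $\widetilde Y = y$ gives $\widehat X \sim P_{X|Y=y}$, so
\begin{align}
\bbE\left[\exp\left(\lambda\, \imath^{}_{X;Y}(\widehat X; \widetilde Y)\right) \,\middle|\, \widetilde Y = y\right] = \sum_{x\in\cX} P_{X|Y}(x|y)\exp\!\left(\lambda\, \imath^{}_{X;Y}(x;y)\right),
\end{align}
which depends only on $P_{X|Y}$ and $y$. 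In particular, this expression equals $\bbE[\exp(\lambda\, \imath^{}_{X;Y}(X;Y))|Y=y]$ when $(X,Y)\sim P_{X|Y}P_Y$.

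Next I would substitute into the lemma. The left side of the lemma becomes the left side of \eqref{eqn:lem:con minimization}'s analog here, namely $\min_{Q_Y}\{D(Q_Y\|P_Y) - \bbE[f(\widetilde Y)]\}$, which matches the quantity being minimized in the corollary. The right side, $-\log\bbE[\exp(f(U))]$ with $U \sim P_Y$, becomes
\begin{align}
-\log \bbE\!\left[\exp\!\left(\tfrac{1}{2}\log \bbE\!\left[\exp(\lambda\, \imath^{}_{X;Y}(X;Y))\,\middle|\, Y\right]\right)\right] = -\log\bbE\!\left[\bbE^{1/2}\!\left[\exp(\lambda\, \imath^{}_{X;Y}(X;Y))\,\middle|\, Y\right]\right],
\end{align}
which is exactly the claimed minimum value. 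For the characterization of the minimizer, the lemma gives $\imath^{}_{Q^\ast_Y\|P_Y}(y) = f(y) - \log\bbE[\exp(f(U))]$, which is precisely the stated formula.

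There is no real obstacle here: the content of the corollary is entirely contained in Lemma~\ref{lem:rel ent and func minimization}, and the only thing to be careful about is recognizing that although the expression $\bbE[\exp(\lambda\, \imath^{}_{X;Y}(\widehat X;\widetilde Y))|\widetilde Y]$ is written in terms of random variables whose joint law involves $Q_Y$, the conditional expectation given $\widetilde Y$ depends on $Q_Y$ only through the conditioning point and not through the conditional distribution of $\widehat X$. Once this is noted, the corollary is essentially a change-of-notation reformulation.
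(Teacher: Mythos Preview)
Your proposal is correct and follows essentially the same approach as the paper: both apply Lemma~\ref{lem:rel ent and func minimization} with $P \leftarrow P_Y$, $Q \leftarrow Q_Y$, and $f(y) = \tfrac12\log\bbE[\exp(\lambda\,\imath^{}_{X;Y}(X;Y))\mid Y=y]$. You are simply more explicit than the paper in verifying that the inner conditional expectation does not depend on $Q_Y$, which is the one point worth checking.
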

\begin{proof}
Let $P_{X|Y}$ be the fixed random transformation from $\cY$ to $\cX$. Applying \Cref{lem:rel ent and func minimization} with 
\begin{align*}
P(\cdot) &\leftarrow P_Y(\cdot) \text{,} \\
Q(\cdot) &\leftarrow Q_Y(\cdot) \text{,} \\
f(\cdot) &\leftarrow \frac12\log\bbE\left[\exp\left(\lambda\,\imath^{}_{X;Y}(X;Y)\right)\middle|Y=\cdot \right]
\end{align*}
gives the desired result. 
\end{proof}
\begin{corollary}\label{cor:lemma for renyi mutual info}
	Suppose $\lambda \in [0, \infty) $, $(X,Y)\sim P_{X|Y}P_Y $, and $(\widetilde X, \widetilde Y)\sim Q_{X|Y}Q_Y$, then
	\begin{align}
			&\min_{Q_{XY}} \left\{ D(Q_{XY} \| P_{XY}) -\lambda\bbE\big[\imath^{}_{X;Y}(\widetilde X;\widetilde Y)\big]  \right\} \nonumber \\ 
			&\quad  = -\lambda D_{1+\lambda}(P_{XY}\|P_X P_Y) \text{,}
	\end{align}
	where $D_\alpha(P\|Q)$ denotes the R\'enyi divergence (see, e.g., \cite{vanErven2014}) of order $\alpha$ between $P$ and $Q$, and the minimizing distribution $Q^\ast_{XY}$ satisfies
	\begin{align}
	&\imath^{}_{Q^\ast_{XY}\|P^{}_{XY}}(x,y) \nonumber \\ 
	&\quad =\lambda\,\imath^{}_{X;Y}(x;y) - \lambda D_{1+\lambda}(P_{XY}\|P_X P_Y).
	\end{align}
\end{corollary}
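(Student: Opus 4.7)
The plan is to apply Lemma~\ref{lem:rel ent and func minimization} directly, with the role of the alphabet played by $\cX\times \cY$. Specifically, I would set $P \leftarrow P_{XY}$, $Q \leftarrow Q_{XY}$, and choose the test function
\begin{align}
f(x,y) = \lambda\, \imath^{}_{X;Y}(x;y) = \lambda \log \frac{P_{Y|X}(y|x)}{P_Y(y)},
\end{align}
which has no internal dependence on $Q_{XY}$. Then $\bbE[f(\widetilde X,\widetilde Y)] = \lambda\, \bbE[\imath^{}_{X;Y}(\widetilde X;\widetilde Y)]$ under $(\widetilde X,\widetilde Y) \sim Q_{XY}$, so the objective on the left side of the claimed identity is exactly of the form $D(Q\|P) - \bbE[f(V)]$ treated by the lemma. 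Invoking the lemma yields
\begin{align}
\min_{Q_{XY}} \left\{ D(Q_{XY}\|P_{XY}) - \lambda\, \bbE\big[\imath^{}_{X;Y}(\widetilde X;\widetilde Y)\big]\right\} = -\log \bbE\big[\exp\big(\lambda\, \imath^{}_{X;Y}(X;Y)\big)\big],
\end{align}
where $(X,Y)\sim P_{XY}$.

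Next, I would identify the right-hand side as a scaled R\'enyi divergence. Using the definition
\begin{align}
D_{1+\lambda}(P_{XY}\|P_XP_Y) = \frac{1}{\lambda}\log \bbE\big[\exp\big(\lambda\, \imath^{}_{P_{XY}\|P_XP_Y}(X,Y)\big)\big]
\end{align}
and recalling that $\imath^{}_{P_{XY}\|P_XP_Y}(x,y) = \imath^{}_{X;Y}(x;y)$, one has $-\log \bbE[\exp(\lambda\,\imath^{}_{X;Y}(X;Y))] = -\lambda D_{1+\lambda}(P_{XY}\|P_XP_Y)$, which gives the claimed value of the minimum. The statement about the minimizer follows immediately from the characterization of $Q^\ast$ in Lemma~\ref{lem:rel ent and func minimization}: substituting $f(x,y) = \lambda\, \imath^{}_{X;Y}(x;y)$ and the just-computed value of $\log \bbE[\exp(f(U))]$ into \eqref{eqn:lem1_minimizer} produces
\begin{align}
\imath^{}_{Q^\ast_{XY}\|P^{}_{XY}}(x,y) = \lambda\, \imath^{}_{X;Y}(x;y) - \lambda D_{1+\lambda}(P_{XY}\|P_XP_Y).
\end{align}

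There is no real obstacle here: the corollary is essentially a repackaging of Lemma~\ref{lem:rel ent and func minimization} in which the free function $f$ is taken to be a multiple of the information density, followed by recognition of the resulting log-moment generating function as a R\'enyi divergence. The only minor care needed is the assumption $\lambda \in [0,\infty)$, which guarantees that the expectation $\bbE[\exp(\lambda\, \imath^{}_{X;Y}(X;Y))]$ is finite (or equals $+\infty$ harmlessly for the minimization when $P_{XY}$ is finitely supported), so that the logarithm and the R\'enyi divergence are well defined and the minimizing $Q^\ast_{XY}$ produced by the lemma is indeed a valid probability distribution absolutely continuous with respect to $P_{XY}$.
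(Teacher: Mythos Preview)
Your proposal is correct and follows exactly the same approach as the paper: the paper's proof is simply ``Immediate consequence of \Cref{lem:rel ent and func minimization} with $Q \leftarrow Q_{XY}$, $P \leftarrow P_{XY}$, $f(x,y) \leftarrow \lambda\, \imath^{}_{X;Y}(x;y)$,'' which is precisely the substitution you carry out, and your subsequent identification of $-\log \bbE[\exp(\lambda\,\imath^{}_{X;Y}(X;Y))]$ with $-\lambda D_{1+\lambda}(P_{XY}\|P_XP_Y)$ is the intended reading.
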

\begin{proof}
	Immediate consequence of \Cref{lem:rel ent and func minimization} with 
	\begin{align*}
		Q &\leftarrow Q_{XY} \text{,} \\
		P &\leftarrow P_{XY} \text{,} \\
		f(x,y) &\leftarrow \lambda\, \imath^{}_{X;Y}(x;y) \text{.}
	\end{align*}
\end{proof}
\begin{corollary} \label{cor:lemma for almost renyi mutual info}
Suppose $\lambda' \in [0, \infty) $, $(X,Y)\sim P_{X|Y}P_Y $, and $(\widetilde X, \widetilde Y)\sim Q_{X|Y}Q_Y$, then
	\begin{align}
		&\min_{Q_{XY}} \bigg\{ D(Q_{XY}\|P_{XY} ) - \frac12 D(Q_{XY}\|P_{X|Y}Q_Y) \nonumber \\
		&\qquad \qquad \qquad \qquad \qquad \qquad  -\frac{\lambda'}{2} \bbE \big[\imath^{}_{X;Y}(\widetilde X;\widetilde Y)\big]\bigg\} \nonumber \\ 
		&\quad = -\frac{\lambda'}{2} \widetilde D_{1+\lambda'}(P_{XY}\|P_X P_Y) \text{,}
	\end{align}
	where $\widetilde D_{1+\lambda'}(P_{XY}\|P_X P_Y)$ is defined in \eqref{eqn:def:almost_renyi_mutual_info}, and the minimizing distribution $Q^\ast_{XY} $ satisfies
	\begin{align}
	&\imath^{}_{Q^\ast_{XY}\|P^{}_{XY}}(x,y) = \lambda\, \imath^{}_{X;Y} (x;y)  \\ 
	&\qquad + \frac12 \log\bbE\left[\exp\left(\lambda\, \imath^{}_{X;Y}(X;Y)\right) \middle|Y=y\right]  \nonumber \\ 
	&\qquad \qquad  - \log\bbE\left[\bbE^\frac12\left[\exp\left(\lambda\, \imath^{}_{X;Y}\left(X;Y\right)\right)\middle|Y \right]\right] \nonumber \\ 
	&\qquad \qquad \quad \, - \log \bbE\left[ \exp\left(\lambda\, \imath^{}_{X;Y}(\widehat X; \widetilde Y)\right)\middle|\widetilde Y = y\right] \text{,} \nonumber
	\end{align}
	with $(\widehat X, \widetilde Y) \sim P_{X|Y}Q_Y $.
\end{corollary}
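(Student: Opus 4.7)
The plan is to reduce this corollary to the two preceding corollaries (Corollary~\ref{cor:conditional minima} and Corollary~\ref{cor:applied lem1 no1}) by decomposing the joint $Q_{XY}$ as $Q_{X|Y}Q_Y$ and then optimizing in two stages: first over the conditional $Q_{X|Y}$ with $Q_Y$ held fixed, and then over the marginal $Q_Y$.

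First I would use the chain rule for relative entropy, namely $D(Q_{XY}\|P_{XY}) = D(Q_Y\|P_Y) + D(Q_{X|Y}\|P_{X|Y}|Q_Y)$ and $D(Q_{XY}\|P_{X|Y}Q_Y) = D(Q_{X|Y}\|P_{X|Y}|Q_Y)$, to rewrite the objective as
\begin{align}
D(Q_Y\|P_Y) + \tfrac12\bigl\{D(Q_{X|Y}\|P_{X|Y}|Q_Y) - \lambda'\,\bbE[\imath^{}_{X;Y}(\widetilde X;\widetilde Y)]\bigr\}.
\end{align}
Since only the bracketed term depends on $Q_{X|Y}$, I would apply Corollary~\ref{cor:conditional minima} (with $\lambda\leftarrow\lambda'$) to evaluate the inner minimization, obtaining
\begin{align}
\min_{Q_{X|Y}}\bigl\{D(Q_{X|Y}\|P_{X|Y}|Q_Y) - \lambda'\,\bbE[\imath^{}_{X;Y}(\widetilde X;\widetilde Y)]\bigr\} = -\bbE\bigl[\log \bbE[\exp(\lambda'\,\imath^{}_{X;Y}(\widehat X;\widetilde Y))|\widetilde Y]\bigr],
\end{align}
where $(\widehat X,\widetilde Y)\sim P_{X|Y}Q_Y$.

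Next, I would substitute this back and apply Corollary~\ref{cor:applied lem1 no1} (again with $\lambda\leftarrow\lambda'$) to the resulting outer minimization over $Q_Y$, which directly gives
\begin{align}
\min_{Q_Y}\bigl\{D(Q_Y\|P_Y) - \tfrac12\bbE\bigl[\log\bbE[\exp(\lambda'\,\imath^{}_{X;Y}(\widehat X;\widetilde Y))|\widetilde Y]\bigr]\bigr\} = -\log\bbE\bigl[\bbE^{1/2}[\exp(\lambda'\,\imath^{}_{X;Y}(X;Y))|Y]\bigr].
\end{align}
Recognizing the right-hand side as $-\tfrac{\lambda'}{2}\widetilde D_{1+\lambda'}(P_{XY}\|P_XP_Y)$ via the definition in \eqref{eqn:def:almost_renyi_mutual_info} completes the proof of the value of the minimum.

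For the minimizer, I would compose the two minimizers provided by Corollaries~\ref{cor:conditional minima} and~\ref{cor:applied lem1 no1}. The optimal $Q^\ast_Y$ is given by the formula from Corollary~\ref{cor:applied lem1 no1}, and the optimal $Q^\ast_{X|Y}$ given that $Q^\ast_Y$ comes from Corollary~\ref{cor:conditional minima}; adding the two relative informations $\imath^{}_{Q^\ast_Y\|P_Y}(y)$ and $\imath^{}_{Q^\ast_{X|Y}\|P_{X|Y}}(x|y)$ produces $\imath^{}_{Q^\ast_{XY}\|P_{XY}}(x,y)$, which matches the claimed expression. No new obstacles arise — all the machinery is in the two invoked corollaries — so the proof is essentially a bookkeeping exercise of chaining them together; the only subtle point is ensuring the $\frac{1}{2}$ factor in the original objective is correctly absorbed as a halving factor in both stages of the minimization.
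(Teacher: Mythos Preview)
Your proposal is correct and follows essentially the same approach as the paper: decompose via the chain rule, apply Corollary~\ref{cor:conditional minima} to the inner minimization over $Q_{X|Y}$, then Corollary~\ref{cor:applied lem1 no1} to the outer minimization over $Q_Y$, and identify the result with the definition in \eqref{eqn:def:almost_renyi_mutual_info}. The paper does not explicitly spell out the composition of minimizers as you do, but the argument is the same.
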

\begin{proof}
	Observe that 
\begin{align}
&\min_{Q_{XY}} \bigg\{ D(Q_{XY} \| P_{XY} ) - \frac12 D(Q_{XY}\|P_{X|Y}Q_Y) \nonumber \\ 
&\qquad \qquad \qquad \qquad \qquad \qquad -\frac{\lambda'}{2} \bbE \left[ \imath^{}_{X;Y}( \widetilde X; \widetilde Y)\right]\bigg\} \nonumber \\
& = \min_{Q_{XY}} \bigg\{ D(Q_Y\|P_Y)+\frac12 D(Q_{X|Y}\|P_{X|Y}|Q_Y) \\ 
&\qquad \qquad \qquad \qquad \qquad \qquad -\frac{\lambda'}{2} \bbE \left[\imath^{}_{X;Y}(\widetilde X; \widetilde Y)\right]\bigg\} \nonumber \\
& = \min_{Q_{\bar Y}} \bigg\{ D(Q_Y \| P_Y)  \\ 
&\qquad \qquad \qquad + \frac12 \min_{Q_{X|Y}} \Big\{ D(Q_{X|Y}\|P_{X|Y}|Q_Y) \nonumber \\ 
&\qquad \qquad \qquad \qquad \qquad \quad \ \ - \lambda' \bbE \left[\imath^{}_{X;Y}(\widetilde X; \widetilde Y) \right] \Big\} \bigg\} \nonumber \\
& = \min_{Q_Y} \bigg\{ D(Q_Y \| P_Y) \label{from:lem 3} \\ 
&\qquad \qquad \, -\frac12 \bbE\left[\log\bbE\left[\exp\left(\lambda'\, \imath^{}_{X;Y}(\widehat X;\widetilde Y)\right )\middle| \widetilde Y\right]\right]\bigg\} \nonumber  \\
& = -\log\bbE\left[\bbE^\frac12 \left[\exp\left(\lambda'\, \imath^{}_{X;Y}\left(X;Y\right)\right)\middle|Y\right]\right] \label{from:lem 2} \\
& = -\frac{\lambda'}{2} \widetilde D_{1+\lambda'}(P_{XY}\|P_X P_Y) \text{,} \label{from:def:almost_renyi info}
\end{align}
	where \eqref{from:lem 3} is the result of \Cref{cor:conditional minima}; \eqref{from:lem 2} is the result of \Cref{cor:applied lem1 no1}; and \eqref{from:def:almost_renyi info} is the definition of $\widetilde D_{1+\lambda'}(P_{XY}\|P_X P_Y)$.
\end{proof}
\begin{corollary}\label{cor:for aleph dual}
  Suppose $(X,Y)\sim P_X P_{Y|X} $, and $(X, \widetilde Y)\sim P_X Q_{Y|X} $, then for any $\lambda \in \bbR $
  \begin{align}
    &\min_{Q_{Y|X}}\Big\{ D(Q_{Y|X}\|P_{Y|X}|P_X) \nonumber \\ 
    &\qquad \qquad \qquad \qquad \qquad  -  \lambda \bbE\left[\imath^{}_{P_{XY}\|P_XS_Y }(X, \widetilde Y)\right]   \Big\} \nonumber \\ 
    & = -\bbE\left[\log \bbE\left[ \exp\left(\lambda\, \imath^{}_{P_{XY}\|P_XS_Y}(X, Y)\right)\middle|X \right]\right ]  \label{eqn:cc:cor for dual} \\
    & = -\lambda \bbE[ D_{1+\lambda}(P_{Y|X}(\cdot|X)\|S_Y)] \text{,}
  \end{align}
  and for a fixed $x \in \cX$, the minimizing conditional distribution $Q_{Y|X}^* $ satisfies 
  \begin{align}
    &\imath^{}_{Q_{Y|X}^*\| P^{}_{Y|X}}(y|x) = \lambda\, \imath^{}_{P_{XY}\|P_XS_Y}(x,y)  \\ 
    &\qquad \quad - \log\bbE\left[ \exp\left(\lambda\, \imath^{}_{P_{XY}\|P_XS_Y}(X, Y)\right)\middle|X = x\right] \text{.} \nonumber
  \end{align}
\end{corollary}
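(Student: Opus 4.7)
The plan is to follow the same template used in the proof of \Cref{cor:conditional minima}, i.e., fix $x\in\cX$ and apply \Cref{lem:rel ent and func minimization} pointwise, then take expectation over $X\sim P_X$. Concretely, for each fixed $x\in\cX$, invoke \Cref{lem:rel ent and func minimization} with $P(\cdot)\leftarrow P_{Y|X}(\cdot|x)$, $Q(\cdot)\leftarrow Q_{Y|X}(\cdot|x)$, and $f(\cdot)\leftarrow \lambda\,\imath^{}_{P_{XY}\|P_XS_Y}(x,\cdot)$. Since $\imath^{}_{P_{XY}\|P_XS_Y}$ has no internal dependence on $Q_{Y|X}$ (it only involves $P_X$, $P_{Y|X}$ and $S_Y$), $f$ is admissible. \Cref{lem:rel ent and func minimization} then yields
\begin{align}
&\min_{Q_{Y|X}(\cdot|x)}\Big\{D(Q_{Y|X}(\cdot|x)\|P_{Y|X}(\cdot|x))-\lambda\,\bbE[\imath^{}_{P_{XY}\|P_XS_Y}(x,\widetilde Y)|\widetilde X=x]\Big\}\nonumber\\
&\qquad=-\log\bbE[\exp(\lambda\,\imath^{}_{P_{XY}\|P_XS_Y}(x,Y))|X=x], \label{eqn:plan-pointwise}
\end{align}
with the minimizer characterized (via \eqref{eqn:lem1_minimizer}) exactly as stated in the corollary.

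Next, taking expectation of both sides of \eqref{eqn:plan-pointwise} with respect to $X\sim P_X$, and noting that $D(Q_{Y|X}\|P_{Y|X}|P_X)=\bbE[D(Q_{Y|X}(\cdot|X)\|P_{Y|X}(\cdot|X))]$ by definition of conditional relative entropy, delivers the identity \eqref{eqn:cc:cor for dual}. The interchange of minimization and expectation is justified because the pointwise minimization is decoupled across $x$, so a minimizer may be chosen independently on each conditional slice $Q_{Y|X}(\cdot|x)$.

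Finally, for the last equality in terms of the R\'enyi divergence, expand the information density as $\imath^{}_{P_{XY}\|P_XS_Y}(x,y)=\log\frac{P_{Y|X}(y|x)}{S_Y(y)}$, so that
\begin{align}
\bbE[\exp(\lambda\,\imath^{}_{P_{XY}\|P_XS_Y}(X,Y))|X=x]=\sum_{y\in\cY}P_{Y|X}(y|x)\left(\frac{P_{Y|X}(y|x)}{S_Y(y)}\right)^{\lambda}.
\end{align}
Taking $\frac{1}{\lambda}\log$ of the right-hand side and comparing with \eqref{eqn:de:renyi_mutual_info} (with the roles $P\leftarrow P_{Y|X}(\cdot|x)$, $Q\leftarrow S_Y$) identifies this quantity with $D_{1+\lambda}(P_{Y|X}(\cdot|x)\|S_Y)$. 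Averaging over $X\sim P_X$ produces the desired representation $-\lambda\,\bbE[D_{1+\lambda}(P_{Y|X}(\cdot|X)\|S_Y)]$.

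No step presents a real obstacle: the corollary is essentially \Cref{cor:conditional minima} with the random transformation from $\cY$ to $\cX$ replaced by $P_{Y|X}$ acting on $\cX$, and the function $f$ replaced by $\lambda\,\imath^{}_{P_{XY}\|P_XS_Y}$. The only minor subtlety is that $\imath^{}_{P_{XY}\|P_XS_Y}$ is defined through the fixed distributions $P_{XY}$ and $S_Y$ (independent of $Q_{Y|X}$), which is precisely what is needed for \Cref{lem:rel ent and func minimization} to apply with $f$ not depending on $Q$.
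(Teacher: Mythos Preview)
Your proposal is correct and follows essentially the same route as the paper: fix $x$, apply \Cref{lem:rel ent and func minimization} with $P\leftarrow P_{Y|X}(\cdot|x)$, $Q\leftarrow Q_{Y|X}(\cdot|x)$, $f\leftarrow \lambda\,\imath^{}_{P_{XY}\|P_XS_Y}(x,\cdot)$, then average over $X\sim P_X$. You additionally spell out the decoupling that justifies swapping the minimum and the expectation, and you derive the R\'enyi-divergence identity explicitly, both of which the paper leaves implicit.
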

\begin{proof}
  For a fixed $x\in \cX$, an application of \Cref{lem:rel ent and func minimization} with 
  \begin{align*}
		P(\cdot) &\leftarrow P_{Y|X}(\cdot |x) \text{,} \\
		Q(\cdot) &\leftarrow Q_{Y|X}(\cdot |x) \text{,} \\
		f(\cdot) &\leftarrow \lambda\, \imath^{}_{P_{XY}\|P_XS_Y} (x, \cdot) 
	\end{align*}
	yields
\begin{align}
&\min_{Q_{Y|X}}\Big\{ D(Q_{Y|X}(\cdot|x)\|P_{Y|X}(\cdot|x)) \nonumber \\ 
&\qquad \qquad \qquad  - \lambda \bbE\left[\imath^{}_{P_{XY}\|P_XS_Y }( X, \widetilde Y) \middle| X = x \right] \Big\}\nonumber \\ 
&\quad = -\log \bbE\Big[\exp\Big(\lambda\, \imath^{}_{P_{XY}\|P_XS_Y}(X,Y)\Big)\Big|X=x\Big] \text{.} \label{cc:take expectation} 
\end{align}
Taking expectation on both sides of \eqref{cc:take expectation} with respect to $X \sim P_X $ gives \eqref{eqn:cc:cor for dual}.
\end{proof}
\section{Finite Block-length Results}\label{apdx:finite block-length results}
Using simple algebra, the following finite block-length bounds can be deduced from the analysis provided in \Cref{sec:lower bound,sec:upper bound}.
\begin{theorem}\label{thm:finite block-length exact scl}
Fix $n \in \bbN$. Suppose $P_{X^n}\to P_{Y^n|X^n} \to P_{Y^n}$, where the $n$-shot stationary memoryless channel $P_{Y^n|X^n}$ is non-degenerate, i.e., $P_{Y^n|X^n} \neq P_{Y^n}$. For any $R> I(P_X, P_{Y|X})$, let $M=\exp(nR)$, and denote by $\induced (y^n)$ the induced output distribution when a uniformly chosen codeword from the random codebook $\scrC_M^n$ is transmitted through the channel, see \Cref{def:random codebook,def:induced output distribution}. Then,
	\begin{align}
	&\min_{Q_{\bar X\bar Y}\in \cP_n(\cX\times\cY)} \bigg\{ D(Q_{\bar X\bar Y}\|P_{XY}) \nonumber \\ 
	&\qquad \qquad \qquad + \frac12 \left[R-D(Q_{\bar X\bar Y}\|P_XQ_{\bar Y}) \right]_{+}\bigg\}  - \kappa_n \nonumber  \\
		&\quad\le -\frac{1}{n} \log \bbE \left[\left\| \induced - P_{Y^n} \right\|^{}_1\right] \label{eqn:thm:fb:lower bd} \\ 
		&\quad\le  \min_{Q_{\bar X\bar Y}\in \cP_n(\cX\times\cY)} \bigg\{ D(Q_{\bar X\bar Y}\|P_{XY}) \label{eqn:thm:fb:upper bd} \\  
		&\qquad \qquad \qquad  + \frac12 \left[R-D(Q_{\bar X\bar Y}\|P_XQ_{\bar Y}) \right]_{+}\bigg\} + \upsilon_n \text{,} \nonumber
	\end{align}
	where for $r\in (\alpha(R,P_X,P_{Y|X}), R/2)$ and a fixed $\delta\in (0,1) $ that is greater than $\exp(-nR) $,
	\begin{align}
	\kappa_n  &= \frac{|\cX||\cY|}{n}\log(n+1) + \frac1n\log 2 \text{,} \label{eqn:kappa_n second def}  \\
	\alpha_n  &=  \min_{Q_{\bar X\bar Y}\in \cP_n(\cX\times\cY)} \bigg\{ D(Q_{\bar X\bar Y}\|P_{XY}) \\ 
	&\qquad \qquad \qquad \quad \ +\frac12 \left[R-D(Q_{\bar X\bar Y}\|P_XQ_{\bar Y}) \right]_{+} \bigg\} \text{,} \nonumber  \\
	\rho_n    &= \frac{(|\cX|+1)|\cY|}{n}\log(n+1) +\frac1n\log 4  \text{,} \\
	\mu_n     &= \exp(nR) \text{,} \\ 
	a_\epsilon &= \frac{\rme^\epsilon}{(1+\epsilon)^{1+\epsilon}} \text{,} \\ 
	\phi_n    &= \exp(n(\alpha_n +\rho_n)) \\ 
	&\quad \times \left( \frac{1}{\sqrt{\mu_n}} + |\cY|^n a^{\mu_n}_{\delta-\frac1{\mu_n}} + 2(1+\delta)\exp(-nr) \right) \text{,} \nonumber \\
	\upsilon_n &= \rho_n + \frac{\log\rme}{n} \frac{\phi_n}{1-\phi_n} + \frac1n\log(1+\delta) \text{.} \label{eqn:def:ugly upsilon}
	\end{align}
\end{theorem}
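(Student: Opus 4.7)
The plan is to treat the two inequalities separately, reusing the non-asymptotic bounds already obtained in Sections~\ref{sec:lower bound} and \ref{sec:upper bound}, and to track the polynomial/residual error terms carefully rather than letting them vanish in the limit.

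For the lower bound \eqref{eqn:thm:fb:lower bd}, I would copy verbatim the chain \eqref{frm:def of L}--\eqref{inner max depends on outer max} from Section~\ref{sec:lower bound}; every step there is an identity or an application of the triangle inequality or \Cref{lem:absolute mean deviation upper bound for Z_{Q_{barX|barY}}}, none of which requires $n\to\infty$. This yields
\[
\bbE\!\left[\|\induced - P_{Y^n}\|^{}_1\right] \le |\cP_n(\cX\times\cY)| \cdot \max_{Q_{\bar X\bar Y}} \bigl\{ |\cT^n_{Q_{\bar Y}}| \exp(-n\bbE[\imath^{}_{P_{Y|X}}(\bar Y|\bar X)]) \mfrY(M,Q_{\bar X\bar Y})\bigr\}.
\]
Taking $-\tfrac1n\log$ and applying the non-asymptotic upper half of \Cref{lem:infimum over all types} (equivalently, the lower-bound side of \Cref{lem:exponent_for_uniform_convergence} combined with the size estimate $|\cT^n_{Q_{\bar Y}}|\le \exp(nH(Q_{\bar Y}))$) gives $\alpha_n - \kappa_n$, with $\kappa_n$ as in \eqref{eqn:kappa_n second def} absorbing both $\frac1n\log|\cP_n(\cX\times\cY)|\le \frac{|\cX||\cY|}{n}\log(n+1)$ and the $\frac1n\log 2$ coming from the leading $\tfrac12$ in $\mfrY$.

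For the upper bound \eqref{eqn:thm:fb:upper bd}, I would proceed in three stages matching Subsections~\ref{sub:poissonize}--\ref{sub:depoissonize}. First, assume $M\sim\mathrm{Poisson}(\mu_n)$ and reuse the chain \eqref{eqn:messy step1}--\eqref{eqn:messy sect son}, followed by \eqref{frm:def:of:L agn in upp bd}--\eqref{frm:sum of non-neg itms}, which are finite-$n$ statements. The non-asymptotic matching lower bound of \Cref{lem:exponent_for_uniform_convergence} then converts the inner maximum into $\alpha_n + \rho_n$, with $\rho_n$ collecting the type-count factor $(n+1)^{|\cX||\cY|}$, the $|\cT^n_{Q_{\bar Y}}|$ upper estimate $(n+1)^{|\cY|}$, and the $\tfrac14$ from \Cref{lem:lower bd on abs mean dev poiss}. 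Rearranging yields, in the Poisson case,
\[
\bbE[T_n(M)]\le (1+\delta)\left(\exp(-n(\alpha_n+\rho_n)) + \tfrac{1}{\sqrt{\mu_n}} + |\cY|^n a^{\mu_n}_{\delta-1/\mu_n}\right).
\]
Second, I depoissonize exactly as in Subsection~\ref{sub:depoissonize}: the intersection of the concentration events $\cB_n$ and $\cC_n$ (driven by \Cref{lem:TV_concentration-McDiarmid-deterministic} and \Cref{lem:TV concentrants to poisson-TV mean}) is a.s.\ nonempty for large $n$ by the same double-exponential argument, which yields the deterministic inequality $\bbE[T_n(\lceil\mu_n\rceil)] \le \bbE[T_n(M)] + 2\exp(-nr)$ for any $r\in(\alpha(R,P_X,P_{Y|X}),R/2)$. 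This produces the $2(1+\delta)\exp(-nr)$ term in $\phi_n$.

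Third, I collect everything. Factoring out $\exp(-n(\alpha_n+\rho_n))$ gives $\bbE[T_n(\lceil\mu_n\rceil)] \le (1+\delta)\exp(-n(\alpha_n+\rho_n))(1+\phi_n')$ for a suitable $\phi_n'$, and the clean definition of $\phi_n$ in the theorem statement absorbs the $(1+\delta)$ into $\phi_n$ directly. Taking $-\tfrac1n\log$ and using the elementary inequality $-\log(1+x) \ge -\tfrac{\log\rme \cdot x}{1-x}$ (valid for $x\in(0,1)$) on the term $1/(1-\phi_n)$ yields the additive correction $\tfrac{\log\rme}{n}\tfrac{\phi_n}{1-\phi_n}$, and the $(1+\delta)$ prefactor contributes $\tfrac{1}{n}\log(1+\delta)$; together these give precisely $\upsilon_n$ as defined in \eqref{eqn:def:ugly upsilon}. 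The main obstacle is the bookkeeping in this last algebraic step: one must verify that $\phi_n<1$ for large $n$ (which uses $\alpha_n < R/2$ and the choice $r>\alpha$) so that the $\log(1+\phi_n)$ estimate is applicable, and that no better-than-advertised cancellation is lost when converting multiplicative errors into exponent-level additive ones.
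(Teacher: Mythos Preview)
Your plan for the lower bound \eqref{eqn:thm:fb:lower bd} is correct and matches the paper exactly: the chain \eqref{frm:def of L}--\eqref{inner max depends on outer max} together with the finite-$n$ estimate \eqref{for:fb:lower} and the type-counting bound $|\cP_n(\cX\times\cY)|\le (n+1)^{|\cX||\cY|}$ gives precisely $\alpha_n-\kappa_n$.

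For the upper bound, your structure (Poissonize, bound, depoissonize) is the paper's structure, but every inequality you wrote in that section points the wrong way. An upper bound on $-\tfrac1n\log\bbE[T_n]$ requires a \emph{lower} bound on $\bbE[T_n]$, not an upper bound. Concretely, assembling \eqref{eqn:sumterm}, \eqref{frm:sum of non-neg itms} and \eqref{for:fb:upper} gives, in the Poisson regime,
\[
(1+\delta)\,\bbE[T_n(M)] \;\ge\; \exp\bigl(-n(\alpha_n+\rho_n)\bigr)\;-\;\tfrac{1}{\sqrt{\mu_n}}\;-\;|\cY|^n a^{\mu_n}_{\delta-1/\mu_n},
\]
and the depoissonization step \eqref{eqn:fixed grt thn pois -exp(-nr)} reads $\bbE[T_n(\lceil\mu_n\rceil)]>\bbE[T_n(M)]-2\exp(-nr)$, again the opposite of what you wrote. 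Combining these and factoring out $\exp(-n(\alpha_n+\rho_n))$ produces the paper's \eqref{eqn:prf:log thm up}, namely $(1+\delta)\bbE[T_n(\lceil\mu_n\rceil)]\ge \exp(-n(\alpha_n+\rho_n))(1-\phi_n)$, with $1-\phi_n$ rather than your $1+\phi_n'$. The elementary inequality you then need is $-\log(1-x)\le \tfrac{x}{1-x}\log\rme$ (valid for $x\in(0,1)$), not the one you quoted; this is also why the condition $\phi_n<1$ matters. Once these directions are fixed, the bookkeeping closes exactly as you anticipated and yields $\upsilon_n$ in the stated form.
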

\begin{proof}
  The lower bound, \eqref{eqn:thm:fb:lower bd}, easily follows from \eqref{inner max depends on outer max}, \eqref{for:fb:lower}, and \cite[Lemma 2.2]{csiszar2011information}. To see the upper bound, first assemble \eqref{eqn:sumterm}, \eqref{frm:sum of non-neg itms}, \eqref{eqn:fixed grt thn pois -exp(-nr)} and \eqref{for:fb:upper} to get
  \begin{align}
 &(1+\delta)\bbE \left[\left\| \induced - P_{Y^n} \right\|^{}_1\right] \nonumber \\ 
 &\qquad \ge \exp(-n(\alpha_n +\rho_n))\left(1-\phi_n\right)\text{.} \label{eqn:prf:log thm up}
  \end{align}
The result in \eqref{eqn:thm:fb:upper bd} follows after taking $-\frac1n\log $ both sides and noticing that
\begin{align}
\log(1-x)\ge \frac{-x }{1-x}\log\rme \text{.}
\end{align}
\end{proof}
Similarly, following the proof of \Cref{thm:main} along the path paved by Remarks~\ref{rem:cc:replacements lower bound},~\ref{rem:cc:replacements in poissonization},~\ref{rem:cc:replacements pseudo-upper},~and~\ref{rem:cc:replacements depoissonization}; an imitation of the proof of \Cref{thm:finite block-length exact scl} with the replacements
\begin{align*}
 \text{\eqref{for:fb:lower}} &\leftarrow \text{\eqref{for:cc:fb:lower}}\\
 \text{\eqref{for:fb:upper}} &\leftarrow \text{\eqref{for:cc:fb:upper}}
\end{align*}
yields the finite block-length bounds for the constant-composition case analysis as stated in \Cref{thm:cc:finite block-length cons. comp. exact scl}.
\begin{theorem}\label{thm:cc:finite block-length cons. comp. exact scl}
Let $m$ be a fixed integer and $P_{\bar X} \in \cP_m(\cX)$ be a fixed $m$-type.  Fix $n \in m\bbN$. Suppose that $R_{\breve X^n}$ is a constant-composition distribution based on $P_{\bar X}$ defined as in \eqref{eqn:def:cc distribution}, and let $R_{\breve X^n}\to P^{}_{Y^n|X^n} \to R_{\breve Y^n} $, where the $n$-shot stationary discrete memoryless channel $P_{Y^n|X^n}$ is non-degenerate, i.e., $P^{}_{Y^n|X^n} \neq R_{\breve Y^n}$. For any $R>I(P_{\bar X}, P_{Y|X}) $, let $M=\exp(nR)$, and denote by $\ccinduced $ the constant-composition induced output distribution when a uniformly chosen codeword from the random constant-composition codebook $\scrD^n_M$ is transmitted through the channel, see \Cref{def:random cc codebook,def:cc induced output distribution}. Then,
  \begin{align}
  &\min_{Q_{\bar Y|\bar X} \in \cP(\cY|P_{\bar X})} \bigg\{ D(P_{\bar X}Q_{\bar Y|\bar X}\|P_{\bar XY}) \nonumber \\ 
  &\qquad \qquad \quad + \frac12[R-D(P_{\bar X}Q_{\bar Y|\bar X}\|P_{\bar X}Q_{\bar Y})]_+ \bigg\} -\breve \eta_n \nonumber \\
  &\quad \le  -\frac{1}{n}\log \bbE\left[\left\| \ccinduced - R_{\breve Y^n} \right\|^{}_1\right] \label{eqn:thm:cc:fb:lower bd} \\
  &\quad \le \min_{Q_{\bar Y|\bar X} \in \cP(\cY|P_{\bar X})} \bigg\{ D(P_{\bar X}Q_{\bar Y|\bar X}\|P_{\bar XY}) \label{eqn:thm:cc:fb:upper bd} \\ 
  &\qquad \qquad \quad + \frac12[R-D(P_{\bar X}Q_{\bar Y|\bar X}\|P_{\bar X}Q_{\bar Y})]_+ \bigg\} + \breve \upsilon_n  \text{,}  \nonumber 
  \end{align}
  where $P_{\bar X}  \to Q_{\bar Y|\bar X} \to Q_{\bar Y} $, and for $r\in (\aleph(R,P_{\bar X},P_{Y|X}), R/2)$ and a fixed $\delta\in (0,1) $ that is greater than $\exp(-nR) $,
  \begin{align}
    [f]_+ &= \max\{0,f\} \text{,} \\
    \breve \eta_n  &=  \frac{|\cX|(2+3|\cY|)}{2n}\log(n+1) \text{,} \\
    \aleph_n        &=  \min_{Q_{\bar Y|\bar X}} \bigg\{ D(P_{\bar X}Q_{\bar Y|\bar X}\|P_{\bar XY}) \\ 
    &\qquad \qquad \quad + \frac12[R-D(P_{\bar X}Q_{\bar Y|\bar X}\|P_{\bar X}Q_{\bar Y})]_+  \bigg\}  \text{,} \nonumber \\
    \breve \rho_n   &= \frac{|\cX|+2|\cX||\cY|+|\cY|}{2n} \log(n+1) + \frac1n \log 4  \text{,} \\
    \mu_n           &= \exp(nR) \text{,} \\ 
	a_\epsilon      &= \frac{\rme^\epsilon}{(1+\epsilon)^{1+\epsilon}} \text{,} \\ 
	\breve \phi_n   &= \exp(n(\aleph_n +\breve \rho_n)) \\ 
	& \times \left(\frac{1}{\sqrt{\mu_n}} + |\cY|^n a^{\mu_n}_{\delta-\frac1{\mu_n}} + 2(1+\delta)\exp(-nr) \right) \text{,} \nonumber \\
	\breve \upsilon_n &= \breve \rho_n + \frac{\log\rme}{n} \frac{\breve \phi_n}{1-\breve \phi_n} + \frac1n\log(1+\delta) \text{.}
  \end{align}
\end{theorem}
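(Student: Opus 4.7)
The plan is to mirror the proof of \Cref{thm:finite block-length exact scl} verbatim, but throughout we substitute the constant-composition analogues of every definition (as catalogued in Remarks~\ref{rem:cc:replacements lower bound},~\ref{rem:cc:replacements in poissonization},~\ref{rem:cc:replacements pseudo-upper}~and~\ref{rem:cc:replacements depoissonization}) and we replace the two non-asymptotic type-counting estimates \eqref{for:fb:lower}--\eqref{for:fb:upper} coming from \Cref{lem:exponent_for_uniform_convergence} by the corresponding estimates \eqref{for:cc:fb:lower}--\eqref{for:cc:fb:upper} coming from \Cref{lem:cc:key first limit}. In particular, the $\cX$-marginal of every joint type is pinned to $P_{\bar X}$, the outer sum over $\cP_n(\cX\times\cY)$ becomes a sum over $\cP_n(\cY|P_{\bar X})$, and the codeword probability $p_{Q_{\bar X|\bar Y}}(y^n)$ is replaced by the type-ratio expression for $\breve{p}_{Q_{\bar X|\bar Y}}(y^n)$ supplied by \Cref{lem:cc:exponent of p}.

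For the lower bound \eqref{eqn:thm:cc:fb:lower bd}, I would expand $\|\ccinduced - R_{\breve Y^n}\|_1$ via the type-enumeration representation \eqref{eqn:def:alt:breve{L}_{scrD_M}}, apply the triangle inequality over conditional types $Q_{\bar X|\bar Y}\in\cP_n(\cX|Q_{\bar Y};P_{\bar X})$, and then bound $\bbE[|\breve{Z}_{Q_{\bar X\bar Y}}-\bbE\breve{Z}_{Q_{\bar X\bar Y}}|]$ by the minimum of twice its mean and its standard deviation (the cc-analogue of \Cref{lem:absolute mean deviation upper bound for Z_{Q_{barX|barY}}}, which is identical modulo the $R_{\breve Y^n}$ normalization). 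This yields the finite-block-length bound \eqref{for:cc:fb:upper} from \Cref{lem:cc:key first limit}, from which \eqref{eqn:thm:cc:fb:lower bd} follows after accounting for the polynomial type-count factor $|\cP_n(\cY|P_{\bar X})|\le(n+1)^{|\cX||\cY|}$ absorbed into $\breve\eta_n$.

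For the upper bound \eqref{eqn:thm:cc:fb:upper bd}, I would first poissonize the codebook size with mean $\mu_n=\exp(nR)$, which renders the counts $\breve{N}_{Q_{\bar X|\bar Y}}(y^n)$ independent across conditional types (each codeword is still drawn i.i.d.\ from $R_{\breve X^n}$, so Poisson splitting applies verbatim). Then, exactly as in \eqref{eqn:messy step1}--\eqref{eqn:messy sect son}, I would invoke \Cref{lem:Paul's saving Lemma} with $c=(1+\delta)\mu_n$ and \Cref{lem:poisson expectation away from mean} to pay the additive errors $\mu_n^{-1/2}R_{\breve Y^n}(y^n)$ and $\mu_n a_{\delta-1/\mu_n}^{\mu_n}$. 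Next, summing over $y^n$, using \Cref{lem:expected value of M*L_scrC} (whose cc-analogue is literal because only $\bbE[M]=\mu_n$ is used) and \Cref{lem:abs sum exp and max abs exp} to lower bound by the maximum over conditional types, and then \Cref{lem:lower bd on abs mean dev poiss} for the Poisson absolute mean deviation, one obtains the pseudo-lower bound
\[
(1+\delta)\bbE\!\left[\big\|\ccinduced - R_{\breve Y^n}\big\|_1\right] \ge \exp(-n(\aleph_n+\breve\rho_n))-\tfrac{1}{\sqrt{\mu_n}}-|\cY|^n a^{\mu_n}_{\delta-1/\mu_n}.
\]
Depoissonization then proceeds via \Cref{lem:cc:concentration of TV distance btw ccinduced and ccoutput} combined with the cc-version of \Cref{lem:TV concentrants to poisson-TV mean} (obtained from \Cref{lem:cc:concentration of TV distance btw ccinduced and ccoutput} in place of \Cref{lem:TV_concentration-McDiarmid-deterministic} with the same Poisson-anchor argument from \Cref{sub:depoissonize}), showing that for any $r\in(\aleph_n,R/2)$ the deterministic-$M$ expectation differs from the Poisson expectation by at most $2\exp(-nr)$. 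Assembling these pieces in the form
\[
(1+\delta)\bbE\!\left[\big\|\ccinduced-R_{\breve Y^n}\big\|_1\right]\ge \exp(-n(\aleph_n+\breve\rho_n))(1-\breve\phi_n),
\]
and applying $-\tfrac1n\log$ with $\log(1-x)\ge -\tfrac{x\log\rme}{1-x}$, delivers \eqref{eqn:thm:cc:fb:upper bd}.

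The main obstacle is purely bookkeeping rather than conceptual: one must verify that each lemma used in \Cref{sec:upper bound} truly transports to the constant-composition setting. The only nontrivial check is that Poisson splitting still yields mutually independent type counts $\breve{N}_{Q_{\bar X|\bar Y}}(y^n)$ when the parent distribution $R_{\breve X^n}$ is not a product distribution; this holds because splitting independence requires only i.i.d.\ codewords, not product-form single-codeword laws. Every other inequality in the proof of \Cref{thm:finite block-length exact scl} carries over by the syntactic substitutions listed above, and the vanishing constants $\breve\eta_n$ and $\breve\upsilon_n$ absorb the (mildly different) polynomial type-counting factors arising from \Cref{lem:cc:key first limit} rather than \Cref{lem:exponent_for_uniform_convergence}.
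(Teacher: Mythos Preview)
Your proposal is correct and follows exactly the approach the paper takes: the paper's own proof of this theorem consists of a single sentence instructing the reader to imitate the proof of \Cref{thm:finite block-length exact scl} via the constant-composition substitutions listed in Remarks~\ref{rem:cc:replacements lower bound},~\ref{rem:cc:replacements in poissonization},~\ref{rem:cc:replacements pseudo-upper},~\ref{rem:cc:replacements depoissonization}, and to replace the estimates \eqref{for:fb:lower}--\eqref{for:fb:upper} by \eqref{for:cc:fb:lower}--\eqref{for:cc:fb:upper}. Your sketch is in fact more detailed than the paper's, including the explicit observation that Poisson splitting requires only i.i.d.\ codewords (not product-form laws); one tiny slip is that the depoissonization interval should read $r\in(\aleph,R/2)$ rather than $r\in(\aleph_n,R/2)$, matching the theorem statement.
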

\begin{rem}\label{rem:upper bound constants in finite blocklenth}
  The $P_XP_{Y|X}$ (respectively, $P_{\bar X}P_{Y|X}$) dependence of the upper bound constant $\upsilon_n$ (respectively, $\breve \upsilon_n$) in \Cref{thm:finite block-length exact scl} (respectively, in \Cref{thm:cc:finite block-length cons. comp. exact scl}) is due to the discontinuity of the exponent in the degenerate channel case, see \Cref{rem:degenerate channel case discontinuity}.
\end{rem}
\end{appendices}
\section*{Acknowledgement}
The authors would like to thank Alex Dytso, Amin Gohari and Jingbo Liu for their valuable comments on an early manuscript. Mani Bastani Parizi is acknowledged for graciously sending his codes for the computation of relative entropy variant of the constant composition exponent which are used in plotting \Cref{fig:cc:aleph vs all other cc}. Further thanks to the developers of Texpad and the TexpadTex engine for turning the typesetting of this paper into an enjoyment.

\bibliographystyle{IEEEtran}
\bibliography{bibliography}

\begin{thebibliography}{10}
\providecommand{\url}[1]{#1}
\csname url@samestyle\endcsname
\providecommand{\newblock}{\relax}
\providecommand{\bibinfo}[2]{#2}
\providecommand{\BIBentrySTDinterwordspacing}{\spaceskip=0pt\relax}
\providecommand{\BIBentryALTinterwordstretchfactor}{4}
\providecommand{\BIBentryALTinterwordspacing}{\spaceskip=\fontdimen2\font plus
\BIBentryALTinterwordstretchfactor\fontdimen3\font minus
  \fontdimen4\font\relax}
\providecommand{\BIBforeignlanguage}[2]{{%
\expandafter\ifx\csname l@#1\endcsname\relax
\typeout{** WARNING: IEEEtran.bst: No hyphenation pattern has been}%
\typeout{** loaded for the language `#1'. Using the pattern for}%
\typeout{** the default language instead.}%
\else
\language=\csname l@#1\endcsname
\fi
#2}}
\providecommand{\BIBdecl}{\relax}
\BIBdecl

\bibitem{wyner1975common}
A.~Wyner, ``The common information of two dependent random variables,''
  \emph{IEEE Transactions on Information Theory}, vol.~21, no.~2, pp. 163--179,
  Mar. 1975.

\bibitem{bloch2013strong}
M.~R. Bloch and J.~N. Laneman, ``Strong secrecy from channel resolvability,''
  \emph{IEEE Transactions on Information Theory}, vol.~59, no.~12, pp.
  8077--8098, Dec. 2013.

\bibitem{wyner1975wire}
A.~D. Wyner, ``The wire-tap channel,'' \emph{Bell Labs Technical Journal},
  vol.~54, no.~8, pp. 1355--1387, Oct. 1975.

\bibitem{han1993approximation}
T.~S. Han and S.~Verd{\'u}, ``Approximation theory of output statistics,''
  \emph{IEEE Transactions on Information Theory}, vol.~39, no.~3, pp. 752--772,
  May 1993.

\bibitem{cuff2009communication}
P.~W. Cuff, \emph{Communication in networks for coordinating behavior}.\hskip
  1em plus 0.5em minus 0.4em\relax Stanford University, 2009.

\bibitem{hayashi2006general}
M.~Hayashi, ``General nonasymptotic and asymptotic formulas in channel
  resolvability and identification capacity and their application to the
  wiretap channel,'' \emph{IEEE Transactions on Information Theory}, vol.~52,
  no.~4, pp. 1562--1575, Apr. 2006.

\bibitem{cuff2013distributed}
P.~Cuff, ``Distributed channel synthesis,'' \emph{IEEE Transactions on
  Information Theory}, vol.~59, no.~11, pp. 7071--7096, Nov. 2013.

\bibitem{ahlswede2002strong}
R.~Ahlswede and A.~Winter, ``Strong converse for identification via quantum
  channels,'' \emph{IEEE Transactions on Information Theory}, vol.~48, no.~3,
  pp. 569--579, Mar. 2002.

\bibitem{wilde2013quantum}
M.~M. Wilde, \emph{Quantum information theory}.\hskip 1em plus 0.5em minus
  0.4em\relax Cambridge University Press, 2013.

\bibitem{winter2005secret}
A.~Winter, ``Secret, public and quantum correlation cost of triples of random
  variables,'' in \emph{2005 IEEE International Symposium on Information
  Theory}, Sep. 2005, pp. 2270--2274.

\bibitem{hou2014effective}
J.~Hou and G.~Kramer, ``Effective secrecy: Reliability, confusion and
  stealth,'' in \emph{2014 IEEE International Symposium on Information Theory},
  Jun. 2014, pp. 601--605.

\bibitem{cuff2015stronger}
{P. Cuff}, ``A stronger soft-covering lemma and applications,'' in \emph{2015
  IEEE Conference on Communications and Network Security}, Sep. 2015, pp.
  40--43.

\bibitem{cuff2016soft}
P.~Cuff, ``Soft covering with high probability,'' in \emph{2016 IEEE
  International Symposium on Information Theory}.\hskip 1em plus 0.5em minus
  0.4em\relax IEEE, Jul. 2016, pp. 2963--2967.

\bibitem{parizi2017exact}
M.~B. Parizi, E.~Telatar, and N.~Merhav, ``Exact random coding secrecy
  exponents for the wiretap channel,'' \emph{IEEE Transactions on Information
  Theory}, vol.~63, no.~1, pp. 509--531, Jan. 2017.

\bibitem{yutanrenyi}
L.~{Yu} and V.~Y.~F. {Tan}, ``R\'enyi resolvability and its applications to the
  wiretap channel,'' \emph{IEEE Transactions on Information Theory}, vol.~65,
  no.~3, pp. 1862--1897, Mar. 2019.

\bibitem{lcv2017egamma}
J.~Liu, P.~Cuff, and S.~Verd\'u, ``{$E_{\gamma}$}-resolvability,'' \emph{IEEE
  Transactions on Information Theory}, vol.~63, no.~5, pp. 2629--2658, May
  2017.

\bibitem{cover2012elements}
T.~M. Cover and J.~A. Thomas, \emph{Elements of information theory}.\hskip 1em
  plus 0.5em minus 0.4em\relax John Wiley \& Sons, 2012.

\bibitem{csiszar2011information}
I.~Csiszar and J.~K{\"o}rner, \emph{Information Theory: Coding Theorems for
  Discrete Memoryless Systems}, 2nd~ed.\hskip 1em plus 0.5em minus 0.4em\relax
  Cambridge University Press, 2011.

\bibitem{raginsky2013concentration}
M.~Raginsky, I.~Sason \emph{et~al.}, ``Concentration of measure inequalities in
  information theory, communications, and coding,'' \emph{Foundations and
  Trends in Communications and Information Theory}, vol.~10, no. 1-2, pp.
  1--246, Oct. 2013.

\bibitem{hayashi2013tight}
M.~Hayashi, ``Tight exponential analysis of universally composable privacy
  amplification and its applications,'' \emph{IEEE Transactions on Information
  Theory}, vol.~59, no.~11, pp. 7728--7746, Nov. 2013.

\bibitem{hayashi2011universally}
M.~Hayashi and R.~Matsumoto, ``Universally attainable error and information
  exponents, and equivocation rate for the broadcast channels with confidential
  messages,'' in \emph{49th Annual Allerton Conference on Communication,
  Control, and Computing}, Sep. 2011, pp. 439--444.

\bibitem{hayashi-matsumoto-2016}
------, ``Secure multiplex coding with dependent and non-uniform multiple
  messages,'' \emph{IEEE Transactions on Information Theory}, vol.~62, no.~5,
  pp. 2355--2409, May 2016.

\bibitem{sibson1969information}
R.~Sibson, ``Information radius,'' \emph{Zeitschrift f{\"u}r
  Wahrscheinlichkeitstheorie und verwandte Gebiete}, vol.~14, no.~2, pp.
  149--160, 1969.

\bibitem{gallager1968information}
R.~G. Gallager, \emph{Information theory and reliable communication}.\hskip 1em
  plus 0.5em minus 0.4em\relax Springer, 1968, vol.~2.

\bibitem{verdu2015}
S.~Verd{\'u}, ``$\alpha$-mutual information,'' in \emph{2015 Information Theory
  and Applications Workshop}, San Diego, Feb. 2015, pp. 1--6.

\bibitem{vanErven2014}
T.~van Erven and P.~Harremo\"es, ``R{\'e}nyi divergence and {Kullback-Leibler}
  divergence,'' \emph{IEEE Transactions on Information Theory}, vol.~60, no.~7,
  pp. 3797--3820, Jul. 2014.

\bibitem{csiszar1995GeneralizedCutoff}
I.~{Csisz\'ar}, ``Generalized cutoff rates and {R\'enyi's} information
  measures,'' \emph{IEEE Transactions on Information Theory}, vol.~41, no.~1,
  pp. 26--34, Jan. 1995.

\bibitem{PolyanskiyPoorVerdu10}
Y.~Polyanskiy, H.~V. Poor, and S.~Verd{\'u}, ``Channel coding rate in the
  finite blocklength regime,'' \emph{IEEE Transactions on Information Theory},
  vol.~56, no.~5, pp. 2307--2359, May 2010.

\bibitem{kontoyiannisVerdu13OptimalLossless}
I.~{Kontoyiannis} and S.~{Verd\'u}, ``Optimal lossless compression: Source
  varentropy and dispersion,'' in \emph{2013 IEEE International Symposium on
  Information Theory}, Jul. 2013, pp. 1739--1743.

\bibitem{HoVerdu15}
S.~W. Ho and S.~Verd{\'u}, ``Convexity/concavity of {R\'enyi} entropy and
  $\alpha$-mutual information,'' in \emph{2015 IEEE International Symposium on
  Information Theory (ISIT)}, Jun. 2015, pp. 745--749.

\bibitem{merhav2010statistical}
N.~Merhav, ``Statistical physics and information theory,'' \emph{Foundations
  and Trends in Communications and Information Theory}, vol.~6, no. 1--2, pp.
  1--212, Dec. 2010.

\bibitem{merhav2014exact}
{N. Merhav}, ``Exact random coding error exponents of optimal bin index
  decoding,'' \emph{IEEE Transactions on Information Theory}, vol.~60, no.~10,
  pp. 6024--6031, Oct. 2014.

\bibitem{berend2013sharp}
D.~Berend and A.~Kontorovich, ``A sharp estimate of the binomial mean absolute
  deviation with applications,'' \emph{Statistics \& Probability Letters},
  vol.~83, no.~4, pp. 1254--1259, Apr. 2013.

\bibitem{hayashi2011exponential}
M.~Hayashi, ``Exponential decreasing rate of leaked information in universal
  random privacy amplification,'' \emph{IEEE Transactions on Information
  Theory}, vol.~57, no.~6, pp. 3989--4001, 2011.

\bibitem{tahmasbi2017second}
M.~Tahmasbi and M.~R. Bloch, ``Second-order asymptotics in covert
  communication,'' \emph{arXiv preprint}, 2017.

\bibitem{mitzenmacher2017probability}
M.~Mitzenmacher and E.~Upfal, \emph{Probability and Computing: Randomization
  and Probabilistic Techniques in Algorithms and Data Analysis}.\hskip 1em plus
  0.5em minus 0.4em\relax Cambridge university press, 2017.

\bibitem{crow1958meandev}
E.~L. Crow, ``The mean deviation of the {Poisson} distribution,''
  \emph{Biometrika}, vol.~45, no. 3-4, pp. 556--562, 1958.

\bibitem{robbins1955remark}
H.~Robbins, ``A remark on {Stirling's} formula,'' \emph{The American
  Mathematical Monthly}, vol.~62, no.~1, pp. 26--29, Jan. 1955.

\bibitem{luenberger1997optimization}
D.~G. Luenberger, \emph{Optimization by vector space methods}.\hskip 1em plus
  0.5em minus 0.4em\relax John Wiley \& Sons, 1997.

\end{thebibliography}

\begin{IEEEbiographynophoto}{Semih Yagli}
	received his Bachelor of Science degree in Electrical and Electronics Engineering in 2013, his Bachelor of Science degree in Mathematics in 2014 both from Middle East Technical University and his Master of Arts degree in Electrical Engineering in 2016 from Princeton University.
	
	Currently, he is pursuing his Ph.D. degree in Electrical Engineering at Princeton University under the supervision of H. Vincent Poor. His research interest include information theory, optimization, statistical modeling, and unsupervised machine learning. 
\end{IEEEbiographynophoto}

\begin{IEEEbiographynophoto}{Paul Cuff (S'08-M'10)}
received the B.S. degree in electrical engineering from Brigham Young University, Provo, UT, in 2004 and the M.S. and Ph.D. degrees in electrical engineering from Stanford University in 2006 and 2009. From 2009 to 2017 he was an assistant professor of electrical engineering at Princeton University.  Since 2017 he has been a member of the general research group at Renaissance Technologies.

As a graduate student, Dr. Cuff was awarded the ISIT 2008 Student Paper Award for his work titled Communication Requirements for Generating Correlated Random Variables and was a recipient of the National Defense Science and Engineering Graduate Fellowship and the Numerical Technologies Fellowship. As faculty he received the NSF Career Award in 2014 and the AFOSR Young Investigator Program Award in 2015.
\end{IEEEbiographynophoto}

\end{document}